\def\paragraph{\@startsection{paragraph}{4}%
  \z@\z@{-\fontdimen2\font}%
  {\normalfont\itshape}}
\newcommand{\eqdef}{\stackrel{\text{def}}=}
\newcommand{\pth}[1]{\left(#1\right)}
\newcommand{\R}{\ensuremath{\mathbb{R}}}
\newcommand{\N}{\ensuremath{\mathbb{N}}}
\newcommand{\ext}{\ensuremath{\mathrm{ext}}}
\newcommand{\nee}{\ensuremath{\mathrm{nee}}}
\newcommand{\Ex}{\ensuremath{\mathbb{E}}}
\newcommand{\cal}[1]{\ensuremath{\mathcal{#1}}}
\newcommand{\T}{\mathcal{T}}
\newcommand{\p}{\mathcal{P}}
\newcommand{\subs}[1]{\textrm{subs}}
\newcommand{\fusion}[1]{\ensuremath{\xrightarrow{#1}}}
\newcommand\eps{\varepsilon}
\newcommand\U{\mathcal{U}}
\newcommand\D{\mathcal{D}}
\newcommand\A{A}
\def\O{\mathcal O}
\DeclareMathOperator{\pap}{\textup{\Bowtie}}
\DeclareMathOperator{\sgn}{sgn}
\newcommand{\joinnormal}[2]{\ensuremath{\sideset{_{#1}}{_{#2}}\pap}}
\newcommand{\joinscript}[2]{\ensuremath{\sideset{\scriptstyle_{#1}}{_{#2}}{\scriptstyle\pap}}}
\newcommand{\join}[2]{
\mathchoice{\joinnormal{#1}{#2}}{\joinnormal{#1}{#2}}{\joinscript{#1}{#2}}{\joinscript{#1}{#2}}
}
\newcommand{\pt}[1]{\ensuremath{\mathfrak{#1}}}
\theoremstyle{plain}
\newtheorem{lemma}{Lemma}[section]
\newtheorem{theorem}[lemma]{Theorem}
\newtheorem{corollary}[lemma]{Corollary}
\newtheorem{proposition}[lemma]{Proposition}
\newtheorem{definition}[lemma]{Definition}
\theoremstyle{remark}
\newtheorem{remark}[lemma]{Remark}
\newtheorem{example}[lemma]{Example}
\newtheorem{observation}[lemma]{Observation}
\title[A canonical tree decomposition for order types]{A canonical tree decomposition for order types,\\ and some applications}
 \author[M. Bouvel]{Mathilde Bouvel}
       \address[MB,XG]{Université de Lorraine, CNRS, INRIA, LORIA, F-54000 Nancy, France}
       \email{mathilde.bouvel@loria.fr, xavier.goaoc@loria.fr}
 \author[V. Féray]{Valentin Féray}
       \address[VF]{Université de Lorraine, CNRS, IECL, F-54000 Nancy, France}
       \email{valentin.feray@univ-lorraine.fr}
 \author[X. Goaoc]{Xavier Goaoc}
 \author[F. Koechlin]{Florent Koechlin}
 \address[FK]{Université Sorbonne Paris Nord, LIPN, CNRS UMR 7030, F-93340 Villetaneuse, France}
 \email{koechlin@lipn.fr}
\begin{document}

\maketitle

\begin{abstract}
  We introduce and study a notion of decomposition of planar point
  sets (or rather of their chirotopes) as trees decorated by smaller
  chirotopes.  This decomposition is based on the concept of mutually
  avoiding sets (which we rephrase as \emph{modules}), and adapts in
  some sense the modular decomposition of graphs in the world of
  chirotopes.  The associated tree always exists and is unique up to
  some appropriate constraints.  We also show how to compute the
  number of triangulations of a chirotope efficiently, starting from
  its tree and the (weighted) numbers of triangulations of its parts.

  \medskip

  \noindent
  {\bf Keywords:} Order type, chirotopes, modular decomposition, counting triangulations, mutually avoiding point sets, generating functions, rewriting systems.

\end{abstract}

\section{Introduction}

Any finite sequence $P=(p_1,p_2,\ldots, p_n)$ of points in $\R^2$
gives rise to a number of combinatorial data such as the subset of
extremal points, or the set of crossing-free matchings of $P$. Often,
such structures depend on the coordinates of the points of $P$ only
through a limited number of predicates, so one can replace the
(geometric) point sequence by a purely combinatorial object. For
instance, the convex hulls and crossing-free matchings are determined
by the map sending every triple $1 \le i,j,k \le n$ of indices to the
orientation (positive, negative or flat) of the triangle
$p_ip_jp_k$. For every $n$, there are finitely many such maps (called
{\em chirotope} or {\em order types}, see below) and they have been
extensively investigated, see {\em e.g.} the book of Bj{\"o}rner et
al.~\cite{bjorner1999oriented}.

\medskip

In this paper we examine chirotopes from the point of view of the {\em
  modular decomposition method}, which is an important tool in
algorithmic graph theory (see \emph{e.g.}~\cite{HabibPaul}).
To this end, we define a notion of module in chirotopes,
as a partition of the chirotope into two sets,
such that no line going through two points of the same set
separates points of the other set; see \Cref{f:mutuallyavoiding}.
This is reminiscent of the concept of {\em mutually avoiding sets},
previously considered in the literature,
though we use here a different perspective on such sets
(see discussion and references at the end of Section~\ref{ssec:context}).
Using this notion of modules, we encode a chirotope by a
tree whose nodes are themselves decorated by chirotopes, each point appearing in a
single node, so that subtrees joined by an edge encode modules of the original point set.\medskip
\begin{figure}
  \begin{center}  \includegraphics[page=1]{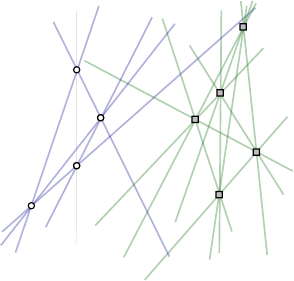} \end{center}
  \caption{A set of 9 points which can be decomposed into
  two modules of size $4$ and $5$ points.\label{f:mutuallyavoiding}}
\end{figure}

\medskip

The main contributions presented here are the following. 
\begin{itemize}
\item We introduce a notion of {\em canonical chirotope tree}
such that each chirotope is represented by a unique
canonical chirotope tree.
\item We show that the proportion of realizable chirotopes
that are indecomposable tends to 1.
\item We show how the number of triangulations of a
  chirotope given by a chirotope tree can be computed from the
  (weighted) numbers of triangulations of its nodes.
\end{itemize}
The first and third results also hold in the more general framework
of abstract chirotopes (function mapping triples of elements either to {\em clockwise}
or {\em counterclockwise}, satisfying some axioms, but not necessarily
corresponding to a set of points in $\R^2$; see details below).
We offer in \cref{ssec:perspectives} several new open questions raised
 by this concept of chirotope tree decompositions.

\bigskip

Throughout the paper, all point sets are finite, planar
and in general position, implying in particular that
  no three points are ever aligned.
    Adapting our construction to point sets in higher dimension
  would need considerable work, as our definitions really use the fact
  that the chirotope is a map of three arguments (in general, the chirotope
  of a point set in $\R^d$ takes $d+1$ arguments).
  On the other hand, we believe that most results could be adapted
  to point sets that are not necessarily in general position,
  but we decided not to discuss such cases for simplicity.

  Given a finite set $X$, we write $(X)_3$
for the set of triples $(x,y,z)$ of distinct elements in $X$. We write
$[n]$ for the set $\{1,2,\ldots, n\}$.

\subsection{Context and motivation}
\label{ssec:context}
Let us first provide some context on the objects, methods and
questions that we consider.

\subsubsection*{Realizable chirotopes.}

 The {\em chirotope of a
  set} ${\cal P} = \{\p_\ell\}_{\ell \in X}$ of points in general position
  labeled by $X$ is
the function
\[ \chi_{\cal P}: \left\{\begin{array}{rcl}
(X)_3 & \to & \{-1,+1\}\\
(x,y,z) & \mapsto & \left\{\begin{array}{rl}
+1& \text{ if $\p_x,\p_y,\p_z$ are in counterclockwise order,}\\
-1 &\text{ if $\p_x,\p_y,\p_z$ are in clockwise order.}\\
\end{array}\right.
\end{array}\right.\]
This function is sometimes called the {\em labeled order type} of the
point set~\cite{GP83}. We say that chirotopes of point sets are {\em realizable}
to distinguish them from their combinatorial (abstract)
generalization.

\subsubsection*{Representing realizable chirotopes.}
Deciding if a given function $(X)_3 \to \{-1,1\}$ is a realizable
chirotope turns out to be a challenging problem: it is equivalent to
the existential theory of the reals~\cite[Theorem
  8.7.2]{bjorner1999oriented} and
NP-hard~\cite{shor1991stretchability}. Whether this problem is in NP
is open, and, interestingly, a positive answer is equivalent to the
existence of a (pseudorandom) generator that produces a realizable
chirotope of size $n$ in time polynomial in $n$ while ensuring that
every element has nonzero probability\footnote{A random generator can
serve as a verifier for the problem, with the random bitstring as the
certificate. Conversely, we can run a random function and a random
certificate through the verifier; if accepted, we return that
function, otherwise we return a fixed chirotope, {\em e.g.} $n$ points
in convex position.}. The space of realizable chirotopes is therefore
difficult to explore, which makes it hard to test efficiently some
conjectures in discrete geometry
or to check experimentally a geometric algorithm in the exact
geometric computing paradigm~\cite{sharma2017robust}.  The present
work grew out of an attempt to devise new ways of constructing and
representing (large) realizable chirotopes.

\subsubsection*{Abstract chirotopes.}
As mentioned above, most of our results hold in a more general setting.
Let us define a {\em chirotope} on a finite set $X$ as a function $\chi: (X)_3 \to \{-1,1\}$ that
satisfies the following properties:\medskip
\begin{description}
\item[(symmetry)] for any distinct $x, y, z \in X$,
  \begin{equation}\label{eq:symmetry_condition}\chi(x,y,z)=\chi(y,z,x)=\chi(z,x,y)=-\chi(z,y,x)=-\chi(y,x,z)=-\chi(x,z,y);\end{equation}
\item[(interiority)] for any distinct $t, x, y, z \in X$,
  \begin{equation} \chi(t,y,z)=\chi(x,t,z)=\chi(x,y,t)=1 \quad \Rightarrow \quad \chi(x,y,z)=1;
  \end{equation}
\item[(transitivity)] for any distinct $s,t, x, y, z \in X$,
  \begin{equation} \chi(t,s,x)=\chi(t,s,y)=\chi(t,s,z) = \chi(x,y,t) = \chi(y,z,t)=1 \quad \Rightarrow \quad \chi(x,z,t) =1.\end{equation}
\end{description}
It can be proven that any realizable chirotope is a chirotope.
We may use the term {\em abstract chirotope}
to mean a chirotope that is not necessarily realizable.
Lastly, for readers acquainted with matroid theory, let us note that
abstract chirotopes are in correspondence with the relabeling classes of
acyclic uniform oriented matroids of
rank~3~\cite{bjorner1999oriented,knuth1992axioms}.
\medskip

Notions and properties for point sets that can be expressed only through
orientations of triples of points generalize to abstract chirotopes. For example, an
element $x \in X$ is {\em extreme} in a chirotope $\chi$ on $X$ if
there exists $y \in X\setminus\{x\}$ such that $\chi(x,y,z)$ is the
same for all $z \in X\setminus \{x,y\}$. With this definition,
extreme elements of realizable chirotopes correspond precisely
to the vertices of their convex hulls.

\subsubsection*{Triangulations.}

Counting the triangulations supported by a given set of $n$ points is
a classical problem in computational geometry (see the discussion
in~\cite{marx_et_al:LIPIcs.SoCG.2016.52}). The fastest known algorithm
is due to Marx and Miltzow~\cite{marx_et_al:LIPIcs.SoCG.2016.52} and
has complexity $\O(n^{(11+o(1))\sqrt{n}})$. It is quite involved, and a
simpler solution, due to Alvarez and Seidel~\cite{alvarez2013simple},
runs in time $\O(n^2 2^n)$.

\medskip

Given a point set $P=\{\p_\ell\}_{\ell \in X}$, two segments
$\p_x\p_y$ and $\p_z\p_t$ with distinct endpoints in $P$ cross if and
only if $\chi_P(x,y,z) = - \chi_P(x,y,t)$ and $\chi_P(z,t,x) =
-\chi_P(z,t,y)$. We can therefore define the crossing of segments for
abstract chirotopes. A {\em segment} in a chirotope $\chi$ on $X$ is a pair of
elements of $X$, and the segments $xy$ and $zt$ {\em cross in $\chi$}
if they satisfy the above condition. A {\em triangulation} of $\chi$
is an inclusion-maximal family of segments such that no two cross in
$\chi$. The algorithm of Alvarez and Seidel easily generalizes to
abstract chirotopes.

\subsubsection*{Modular decomposition.}

The gist of modular decomposition is to break down discrete structures
to allow efficient recursion. These decompositions usually start by
partitioning the elements of the structure so that any two elements in
a part (also called a {\em module}) are indistinguishable ``from the
outside''. Choosing this partition as coarse as possible while being
nontrivial, and iterating the decomposition within each part yields a
decomposition tree. These ideas originated in graph
theory~\cite{Gallai}, where modules gather vertices with the same
neighbors outside of the module, and several hereditary classes of
graphs have well-behaved modular decompositions, {\em e.g.}
comparability graphs, permutation graphs and
cographs~\cite{HabibPaul}.  We also note that a variant of modular
decomposition, called split decomposition, has been thoroughly studied
in graph theory; see, {\em e.g.}, \cite{rao2008split}.  Interestingly,
our chirotope tree decomposition share some features with both the
modular and split decompositions: our notion of modules is closer to
that of modules in graph than to that of split, but our decomposition
tree are unrooted and unordered, as split decomposition trees.

Other examples of structures for which modular decompositions were
developed include boolean functions, set systems and
permutations~\cite{MoRa84,AA05}. Each structure requires an ad hoc
analysis that the proposed notion of modules leads to a well-defined
decomposition, where each object has a unique decomposition tree. The
proportion of objects with nontrivial decomposition is often
vanishingly small. Nevertheless, such decompositions proved useful
{\em e.g.} in devising fixed-parameter polynomial algorithms for hard
algorithmic problems~\cite[$\mathsection 7$]{HabibPaul} or in studying
subclasses of objects~\cite{pinperm, CFL17}.

\subsubsection*{Mutual avoiding point sets.}

Let us discuss the link with the notion of mutually avoiding subsets,
 as considered in the litterature.
Two planar point sets are {\em mutually avoiding} if no line through
two points from one set separates two points from the other set.
 Any set of $n$
points in general position contains two mutually avoiding subsets of
size $\Omega(\sqrt{n})$ each~\cite{Aronov1994}, a bound that is
asymptotically best possible~\cite{valtr1997mutually}, and smaller
mutually avoiding subsets are actually abundant~\cite[Theorem
  1.3]{suk2023positive}. Mutually avoiding subsets have been applied
to the study of crossing families~\cite{Aronov1994} and empty
$k$-gons~\cite{valtr1997mutually}.

The {\em modules} we consider here are mutually avoiding subsets that
cover the whole chirotope. Our work goes in a different direction than
the above cited works in that they are looking for mutually avoiding
subsets {{\em contained} in a chirotope, whereas we look for
  mutually avoiding subsets that {\em decompose} a chirotope. Our
  analysis thus applies to a more restricted class of chirotopes, the
  (highly) decomposable ones, but gives more information on their
  structure.}

\subsection{Our results}\label{s:overview}

We now define our decomposition and state our main results.

\subsubsection*{Bowtie decomposition.}

The fact that a point set can be split into two modules $P$ and $Q$
 has two interesting consequences at the level of chirotopes.
 First, the
chirotope on point set $P \cup \{p^*\}$, which we denote $\chi_{P \cup \{p^*\}}$, is, up to relabeling, independent of
the choice of the point $p^* \in Q$. Second, the chirotope 
$\chi_{P \cup Q}$ on point set $P\cup Q$ is completely determined by the chirotopes 
$\chi_{P \cup \{p^*\}}$ and $\chi_{Q \cup \{q^*\}}$ for any choices of $p^* \in Q$
and $q^* \in P$. We can use this to decompose $\chi_{P \cup Q}$ in
terms of $\chi_{P \cup \{p^*\}}$ and $\chi_{Q \cup \{q^*\}}$.

\medskip

Let us express this decomposition. Throughout the paper,
we call {\em sign function} on a set $X$,
a function $(X)_3 \to \{-1,+1\}$ that is only required to satisfy the symmetry
condition~\eqref{eq:symmetry_condition}. Let $X$ and $Y$ be
disjoint sets, and let $x^* \notin X$ and $y^* \notin Y$. Given two
sign functions $\chi$ on $X\cup \{x^*\}$  and $\xi$ on $Y\cup
\{y^*\}$, we define the {\em bowtie} $\kappa \eqdef \chi
\join{x^*}{y^*} \xi$ as the sign function on $X \cup Y$ satisfying:
\begin{equation}\label{eq:def_bowtie}
  \left\{
  \begin{array}{rcll}
\kappa(x_1,x_2,x_3) &=& \chi(x_1,x_2,x_3) &\text{ if $x_1,x_2,x_3$ are all in $X$;}\\
\kappa(x_1,x_2,y) &=& \chi(x_1,x_2,x^*) &\text{ if $x_1,x_2$ are in $X $ and $y$ is in $Y$;}\\
\kappa(x,y_2,y_3) &=& \xi(y^*,y_2,y_3) &\text{ if $x$ is in $X$ and $y_2,y_3$ are in $Y$;}\\
\kappa(y_1,y_2,y_3) &=& \xi(y_1,y_2,y_3) &\text{ if $y_1,y_2,y_3$ are all in $Y$.}
  \end{array}\right.
\end{equation}
With the symmetry condition~\eqref{eq:symmetry_condition},
this completely defines $\kappa$ on $(X\cup Y)_3$.
Informally, this amounts to computing the sign function as in $\chi$ (resp. $\xi$)
 if the majority of the elements come from $X$ (resp. $Y$),
possibly replacing the only element outside $X$ (resp. $Y$) by $x^*$ (resp. $y^*$).

 Note that for disjoint modules $P$ and $Q$ of $P \cup Q$,
 we have  $\chi_{P \cup Q} = \chi_{P \cup \{p^*\}} \join{p^*}{q^*} \chi_{Q \cup \{q^*\}}$
  for any choice of $p^* \in Q$ and $q^* \in P$.

\subsubsection*{Bowtie products.}

While defined on sign functions, the bowtie operator also allows to
combine smaller chirotopes into larger ones. This works under the
following conditions.

\begin{proposition}\label{p:whenischirotope}
  Let $X$ and $Y$ be disjoint sets, with $|X|,|Y| \ge 2$, and let
  $x^* \notin X$ and $y^* \notin Y$. Let $\chi$ and $\xi$ be
  chirotopes on $X\cup \{x^*\}$ and $Y\cup \{y^*\}$.\smallskip
  \begin{enumerate}[(i)]
  \item $\chi \join{x^*}{y^*} \xi$ is a chirotope if and only if
    $x^*$ and $y^*$ are extreme in $\chi$ and~$\xi$.\smallskip

  \item $\chi \join{x^*}{y^*} \xi$ is a realizable chirotope if
    and only if $\chi$ and $\xi$ are realizable and $x^*$ and $y^*$ are
    extreme in $\chi$ and~$\xi$.\smallskip
  \item If $\chi \join{x^*}{y^*} \xi$ is a chirotope, then the extreme elements of $\chi \join{x^*}{y^*} \xi$ are
    the elements of $X \cup Y$ extreme in $\chi$ or in $\xi$.
\end{enumerate}
\end{proposition}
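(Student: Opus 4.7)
For part (i), I rely on the following structural characterization of extremity in a chirotope $\chi$: an element $p$ is extreme if and only if the binary relation $a \prec_p b :\iff \chi(p,a,b)=+1$ on the remaining elements is a strict linear order, equivalently contains no $3$-cycle; this follows directly from the transitivity axiom applied with $t=p$. For the forward direction of (i), I assume $\kappa := \chi \join{x^*}{y^*} \xi$ is a chirotope. Were $x^*$ not extreme in $\chi$, I would extract a $3$-cycle $a,b,c \in X$ with $\chi(a,b,x^*)=\chi(b,c,x^*)=+1$ and $\chi(a,c,x^*)=-1$, and pick $y_1,y_2 \in Y$ (using $|Y| \geq 2$) with $\xi(y_1,y_2,y^*)=+1$, swapping labels if needed. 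Feeding the $5$-tuple $(y_1,y_2,a,b,c)$ to the transitivity axiom of $\kappa$, the bowtie formulas unfold the hypotheses to $\xi(y_1,y_2,y^*)=\chi(a,b,x^*)=\chi(b,c,x^*)=+1$ (all true) and the conclusion to $\chi(a,c,x^*)=+1$, contradicting the $3$-cycle. A symmetric argument handles $y^*$.

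For the backward direction of (i), I verify the interiority and transitivity axioms of $\kappa$ by case analysis on the distribution of the involved elements between $X$ and $Y$ (symmetry is immediate from the bowtie definition). Pure cases (all elements in $X$ or all in $Y$) reduce to the axioms of $\chi$ or $\xi$; the $1$-vs-$3$ and $1$-vs-$4$ cases reduce, via cyclic symmetry, to the corresponding axiom of $\chi$ or $\xi$ applied to the tuple in which the single element on the other side is relabeled as $x^*$ or $y^*$. The remaining mixed cases are either trivially satisfied (the bowtie conclusion coincides with one of the bowtie hypotheses) or vacuously satisfied (two bowtie hypotheses evaluate to opposite signs and cannot both be $+1$), except for the $2$-vs-$3$ transitivity in which the two ``special'' elements $t,s$ of the axiom both lie in $Y$ and the remaining three in $X$ (or symmetrically): there the required implication reduces to $\chi(x,y,x^*)=\chi(y,z,x^*)=1 \Rightarrow \chi(x,z,x^*)=1$, i.e.\ the transitivity of $\prec_{x^*}$, equivalent to $x^*$ being extreme by the characterization above.

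For part (ii), the forward direction follows from (i) together with the observation that, given a realization of $\kappa$ on $X \cup Y$, substituting the position of an arbitrary $p \in Y$ for $p_{x^*}$ yields a realization of $\chi$ (and symmetrically for $\xi$), as can be checked directly from the bowtie formulas. For the backward direction, my plan is a projective gluing at infinity: since $x^*$ and $y^*$ are extreme in $\chi$ and $\xi$, I can apply projective transformations sending $p_{x^*}$ and $p_{y^*}$ to infinity in a common direction $\vec{d}$. I then produce a finite realization of $\kappa$ by combining the finite positions of $X$ with a heavily scaled and translated copy of $Y$ placed very far in direction $\vec{d}$; in this limit, the orientation of any mixed triple $(p_x, p_{y_1}, p_{y_2})$ is controlled solely by the position of $p_{y_1}, p_{y_2}$ ``seen from direction $\vec{d}$'', reproducing $\xi(y^*, y_1, y_2)$ uniformly in $x \in X$, while the three other types of triples follow directly from the construction.

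For part (iii), I proceed by case analysis on the extremity witness. Given $e \in X$ extreme in $\chi$ with witness $w \in X$, the same $w$ witnesses extremity of $e$ in $\kappa$, since every triple $(e,w,y)$ with $y \in Y$ unfolds via the bowtie to $\chi(e,w,x^*)$, whose sign matches the constant value of $\chi(e,w,\cdot)$ by extremity. If the $\chi$-witness is $x^*$ itself, I instead use as $\kappa$-witness the $\prec_{y^*}$-extremal element of $Y$ of the appropriate sign, which exists by extremity of $y^*$ in $\xi$ (guaranteed here by (i)). Conversely, any $\kappa$-extremity witness for $e$, whether in $X$ or $Y$, unfolds through the bowtie to a $\chi$-extremity witness in $X \cup \{x^*\}$; the case $e \in Y$ is symmetric. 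The hardest step of the whole proof will be the geometric construction in (ii): the delicate point is that the cross-orientations of mixed triples $(p_x, p_{y_1}, p_{y_2})$ must reproduce the bowtie formula \emph{uniformly} over all $x \in X$, and the projective ``send to infinity'' trick is precisely what makes this uniformity automatic by collapsing $p_{x^*}$ and $p_{y^*}$ into a common point at infinity.
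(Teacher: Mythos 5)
Your parts (i) and (iii) are correct and follow the paper's own route. In (i) you encode Carath\'eodory's lemma (paper's Lemma~2.1/2.2) as ``$\prec_p$ is a transitive tournament iff $p$ is extreme''; your case analysis for the axioms, isolating the $\{s,t\}$-on-one-side transitivity case as the one that needs the extremity hypothesis, matches the paper's. In (iii) you use a different device for the degenerate case (when the $\chi$-witness of $e$ is $x^*$ you swap it for the $\prec_{y^*}$-minimal element of $Y$, whereas the paper switches to the ``$-1$''-witness $a_-$), but both work and the conclusion is the same.

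Part (ii), backward direction, is left as a sketch, and there are two concrete issues. First, a sign error: you propose to send $p_{x^*}$ and $p_{y^*}$ to infinity ``in a common direction $\vec d$'' and then place the $Y$-cloud far in direction $\vec d$ from $X$. But then from $Y$'s perspective the $X$-cloud sits in direction $-\vec d$, whereas $p_{y^*}$ was placed in direction $\vec d$; so the $Y$--$Y$--$X$ orientations would come out wrong. The two proxies must go to infinity in \emph{opposite} directions (in the paper's construction $\mathfrak p_{x^*}$ is on the positive $x$-axis and $\mathfrak q_{y^*}$ on the negative $x$-axis). Second, ``I can apply projective transformations sending $p_{x^*}$ to infinity'' skips a nontrivial step. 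Not every realization of $\chi$ admits a chirotope-preserving projective transform that pushes $p_{x^*}$ out of the way of the line arrangement spanned by the other points: in general one must first \emph{choose} a realization in which $p_{x^*}$ lies in an unbounded cell of that arrangement. Establishing that such a realization exists whenever $x^*$ is extreme is exactly the content of the paper's Lemma~2.4, whose proof finds a line $\ell$ (not through $p_{x^*}$) not separating the point set and with a clear ``escape segment'' from $p_{x^*}$ to $\ell$, and sends $\ell$ to infinity. Your outline needs this lemma (or an equivalent substitute), plus a quantitative bound on how far the squeezed $Y$-cloud must be placed so that the mixed orientations stabilize --- both are supplied in the paper's Proposition~3.2.
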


\noindent
The above proposition is proved in \cref{subsec:subs}. We also show
(\cref{prop:commutativity_asso_bowtie}) that $\Bowtie$ has
associativity and commutativity properties.  We say that a chirotope
$\kappa$ is {\em decomposable} if there exist chirotopes $\chi$ and
$\xi$, each on a strictly smaller set, such that $\kappa = \chi
\join{x^*}{y^*} \xi$. If no such $\chi$ and $\xi$ exist we say that
$\kappa$ is {\em indecomposable}.

\subsubsection*{Chirotope trees.}
Formally, a {\em chirotope tree} is a tree whose nodes are decorated
with chirotopes on disjoint ground sets of size at least $3$, and whose edges select an
element in the ground set of each of their extremities so that (i) no
element is selected more than once, and (ii) each selected element is
extreme in its chirotope. Any element selected by an edge of the
chirotope tree is called a {\em proxy}; for illustration purposes, we
write the decorating chirotope as the label of the node (if it is
realizable, we may draw a realization) and we draw edges of the tree
and proxy elements in red, with the edge connecting the proxy
elements. Here are two examples.

  \medskip
  \begin{center}
    \begin{tikzpicture}[baseline={(0,0)},scale=0.9, every node/.style={transform shape}]
      \node[circle,draw,minimum size=1cm,very thin] (n2) at (0,0) {$\kappa$};
      \node[circle,draw,minimum size=1cm,very thin] (n2bis) at (1.8,0) {$\xi$};
      \node[circle,draw,very thin,minimum size=1cm] (n3) at ($(n2)+(150:1.8)$) {$\chi_3$};
      \node[circle,draw,minimum size=1cm,very thin] (n1) at ($(n2)+(-150:1.8)$) {$\chi_1$};
      \node[circle,draw,minimum size=1cm,very thin] (n4) at ($(n2bis)+(0:1.8)$) {$\chi_2$};

      \node[red] (x1*) at ($(n1)+(30:.35)$) {$\bullet$};
      \node[red] (x3*) at ($(n3)+(-30:0.35)$) {$\bullet$};
      \node[red] (x4*) at ($(n4)+(180:0.35)$) {$\bullet$};
      \node[red] (y2*) at ($(n2)+(180-30:0.35)$) {$\bullet$};
      \node[red] (z2*) at ($(n2bis)+(0:0.35)$) {$\bullet$};
      \node[red] (x2*) at ($(n2)+(180+30:0.35)$) {$\bullet$};
      \node[red] (s*) at ($(n2)+(0:0.35)$) {$\bullet$};
      \node[red] (t*) at ($(n2bis)+(180:0.35)$) {$\bullet$};

      \node[red,above=-0.23 of x1*,font=\footnotesize] {$x_1^*$};
      \node[red,below=-0.23 of x2*,font=\footnotesize] {$x^*$};
      \node[red,above=-0.23 of y2*,font=\footnotesize] {$y^*$};
      \node[red,above=-0.23 of z2*,font=\footnotesize] {$z^*$};
      \node[red,below=-0.23 of x3*,font=\footnotesize] {$x_3^*$};
      \node[red,above=-0.23 of x4*,font=\footnotesize] {$x_2^*$};
      \node[red,above right=-0.3 of s*,font=\footnotesize] {$s^*$};
      \node[red,above left=-0.4 of t*,font=\footnotesize] {$t^*$};

      \draw[red] (x3*.center)--(y2*.center);
      \draw[red] (x2*.center)--(x1*.center);
      \draw[red] (x4*.center)--(z2*.center);
      \draw[red] (s*.center)--(t*.center);
  \end{tikzpicture}
      \hspace{1.2cm}
        \begin{tikzpicture}[baseline={(-.1,0)},scale=0.8, every node/.style={transform shape}]
      \node (n1) at (0,0) {};
      \node (n2) at ($(0,0)+(-90:1.5)$) {};
      \node (n3) at ($(0,0)+(150:1.5)$) {};
      \node (n4) at ($(0,0)+(30:1.5)$) {};
      \draw[very thin] (n1) circle (.65);
      \draw[very thin] (n2) circle (.65);
      \draw[very thin] (n3) circle (.65);
      \draw[very thin] (n4) circle (.65);
      \node[red] (r) at ($(n1)+(150:.35)$) {$\bullet$};
      \node[red] (t) at ($(n1)+(-90:.35)$) {$\bullet$};
      \node[red] (s) at ($(n1)+(30:.35)$) {$\bullet$};
      \node[red, above right=-0.4 of r] {$r^*$};
      \node[red, below right=-0.5 of s] {$s^*$};
      \node[red, above=-0.2 of t] {$t^*$};

      \node (f) at ($(n2)+(-150:.35)$) {$\bullet$};
      \node[red] (z) at ($(n2)+(90:.35)$) {$\bullet$};
      \node (e) at ($(n2)+(-30:.35)$) {$\bullet$};
      \node (g) at ($(n2)+(-90:.1)$) {$\bullet$};
      \node[above left=-0.5 of f] {$f$};
      \node[red, above right=-0.4 of z] {$z^*$};
      \node[above right=-0.4 of e] {$e$};
      \node[below=-0.2 of g] {$g$};

      \node (a) at ($(n3)+(-150:.35)$) {$\bullet$};
      \node (b) at ($(n3)+(90:.35)$) {$\bullet$};
      \node[red] (x) at ($(n3)+(-30:.35)$) {$\bullet$};
      \node[above=-0.2 of a] {$a$};
      \node[above right=-0.4 of b] {$b$};
      \node[red, above=-0.2 of x] {$x^*$};

      \node[red] (y) at ($(n4)+(-150:.35)$) {$\bullet$};
      \node (b) at ($(n4)+(90:.35)$) {$\bullet$};
      \node (c) at ($(n4)+(-30:.35)$) {$\bullet$};
      \node[red,above=-0.2 of y] {$y^*$};
      \node[above right=-0.4 of b] {$c$};
      \node[above=-0.2 of c] {$d$};

      \draw[red] (x.center) -- (r.center);
      \draw[red] (s.center) -- (y.center);
      \draw[red] (t.center) -- (z.center);
    \end{tikzpicture}
  \end{center}

\subsubsection*{Chirotope of a chirotope tree.}
To any chirotope tree $T=(V_T,E_T)$ we associate a sign function
$\chi_T$ as follows. For a node $v$ of $T$, let $\chi_v$ denote the
chirotope decorating it, and let $X_v$ denote the set of nonproxy
elements of the ground set of $\chi_v$. For $x \in \cup_{w \in V_T}
X_w$, the \emph{representative} $R_T(x,v)$ of $x$ in the node $v$ is
$x$ if $x \in X_v$, and otherwise it is the label of the proxy
selected in $\chi_v$ by the first edge on the path in $T$ from $v$ to
the node containing $x$.  For distinct $x,y,z \in \cup_{w \in V_T}
X_w$, we let $v(x,y,z)$ be the intersection node of the paths in $T$
from $x$ to $y$,\footnote{Rather: from the node containing $x$ to the
node containing $y$, and similarly for $(y,z)$ and $(x,z)$.}  from $y$
to $z$ and from $x$ to $z$ (if two or three elements among $x$, $y$,
$z$ are in the same $X_w$, then we set $v(x,y,z)=w$). We define a sign
function $\chi_T$ on $\cup_{w \in V_T} X_w$ by\footnote{$R(x,v)$
replaces, {\em i.e.} serves as a proxy for, $x$ for computing the sign
$\chi_T(x,y,z)$ in the node $v$.}
\begin{equation}
  \chi_T(x,y,z) \eqdef \chi_v(R(x,v),R(y,v),R(z,v)) \qquad \text{for } v = v(x,y,z).
\end{equation}

\noindent
For example, if $T$ is the chirotope tree on the right in the previous
picture we have $\chi_T(a,c,f)=\kappa(r^*,s^*,t^*)=-1$, where $\kappa$
is the chirotope decorating the central node. The definition of
$\chi_T$ ensures the following property: if removing an edge in a tree
$T$ produces two subtrees $T_1$ and $T_2$, then $\chi_T$ is the bowtie
product of $\chi_{T_1}$ and $\chi_{T_2}$
(see Lemma~\ref{lem:chiT_is_join_2}).  Furthermore, this notion behaves
well with those of chirotopes and of realizability.
\begin{proposition}
  For any chirotope tree $T$, $\chi_T$ is a chirotope. Moreover, if
  all chirotopes decorating a node in $T$ are realizable, then
  $\chi_T$ is realizable.
\label{prop:chirotope_tree}
\end{proposition}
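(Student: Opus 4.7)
The plan is to proceed by induction on the number of nodes of $T$. For the base case, when $T$ consists of a single node $v$, the sign function $\chi_T$ equals $\chi_v$, which is a chirotope by the definition of a chirotope tree (and realizable exactly when $\chi_v$ is).

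For the inductive step, I would pick any edge $e$ of $T$, let $x^{*}$ and $y^{*}$ denote the two proxies it selects, and remove $e$ to obtain two smaller chirotope trees $T_1 \ni x^{*}$ and $T_2 \ni y^{*}$. The key observation is that $x^{*}$ is no longer selected by any edge of $T_1$, hence becomes a \emph{non-proxy} element of $T_1$; symmetrically for $y^{*}$ in $T_2$. Lemma~\ref{lem:chiT_is_join_2} then yields
\[
  \chi_T \;=\; \chi_{T_1} \join{x^*}{y^*} \chi_{T_2},
\]
and the induction hypothesis gives that $\chi_{T_1}$ and $\chi_{T_2}$ are chirotopes. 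To conclude via Proposition~\ref{p:whenischirotope}(i) that $\chi_T$ is a chirotope, I need that $x^{*}$ is extreme in $\chi_{T_1}$ and $y^{*}$ is extreme in $\chi_{T_2}$.

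This extremity condition is what requires care, and I would handle it by strengthening the induction to: \emph{the extreme elements of $\chi_T$ are exactly the non-proxy elements of $T$ whose label is extreme in the decorating chirotope of their containing node}. Part (iii) of Proposition~\ref{p:whenischirotope} propagates this enhanced statement through the inductive step, since the non-proxy ground set of $T$ is obtained from those of $T_1$ and $T_2$ by removing $x^{*}$ and $y^{*}$. The axioms of a chirotope tree force $x^{*}$ to be extreme in the decorating chirotope of its containing node, and because $x^{*}$ is non-proxy in $T_1$, the strengthened hypothesis applied to $T_1$ gives that $x^{*}$ is extreme in $\chi_{T_1}$; symmetrically for $y^{*}$ and $\chi_{T_2}$. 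Proposition~\ref{p:whenischirotope}(i) now applies, proving that $\chi_T$ is a chirotope.

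The realizability statement follows by the very same induction, invoking part (ii) of Proposition~\ref{p:whenischirotope} in place of part (i): if all decorating chirotopes are realizable, then induction yields that $\chi_{T_1}$ and $\chi_{T_2}$ are realizable, the extremity of $x^{*}$ and $y^{*}$ is obtained as above, and part (ii) gives realizability of the bowtie. The main obstacle I foresee is essentially bookkeeping: making sure that removing $e$ really converts $x^{*}$ and $y^{*}$ from proxies to non-proxies in the smaller trees, and that the strengthened induction hypothesis on extreme elements is phrased so that part (iii) of Proposition~\ref{p:whenischirotope} feeds directly back into the next application of part (i). Once this is in place, the three items of Proposition~\ref{p:whenischirotope} combine cleanly to close the induction.
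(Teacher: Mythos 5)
Your proof is correct and follows the same route as the paper: induction on the number of nodes, with \cref{lem:chiT_is_join_2} supplying the bowtie decomposition and \cref{p:whenischirotope} (i)--(iii) closing the inductive step. You make explicit the strengthened induction hypothesis on extreme elements that the paper leaves implicit when it calls the argument a ``straightforward induction,'' which is exactly the right bookkeeping point.
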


By Proposition~\ref{p:whenischirotope}(iii), the extreme elements of
$\chi_T$ are the nonproxy extreme elements of its decorating
chirotopes.  Hence, if $T$ has $k$ nodes, then $\chi_T$ has at least
$k+2$ extreme elements. Indeed, as each node has at least $3$ extreme
elements and $T$ has exactly $k-1$ edges, therefore $2(k-1)$ proxies,
there are at least $3k-2(k-1)=k+2$ extreme elements in $\chi_T$.

\subsubsection*{Canonical chirotope trees.}
Let us consider a chirotope tree $T$ and a node $v$ of $T$. If
$\chi_v=\kappa \join{s^*}{t^*} \xi$ is decomposable, then $v$ can be
replaced by two nodes decorated with $\kappa$ and $\xi$, connected by
an edge, so that the resulting chirotope tree $T'$ satisfies
$\chi_{T'}=\chi_T$ (see Section~\ref{ssec:operation-trees} for
details). Hence, a chirotope $\chi$ corresponds to many chirotope
trees, at least one of which is decorated only by indecomposable
chirotopes. However, even requesting the chirotopes decorating the
nodes to be indecomposable does not ensure that a unique chirotope
tree represents a given chirotope. Here are two trees with only
indecomposable decorations and the same associated chirotope on
$\{a,b,c,d,e,f,g\}$.

  \begin{center}
    \begin{tikzpicture}[baseline={(-.1,0)},scale=0.8, every node/.style={transform shape}]
      \node (n1) at (0,0) {};
      \node (n2) at ($(0,0)+(-90:1.5)$) {};
      \node (n3) at ($(0,0)+(150:1.5)$) {};
      \node (n4) at ($(0,0)+(30:1.5)$) {};
      \draw[very thin] (n1) circle (.65);
      \draw[very thin] (n2) circle (.65);
      \draw[very thin] (n3) circle (.65);
      \draw[very thin] (n4) circle (.65);
      \node[red] (r) at ($(n1)+(150:.35)$) {$\bullet$};
      \node[red] (t) at ($(n1)+(-90:.35)$) {$\bullet$};
      \node[red] (s) at ($(n1)+(30:.35)$) {$\bullet$};
      \node[red, above right=-0.4 of r] {$r^*$};
      \node[red, below right=-0.5 of s] {$s^*$};
      \node[red, above=-0.2 of t] {$t^*$};

      \node (f) at ($(n2)+(-150:.35)$) {$\bullet$};
      \node[red] (z) at ($(n2)+(90:.35)$) {$\bullet$};
      \node (e) at ($(n2)+(-30:.35)$) {$\bullet$};
      \node (g) at ($(n2)+(-90:.1)$) {$\bullet$};
      \node[above left=-0.5 of f] {$f$};
      \node[red, above right=-0.4 of z] {$z^*$};
      \node[above right=-0.4 of e] {$e$};
      \node[below=-0.2 of g] {$g$};

      \node (a) at ($(n3)+(-150:.35)$) {$\bullet$};
      \node (b) at ($(n3)+(90:.35)$) {$\bullet$};
      \node[red] (x) at ($(n3)+(-30:.35)$) {$\bullet$};
      \node[above=-0.2 of a] {$a$};
      \node[above right=-0.4 of b] {$b$};
      \node[red, above=-0.2 of x] {$x^*$};

      \node[red] (y) at ($(n4)+(-150:.35)$) {$\bullet$};
      \node (b) at ($(n4)+(90:.35)$) {$\bullet$};
      \node (c) at ($(n4)+(-30:.35)$) {$\bullet$};
      \node[red,above=-0.2 of y] {$y^*$};
      \node[above right=-0.4 of b] {$c$};
      \node[above=-0.2 of c] {$d$};

      \draw[red] (x.center) -- (r.center);
      \draw[red] (s.center) -- (y.center);
      \draw[red] (t.center) -- (z.center);
    \end{tikzpicture}
    \hspace{2cm}
    \begin{tikzpicture}[baseline={(-.1,0)},scale=0.8, every node/.style={transform shape}]
      \node (n1) at (0,0) {};
      \node (n2) at ($(0,0)+(3,0)$) {};
      \node (n3) at ($(0,0)+(-1.5,0)$) {};
      \node (n4) at ($(0,0)+(1.5,0)$) {};
      \draw[very thin] (n1) circle (.65);
      \draw[very thin] (n2) circle (.65);
      \draw[very thin] (n3) circle (.65);
      \draw[very thin] (n4) circle (.65);
      \node[red] (r) at ($(n1)+(180:.35)$) {$\bullet$};
      \node (c) at ($(n1)+(90:.35)$) {$\bullet$};
      \node[red] (s) at ($(n1)+(0:.35)$) {$\bullet$};
      \node[red, below=-0.23 of r] {$r^*$};
      \node[red, below=-0.23 of s] {$s^*$};
      \node[right=-0.23 of c] {$c$};

      \node (f) at ($(n2)+(-60:.35)$) {$\bullet$};
      \node[red] (z) at ($(n2)+(180:.35)$) {$\bullet$};
      \node (e) at ($(n2)+(60:.35)$) {$\bullet$};
      \node (g) at ($(n2)+(0:.1)$) {$\bullet$};
      \node[below left=-0.4 of f] {$f$};
      \node[red, above=-0.2 of z] {$z^*$};
      \node[above left=-0.4 of e] {$e$};
      \node[right=-0.27 of g] {$g$};

      \node (a) at ($(n3)+(-120:.35)$) {$\bullet$};
      \node (b) at ($(n3)+(120:.35)$) {$\bullet$};
      \node[red] (x) at ($(n3)+(0:.35)$) {$\bullet$};
      \node[above=-0.2 of a] {$a$};
      \node[above right=-0.4 of b] {$b$};
      \node[red, above=-0.2 of x] {$x^*$};

      \node[red] (t) at ($(n4)+(180:.35)$) {$\bullet$};
      \node (d) at ($(n4)+(90:.35)$) {$\bullet$};
      \node[red] (y) at ($(n4)+(0:.35)$) {$\bullet$};
      \node[red, below=-0.23 of t] {$t^*$};
      \node[red, below=-0.23 of y] {$y^*$};
      \node[right=-0.23 of d] {$d$};

      \draw[red] (x.center) -- (r.center);
      \draw[red] (t.center) -- (s.center);
      \draw[red] (y.center) -- (z.center);
    \end{tikzpicture}
  \end{center}

\noindent
It turns out that the  {\em convex} chirotopes
(a chirotope is convex if all its elements are extreme)
 are the only source of redundancy;
 in the above example, the chirotope is not convex,
but its restriction to  $\{a,b,c,d,e\}$ is.
 This leads us to define a chirotope tree as
\emph{canonical} if every node is decorated by a convex or
indecomposable chirotope, and if no edge connects two nodes decorated
by convex chirotopes.

\begin{theorem}
  \label{thm:uniqueness_canonical}
  For any chirotope $\chi$, there is a unique canonical
  chirotope tree $T$ with $\chi_T = \chi$.
\end{theorem}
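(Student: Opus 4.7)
The plan is to handle existence and uniqueness separately, with uniqueness being the substantive part. For existence, I start from the trivial one-node tree decorated with $\chi$ and alternate two reductions: (i) if some node $v$ has a decomposable non-convex decoration $\chi_v = \kappa \join{a^*}{b^*} \xi$, replace $v$ with two nodes decorated by $\kappa$ and $\xi$ and joined by a new edge carrying proxies $a^*, b^*$; (ii) if two adjacent nodes are both convex, merge them into a single node decorated by their bowtie, which remains convex by \Cref{p:whenischirotope}(iii). Operation (i) strictly decreases the total size of non-convex, non-indecomposable decorations, while (ii) strictly decreases the number of adjacent convex pairs, so alternating them terminates, producing a canonical tree; the associated sign function is $\chi$ throughout because the $\chi_T$ construction is compatible with edge subdivision.

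For uniqueness, the strategy is to extract the edges of any canonical tree for $\chi$ from $\chi$ alone via a compatibility criterion. Given a canonical tree $T$ with $\chi_T = \chi$, removing any edge of $T$ yields a bipartition of $X$ which, using the fact that $\chi_T$ factors as a bowtie along every edge, is a module pair of $\chi$. Call two module pairs $\{P,Q\}$ and $\{P',Q'\}$ of $\chi$ \emph{compatible} when one of $P \cap P'$, $P \cap Q'$, $Q \cap P'$, $Q \cap Q'$ is empty, and call a module pair an \emph{edge-module} if it is compatible with every module pair of $\chi$. The key claim is that the edge-bipartitions of any canonical tree for $\chi$ are exactly the edge-modules of $\chi$; since the latter notion depends only on $\chi$, this pins down the edges of any canonical tree in terms of $\chi$.

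This claim rests on a structural lemma I would prove by induction on the number of nodes of $T$: \emph{every module pair of $\chi_T$ is either (a) the bipartition obtained by removing an edge of $T$, or (b) the lift to $X$ of a non-trivial arc bipartition of a convex node $v$ of $T$}, where lifting means placing each subtree attached to $v$ entirely on the side of its proxy. To prove (a)--(b), I would restrict a putative module pair $\{P,Q\}$ to the decoration of each node $v$ of $T$, argue that the restriction is itself a module pair of $\chi_v$, and then use the canonical hypothesis on $\chi_v$ (indecomposable or convex) to force the restriction to be trivial or arc-shaped; a connectedness argument along $T$ then yields the global dichotomy. Given (a)--(b), a routine case analysis shows that edge-bipartitions are mutually compatible and are compatible with every arc-lifted bipartition, whereas every non-trivial arc bipartition at a convex node with at least four elements admits a ``rotated'' incompatible neighbor arc. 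Hence edge-bipartitions coincide with edge-modules.

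Once the set of edge-bipartitions is identified as an invariant of $\chi$, a standard splits-system argument in the spirit of Buneman's theorem recovers both the tree topology and the assignment of each element of $X$ to a node: two elements sit at the same node exactly when no edge-bipartition separates them. The decoration $\chi_v$ at a node $v$ is then reconstructed from $\chi$ by restricting to $X_v$ together with one representative per proxy direction, and this restriction is independent of the representatives (up to the proxy relabeling) because each subtree attached at $v$ is a module; the canonical conditions ``convex or indecomposable'' and ``no two adjacent convex nodes'' force the decoration to be unique. The principal obstacle is the structural lemma (a)--(b): a priori a mutually-avoiding bipartition could interact with $T$ in intricate ways, and ruling out every configuration beyond edges and convex-node arcs is the heart of the proof.
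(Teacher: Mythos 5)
Your approach is genuinely different from the paper's. The paper proves the theorem by casting the split/contraction moves as a rewriting system, checking termination (multiset ordering) and local confluence (a lengthy case analysis, their \cref{prop:local_confluence}), and invoking Newman's lemma. You instead attempt an \emph{intrinsic} characterization: identify the edge-bipartitions of any canonical tree as the ``universally compatible'' module pairs of $\chi$, then recover the tree from this split system à la Buneman. This is the classical style of uniqueness proof in modular decomposition theory (Gallai's theorem for graph modules, split decomposition for circle graphs), and if it went through it would arguably be cleaner and would explain \emph{why} the canonical tree is unique, not merely that it is.

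However, there is a real gap, and you point at it yourself: the structural lemma asserting that every nontrivial module pair of $\chi_T$ is either (a) an edge-bipartition of $T$ or (b) the lift of an arc-bipartition of a single convex node. This is the entire content of the theorem in disguise, and your sketch of it does not hold up as written. The central difficulty is that ``restrict a module pair $\{P,Q\}$ to the decoration of a node $v$'' is not a priori well-defined: each proxy $s_e$ of $\chi_v$ stands for an entire subtree $T_e$, and if $\{P,Q\}$ splits the ground elements of $T_e$ nontrivially then $s_e$ has no natural side. You would first need to show that for a module pair $\{P,Q\}$ and a node $v$, at most one edge incident to $v$ has its subtree split, and then bootstrap this into the claim that at most one node of $T$ is split at all — this is where the intersection lemma on modules (\cref{lm:intersection_modules}) and a careful treatment of the tree geometry would have to enter, and ruling out a module pair that threads through two non-adjacent convex nodes (with indecomposable nodes in between that are crossed ``trivially'') is not a one-line connectedness argument. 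Essentially the same combinatorial content shows up in the paper's confluence proof as the four-way intersection analysis involving $A\cap Y$, $A\cap Z$, $B\cap Y$, $B\cap Z$; the difficulty has not gone away, only relocated. Until the structural lemma is actually proved, the rest of the argument (compatibility, rotated-arc incompatibility, Buneman reconstruction, recovery of decorations) is sound in outline but rests on an unsupported foundation.
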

\noindent We prove this theorem in \cref{ssec:uniqueness_canonical}.

\subsubsection*{Counting triangulations.}

Let $\mathcal{T}_{\kappa}$ denote the set of triangulations of a
chirotope $\kappa$ on $X$. Given $x^* \in X$, we let
$P_{\kappa,x^*}(s)=\sum_{T\in \mathcal{T}_{\kappa}}s^{\deg_{T}(x^*)}$
denote the generating polynomial of the triangulations of $\kappa$,
marking the degree of $x^*$. We prove (see
\cref{prop:triangulation_bowtie}) that if $\kappa =
\chi\join{x^*}{y^*}\xi$, then
\begin{equation}\label{eq:counting bowtie}
 |{\cal T}_\kappa| = \sum_{\substack{T'\in \mathcal{T}_{\chi}\\T''\in \mathcal{T}_{\xi}}}\binom{\deg_{T'}(x^*)+\deg_{T''}(y^*)-2}{\deg_{T'}(x^*)-1}    = \sum_{a,b \ge 2} \tbinom{a+b-2}{a-1} [s^a]P_{\chi,x^*}(s) \  [t^b]P_{\xi,y^*}(t).
\end{equation}
More generally, given a chirotope tree~$T$, we can compute the number
of triangulations of $\chi_T$ given, for each node $v$ of $T$, a
collection of generating polynomials $Q^U_v$ for the triangulations of
$\chi_v$. These polynomials are indexed by the possible restriction
$U$ of the triangulation to the set $Y_v$ of proxy elements of
$\chi_v$, and $Q^U_v$ enumerate the triangulations with restriction
$U$, keeping track of the degrees of all proxy elements (it is a
multivariate polynomial with one variable for each proxy element).
This general principle can be used on the one hand to find explicit
formulas for the number of triangulations of some families of point
sets, and on the other hand to design an effective algorithm to
compute the number of triangulations of a chirotope, {\em given its
  decomposition tree}.

In \cref{sec:kernel-method}, we consider a family $\mathcal{F}$ of
chirotope trees, whose underlying trees are $n$-paths, with all nodes
decorated by one of two chirotopes on four elements (see
\cref{sec:kernel-method}). For the chirotopes of such chirotope paths,
we obtain a recurrence formula for the generating polynomials of
triangulations.  These recurrences can be solved via the {\em kernel
  method} from analytic combinatorics, leading to one and the same
closed formula for the number of triangulations of any chirotope
represented by a path in $\mathcal{F}$
(Proposition~\ref{p:counting-chains}).

For the algorithmic aspect, we recall that, if, for a node $v$, $\chi_v$ has $n_v$
elements, $k$ of which are proxies, we have to compute a family $Q^U_v$ of $k$-variate polynomials.
We show that the number of polynomials in the family is bounded by $\O(A^k)$, with
$A=6+4 \sqrt 2 \approx 11.66$; see \cref{lem:control_U}.
In addition, these polynomials can be computed in time $O\pth{A^k 2^{n_v}
  n_v^{k+2}}$ by a simple modification of the Alvarez-Seidel
algorithm~\cite{alvarez2013simple}.
The following result gives a
precise bound on the complexity of counting triangulations, given the
polynomials $Q^U_v$, for all nodes $v$ of $T$ and all noncrossing configurations $U$ of edges
on the proxy elements of $v$.
\begin{proposition}
  \label{p:complexity_triangulations}
  Let $T$ be a chirotope tree with $m$ edges, in which each node has
  degree at most $k$, and such that $\chi_T$ has size $n$. The number
  of triangulations of $\chi_T$ can be computed from the collections of polynomials
  $\{Q^U_v\}$ in time $O\pth{\A^{k} n^{2k}
    m}$ in the Real-RAM model.\footnote{In particular, this model
  allows arithmetic computation on arbitrary reals in time $\O(1)$. We
  refer to Erickson et al.~\cite{erickson2022smoothing} for a formal
  definition of the Real-RAM model of computation.}
\end{proposition}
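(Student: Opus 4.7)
The plan is a dynamic-programming traversal of $T$, applying at each tree edge a refined version of the bowtie-counting rule \eqref{eq:counting bowtie}. The first task is to establish this refinement: if $\kappa = \chi \join{x^*}{y^*} \xi$, then the polynomial $P_\kappa$ (marking degrees and proxy-pair edges of all surviving proxies of $\kappa$) can be expressed as a bilinear function of $P_\chi$ and $P_\xi$. The reason is that a triangulation of $\kappa$ decomposes as a triple (triangulation $T'$ of $\chi$, triangulation $T''$ of $\xi$, stitching pattern), where the stitching records the zig-zag of cross-edges between $T'$-neighbors of $x^*$ and $T''$-neighbors of $y^*$ (whose total count yields the binomial coefficient of \eqref{eq:counting bowtie}). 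Edges of $T'$ and $T''$ not incident to $x^*$ or $y^*$ survive in $\kappa$'s triangulation, so the degrees of non-glued proxies and the proxy-pair edges entirely on one side carry over. The cross proxy-pair edges of $\kappa$ are determined by the stitching together with the neighbor indicators $e_{x^* z^*}$ and $e_{y^* w^*}$, which are already tracked in $P_\chi$ and $P_\xi$.

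With the refined combination rule in hand, the algorithm is: root $T$ at any node and process its nodes in post-order, starting with $P_v$ at each node $v$. For each non-root node $v$ with parent $p$, when we reach $v$ we have already combined into $P_v$ the contributions of all descendants, so its only surviving proxy variables are those of $v$ itself that have not yet been glued; we then apply the combination rule along the edge $(v,p)$, eliminating the two glued proxies and updating $P_p$. After all children of $v$ have been absorbed, only the proxy toward the parent remains (or none, if $v$ is the root). At the root, the final polynomial has no variables and equals $|\mathcal{T}_{\chi_T}|$.

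For the complexity, note that each intermediate polynomial involves at most $k$ proxies, hence at most $k$ degree variables (each valued in $\{0,\dots,n\}$) and at most $\binom{k}{2}$ binary edge indicators. A combination along an edge, via the refined rule, iterates over at most $2^{2k}$ subset-choices on each side (namely, which proxies are neighbors of the glued proxy, together with the edges these neighbors share) and over $n^{2k}$ degree configurations for the surviving proxies of both sides, each entry requiring $O(1)$ Real-RAM operations (including evaluation of the binomial coefficient). This gives a per-edge bound of $O(2^{4k} n^{2k})$, and summing over the $m$ edges of $T$ yields the claimed $O(2^{4k} n^{2k} m)$ total.

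The main obstacle is the refined combination rule: to make the bilinear description precise, one must track exactly how each stitching pattern determines the cross proxy-pair edges in $\kappa$'s triangulation, and verify that no additional information about $T'$ or $T''$ (beyond what is already encoded in $P_\chi$ and $P_\xi$) is needed. Once this combinatorial bookkeeping is set up, the algorithm and its complexity follow directly.
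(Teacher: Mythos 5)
Your high-level plan is the paper's: root $T$, contract leaves into their parents via a generalized combination rule on full triangulation polynomials, and at the end read off $|\mathcal T_{\chi_T}|$ from the constant polynomial at the root (this is the algorithm in Section~\ref{sec:general_algo}). Your complexity count also matches — the paper bounds each contraction by $n^2 \cdot 16^k \cdot n^{2(k-1)} = O(2^{4k}n^{2k})$, where $n^2$ ranges over the degrees of the two glued proxies, $16^k$ bounds the set of edge-monomials among the surviving proxies, and $n^{2(k-1)}$ is the brute-force polynomial multiplication; you reach the same $O(2^{4k}n^{2k})$ per edge by directly counting configurations. (Two small looseness remarks: one side of each contraction is an already-processed subtree carrying a single proxy, so your ``subset choices on each side'' counting is generous but of the right order; and evaluating binomial coefficients in $O(1)$ presumes an $O(n^2)$ precomputation, which is negligible.)

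The genuine gap is precisely what you flag as ``the main obstacle'': the combination rule itself. The paper establishes it as \cref{p:fullpol_recurrence}, an explicit formula expressing $Q_{\kappa,x_1^*,\dots,x_{k-1}^*}$ through the coefficient extractions $[s^a]P_{\chi,x^*}$, $[y_{S,k}x_k^b]Q_{\xi,x_1^*,\dots,x_k^*}$, and the stitching-count polynomials $R_{a,b}(\{x\}_S)$. Its proof is not routine bookkeeping: it rests on the bijection of \cref{p:bij} between triangulations of $\kappa$ and triples $(T',T'',H)$ (whose proof requires an edge-count argument via Euler's formula to certify that the two projections are themselves triangulations), on \cref{l:cross} showing that cross-edges between the glued neighborhoods are ordered as on a concave chain, and on \cref{lem:concave_k}, which counts the maximal stitchings with prescribed partial degrees as a binomial coefficient. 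Your sketch — ``the stitching records the zig-zag of cross-edges'' and ``verify that no additional information \dots\ is needed'' — names the right target but does not establish it. Until this combinatorial core is proven, the proposal is an algorithmic outline with a correct complexity estimate, not a complete proof of the statement.
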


We implemented our method for a family of trees of degree at most~$3$
(see \cref{a:arity3} for the details and access to the code). This
implementation is merely a proof of concept, and it is beyond the
scope of this paper to optimize it or benchmark it, but we can mention
that on a laptop, it took a few seconds to count the triangulations of
the example of Figure~\ref{f:exemple} page~\pageref{f:exemple}, that
assembles a set of $254$ points from decorating chirotopes of size
$9$, and that it handles random ternary chirotope trees with 146 nodes
(1024 elements) in less than 5 minutes.

\subsubsection*{Proportion of decomposable chirotopes among realizable ones.}

Finally, let $t_n$ denote the
number of {\em realizable} chirotopes of size $n$ and $d_n$ denote the number
of those that are decomposable.
Our last result gives a precise estimates of $d_n$, in terms of $t_n$ and $t_{n-1}$.
In particular, as announced earlier, almost all realizable chirotopes
are indecomposable (as for graphs with respect
to the split or modular decompositions).
\begin{theorem}
\label{thm:proportion_indecomposable}
For large $n$, we have $3 \pth{n-\O(n^{-2})} t_{n-1} \le d_n \le \O(n^{-3}) t_n$.
\end{theorem}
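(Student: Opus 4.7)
The upper and lower bounds are proved separately, with the upper bound used to bootstrap the sharp correction term in the lower bound.

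\emph{Upper bound.} Every decomposable $\kappa$ on $[n]$ admits a module partition $[n]=X\sqcup Y$ with $2\le|Y|\le\lfloor n/2\rfloor$. Fixing $Y$, \Cref{p:whenischirotope} shows that $\kappa$ is uniquely determined by two chirotopes: $\chi$ on $X\cup\{x^*\}$ with $x^*$ extreme, and $\xi$ on $Y\cup\{y^*\}$ with $y^*$ extreme. Letting $t^*_m = t_m\bar h_m/m$ count realizable chirotopes on $m$ labels with a designated extreme element, where $\bar h_m$ is the average number of extreme points, summing over $Y$ overcounts $d_n$:
\[
  d_n \le \sum_{k=2}^{\lfloor n/2\rfloor}\binom{n}{k}\,t^*_{n-k+1}\,t^*_{k+1}.
\]
The $k=2$ term equals $n\bar h_{n-1}t_{n-1}\le n^2 t_{n-1}$; classical Goodman--Pollack-type asymptotics for the growth of $t_n$ yield $t_{n-1}/t_n=\O(n^{-5})$, so this is $\O(n^{-3}) t_n$. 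The terms with $k\ge 3$ decay strictly faster and are negligible.

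\emph{Lower bound.} I count canonical chirotope trees on $[n]$ with exactly two nodes, one of size $3$. The size-$3$ node is automatically convex, so by the canonical condition the other node, of size $n-1$, must be decorated with an indecomposable non-convex chirotope. By \Cref{thm:uniqueness_canonical}, distinct such trees yield distinct chirotopes on $[n]$. The count is
\[
  \binom{n}{2}\cdot 2\cdot \frac{i^{\mathrm{nc}}_{n-1}\,\bar h^{\mathrm{ind,nc}}_{n-1}}{n-1} \;=\; n\,i^{\mathrm{nc}}_{n-1}\,\bar h^{\mathrm{ind,nc}}_{n-1}\;\ge\; 3n\,i^{\mathrm{nc}}_{n-1},
\]
where $i^{\mathrm{nc}}_{n-1}$ counts indecomposable non-convex labeled realizable chirotopes on $n-1$ labels, and the inequality uses the trivial $\bar h\ge 3$. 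The upper bound applied at $n-1$ gives $d_{n-1}=\O(n^{-3})t_{n-1}$, and the number of labeled convex chirotopes on $n-1$ labels is only $(n-2)!/2=o(t_{n-1})$, so $i^{\mathrm{nc}}_{n-1}=t_{n-1}(1-\O(n^{-3}))$. Hence $d_n\ge 3n(1-\O(n^{-3}))t_{n-1}=3(n-\O(n^{-2}))t_{n-1}$.

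\textbf{Main obstacle.} The essential input is a sharp enough growth rate for $t_n$, sufficient to make the $k=2$ term of the upper bound collapse to $\O(n^{-3})$; once this is in hand the tail $k\ge 3$ is easy. The lower-bound bootstrap through \Cref{thm:uniqueness_canonical} is what produces the delicate $\O(n^{-2})$ correction; a direct inclusion--exclusion over 2-modules would yield only an $\O(1)$ correction at best, because pairs of distinct 2-modules are not rare enough on their own.
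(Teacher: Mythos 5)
Your lower-bound argument is essentially the one in the paper: count chirotopes of the form $\chi_1 \join{x^*}{y^*}\chi_2$ with $\chi_1$ of size $3$ (automatically convex) and $\chi_2$ indecomposable of size $n-1$, use Theorem~\ref{thm:uniqueness_canonical} to see these are distinct, and bootstrap via the upper bound applied at $n-1$. That part is fine. But your upper bound has a genuine gap, and a secondary inaccuracy.

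The gap is in the sentence ``The terms with $k\ge 3$ decay strictly faster and are negligible.'' This is precisely the hard part of the theorem. The classical bounds on $t_n$ --- the Warren-type upper bound $t_n\le e^{2n}(n-1)^{4n}$ and the extension lower bound $t_{n+1}\ge Kn^4t_n$ --- have an exponential gap between them, and as a consequence they cannot control the middle of the sum (say $k\sim n/2$). Indeed, these estimates do not even imply $t_{n/2+1}^2\le t_n$, so they certainly cannot show that $\binom{n}{n/2}t^*_{n/2+1}t^*_{n/2+1}=\O(n^{-3}t_n)$. The paper splits the sum at $\varepsilon n$: the ``small'' part ($\min(n_1,n_2)<\varepsilon n$) is indeed handled by iterating the extension lower bound and the Warren bound, exactly as you do for $k=2$, but the ``large'' part requires a completely different idea --- a direct geometric construction producing many realizable chirotopes with a given quasi-module, together with a bound on the number of large quasi-modules in a fixed chirotope. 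That construction and its counting occupy an entire section of the paper and cannot be dismissed as a ``tail.''

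Secondary point: you assert $t_{n-1}/t_n=\O(n^{-5})$, attributing it to Goodman--Pollack asymptotics. The best known bound is $t_{n-1}/t_n=\O(n^{-4})$ (from $t_{n+1}\ge Kn^4t_n$); even $\Theta(n^{-4})$ is not known. Your $k=2$ estimate $\le n^2 t_{n-1}$ combined with the (wrong) $\O(n^{-5})$ happens to give $\O(n^{-3})t_n$, but this is a cancellation of errors: the correct route is to note that the expected number of extreme points is bounded by $4$, giving the $k=2$ term as $\O(n\,t_{n-1})$, and then apply the available $\O(n^{-4})$ rate.
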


\noindent
Since $t_n$ is of order $n^{4n}$, we expect $t_{n-1}/t_n$ to behave as
$\Theta(n^{-4})$ (but only an upper bound of this order is known).  If
this is indeed the case, the lower bound in our theorem behaves as
$\Theta(n^{-3})t_{n}$, i.e.~our bounds are tight up to multiplicative
constants. Note that we do not have an analogue result for abstract chirotopes.

\subsection{Discussion and related work}

\subsubsection*{Originality with respect to other modular decomposition theories.}

The theory of modular decomposition~\cite{MoRa84} does not offer any
unifying model that specializes to various structures, but rather
proposes general guidelines for developing such a theory. The
implementation of these guidelines usually requires analyses that are
specific to the objects at hand, and the case of chirotopes is no
exception: almost every step requires some specific geometric
idea(s).

\subsubsection*{Among realizable chirotopes, the proportion of indecomposable ones tends to $\mathbf{1}$.}

This is analogue to the well-known fact that most graphs are {\em
  prime graphs} for the modular decomposition. The proof is, however,
much more difficult and interesting than in the graph case because (i)
the number of realizable chirotopes grows slower than the number of
graphs on $n$ vertices and (ii) this number is only known up to
exponential corrections. In particular, our proof uses some geometric
constructions and is specific to the realizable setting.

\medskip

Let us also mention that even if most graphs are prime graphs
for the modular decomposition, this decomposition has turned out useful in many ways:
design of efficient algorithms \cite{HabibPaul}, enumeration and study of specific classes~\cite{bassino2022cographs},~\dots\
Similarly, for chirotopes, our tree decomposition allows one for example to construct
 families of large chirotopes for which enumerating triangulations
 is algorithmically easy (see \cref{sec:triangulations}).

\subsubsection*{Recursive constructions of point sets.}

Recursive constructions of point sets abund in discrete geometry, and
with some care several of them directly translate into recursively
defined canonical chirotope trees ({\em e.g.} Horton sets). The
closest predecessor to (and inspiration for) this work is the
recursive decomposition of {\em chains} used by Rutschmann and
Wettstein~\cite[Theorem~15]{rutschmann2023chains} to produce a new
lower bound on the maximal number of triangulations of a set of $n$
points in the plane.  Their representation applies to a smaller subset
of order types (the {\em chains}, of which there are only
exponentially many of size $n$), and decomposes every chain into parts
of constant size. It allows to count the triangulations of a chain of
size $n$ in $\O(n^2)$ time based on ideas similar to the proof of
\cref{eq:counting bowtie}, but tailored to the setting of
chains. Another notion of {\em decomposability} of point sets was
proposed by Balko et al.~\cite{balko2017induced} to study Ramsey-type
questions; that notion combines conditions on orientations and
comparisons between $x$-coordinates, which more restrictive than our
setting, and these authors focus on point sets that are fully
decomposable, in the sense that every basic building block consists of
a single point.

\subsubsection*{Counting triangulations.}

It is interesting to compare our method for counting triangulations to the
experimental results for general point sets of Alvarez and
Seidel~\cite{alvarez2013simple}, and for chains by Rutschmann and
Wettstein~\cite{rutschmann2023chains}.
An extensive comparison is beyond the scope of this
paper, all the more that the methods apply to different classes of
point sets and operate on different types of input. We nevertheless
note the following points.
\begin{itemize}
\item The general method of Alvarez and Seidel takes as input an
  arbitrary point set. They tested three methods on high-memory
  hardware for various types of point sets, and none could handle any
  example of size $35$ or more in less than 10 minutes.

\item Our method takes as input a (chirotope presented by a)
  {chirotope} tree. We tested it on randomly-generated trees with
  decorating chirotopes of size $9$ summing up to sets of $\sim 1000$
  points; it took less than 5 minutes on a laptop to count the
  triangulations.

\item The method of Rutschmann and Wettstein is based on formulas
  specific to the class of chains and could handle examples of size
  $2^{21}$.
\end{itemize}
To us, this is an evidence that our approach is relevant for point sets that
are highly decomposable. This is similar to some of the usual benefits of
modular decomposition: some hard problems enjoy simple and effective
solutions for instances that decompose well.

\subsection{Perspectives}\label{ssec:perspectives}
Let us conclude this introduction by sampling some directions of enquiry that this work opens.\medskip

\begin{enumerate}
\item Here we used canonical chirotope trees as a tool to
  put together large chirotopes out of smaller ones, and we therefore considered the chirotope
  tree as given. One could, however, start with some point set and set
  out to compute the decomposition tree. This raises the following
  computational questions.  How efficiently can one find a partition
  of a given (realizable) chirotope into two mutually avoiding parts
  or decide that none exists? Or, going even further, how efficiently can one compute the
  canonical chirotope tree of a given (realizable) chirotope?\medskip

\item Can crossing-free structures besides triangulations be counted
  efficiently from the chirotope tree? What about other statistics of
  a chirotope such as the number of $k$-sets or the number of crossing
  pairs?\medskip

\item What is the proportion of indecomposable (abstract) chirotopes? (Our proof
  of Theorem~\ref{thm:proportion_indecomposable} uses that
  $t_n/t_{n-1} \ge \Omega(n^4)$ and we are not aware of an analogue of this in the
  abstract setting.)\medskip

\end{enumerate}

\subsection{Outline of the paper}

The paper is organized as follows. 
\begin{itemize}
\item Sections~\ref{s:prel} and~\ref{s:bam} introduce the notions of
  modules and bowties, and in particular proves
  \cref{p:whenischirotope}.
\item Section~\ref{sec:canonical_tree} introduces chirotope trees and
  canonical chirotope trees, and in particular proves
  \cref{prop:chirotope_tree} and \cref{thm:uniqueness_canonical}.
\item Sections~\ref{s:almost} and~\ref{s:quasi} estimate the number of
  decomposable chirotopes, and in particular prove
  \cref{thm:proportion_indecomposable}.
\item Sections~\ref{sec:triangulations},
  \ref{sec:counting_triangulations}
  and~\ref{sec:general_triangulations} show how to count the
  triangulations of a chirotope presented by a chirotope tree, and in
  particular prove \cref{p:complexity_triangulations}.
\end{itemize}

\subsection*{Acknowledgments}

The authors thank Emo Welzl for discussion leading to footnote~1. This work was done while FK was a postdoctoral fellow in LORIA, funded by ANR ASPAG (ANR-17-CE40-0017), IUF and INRIA.

\section{Preliminary: some properties of extreme points}
\label{s:prel}

{Recall that an element $x \in X$ is {\em extreme} in a chirotope
  $\chi$ on $X$ if there exists $y \in X\setminus\{x\}$ such that
  $\chi(x,y,z)$ is the same for all $z \in X\setminus \{x,y\}$.}  We
start with a standard extension to abstract chirotopes of
Carathéodory's theorem: any point not extreme in a set $X$ is
contained in a triangle with vertices in $X$
(see~\cite[Theorem~9.2.1(1)]{bjorner1999oriented}).

\begin{lemma}
\label{lem:caratheodory}
  Let $\chi$ be a chirotope on $X$. An element $t \in X$ is not
  extreme in $\chi$ if and only if there exist $a,b,c \in
  X\setminus \{t\}$ such that $\chi(t,a,b) = \chi(t,b,c) =
  \chi(t,c,a)=1$.
\end{lemma}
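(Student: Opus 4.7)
My plan is to treat the two directions separately.

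For $(\Rightarrow)$, I use a combinatorial argument that relies only on the swap symmetry. Define a tournament $G$ on $X \setminus \{t\}$ by setting $y \to z$ iff $\chi(t, y, z) = +1$; symmetry guarantees that exactly one orientation is chosen for each unordered pair. A standard tournament-theoretic fact states that a tournament with no directed 3-cycle is transitive, hence a linear order, and therefore admits a source $y_0$. By definition of $G$, this $y_0$ satisfies $\chi(t, y_0, z) = +1$ for every $z \neq t, y_0$, witnessing that $t$ is extreme. Contrapositively, if $t$ is not extreme, then $G$ contains a directed 3-cycle $a \to b \to c \to a$, which gives exactly the desired triple.

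For $(\Leftarrow)$, suppose $\chi(t, a, b) = \chi(t, b, c) = \chi(t, c, a) = +1$ and assume for contradiction that $t$ is extreme, witnessed by some $y$ with $\chi(t, y, z) = \epsilon$ constant. If $y \in \{a, b, c\}$, say $y = a$, then swap symmetry gives $\chi(t, a, b) = \epsilon$ but $\chi(t, a, c) = -\chi(t, c, a) = -\epsilon$, while extremality forces both to equal $\epsilon$---contradiction. Otherwise $y \notin \{a, b, c\}$. In the sub-case $\epsilon = +1$, I apply the transitivity axiom with $(t_0, s_0, x_0, y_0, z_0) = (t, y, a, b, c)$: the three ``$\chi(t, y, \cdot) = 1$'' premises come from extremality while the two ``$\chi(\cdot, \cdot, t) = 1$'' premises follow from the hypothesis by cyclic symmetry, forcing $\chi(a, c, t) = +1$, which contradicts $\chi(t, c, a) = +1$ via swap symmetry.

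The main obstacle is the sub-case $\epsilon = -1$, where the natural transitivity along the axis $(t, y)$ has the wrong sign. I would first apply interiority to the hypothesis to derive $\chi(a, b, c) = +1$. Combined with cyclic symmetry, this makes each of the three axes $(a, b), (b, c), (c, a)$ satisfy two out of the three positive premises needed for transitivity; a short case analysis on the unknown signs $\chi(a, b, y), \chi(b, c, y), \chi(c, a, y)$ then finishes the proof: in every configuration, either transitivity along one of these axes contradicts the sink condition $\chi(t, y, \cdot) = -1$, or transitivity along the reversed axis $(y, t)$---where the three premises $\chi(y, t, a), \chi(y, t, b), \chi(y, t, c)$ are all $+1$ by swap symmetry---forces one of $\chi(a, b, y), \chi(b, c, y), \chi(c, a, y)$ to take a value incompatible with the others. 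Alternatively, direction $(\Leftarrow)$ can simply be deduced from the abstract Carath\'eodory theorem for uniform acyclic oriented matroids of rank three \cite[Theorem~9.2.1(1)]{bjorner1999oriented}.
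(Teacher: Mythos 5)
Your proof is correct and follows the same broad strategy as the paper (an ordering argument for one direction, the transitivity axiom for the other), with two genuine points of departure. For $(\Rightarrow)$, you repackage the paper's explicit search (inductively building $a_0,a_1,\ldots$ until a triple closes up) as the standard tournament-theoretic fact that a finite tournament without directed $3$-cycles is a linear order and hence has a source; this is the same combinatorial content expressed more cleanly, and both proofs use only the symmetry relation. For $(\Leftarrow)$, you are actually more careful than the paper: the definition of extreme reads ``$\chi(x,y,z)$ is the same for all $z$'', which admits a witness of either sign, yet the paper's proof passes directly to ``there exists $s$ with $\chi(t,s,x)=+1$ for all $x$''; the equivalence of the two formulations is justified only later (\cref{obs:a-}, via \cref{lem:order}, which itself invokes the present lemma), so your explicit sub-case $\epsilon=-1$ closes a small implicit gap. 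Your sketch of that sub-case does go through: after deducing $\chi(a,b,c)=1$ from interiority, each of the eight sign patterns of $(\chi(a,b,y),\chi(b,c,y),\chi(c,a,y))$ is eliminated by one application of transitivity --- along the axis $(y,t)$ for the all-positive and all-negative patterns, and along one of $(a,b)$, $(b,c)$, $(c,a)$ for the remaining six, the conclusion in the latter cases contradicting the sink condition $\chi(t,y,\cdot)=-1$. One tiny slip in wording: in the $y\in\{a,b,c\}$ sub-case, the equalities $\chi(t,a,b)=\chi(t,a,c)=\epsilon$ follow from extremality rather than from swap symmetry, but the contradiction you extract is correct.
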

\begin{proof}
  Assume that there exist $a,b,c \in X\setminus \{t\}$ such
  that $\chi(t,a,b) = \chi(t,b,c) = \chi(t,c,a)=1$. For the sake of
  contradiction, assume in addition that $t$ is extreme in $\chi$.  By
  definition, this means that there exists $s \in X$ such that for all
  $x \in X \setminus \{t,s\}$, $\chi(t,s,x)=1$. We cannot have $s=a$
  as $\chi(t,a,c)=-1$. Similarly, $s \notin \{b,c\}$. Then, the tuple
  $\{s,t,a,b,c\}$ contradicts the transitivity axiom.

  \medskip

  Conversely, assume that $t$ is not extreme in $\chi$. Let us choose
  $a_0,a_1 \in X$ such that $\chi(t,a_1,a_0)=1$. Since $t$ is not
  extreme, there exists $a_2 \in X \setminus \{t,a_0,a_1\}$ such that
  $\chi(t,a_1,a_2)=-1$. If $\chi(t,a_2,a_0)=-1$ then $a = a_0$, $b=a_2$,
  $c=a_1$ are as desired. So assume otherwise. Since $t$ is not
  extreme, there exists $a_3 \in X\setminus\{t,a_2\}$ such that
  $\chi(t,a_2,a_3) = -1$; note that $a_3 \notin \{t,a_0,a_1,a_2\}$. If one
  of $\chi(t,a_3,a_0)$ or $\chi(t,a_3,a_1)$ equals $-1$ then we have
  our triple ($a\in \{a_0,a_1\}, b= a_3$ and $c=a_2$). So assume
  otherwise and using that $t$ is not extreme, pick some $a_4 \in
  X\setminus\{t,a_3\}$ such that $\chi(t,a_3,a_4) = -1$. Due to the
  previous consideration, $a_4$ must be different from $a_0, a_1$ and
  $a_2$. And so on and so forth: since $X$ is finite we eventually
  find a triple as in the lemma.
See \Cref{fig:caratheodory} for an illustration.
\end{proof}
\begin{figure}[t]
  \begin{center}
    \includegraphics{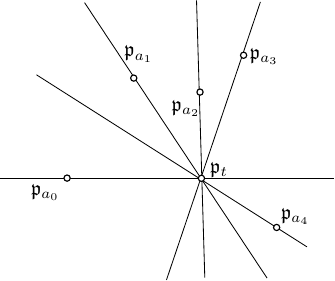}
  \end{center}
  \caption{Illustration of the proof of \cref{lem:caratheodory} for a realizable chirotope.
  The points $\mathfrak p_{a_i}$ corresponding to the elements $a_i$ are seen from $\mathfrak p_t$
  in clockwise order (since $\chi(t,a_i,a_{i+1})=-1$ for all $i$ by construction).
  Given $\mathfrak p_{a_i}$, either the line $\mathfrak p_{a_i} \mathfrak p_t$ splits the point set $\{\mathfrak p_{a_j}, j<i\}$
  in two non trivial parts (which happens here for $i=4$), and we can find a triangle containing $\mathfrak p_t$ (namely $\mathfrak p_{a_0} \mathfrak p_{a_3} \mathfrak p_{a_4}$ in the picture),
  or all points $\{\mathfrak p_{a_j}, j<i\}$ are on the same side of $\mathfrak p_{a_i} \mathfrak p_t$ (which happens for $i=2$ and $i=3$ here),
  and we let $\mathfrak p_{a_{i+1}}$ be any point on the other side (there is necessarily one, since $\mathfrak p_t$ is not extreme).
  Since the set is finite, we cannot be always in the latter situation and we eventually find a triangle containing $\mathfrak p_t$.
  }
  \label{fig:caratheodory}
\end{figure}

We now observe, for later use, that an extreme element in a chirotope
defines a total order on the other elements.

\begin{lemma}\label{lem:order}
{Let $\chi$ be a chirotope on $X$, let $ext(\chi)$ be the set of
  extreme elements of $\chi$, and take $a\in ext(\chi)$.}  The
relation $p<_aq \Leftrightarrow \chi(a,p,q)=1$ defines a strict total
order on $X\setminus\{a\}$. Moreover, letting $e_1,\ldots, e_r$
denotes the elements of $ext(\chi)\setminus \{a\}$ as ordered by
$<_a$, the cyclic order $(a,e_1,\ldots, e_r,a)$ does not depend on the
choice of $a \in ext(\chi)$.
\end{lemma}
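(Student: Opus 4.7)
My plan is to establish the two parts of the lemma separately, with Carathéodory's characterization of extreme elements (\cref{lem:caratheodory}) as the common tool. First, for the total-order property, I would observe that the symmetry condition~\eqref{eq:symmetry_condition} makes trichotomy and irreflexivity immediate: for distinct $p, q \in X\setminus\{a\}$ one has $\chi(a,p,q) \in \{-1, +1\}$ with $\chi(a,q,p) = -\chi(a,p,q)$, so exactly one of $p <_a q$ or $q <_a p$ holds. The only nontrivial point is transitivity. If $p <_a q$ and $q <_a r$ but $p \not<_a r$, then $\chi(a,r,p) = 1$, and together with $\chi(a,p,q) = \chi(a,q,r) = 1$ this produces three equalities $\chi(a,p,q) = \chi(a,q,r) = \chi(a,r,p) = 1$, exactly the Carathéodory pattern at $a$; so \cref{lem:caratheodory} would give that $a$ is not extreme, a contradiction.

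The heart of the argument for the second part is an intermediate fact: if $q \in ext(\chi)$ and $p, r \in X\setminus\{a,q\}$ satisfy $p <_a q <_a r$, then $\chi(p,q,r) = 1$. I would prove this by contradiction. Assume $\chi(p,q,r) = -1$; then $\chi(q,p,r) = 1$ by the symmetry relations. Rewriting the hypotheses via the cyclic part of~\eqref{eq:symmetry_condition} gives $\chi(q,r,a) = 1$ (from $\chi(a,q,r)=1$) and $\chi(q,a,p) = 1$ (from $\chi(a,p,q)=1$). The three identities $\chi(q,p,r) = \chi(q,r,a) = \chi(q,a,p) = 1$ fit the Carathéodory pattern around $q$, so \cref{lem:caratheodory} would give that $q$ is not extreme, contradicting $q \in ext(\chi)$.

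Given this intermediate fact, the independence of the cyclic order is essentially formal. For any three distinct extreme elements $u, v, w$, I would show that the cyclic order in which they appear in the sequence $(a, e_1, \ldots, e_r)$ is characterized by $\chi(u,v,w) = 1$. When one of the three is $a$, this follows directly from the definition of $<_a$ and the cyclic part of~\eqref{eq:symmetry_condition}. Otherwise $u, v, w$ are three of the $e_i$; after a cyclic rotation I may assume they are listed in $<_a$-order, and the intermediate fact (applied with $q$ the middle one, extreme by assumption) yields $\chi(u,v,w) = 1$. Since this characterization involves only $\chi$ and not $a$, the cyclic order $(a, e_1, \ldots, e_r, a)$ is the same for every choice of $a \in ext(\chi)$.

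I expect the main obstacle to be the intermediate fact in the second paragraph: the delicate point is that one is trying to read off the orientation of a triple $p, q, r$ from information only involving $a$ and the three points pairwise, and it is not obvious a priori that extremality of the middle element $q$ is enough to pin down the sign. Once this fact is in hand, the rest amounts to bookkeeping with the symmetry relations.
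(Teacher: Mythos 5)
Your proof is correct and rests on the same key application of \cref{lem:caratheodory} as the paper's. For the total-order part the two arguments coincide. For the cyclic-order part, the paper proves that the cyclic order seen from $a$ coincides with the cyclic order seen from its $<_a$-successor $e_1$ among the extreme elements: it shows that $a$ is maximal for $<_{e_1}$, and that $e_i <_{e_1} e_j$ whenever $e_1 <_a e_i <_a e_j$; this last step is precisely your intermediate fact with $(p,q,r)=(e_1,e_i,e_j)$, and the proof then implicitly propagates equality of cyclic orders from each extreme element to its successor around the hull. You instead isolate the intermediate fact in general form --- $p <_a q <_a r$ with $q$ extreme forces $\chi(p,q,r)=1$ --- and use it, together with the definition of $<_a$, to observe that the cyclic orientation of any triple of extreme elements is read off from the sign of $\chi$ on that triple, so the cyclic order is $a$-independent by inspection. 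This is a slightly cleaner packaging of the same computation: it replaces the chaining over adjacent extreme elements by an explicit invariant that does not mention $a$.
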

\begin{proof}
  To prove the first assertion, we only need to check the transitivity
  of the relation $<_a$. Let us consider three elements $x,y,z\in
  X\setminus\{a\}$ such that $x<_ay$ and $y<_az$, and suppose by
  contradiction that $z<_ax$. Then we have
  $\chi(a,x,y)=\chi(a,y,z)=\chi(a,z,x)=1$ contradicting that $a$ is
  extreme, by \cref{lem:caratheodory}.

  Let $C_a$ denote the cyclic order $(a,e_1,\ldots, e_r,a)$. To prove
  the second assertion, it suffices to prove that $C_{e_1}=C_a$. For
  every $i\in\{2,\ldots ,r\}$, $e_1<_ae_i$. Therefore
  $\chi(e_1,e_i,a)=\chi(a,e_1,e_i)=1$. Hence $a$ is the maximal
  element for $<_{e_1}$. Furthermore, let $2\leq i<j\leq r$. As
  $e_1<_ae_i<_ae_j$, we have $\chi(e_i,e_j,a)=\chi(e_i,a,e_1)=1$. Then
  as $e_i$ is extreme, $\chi(e_i,e_1,e_j)=-1$ by
  \cref{lem:caratheodory}. Hence $e_i<_{e_1}e_j$ and
  $C_{e_1}=(e_1,\ldots, e_{r},a,e_1)=C_a$.
\end{proof}
\noindent
\cref{lem:order} defines a unique cyclic order on the extreme elements
of a chirotope. Geometrically, this corresponds to the anticlockwise
order on the convex hull. For any extreme element~$a$, it is hence
possible to define its previous and next extreme element for this
cyclic order, denoted respectively $a_-$ and $a_+$.  Note that $a_+$
(resp. $a_-$) is the minimum (resp. maximum) element of the total
order $<_a$.

\begin{observation}\label{obs:a-}
  Recall that, by definition, an element $a$ is extreme in a chirotope
  $\chi$ on $X$ if and only if there exists $y$ such that $\chi(a,y,z)
  =1$ for all $z$ in $X \setminus \{a,y\}$.  With the notation
  introduced above, this element $y$ is $a_+$: in particular it is
  unique.  We also have the following: $a$ in extreme in $\chi$ if and
  only if there exists $y'$ such that $\chi(a,y',z) =-1$ for all $z$
  in $X \setminus \{a,y\}$.  When it exists, this element is also
  unique and corresponds to $a_-$ with the notation introduced above.
\end{observation}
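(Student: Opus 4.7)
The plan is to reduce both statements to the fact, established in Lemma~\ref{lem:order}, that $<_a$ is a strict total order on $X \setminus \{a\}$ whenever $a$ is extreme in $\chi$.

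For the first claim, I would observe that $\chi(a,y,z) = 1$ is, by definition of $<_a$, exactly $y <_a z$. Hence an element $y \in X \setminus \{a\}$ satisfies $\chi(a,y,z) = 1$ for every $z \in X \setminus \{a,y\}$ if and only if $y$ is the $<_a$-minimum of $X \setminus \{a\}$. Since $<_a$ is a strict total order on a finite set, such a minimum is unique, and the paragraph following Lemma~\ref{lem:order} identifies it as $a_+$. This gives simultaneously the uniqueness of the element $y$ appearing in the definition of ``extreme'' and the identification $y = a_+$.

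For the second claim, I would first apply the symmetry condition~\eqref{eq:symmetry_condition} to rewrite $\chi(a,y',z) = -1$ as $\chi(a,z,y') = 1$, i.e., as $z <_a y'$. The existence of $y'$ with $\chi(a,y',z) = -1$ for all $z \in X \setminus \{a,y'\}$ is thus equivalent to $y'$ being a $<_a$-maximum of $X \setminus \{a\}$. The backward direction of the claimed equivalence is then immediate from the original definition of extremality: if such a $y'$ exists, then $\chi(a,y',\cdot)$ is identically $-1$ on $X \setminus \{a,y'\}$, which makes $a$ extreme. For the forward direction, Lemma~\ref{lem:order} provides the total order $<_a$, whose maximum is $a_-$ by the convention just below that lemma; this $a_-$ satisfies $\chi(a,a_-,z) = -1$ for every $z \neq a,a_-$, and uniqueness again follows from the uniqueness of the maximum of a strict total order.

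There is no substantial obstacle to anticipate: the observation is essentially a translation of Lemma~\ref{lem:order} through the symmetry condition~\eqref{eq:symmetry_condition}, combined with the definitions of $a_+$ and $a_-$ as the $<_a$-minimum and $<_a$-maximum stated right after that lemma.
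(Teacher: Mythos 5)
The paper gives no proof for this observation, leaving the verification to the reader; your argument supplies exactly the intended reasoning, reducing both equivalences and the uniqueness claims to \cref{lem:order} (that $<_a$ is a strict total order on $X\setminus\{a\}$ when $a$ is extreme) together with the identification of $a_+$ and $a_-$ as the minimum and maximum of $<_a$ stated in the paragraph after that lemma, and using the symmetry relation to translate $\chi(a,y',z)=-1$ into $z <_a y'$. The proof is correct and is essentially the only natural route.
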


The last lemma of this section is of a more geometric nature, and only
applies to realizable chirotopes. Given a set $\cal L$ of lines in
$\R^2$, a {\em cell of the arrangement of $\cal L$} is a connected
component of $\R^2 \setminus \bigcup_{\ell \in {\cal L}} \ell$.

\begin{lemma}\label{lem:extreme_in_unbounded_cell}
Let $\chi:(X)_3 \to \{\pm 1\}$ be a realizable chirotope and let $x
\in X$ be extreme in $\chi$.  Then there exists a realization
$\mathcal P=\{\mathfrak p_x, x \in X\}$ of $\chi$ such that $\mathfrak
p_x$ is in an unbounded cell of the arrangement of the set of lines
$\big\{(\mathfrak p_y\mathfrak p_z) \colon y,z \in X \setminus \{x\},
y \neq z \big\}$.
\end{lemma}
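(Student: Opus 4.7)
The plan is to start from an arbitrary realization $\mathcal{P}_0=\{\mathfrak{q}_y : y\in X\}$ of $\chi$ and apply a projective transformation that sends a line very close to $\mathfrak{q}_x$ to infinity, so that the image of $\mathfrak{q}_x$ ends up in an unbounded cell while the chirotope is preserved. Since $x$ is extreme in $\chi$, the point $\mathfrak{q}_x$ is a vertex of the convex hull of $\mathcal{P}_0$, so I would fix a unit vector $\vec{n}$ with $\vec{n}\cdot\mathfrak{q}_x>\vec{n}\cdot\mathfrak{q}_y$ for every $y\ne x$. General position ensures that some open disk $D$ of radius $\epsilon>0$ centered at $\mathfrak{q}_x$ is disjoint from the arrangement $\mathcal{A}_0$ of lines through pairs of $\mathcal{P}_0\setminus\{\mathfrak{q}_x\}$.

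Next, I would pick $\delta\in(0,\epsilon)$ and set $\ell_\infty=\{p:\vec{n}\cdot p=\vec{n}\cdot\mathfrak{q}_x+\delta\}$. All points of $\mathcal{P}_0$ then lie strictly in the open halfplane $H=\{p:\vec{n}\cdot p<\vec{n}\cdot\mathfrak{q}_x+\delta\}$, and the segment from $\mathfrak{q}_x$ perpendicular to $\ell_\infty$ has length $\delta<\epsilon$, so it is contained in $D$, hence in the cell $C$ of $\mathcal{A}_0$ containing $\mathfrak{q}_x$ (except for its endpoint on $\ell_\infty$). I would then apply a projective transformation $\phi$ sending $\ell_\infty$ to the line at infinity and orientation-preserving on $H$ (for instance, in coordinates where $\ell_\infty=\{v=0\}$ and $H=\{v<0\}$, the map $\phi(u,v)=(-u/v,-1/v)$ has Jacobian determinant $-1/v^3>0$ on $H$). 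Because all points of $\mathcal{P}_0$ lie in $H$ and $\phi$ preserves orientation on $H$, $\phi$ preserves the chirotope, so $\mathcal{P}:=\phi(\mathcal{P}_0)$ is again a realization of $\chi$; and as $\phi$ maps lines to lines, the new arrangement $\mathcal{A}$ of lines through pairs of $\mathcal{P}\setminus\{\mathfrak{p}_x\}$ equals $\phi(\mathcal{A}_0)$.

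Finally, the perpendicular segment introduced above lies in the open connected set $C\cap H$ (disjoint from $\mathcal{A}_0$) and ends on $\ell_\infty$. Its image under $\phi$ is a curve that starts at $\mathfrak{p}_x:=\phi(\mathfrak{q}_x)$, remains in the single cell of $\mathcal{A}$ containing $\mathfrak{p}_x$ (because the image is connected and disjoint from $\mathcal{A}$), and tends to infinity in the new chart (since $\phi$ sends $\ell_\infty$ there). This forces the cell of $\mathcal{A}$ containing $\mathfrak{p}_x$ to be unbounded, as desired.

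The main subtlety is to ensure that $\phi$ preserves the chirotope \emph{exactly}, rather than merely up to a global sign: a projective map sending a line to infinity is in general orientation-reversing on one of the two halfplanes it bounds. It is therefore essential to choose $\ell_\infty$ so that all points of $\mathcal{P}_0$ lie in a single halfplane (which uses precisely that $x$ is extreme) and then to select the orientation of $\phi$ so that it preserves the orientation of that halfplane.
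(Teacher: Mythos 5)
Your proof is correct and follows essentially the same route as the paper: take a realization, exploit extremality of $\mathfrak p_x$ to find a line $\ell_\infty$ close to $\mathfrak p_x$ that leaves all the points on one side and whose perpendicular foot from $\mathfrak p_x$ is reached without crossing the line arrangement, then send $\ell_\infty$ to infinity by a projective transform. Both arguments conclude by observing that the image of the short perpendicular segment becomes an unbounded ray lying inside the cell of the new arrangement that contains the image of $\mathfrak p_x$.

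Two small points of comparison. First, the paper handles the sign issue \emph{a posteriori}: it takes an arbitrary projective transform sending $\ell_\infty$ to infinity, observes that on a point set lying entirely on one side of $\ell_\infty$ such a map either preserves all orientations or reverses them all, and, in the second case, composes with a mirror reflection. You instead pick the transform $\phi(u,v)=(-u/v,-1/v)$ up front and justify orientation preservation by the sign of its Jacobian on $H$. That works, but be a little careful with the inference: for a general diffeomorphism, a positive Jacobian on a region does not by itself imply that the orientation of every (possibly far-apart) triple is preserved. Here the inference is valid because $\phi$ is projective and hence sends lines to lines, so the preserve/reverse behaviour is uniform across $H$ and can be read off from a single local Jacobian (equivalently, from the determinant formula the paper derives). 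It would be worth making that extra fact explicit rather than leaning only on "Jacobian $>0$". Second, your construction of $\ell_\infty$ via a supporting direction and a disk of radius $\epsilon$ is a concrete instance of the paper's more tersely stated "such a choice is always possible since $\mathfrak q$ is extreme", which is a nice clarification.
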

Note that this is not true for {\em any} realization of $\chi$ but only for {\em some well-chosen}
realization of $\chi$: see \Cref{fig:extreme_bounded_unbounded} for an example of two realizations
of the same chirotope, one being as in the lemma, but not the other.

\begin{figure}[t]
\[\includegraphics[width=6cm]{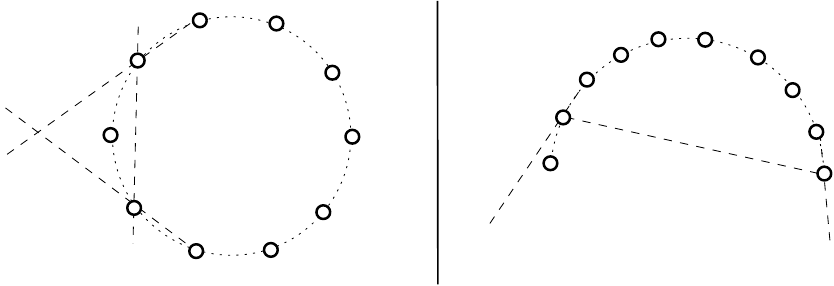}\]
\caption{On the left, 10 points uniformly distributed on a circle; on the right, the same number of points
uniformly distributed on a semicircle. Both configurations have the same chirotope. In the circle case, each point is in a
bounded cell of the line arrangement defined by the other points (as illustrated for the left-most point).
On the other hand in the semi-circle case,
 the left-most point is in an unbounded region of the line arrangement defined by the other points.}
 \label{fig:extreme_bounded_unbounded}
\end{figure}

\begin{proof}
This proof relies on the notion of projective transform, which we now recall; see {\em e.g.} the book of Samuel~\cite{samuel1988projective} or Richter-Gebert~\cite{richter2011perspectives} for a primer on projective geometry.
    For a point $\mathfrak p \in \R^2$ let $\hat{\mathfrak p}$ be the column vector $\hat{\mathfrak p} \eqdef (p_x,p_y,1)^t$, and conversely, given $u \in \R^3$ with $u_z \neq 0$ let $\bar u \eqdef (u_x/u_z, u_y/u_z)$. A projective transform is a map $f:\mathfrak p \mapsto \overline{A\hat{\mathfrak p}}$ with $A \in \R^{3 \times 3}$ a nonsingular matrix. Such a transform is undefined on a line of $\R^2$ (those points $\mathfrak p$ such that $\pth{A\hat{\mathfrak p}}_z = 0$); this line is said to be ``sent to infinity''.
    Other lines of $\mathbb R^2$ are sent to lines in $\mathbb R^2$.
    Indeed, a line is lifted to a 2-dimensional vector space
    by the application ${\mathfrak p} \mapsto \hat{\mathfrak p}$ (taking the cone generated by the image),
    a 2-dimensional vector space is sent to another 2-dimensional vector space by $A$ (since $A$ is nonsingular), and finally the projection $u \mapsto \overline u$ maps back 2-dimensional vector spaces to lines (or to infinity in the case of the plane $u_z=0$).

    For any $\mathfrak p,\mathfrak q,\mathfrak r$ in $\R^2$ we have $\chi(\mathfrak p,\mathfrak q,\mathfrak r) = \textrm{sign} \det(\hat{\mathfrak p},\hat{\mathfrak q},\hat{\mathfrak r})$.
    Indeed, $\textrm{sign} \det(\hat{\mathfrak p},\hat{\mathfrak q},\hat{\mathfrak r})=+1$ if and only if the basis $(\hat{\mathfrak p},\hat{\mathfrak q},\hat{\mathfrak r})$ of $\mathbb R^3$ is positively oriented, which is equivalent
    to the points $(\mathfrak p,\mathfrak q,\mathfrak r)$ in $\mathbb R^2$
    being in counterclockwise order.

    Now, let $f$ be a projective transform with matrix $A \in \R^{3 \times 3}$. We have
\[\begin{aligned}
\chi(f(\mathfrak p),f(\mathfrak q),f(\mathfrak r)) & = \sgn \det(\widehat{f(\mathfrak p)},\widehat{f(\mathfrak q)},\widehat{f(\mathfrak r)})\\
& = \sgn\pth{(A\hat{\mathfrak p})_z(A\hat{\mathfrak q})_z(A\hat{\mathfrak r})_z} \sgn \det(A\hat{\mathfrak p}, A\hat{\mathfrak q}, A\hat{\mathfrak r})\\
& = \sgn\pth{(A\hat{\mathfrak p})_z(A\hat{\mathfrak q})_z(A\hat{\mathfrak r})_z} \sgn(\det A) \chi(p,q,r)
\end{aligned}\]
The sign of the affine function $\mathfrak p \mapsto (A\hat{\mathfrak p})_z$ is determined by the position of $\mathfrak p$ lies with respect to the line $(A\hat{\mathfrak p})_z=0$, which is the line sent to infinity by $f$.
Consequently, if $\cal P$ is a point set, $\ell$ a line that does not split $\cal P$, and $f$ a projective transform sending $\ell$ to infinity,
then the map $f$ either preserves all orientations of three points in $\cal P$ or reverses all of them.

Let us now consider a realizable chirotope $\chi$ on $X$ and let $x$ be extreme in $\chi$,
as in the lemma.
We choose an arbitrary realization $\mathcal Q$ of $\chi$ and call $\mathfrak q\eqdef \mathfrak q_x$ the point labeled by $x$.
Let us choose a line $\ell$ such that
\begin{itemize}
  \item $\ell$ does not contain any point of $\mathcal Q$ and all points of $\mathcal Q$ are on the same side of $\ell$,
  \item there exists a point $\mathfrak r$ in $\ell$ (in particular $\mathfrak r \notin \mathcal Q$)
    such that the segment $[\mathfrak{qr}]$ does not intersect any line spanned by two points of $\cal Q \setminus \{\mathfrak q\}$.
\end{itemize}
(Such a choice is always possible since $\mathfrak q$ is extreme.)
Let $f$ be a projective transform sending $\ell$
to infinity.
Since the map $f$ sends lines to lines and since $f(\mathfrak r)$
is at infinity, $f$
sends the half-open segment $[\mathfrak q\mathfrak r))$
to a ray $[f(\mathfrak q)f(\mathfrak r))$ going to infinity.
From the second condition above, this ray
does not meet any line spanned by two points of $f\pth{\cal P \setminus \{\mathfrak p\}}$.
Hence $f(\mathfrak p)$ is in an unbounded cell of the arrangement
defined by $f\pth{\cal P \setminus \{\mathfrak p\}}$.
In particular, if $f$ preserves the orientation of all triples in $\mathcal Q$, then $\mathcal P=f(\mathcal Q)$ is a realization of $\chi$
with the desired property.
On the other hand, if $f$ reverses all orientations of triples in $\mathcal Q$, then the mirror image of $f(\mathcal Q)$ is a realization of $\chi$ with that property.
\end{proof}

\section{Bowties and modules}
\label{s:bam}

In this section, we study the bowtie operation $\Bowtie$ defined in the introduction,
introduce formally a notion of modules, and analyze the connection between
these two notions.

\subsection{The bowtie operation}
\label{subsec:subs}

We recall that the bowtie operation $\Bowtie$ is defined in \cref{eq:def_bowtie}.
It takes as input two sign functions $\chi$ and $\xi$ with marked elements $x^*$ and $y^*$
and outputs a sign function $\kappa$, denoted $\chi  \join{x^*}{y^*} \xi$.
Our first goal is to prove \cref{p:whenischirotope}, which we split into several statements
for convenience.
\medskip

We first note that $\chi$ and $\xi$ are essentially isomorphic to
restrictions of $\kappa$ (e.g., to obtain $\chi$, we restrict $\kappa$
to $(X \cup \{y\})_3$ where $y$ is an arbitrary element of $Y$, and
substitute $y^*$ for $y$).  Therefore, if $\kappa$ is a chirotope,
then $\chi$ and $\xi$ are chirotopes, and if furthermore $\kappa$ is
realizable, then $\chi$ and $\xi$ are also realizable. Here are the
converse statements.

\begin{proposition}\label{prop:Substitution_is_chirotope}
  Let $\chi$ and $\xi$ be chirotopes on $X\cup \{x^*\}$ and $Y\cup
  \{y^*\}$, with $|X|,|Y| \ge 2$. The sign function $\chi
  \join{x^*}{y^*} \xi$ is a chirotope if and only if $x^*$ and $y^*$
  are extreme in $\chi$ and~$\xi$.
\end{proposition}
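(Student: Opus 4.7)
The proposition has two implications. For the forward direction, I argue by contrapositive. Suppose that $x^*$ is not extreme in $\chi$. By \cref{lem:caratheodory} there exist $a,b,c \in X$ with $\chi(x^*,a,b)=\chi(x^*,b,c)=\chi(x^*,c,a)=1$. Since $|Y|\ge 2$, I pick two elements $y_1,y_2 \in Y$, and after swapping them if necessary (using the symmetry of $\xi$) I may assume $\xi(y_1,y_2,y^*)=1$. Unfolding the bowtie definition yields $\kappa(y_1,y_2,a)=\kappa(y_1,y_2,b)=\kappa(y_1,y_2,c)=\xi(y_1,y_2,y^*)=1$ and $\kappa(a,b,y_1)=\chi(a,b,x^*)=\chi(x^*,a,b)=1$, $\kappa(b,c,y_1)=\chi(x^*,b,c)=1$. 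These are precisely the premises of the transitivity axiom at $(s,t,x,y,z)=(y_2,y_1,a,b,c)$, whose conclusion forces $\kappa(a,c,y_1)=1$. But the bowtie formula gives $\kappa(a,c,y_1)=\chi(a,c,x^*)=-\chi(x^*,c,a)=-1$, contradicting that $\kappa$ is a chirotope. The case where $y^*$ is not extreme in $\xi$ is handled symmetrically.

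For the reverse direction, assume $x^*$ and $y^*$ are extreme in their respective chirotopes. The symmetry axiom for $\kappa$ is built into its definition, so I only need to verify interiority and transitivity, which I do by case analysis on how the involved elements are distributed between $X$ and $Y$. For interiority on a quadruple $(t,x,y,z)$: if three or four of its elements lie on the same side, substituting each minority element by the corresponding star reduces the axiom for $\kappa$ to interiority for $\chi$ or $\xi$; no extremity is needed. In each of the $2$--$2$ splits, two of the three premises evaluate, after bowtie simplification, to $\chi$-values (resp.\ $\xi$-values) on the same triple differing by a transposition, and hence cannot both equal $+1$, so the implication is vacuous.

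For transitivity on a $5$-tuple $(s,t,x,y,z)$: when one side contains at least four of the five elements, substituting each minority element by the appropriate star turns the axiom for $\kappa$ into an instance of the axiom for $\chi$ or $\xi$. For $3$--$2$ splits, a case check shows that most subcases are vacuous (two premises related by a transposition) or have the conclusion already appearing among the premises after simplification; the only genuinely nontrivial configuration is $\{s,t\} \subseteq Y$ with $\{x,y,z\} \subseteq X$ (and its mirror). In that configuration, the $\kappa$-premises involving $t$ simplify to $\chi(x^*,x,y)=\chi(x^*,y,z)=1$ while the conclusion demands $\chi(x^*,x,z)=1$, which is exactly the transitivity of the total order $<_{x^*}$ provided by \cref{lem:order} for the extreme element $x^*$; the mirror case uses $<_{y^*}$ analogously. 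The main obstacle is the bookkeeping of this case analysis: the key conceptual point is to recognize that the decisive $3$--$2$ subcase in transitivity is precisely an instance of transitivity of $<_{x^*}$ (or $<_{y^*}$), which explains why extremity of the starred elements is the exact hypothesis required.
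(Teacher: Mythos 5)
Your proof is correct and follows essentially the same strategy as the paper's: the forward direction is identical (Carath\'eodory to produce a triangle around $x^*$, then exhibit a violation of the transitivity axiom in $\kappa$ using two elements of $Y$), and the reverse direction proceeds by the same case analysis on how the elements of the axiom's tuple distribute between $X$ and $Y$. The only cosmetic differences are in how you frame the non-decisive cases (you observe that two premises become negatives of each other after bowtie simplification, rather than deriving an explicit contradiction with the conclusion as the paper does---these are the same observation) and that in the one 3--2 subcase where extremity matters you invoke the transitivity of $<_{x^*}$ from \cref{lem:order}, whereas the paper goes directly to \cref{lem:caratheodory} to conclude that $y^*$ would fail to be extreme; since \cref{lem:order} is itself proved from \cref{lem:caratheodory}, this is a repackaging rather than a different route. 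Your closing remark correctly isolates why extremity of the starred elements is exactly the hypothesis needed.
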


\begin{proof}
Let $\kappa \eqdef \chi \join{x^*}{y^*} \xi$. Assume that $x^*$ is not
extreme in $\chi$. From \cref{lem:caratheodory}, there exist $a, b, c
\in X$ such that
  \begin{equation}
  \label{eq:xstar_inside_abc}
  \chi(x^*,a,b) = \chi(x^*,b,c) = \chi(x^*,c,a)=+1.
  \end{equation}
 Let now $y$ and $y'$ be two distinct elements of $Y$.  By definition
 of $\kappa$, we have
 \begin{equation}
 \label{eq:violating_transitivity1}
  \kappa(y,y',a) = \kappa(y,y',b) = \kappa(y,y',c) = \xi(y,y',y^*),
  \end{equation}
  which we can assume to be $+1$, up to switching $y$ and $y'$.
 Furthermore, \cref{eq:xstar_inside_abc} implies
  \begin{equation}
 \label{eq:violating_transitivity2}
  \kappa(y,a,b) = \kappa(y,b,c) = \kappa(y,c,a)=+1.
  \end{equation}
  \cref{eq:violating_transitivity1,eq:violating_transitivity2} together violates the transitivity
  axiom of chirotopes, hence, $\kappa$ is not a chirotope.
  Similarly if $y^*$ is not extreme in  $\xi$, then $\kappa$ is not a chirotope either.

\medskip

  Conversely, let us assume that $x^*$ and $y^*$ are extreme in $\chi$
  and $\xi$, and check that $\kappa$ is a chirotope. The function $\kappa$ is
  clearly a sign function, and we need to check the interiority and transitivity axioms. Let us start with the interiority
  axiom.
  Assume for the sake of contradiction that there exist
  distinct $a,b,c,t \in X \cup Y$ such that
  \begin{equation}
    \label{eq:inter-bowtie} \kappa(t,a,b)=\kappa(t,b,c)=\kappa(t,c,a) = - \kappa(a,b,c).
  \end{equation}
  We cannot have all of $a,b,c,t$ in $X$ as $\chi$ satisfies the
  interiority axiom. Similarly, we cannot have a single element in
  $Y$, as replacing it by $x^*$ would yield a quadruple violating the
  interiority axiom for $\chi$. Symmetric arguments imply that we must
  have two elements in $X$ and two in $Y$; by symmetry we can assume
  that $a,b \in X$ and $c,t \in Y$. But then $\kappa(a,b,c)=
  \chi(a,b,x^*) =\kappa(a,b,t)$, contradicting
  \cref{eq:inter-bowtie}. Hence $\kappa$ satisfies the
  interiority axiom.

  Now to the transitivity axiom. Take any distinct $a,b,c,s,t \in X
  \cup Y$ and assume for the sake of contradiction that
  \begin{equation}
    \label{eq:transi-bowtie}
    \kappa(t,s,a)=\kappa(t,s,b)=\kappa(t,s,c) = \kappa(a,b,t) = \kappa(b,c,t)=\kappa(c,a,t) = 1.
  \end{equation}
  As above, we can assume that $X$ and $Y$ contain respectively two
  and three elements of $\{a,b,c,s,t\}$.
  \begin{itemize}
  \item We cannot have $\{s,t\} \subseteq X$ and $\{a,b,c\} \subseteq Y$ as this would imply $\xi(a,b,y^*) = \xi(b,c,y^*)=\xi(c,a,y^*)=1$, and thus $y^*$ would not be extreme in $\xi$ by Lemma~\ref{lem:caratheodory}.
  \item We cannot have $\{s,a\} \subseteq X$ and $\{b,c,t\} \subseteq Y$ as this would imply $\kappa(c,t,a) = \xi(c,t,y^*) = \kappa(c,t,s)$, contradicting \cref{eq:transi-bowtie}.
  \item We cannot have $\{t,a\} \subseteq X$ and $\{b,c,s\} \subseteq Y$ as this would imply $\kappa(a,t,b)=\chi(a,t,x^*) = \kappa(a,t,c)$, contradicting \cref{eq:transi-bowtie}.
  \item We cannot have $\{b,c\} \subseteq X$ and $\{a,s,t\} \subseteq Y$ as this would imply $\kappa(a,t,b)=\xi(a,t,y^*) = \kappa(a,t,c)$, contradicting \cref{eq:transi-bowtie}.
  \end{itemize}
  The other partitions are ruled out using the cyclic symmetry of $(a,b,c)$.
\end{proof}

\begin{proposition}\label{prop:Realization_Substitution}
Let $\chi$ and $\xi$ be {\em realizable} chirotopes on $X\cup \{x^*\}$
and $Y\cup \{y^*\}$, with $|X|,|Y| \ge 2$.  The sign function $\chi
\join{x^*}{y^*} \xi$ is a realizable chirotope if and only if $x^*$
and $y^*$ are extreme in, respectively, $\chi$ and $\xi$.
\end{proposition}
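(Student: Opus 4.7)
The ``only if'' direction is immediate: a realizable chirotope is in particular a chirotope, and \cref{prop:Substitution_is_chirotope} then forces $x^*$ and $y^*$ to be extreme in $\chi$ and $\xi$.

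For the ``if'' direction, the plan is to realize $\kappa \eqdef \chi \join{x^*}{y^*} \xi$ by a geometric substitution: replace $\mathfrak{p}_{x^*}$ in a realization of $\chi$ by a small, well-positioned copy of a realization of $\xi$ with $\mathfrak{q}_{y^*}$ removed. Starting from the realizations given by \cref{lem:extreme_in_unbounded_cell}, where $\mathfrak{p}_{x^*}$ and $\mathfrak{q}_{y^*}$ lie in unbounded cells of their respective line arrangements, I follow the projective viewpoint from that lemma's proof: after one further projective transformation (and a possible reflection to fix a global sign), we may assume that in the realization of $\chi$, $x^*$ has been sent to infinity in direction $\vec e = (1,0)$, so that $\chi(x_1,x_2,x^*) = \sgn\bigl((\mathfrak{p}_{x_2} - \mathfrak{p}_{x_1}) \times \vec e\bigr)$, and similarly that in the realization of $\xi$, $y^*$ is at infinity in direction $\vec f = (-1,0)$, with the analogous formula. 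Choosing the rotations generically, we may further assume that no two points of either realization share a $y$-coordinate.

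The candidate realization of $\kappa$ is $\mathcal{R} \eqdef \{\mathfrak{p}_x : x \in X\} \cup \{\mathfrak{r}_y : y \in Y\}$ with $\mathfrak{r}_y \eqdef \varepsilon\,\mathfrak{q}_y + (T,0)$ for $T > 0$ large and $\varepsilon > 0$ small. The plan is to check the four cases of~\eqref{eq:def_bowtie}. Triples in $X$ are correct by construction, and triples in $Y$ are correct because $\mathfrak{q} \mapsto \varepsilon\,\mathfrak{q} + (T,0)$ is an orientation-preserving affine map. For a mixed triple $(x_1, x_2, y)$ with $x_1, x_2 \in X$, expanding the cross product that defines the orientation of $(\mathfrak{p}_{x_1}, \mathfrak{p}_{x_2}, \mathfrak{r}_y)$ yields an expression whose leading term, for large $T$, is $T \cdot \bigl((\mathfrak{p}_{x_2} - \mathfrak{p}_{x_1}) \times \vec e\bigr)$, so its sign agrees with $\chi(x_1, x_2, x^*)$. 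For a mixed triple $(x, y_2, y_3)$, a similar expansion produces a leading term of the form $T \cdot \bigl(\vec e \times (\mathfrak{q}_{y_3} - \mathfrak{q}_{y_2})\bigr)$, which equals $T \cdot \bigl((\mathfrak{q}_{y_3} - \mathfrak{q}_{y_2}) \times \vec f\bigr)$ since $\vec f = -\vec e$, and thus has sign $\xi(y^*, y_2, y_3)$.

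The main obstacle is uniformity: both estimates must hold simultaneously for the finitely many triples under consideration. Since all leading cross products are nonzero by the genericity assumption on the rotations, one first fixes $T$ large enough to dominate the lower-order terms in case~(c), and then chooses $\varepsilon$ small enough to dominate the remaining corrections in case~(d); a straightforward continuity argument makes this quantitative. A minor bookkeeping point is that the projective transformation used to send $x^*$ to infinity may flip all orientations, but composing with a reflection (which preserves the realizability of $\chi$) restores the correct sign, and the symmetric fix applies to $\xi$.
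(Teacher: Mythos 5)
Your proof is correct and rests on the same basic geometry as the paper's: both start from \cref{lem:extreme_in_unbounded_cell}, realize $\chi$ and $\xi$ separately, and then superpose the realization of $\xi$ (with $\mathfrak q_{y^*}$ removed) far away in the direction pointing to where $\mathfrak p_{x^*}$ would have been, so that from $X$'s side the cluster looks like a single point where $x^*$ was, and from $Y$'s side the $X$-points look like a single point where $y^*$ was. Where you diverge is in the bookkeeping: the paper keeps $\mathfrak p_{x^*}$ and $\mathfrak q_{y^*}$ as finite reference points and argues that the translated $\mathcal Q$ lies entirely in the unbounded cell $C_*^X$ (and symmetrically with $X$ and $C_*^Y$), which immediately gives every mixed orientation without any determinant expansion. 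You instead push $x^*$ and $y^*$ to infinity by a further projective map and control signs by an asymptotic analysis of the $3\times 3$ determinants, which is slightly more computational but equally valid. Two small remarks: (i) the scaling factor $\varepsilon$ is unnecessary — the leading $T$-term in your case-(d) expansion, namely $T(q_{y_3,2}-q_{y_2,2})$, already dominates the $T$-independent $p$-terms, which is exactly why the paper's construction uses a translation alone — and (ii) your description of the order of choices reads as if only case (c) constrains $T$, whereas case (d) also requires $T$ large (to beat the bounded $p$-terms); but since both constraints ask for the same thing ($T$ large, then $\varepsilon$ small, here trivially since $\varepsilon$ can just be $1$), this is a wording issue rather than a gap.
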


\begin{proof}
Let $\kappa \eqdef \chi \join{x^*}{y^*} \xi$. One direction follows
from \cref{prop:Substitution_is_chirotope}, as if $x^*$ or $y^*$ is
not extreme, then $\kappa$ is not even a chirotope. So let us assume
that $\chi$ and $\xi$ are realizable and that $x^*$ and $y^*$ are
extreme in $\chi$ and $\xi$, and build a point set realizing
$\kappa$. The reader may want to look at
\Cref{fig:realization-substitution} while reading the proof.

\begin{figure}[t]
\[\includegraphics[width=14cm]{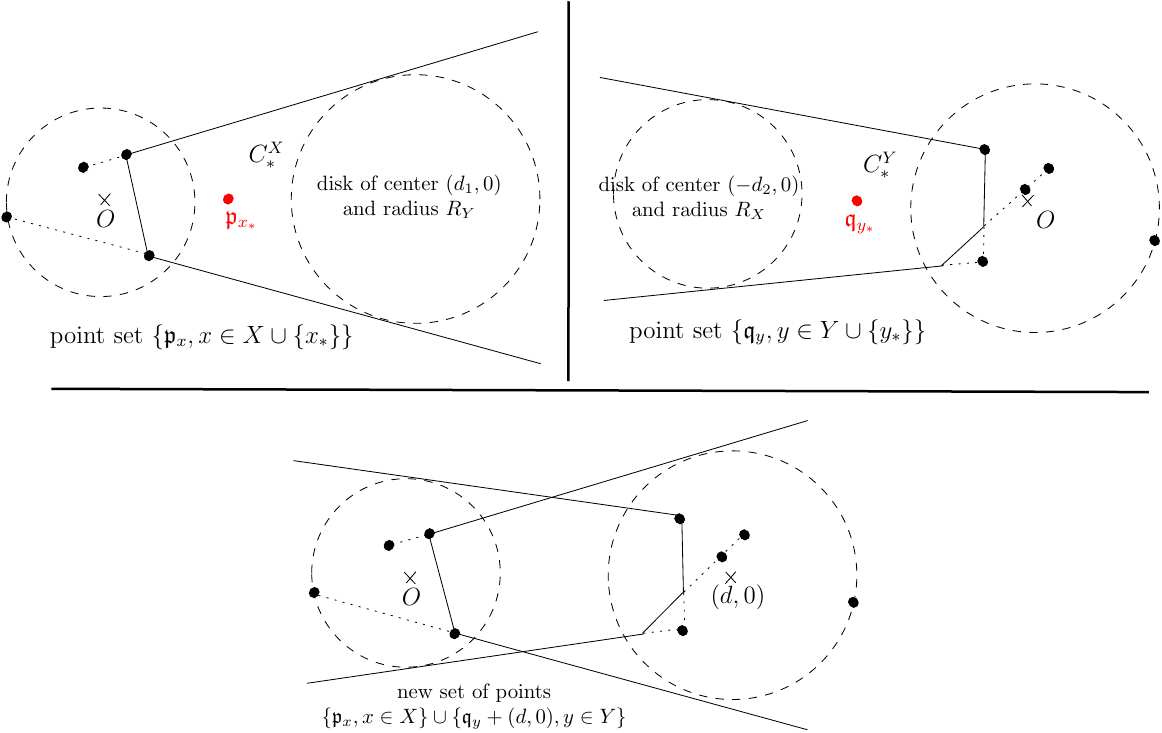}\]
\caption{Illustration  of the proof of \cref{prop:Realization_Substitution}. Marked points are drawn in red. }
\label{fig:realization-substitution}
\end{figure}

\medskip

Since $x^*$ is extreme in $\chi$, by
Lemma~\ref{lem:extreme_in_unbounded_cell} there exists a point set
$\p$ labeled by $X \cup \{x^*\}$ with chirotope $\chi$ such that $\pt
p_{x^*}$ is in an unbounded cell $C^X_*$ of the arrangement of the set
of lines $\big\{(\pt p_x\pt p_{x'}), x,x' \in X\}$. Up to rotating and
translating this point set, we can assume that $\pt p_{x^*}$ is on the
positive real axis and that the half-line $[\mathfrak
  p_{x^*},+\infty)$ does not cross any line $(\mathfrak p_x\mathfrak
  p_{x'})$ with $x,x' \in X$. We can further assume that no line
  $(\mathfrak p_x\mathfrak p_{x'})$ with $x,x'$ in $X$ is horizontal,
  since the points are in generic position. We set $R_X = \max_{x \in
    X} \|\mathfrak p_x\|$. Similarly, there exists a point set $\cal
  Q$ labeled by $Y \cup \{y^*\}$ with chirotope $\xi$ such that $\pt
  q_{y^*}$ is on the negative real axis and the half-line
  $(-\infty,\mathfrak q_{x^*}]$ does not cross any line $(\pt
q_y,\mathfrak q_{y'})$ with $ y,y' \in Y$. We set $R_Y=\max_{y \in
  Y} \|\mathfrak q_y\|$.

\medskip

Now, let $\mathfrak p$ be the point with coordinate $(d,0)$ for some
$d>\|\mathfrak p_{x^*}\|$. The distance between $\mathfrak p$ and the
union of the lines $(\mathfrak p_x\mathfrak p_{x'})$ with $x,x' \in
X$ goes to infinity as $d$ goes to $+\infty$: indeed it is positive,
piecewise affine and could only remain constant if the cell $C_*^X$ were
bounded by horizontal lines (which we ruled out).
 Thus, for $d$ large enough, say $d >d_1$,
 the disk of radius $R_Y$ and center $(d,0)$ entirely lies in $C^X_*$.
  Similarly for $d$ large enough, say $d>d_2$,
  the disk of radius $R_X$ and center $(-d,0)$ entirely lies in $C^Y_*$.

We take $d>\max(d_1,d_2)$ and consider the point set
\[\big\{\mathfrak p_x, x \in X\big\} \cup \big\{\mathfrak q_y +(d,0), y \in Y\big\},\]
i.e.~we take all points $\mathfrak p_x$ (except $\mathfrak p_{x^*}$) as they are,
and all points $\mathfrak q_y$ (except $\mathfrak q_{y^*}$) translated by the vector $(d,0)$.
Let $\kappa:(X \cup Y)_3 \to \{\pm 1\}$ be the chirotope of this point set.
Clearly, $\kappa$ is a realizable chirotope, and our goal is now to prove that $\kappa = \chi \join{x^*}{y^*} \xi$.
Obviously,
\[\begin{cases}
\kappa(x_1,x_2,x_3) = \chi(x_1,x_2,x_3) \text{ if $x_1,x_2,x_3$ are all in $X$;}\\
\kappa(y_1,y_2,y_3)= \xi(y_1,y_2,y_3) \text{ if $y_1,y_2,y_3$ are all in $Y$.}
\end{cases} \]
Take $x_1$,$x_2$ in $X$ and $y$ in $Y$. The point $\mathfrak q_y
+(d,0)$ is in the circle of radius $R_Y$ and center $(d,0)$, and is
therefore in $C^X_*$; in other words, $\mathfrak q_y +(d,0)$ is in the
same cell of the arrangement of the lines $\big\{(\mathfrak
p_x\mathfrak p_{x'}), x,x' \in X\}$ as $\pt p_{x^*}$, which implies
that $\kappa(x_1,x_2,y) = \chi(x_1,x_2,x^*)$.The case where
$y_1$,$y_2$ are in $Y$ and $x$ in $X$ is handled similarly, after
translating all the points by $(-d,0)$ (which does not change the
chirotope).  In this case, $\kappa(x,y_1,y_2) = \xi(y^*,y_1,y_2)$.
Altogether, this shows that $\kappa = \chi \join{x^*}{y^*} \xi$,
proving that $\chi \join{x^*}{y^*} \xi$ is realizable.
\end{proof}

\cref{prop:Substitution_is_chirotope,prop:Realization_Substitution} emphasize the special role of extreme points for the bowtie operation. The following proposition will therefore be useful to iterate bowtie operations.

\begin{proposition}\label{prop:extreme_stay_extreme}
  Let $\chi$ and $\xi$ be chirotopes on disjoint sets $X\cup \{x^*\}$
  and $Y\cup \{y^*\}$, with $|X|,|Y| \ge 2$, and assume that $x^*$ and
  $y^*$ are extreme in $\chi$ and in $\xi$.  The extreme elements of
  $\chi\join{x^*}{y^*}\xi$ are the extreme elements of $\chi$ other
  than $x^*$ and those of $\xi$ other than $y^*$.
\end{proposition}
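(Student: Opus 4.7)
The plan is to characterize the extreme elements of $\kappa \eqdef \chi \join{x^*}{y^*} \xi$ by combining Carath\'eodory (\cref{lem:caratheodory}) with the characterization of an extreme element via a unique ``next'' and ``previous'' neighbour (\cref{obs:a-}). By \cref{prop:Substitution_is_chirotope}, $\kappa$ is a chirotope, so both tools apply; by symmetry it suffices to treat elements of $X$, the case of $Y$ being identical after swapping $(\chi,x^*)$ and $(\xi,y^*)$.

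First, I would show that any $a \in X$ extreme in $\chi$ remains extreme in $\kappa$. By \cref{obs:a-} applied to $\chi$, there are unique elements $a_+, a_- \in (X \cup \{x^*\}) \setminus \{a\}$ with $\chi(a,a_+,z) = +1$ and $\chi(a,a_-,z) = -1$ for every other admissible $z$. Since $|X| \ge 2$, the set $(X \cup \{x^*\}) \setminus \{a\}$ has at least two elements, hence $a_+ \ne a_-$, and at least one of them must lie in $X$; call it $b$, and let $\sigma \in \{+1,-1\}$ be the corresponding constant sign. For $c \in X \setminus \{a,b\}$ the bowtie definition gives $\kappa(a,b,c) = \chi(a,b,c) = \sigma$, while for $c \in Y$ it gives $\kappa(a,b,c) = \chi(a,b,x^*) = \sigma$ (using $b \in X$, so that $x^*$ is a valid third argument distinct from $a$ and $b$). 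Hence $b$ witnesses that $a$ is extreme in $\kappa$.

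Second, I would argue the contrapositive: if $a \in X$ is not extreme in $\chi$, then $a$ is not extreme in $\kappa$. By \cref{lem:caratheodory} there exist $p,q,r \in (X \cup \{x^*\}) \setminus \{a\}$ with $\chi(a,p,q) = \chi(a,q,r) = \chi(a,r,p) = +1$. If $x^* \notin \{p,q,r\}$, the bowtie definition directly transfers these three equalities to $\kappa$, and \cref{lem:caratheodory} concludes. Otherwise, cyclic symmetry of $(p,q,r)$ lets us assume $p = x^*$; picking any $y \in Y$ (non-empty since $|Y| \ge 2$), the bowtie definition gives $\kappa(a,y,q) = \chi(a,x^*,q) = +1$, $\kappa(a,q,r) = \chi(a,q,r) = +1$ and $\kappa(a,r,y) = \chi(a,r,x^*) = +1$, so the triple $\{y,q,r\}$ witnesses non-extremality of $a$ in $\kappa$.

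The only delicate point is ensuring, in the first direction, that the witness $b$ can be taken inside $X$ rather than being the proxy $x^*$; this rests on $a_+ \ne a_-$, which in turn follows from the hypothesis $|X| \ge 2$. Everything else is a direct unfolding of the bowtie definition.
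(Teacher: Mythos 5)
Your proof is correct. The forward direction (extreme in $\chi$ implies extreme in $\kappa$) is essentially the paper's argument: when the natural witness for extremality happens to be the proxy $x^*$, switch to the opposite-sign witness of \cref{obs:a-}, which must lie in $X$ because $a_+\neq a_-$ and only one of them can equal $x^*$. For the converse, you take a genuinely different (though equally elementary) route: the paper proves directly that an element of $X$ extreme in $\kappa$ is extreme in $\chi$, by taking the extremality witness $z'_e \in X\cup Y$ for $\kappa$ and converting it into a witness in $X\cup\{x^*\}$ (using $z'_e$ itself if it lies in $X$, and $x^*$ otherwise); you instead argue the contrapositive via \cref{lem:caratheodory}, lifting a triangle of $\chi$ containing $a$ to a triangle of $\kappa$ containing $a$ by substituting an arbitrary $y\in Y$ for $x^*$. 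Both versions of the converse are clean one-step unfoldings of the bowtie definition; the paper's version stays entirely inside the ``witness'' characterization of extremality while yours leans on the Carath\'eodory characterization, which perhaps makes the geometric picture more visible at the small cost of quoting one more lemma.
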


\begin{proof}
  Let $\kappa\eqdef \chi\join{x^*}{y^*}\xi$. Let $x_e$ be
  extreme in $\chi$, different from $x^*$. There exists $x'_e \in X$
  such that
  \[ \forall x \in X\cup
  \{x^*\} \setminus \{x_e,x'_e\}, \quad \kappa(x_e,x'_e,x) =
  \chi(x_e,x'_e,x) =1.\]
  If $x'_e \neq x^*$, then for all $y \in Y$ we have
    $\kappa(x_e,x_e',y) = \chi(x_e,x'_e,x^*)=1$ and $x_e$ is extreme
    in $\kappa$. If $x'_e = x^*$, then we use the equivalent
    definition (see \cref{obs:a-}) that $x_e$ is extreme in $\chi$ if
    and only if there exists $x_e'' \in X$ such that
  \[ \forall x \in X\cup \{x^*\} \setminus \{x_e,x''_e\}, \quad \chi(x_e,x_e'',x) =-1,\]
  and now that $x_e''\neq x^*$ we are back to the previous case, {\em
    mutatis mutandis}. Altogether, the extreme elements of $\chi$
  different from $x^*$ are extreme for $\kappa$.  Similarly, the
  extreme elements of $\xi$ different from $y^*$ are extreme for
  $\kappa$.

\medskip

Conversely, let $z_e$ be extreme in $\kappa$, and assume w.l.o.g.~that
$z_e \in X$.  By definition, there exists $z'_e \in X\cup Y$ such that
for all $z \in X\cup Y \setminus \{z_e,z'_e\}$, we have
$\kappa(z_e,z'_e,z) =1$. On the one hand, if $z'_e \in X$, then
$\chi(z_e,z'_e,x) = \kappa(z_e,z'_e,x) = 1$ for all $x \in X\setminus
\{z_e,z'_e\}$. In addition, choosing any element $y \in Y$, we have
$\chi(z_e,z'_e,x^*) = \kappa(z_e,z'_e,y) = 1$, and we recognize that
$z_e$ is extreme in $\chi$. On the other hand, if $z'_e \in Y$, then
for all $x \in X \setminus \{z_e\}$, we have $\chi(z_e,x^*,x) =
\kappa(z_e,z'_e,x) =1$, and $z_e$ is again extreme in~$\chi$.
\end{proof}

\cref{prop:Substitution_is_chirotope,prop:Realization_Substitution,prop:extreme_stay_extreme}
correspond to \cref{p:whenischirotope} in the introduction.  We
conclude this section with two additional results, used respectively
in \cref{sec:canonical_tree,sec:triangulations}.

\medskip

For the first result, we recall that a chirotope is called convex if all its elements are extreme.
The following statement is an immediate corollary of \cref{prop:extreme_stay_extreme}.
\begin{corollary}\label{lem:bowtie_of_convex}
  Let $\kappa$ be a chirotope with a nontrivial decomposition $\kappa =
  \chi\join{x^*}{y^*}\xi$. The chirotope $\kappa$ is convex if and
  only if both $\chi$ and $\xi$ are convex.
\end{corollary}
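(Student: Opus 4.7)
The plan is to derive the corollary directly from Proposition~\ref{prop:extreme_stay_extreme}, which characterizes the extreme elements of $\kappa \join{x^*}{y^*} \xi$. The key translation is that ``$\chi$ is convex'' means that every element of its ground set $X \cup Y$ is extreme in $\chi$; by Proposition~\ref{prop:extreme_stay_extreme}, this is the set of non-proxy extreme elements of $\kappa$ together with the non-proxy extreme elements of $\xi$.

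First I would unwind the setup: since the decomposition $\chi = \kappa \join{x^*}{y^*} \xi$ is nontrivial, Proposition~\ref{p:whenischirotope}(i) (equivalently \cref{prop:Substitution_is_chirotope}) tells us that $x^*$ is extreme in $\kappa$ and $y^*$ is extreme in $\xi$. Consequently, $\kappa$ is convex if and only if every element of $X$ is extreme in $\kappa$ (the proxy $x^*$ is automatically extreme), and similarly $\xi$ is convex if and only if every element of $Y$ is extreme in $\xi$.

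Next I would apply Proposition~\ref{prop:extreme_stay_extreme}: the set of extreme elements of $\chi$ equals $(\ext(\kappa) \setminus \{x^*\}) \cup (\ext(\xi) \setminus \{y^*\})$, viewed as a subset of $X \cup Y$. Hence every element of $X \cup Y$ is extreme in $\chi$ if and only if every element of $X$ is extreme in $\kappa$ and every element of $Y$ is extreme in $\xi$. Combining with the previous paragraph, this is equivalent to both $\kappa$ and $\xi$ being convex, which is exactly the claim.

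There is no real obstacle here: the statement is a bookkeeping consequence of the characterization of extreme elements under bowtie, and the only mild point to mention is that the proxies $x^*$ and $y^*$ are already extreme by hypothesis, so convexity of $\kappa$ and $\xi$ is equivalent to the non-proxy part being made entirely of extreme elements.
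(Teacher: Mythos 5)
Your proof is correct and takes exactly the approach the paper has in mind: the paper states the corollary is an ``immediate corollary'' of \cref{prop:extreme_stay_extreme} without spelling out the details, and you have correctly supplied them, including the observation that $x^*$ and $y^*$ are automatically extreme (via \cref{p:whenischirotope}(i)), so the convexity of $\kappa$ and $\xi$ reduces to the non-proxy elements being extreme.
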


\subsection{Modules} \label{subsec:modules}

Let us now consider inverting the bowtie operation: given a chirotope
$\kappa$ on ground set $Z$, we want a bipartition $X\uplus Y$ of $Z$,
and chirotopes $\chi$ and $\xi$ on ground sets $X\cup \{x^*\}$ and
$Y\cup \{y^*\}$ (for new elements $x^*$ and $y^*$) such that $\kappa=
\chi\join{x^*}{y^*}\xi$.  We call such a decomposition $\kappa=
\chi\join{x^*}{y^*}\xi$ a \emph{bowtie decomposition} of $\kappa$.

\smallskip

These considerations lead naturally to the following notion.

\begin{definition}\label{def:modules}
  A subset $X\subseteq Z$ is a \emph{module} of a chirotope $\kappa$
  on $Z$ if for every $x, x' \in X$ and $y, y' \in Z \setminus
  X$, one has $\kappa(x,x',y)=\kappa(x,x',y')$ and
  $\kappa(x,y,y')=\kappa(x',y,y')$. The bipartition $\{X,Z\setminus
  X\}$ of $Z$ is called a \emph{modular bipartition} of $\kappa$.
\end{definition}

\noindent
Note that $X$ is a module of $\kappa$ if and only if $Z\setminus X$ is
a module of $\kappa$. The notion of module is chosen so that the
following observation holds.

\begin{observation}
\label{prop:module_and_join}
Let $\kappa$ be a chirotope on ground set $Z$ and $\{X,Y\}$ a
bipartition of $Z$.  There exist chirotopes $\chi$ and $\xi$ on
disjoint ground sets $X\cup \{x^*\}$ and $Y\cup \{y^*\}$ such that
$\kappa= \chi\join{x^*}{y^*}\xi$ if and only if $X$ is a module of
$\kappa$. Moreover, when the chirotopes $\chi$ and $\xi$ exist, they
are unique up to the choice of the labels for $x^*$ and $y^*$.
\end{observation}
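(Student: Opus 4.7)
The plan is to prove the two directions of the equivalence separately and then settle uniqueness, in each case by a direct computation from the definitions of $\Bowtie$ and of a module.

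For the ``only if'' direction, I would simply unfold \eqref{eq:def_bowtie}. Assuming $\kappa = \chi\join{x^*}{y^*}\xi$, pick $x,x' \in X$ and $y,y' \in Y$. The bowtie definition gives $\kappa(x,x',y) = \chi(x,x',x^*) = \kappa(x,x',y')$, and $\kappa(x,y,y') = \xi(y^*,y,y') = \kappa(x',y,y')$. Both conditions of \cref{def:modules} are immediate, so $X$ is a module of $\kappa$.

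For the ``if'' direction, assume $X$ is a module of $\kappa$. Fix any $y_0 \in Y$ and define $\chi$ on $X \cup \{x^*\}$ by setting $\chi(x_1,x_2,x_3) \eqdef \kappa(x_1,x_2,x_3)$ when $x_1,x_2,x_3 \in X$, and $\chi(x_1,x_2,x^*) \eqdef \kappa(x_1,x_2,y_0)$ when $x_1,x_2 \in X$, extending to all of $(X\cup\{x^*\})_3$ via the symmetry condition~\eqref{eq:symmetry_condition}; the first module condition ensures this is independent of the choice of $y_0$. Construct $\xi$ symmetrically from any $x_0 \in X$. The key structural observation is that $\chi$ is nothing but the pullback of the restriction $\kappa|_{(X \cup \{y_0\})_3}$ under the relabeling bijection that fixes $X$ and sends $x^*$ to $y_0$. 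Since the three chirotope axioms are universally quantified statements over tuples drawn from the ground set, they are preserved by restriction to any subset and by relabeling, so $\chi$ is a chirotope, and similarly for $\xi$.

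It remains to verify that $\kappa = \chi \join{x^*}{y^*} \xi$, which is a four-case check on the split of a triple between $X$ and $Y$: the two pure cases hold by construction, and the two mixed cases follow respectively from the first and the second module condition, which are precisely what is needed to replace the minority element by its proxy without changing the value of $\kappa$. For uniqueness, observe that if $\chi,\xi$ realize the bowtie decomposition, then $\chi$ on triples inside $X$ is forced to equal $\kappa$ there, and $\chi(x_1,x_2,x^*)$ is forced to equal $\kappa(x_1,x_2,y)$ for any $y \in Y$ (the module condition guarantees consistency); so $\chi$ is determined up to the chosen label $x^*$, and likewise for $\xi$. There is no serious obstacle in this proof; the only point needing care is to invoke both module conditions at the right moment to guarantee the well-definedness and correctness of the construction of $\chi$ and $\xi$.
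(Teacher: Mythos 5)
Your proof is correct, and it carries out exactly the direct verification that the paper leaves implicit: the statement is labeled an \emph{Observation} and is not given a proof, with the authors having already noted in \cref{subsec:subs} that $\chi$ and $\xi$ are isomorphic to restrictions of $\kappa$ (obtained by restricting to $X\cup\{y_0\}$ and relabeling $y_0$ as $x^*$), which is the key point you also use to see that $\chi$ and $\xi$ are chirotopes. Your four-case check against \eqref{eq:def_bowtie} and your reading of the two module conditions in \cref{def:modules} (the first gives well-definedness of $\chi(x_1,x_2,x^*)$ independently of the representative $y_0$, the second the analogous statement for $\xi$) match what the paper intends; nothing is missing.
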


\begin{remark}
If one of $X$ or $Y$ has size at most $1$, then $\{X,Y\}$ is always a modular bipartition of $\kappa$, but the size of $\chi$ or $\xi$ is at least that of $\kappa$. We refer to such modular bipartitions, modules or bowtie decomposition as {\em trivial}. Note that in a nontrivial bipartition $\{X,Y\}$ of $\kappa$, both $X$ and $Y$ have size at least $2$, which implies in particular that both $\chi$ and $\xi$ in the decomposition $\kappa= \chi\join{x^*}{y^*}\xi$ have ground sets of size at least $3$.
\end{remark}

\begin{remark}\label{rk:module-realisable}
When $\kappa$ is realizable, its nontrivial modular bipartitions have
a simple geometric interpretation: $X$ is a nontrivial module if there exists a
realization of $\kappa$ such that no line through two points with labels in $X$
separates two points with labels outside $X$, and conversely no
line through two points with labels outside $X$ separates
two points with labels in $X$. This explains the connection
with the notion of \emph{mutually avoiding sets} given in the
introduction: a modular bipartition is a pair of mutually avoiding
sets covering the whole chirotope.\end{remark}

The next lemma studies the stability of modules under intersection,
and will be useful in the next section.

\begin{lemma}\label{lm:intersection_modules}
Let $X_1, X_2$ be two modules of a chirotope $\kappa$ with ground set $Z$. If $X_1\cup X_2\neq Z$, then $X_1\cap X_2$ is a module of $\kappa$.
\end{lemma}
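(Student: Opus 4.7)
The plan is to verify directly that $M \eqdef X_1 \cap X_2$ satisfies the two conditions in \cref{def:modules}. Let $W \eqdef Z \setminus (X_1 \cup X_2)$, which is nonempty by hypothesis, and fix an arbitrary witness $z \in W$. Note that $Z \setminus M$ is the disjoint union of $X_1 \setminus X_2$, $X_2 \setminus X_1$, and $W$.

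Fix $x, x' \in M$ and $y, y' \in Z \setminus M$. For the first condition $\kappa(x,x',y)=\kappa(x,x',y')$, the idea is to evaluate both sides by comparison with the reference value $\kappa(x,x',z)$. Indeed, whichever part of the decomposition of $Z \setminus M$ contains $y$, that part is disjoint from $X_1$ or disjoint from $X_2$, and the module property of the corresponding $X_i$ (applied to the pair $x,x' \in M \subseteq X_i$) gives $\kappa(x,x',y) = \kappa(x,x',z)$. The same argument applied to $y'$ yields $\kappa(x,x',y') = \kappa(x,x',z)$, and the two equalities combine.

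For the second condition $\kappa(x,y,y')=\kappa(x',y,y')$, the easy cases are when $\{y,y'\}$ lies entirely in $Z \setminus X_1$ or entirely in $Z \setminus X_2$, which is handled by one single application of the module property of $X_1$ or $X_2$. The remaining, tricky case, and the crux of the proof, is when $y$ and $y'$ lie in different sides, say $y \in X_2 \setminus X_1$ and $y' \in X_1 \setminus X_2$ (the opposite case is symmetric). Here neither $X_1$ nor $X_2$ alone contains both $y,y'$ outside it, so I will chain three applications through the witness $z$: using the first module condition on $X_1$ (for the pairs $\{x,y'\}$ and $\{x',y'\}$, both contained in $X_1$, against $y,z \notin X_1$) I get $\kappa(x,y',y)=\kappa(x,y',z)$ and $\kappa(x',y',y)=\kappa(x',y',z)$; then the second module condition on $X_2$ (applied to $x,x' \in X_2$ against $y',z \notin X_2$) gives $\kappa(x,y',z)=\kappa(x',y',z)$. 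Combining these three equalities and using the antisymmetry~\eqref{eq:symmetry_condition} yields $\kappa(x,y,y')=\kappa(x',y,y')$.

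The main subtlety, and where the hypothesis $X_1 \cup X_2 \neq Z$ is crucial, is precisely the existence of the external witness $z$: without a point lying outside both $X_1$ and $X_2$, the chaining argument in the mixed case collapses. All other steps are straightforward case analyses on where $y$ and $y'$ sit relative to $X_1$ and $X_2$, followed by direct applications of \cref{def:modules} to $X_1$ or $X_2$.
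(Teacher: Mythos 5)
Your proof is correct and follows essentially the same strategy as the paper's: both verify the two module conditions directly via a fixed external witness (your $z$, the paper's $p$) in $Z\setminus(X_1\cup X_2)$, splitting the second condition into the easy case where $\{y,y'\}$ avoids $X_1$ or $X_2$ and the mixed case handled by chaining the module properties of $X_1$ and $X_2$ through the witness. The only deviation is a labeling choice ($y\in X_2\setminus X_1$, $y'\in X_1\setminus X_2$ rather than the reverse), which costs you one extra invocation of the symmetry relation at the end but is otherwise cosmetic.
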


\begin{proof}
  Let $x,x'\in X_1\cap X_2$ and $y,y'\in Z\setminus (X_1\cap
  X_2)$. Let $p\in Z\setminus (X_1\cup X_2)$. Since $y$ is outside
  $X_1 \cap X_2$, the set $\{y,p\}$ is disjoint from $X_1$ or from
  $X_2$. In either case, we have $\kappa(x,x',y)= \kappa(x,x',p)$. A
  symmetric argument yields $ \kappa(x,x',p)=\kappa(x,x',y')$, so we
  have $\kappa(x,x',y)=\kappa(x,x',y')$.

  \smallskip

  It remains to argue that $\kappa(x,y,y')=\kappa(x',y,y')$. This is
  clear if both $y$ and $y'$ are outside of $X_1$, as $X_1$ is a
  module, so assume otherwise. Suppose, without loss of generality,
  that $y \in X_1 \setminus X_2$, and $y'\in X_2 \setminus X_1$. As
  $x,x',y\in X_1$ and $p,y'\notin X_1$ we have $\kappa(x,y,y')
  =\kappa(x,y,p)$ and $\kappa(x',y,y') =\kappa(x',y,p)$. As $p,y\notin
  X_2$ and $x,x'\in X_2$, $\kappa(x,y,p)=\kappa(x',y,p)$. Altogether,
  we get $\kappa(x,y,y') = \kappa(x',y,y')$ as desired.  
\end{proof}

\begin{remark}
  The condition $X_1\cup X_2\neq Z$ in
  Lemma~\ref{lm:intersection_modules} is indeed necessary. To see
  this, consider the chirotope induced by the following point set,
  where $X_1=\{a,b,x,y\}$ and $X_2=\{c,d,x,y\}$ are nontrivial modules
  but $X_1\cap X_2=\{x,y\}$ is not a module:
\begin{center}
	\begin{tikzpicture}
	\node (x) at (0,.7) {$\bullet$};
	\draw (x) node[below ] {$x$} ;
	\node (y) at (0,-.7) {$\bullet$};
	\draw (y) node[below ] {$y$} ;

	\node (e) at (1,0) {$$};
	\node (c) at ($(e)+(0,.15)$) {$\bullet$};
	\draw (c) node[right ] {$c$};
	\node (d) at ($(e)-(+0.09,.25)$) {$\bullet$};
	\draw (d) node[right ] {$d$};

	\node (c) at (-1,0) {$$};
	\node (a) at ($(c)+(-0.12,.15)$) {$\bullet$};
	\draw (a) node[left ] {$a$};
	\node (b) at ($(c)-(0,.15)$) {$\bullet$};
	\draw (b) node[left ] {$b$};
\end{tikzpicture}
\end{center}
\end{remark}

\subsection{Commutativity and associativity}\label{subsec:sequences_bowtie}

\begin{proposition}\label{prop:commutativity_asso_bowtie}
	Let $\chi$, $\xi$, $\kappa$ be chirotopes on disjoint ground sets $X\cup \{x^*\}$, $Y\cup \{y_1^*,y_2^*\}$ and $Z\cup \{z^*\}$, with $|X|,|Z| \ge 2, |Y|\geq 1$. If the starred elements are extreme in their respective chirotopes, then $\chi\join{x^*}{y_1^*}\xi=\xi\join{y_1^*}{x^*}\chi$ and $(\chi\join{x^*}{y_1^*}\xi)\join{y_2^*}{z^*}\kappa=\chi\join{x^*}{y_1^*}(\xi\join{y_2^*}{z^*}\kappa)$.
\end{proposition}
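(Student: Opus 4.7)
The plan is to verify both equalities directly from the defining formula \eqref{eq:def_bowtie} of $\Bowtie$, using the symmetry relation \eqref{eq:symmetry_condition}. First, the hypothesis that the starred elements are extreme serves only to ensure that the intermediate bowties are chirotopes: by \cref{prop:extreme_stay_extreme}, $y_2^*$ (which is extreme in $\xi$ and different from $y_1^*$) remains extreme in $\chi\join{x^*}{y_1^*}\xi$, and symmetrically $y_1^*$ remains extreme in $\xi\join{y_2^*}{z^*}\kappa$. So both sides of the associativity equation are well-defined chirotopes on the ground set $X\cup Y\cup Z$. The identities themselves are then statements about equality of sign functions, which I will establish pointwise.

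For \textbf{commutativity}, I would read off the four rules that define $\chi\join{x^*}{y_1^*}\xi$ and those that define $\xi\join{y_1^*}{x^*}\chi$, and check that each rule of one agrees with a rule of the other after permuting the arguments cyclically. For instance, for $x_1,x_2\in X$ and $y\in Y$, the first operator gives $\chi(x_1,x_2,x^*)$, while the second (which views the two-$X$, one-$Y$ triple as a one-$X$, two-$Y$ triple of the opposite grouping) gives, via cyclic permutation, the same quantity. The remaining three rules are handled the same way.

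For \textbf{associativity}, I would perform a case analysis based on the distribution of $(a,b,c)$ among $X$, $Y$, $Z$. Up to permutation within a triple, there are ten cases: three where all three elements lie in the same set, six where two lie in one set and one in another, and one where there is exactly one element in each set. In every case, both $(\chi\join{x^*}{y_1^*}\xi)\join{y_2^*}{z^*}\kappa$ and $\chi\join{x^*}{y_1^*}(\xi\join{y_2^*}{z^*}\kappa)$ can be reduced, by successive applications of \eqref{eq:def_bowtie}, to an evaluation of $\chi$, $\xi$, or $\kappa$, possibly with one or two starred proxies substituted for elements outside that chirotope's ground set. A direct check shows the two reductions land on the same value.

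The only genuinely illustrative case is the mixed one, say $a\in X$, $b\in Y$, $c\in Z$. On the left-hand side, one first replaces $c$ by the proxy $z^*$ of the outer bowtie, obtaining $\kappa_1(a,b,y_2^*)$ with $\kappa_1=\chi\join{x^*}{y_1^*}\xi$; since $a\in X$ and $b,y_2^*\in Y$, this equals $\xi(y_1^*,b,y_2^*)$. On the right-hand side, one first replaces $a$ by $x^*$ of the outer bowtie, obtaining $\kappa_2(y_1^*,b,c)$ with $\kappa_2=\xi\join{y_2^*}{z^*}\kappa$; since $y_1^*,b\in Y$ and $c\in Z$, this equals $\xi(y_1^*,b,y_2^*)$ as well. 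All other nine cases are analogous and strictly simpler, each involving at most one proxy substitution. The main obstacle is purely bookkeeping: keeping track of which proxy stands in for which element in each of the two parenthesizations, and handling the cyclic permutations needed to align arguments with the defining cases of $\Bowtie$.
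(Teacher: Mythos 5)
Your proposal is correct and takes essentially the same route as the paper: commutativity from the symmetry of the definition, associativity by a pointwise case analysis on how the triple distributes among $X$, $Y$, $Z$, with the fully-mixed case worked out exactly as you do and the remaining cases dismissed as similar. The remark on well-definedness via \cref{prop:extreme_stay_extreme} is a nice extra touch, though strictly speaking the bowtie is defined on sign functions, so both sides are always well-defined and the equality is just an equality of sign functions.
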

\begin{proof}
	The commutativity comes from the symmetry of the definition of
        the bowtie operation. Let us focus on the associativity
        condition.  Let
        $\lambda_1=(\chi\join{x^*}{y_1^*}\xi)\join{y_2^*}{z^*}\kappa$,
        and
        $\lambda_2=\chi\join{x^*}{y_1^*}(\xi\join{y_2^*}{z^*}\kappa)$,
        which are both sign functions on $X\cup Y\cup Z$. Let us take
        three elements $t,u,v$ in $X\cup Y\cup Z$.  Suppose first that
        $t,u,v$ all belong to the set $X$. Then by definition of the
        bowtie
        $\lambda_1(t,u,v)=[(\chi\join{x^*}{y_1^*}\xi)\join{y_2^*}{z^*}\kappa]
        (t,u,v)=(\chi\join{x^*}{y_1^*}\xi)(t,u,v)=\chi(t,u,v)$, and
        similarly $\lambda_2(t,u,v)=\chi(t,u,v)$. The cases where
        $t,u,v$ are all in $Y$, or all in $Z$, are similar.  If
        $t,u,v$ all belong to different sets among $X$, $Y$ and $Z$,
        for example if $t\in X$, $u\in Y$, $v\in Z$, then
        $\lambda_1(t,u,v)=(\chi\join{x^*}{y_1^*}\xi)(t,u,y_2^*)=\xi(y_1^*,u,y_2^*)$,
        and
        $\lambda_2(t,u,v)=(\xi\join{y_2^*}{z^*}\kappa)(y_1^*,u,v)=\xi(y_1^*,u,y_2^*)$. The
        final cases where one element is isolated in one set are very
        similar.
\end{proof}

\begin{remark}
  When using the associativity of bowties, one should pay particular
  attention to the ground sets of the chirotopes at play.  For
  instance, with the notation of
  \cref{prop:commutativity_asso_bowtie}, writing
  $(\xi\join{y_1^*}{x^*}\chi)\join{y_2^*}{z^*}\kappa=\xi\join{y_1^*}{x^*}(\chi\join{y_2^*}{z^*}\kappa)$
  does not make sense, as in the right-hand term $y_2^*$ does not
  belong to the ground set of $\chi$ (even though the left-hand term
  is well-defined).
\end{remark}

\section{Existence and uniqueness of the canonical chirotope tree}
\label{sec:canonical_tree}

{We now examine how iterated bowtie products can be described and manipulated via trees.}

\subsection{Chirotope trees and their chirotopes} \label{ssec:chirotope_trees}

We first recall some definitions from the introduction.  A {\em
  chirotope tree} is a tree whose nodes are decorated with chirotopes
on disjoint ground sets, {each of size at least $3$}, and whose
edges $\{v_1,v_2\}$ select an element in the ground set of each of 
the chirotopes decorating $v_1$ and $v_2$, 
in a way that (i) no element is selected more than once, and (ii) each
selected element is extreme in its chirotope.  To be complete, let us
stress out that all trees we consider are graph-theoretical trees $T=(V_T,E_T)$,
i.e.~simple connected graphs without cycles.  In
particular, trees are always unrooted and unordered (neighbors of a
given node are not ordered).

Given a chirotope tree $T$, one associates to it a sign function $\chi_T$ on $\bigcup_{w \in V_T} X_w$ defined by
\begin{equation}
  \chi_T(x,y,z) \eqdef \chi_v(R(x,v),R(y,v),R(z,v)),
\label{eq:chirotope_tree}
\end{equation}
where $v$ is the branching point of the paths from $x$ to $y$, $y$ to
$z$ and $x$ to $z$, and $R(x,v)$, $R(y,v)$ and $R(z,v)$ the
representatives of $x$, $y$ and $z$ in $v$ (i.e.~the proxy points of
$\chi_v$ to which the path going from $v$ to the nodes containing $x$,
$y$ and $z$ are attached).  If two or more elements among $x$, $y$ and
$z$ belong to the same node, then $v$ is taken to be this node, and we
use the convention that $R(s,v)=s$ for $s$ in $X_v$.

It is easy to see that this defines a sign function on $\bigcup_{w \in V_T} X_w$;
we now prove \cref{prop:chirotope_tree}, which state that it is always a chirotope,
and furthermore realizable if all chirotopes decorating the nodes of $T$ are realizable.
Let us start with a lemma.

\begin{lemma}\label{lem:chiT_is_join_2}
  Let $T$ be a chirotope tree and $e=\{v_1,v_2\}$ be an edge of $T$.
  Let $s_1$ and $s_2$ be the elements selected by $e$ in $v_1$ and
  $v_2$ respectively.  If $T_1$ and $T_2$ denote the two chirotope
  trees obtained by removing $e$ from $T$, with $v_i \in T_i$, then
  $\chi_{T}=\chi_{T_1}\join{s_1}{s_2}\chi_{T_2}$.
\end{lemma}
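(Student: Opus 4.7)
The plan is to unfold the definitions on both sides of the claimed equality and verify that the two sign functions agree on every triple of distinct non-proxy elements of $T$. By the symmetry axiom \eqref{eq:symmetry_condition} it is enough to check one representative of each unordered triple, which we may choose by listing elements of $T_1$ before elements of $T_2$. The verification then splits into cases according to how the triple $(x,y,z)$ distributes between $T_1$ and $T_2$.

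When the three elements belong entirely to $T_1$ (resp.~$T_2$), the three paths in $T$ between their home nodes never use the edge $e$, so the branching node $v = v(x,y,z)$ in $T$ coincides with the corresponding branching node in $T_1$ (resp.~$T_2$), and $R_T(\cdot, v) = R_{T_1}(\cdot, v)$ (resp.~$R_{T_2}(\cdot, v)$) for each of $x,y,z$. Hence $\chi_T(x,y,z) = \chi_{T_1}(x,y,z)$ (resp.~$\chi_{T_2}(x,y,z)$), which matches the first (resp.~last) case of \eqref{eq:def_bowtie}.

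The genuinely new cases are the mixed ones; take for instance $x,y\in T_1$ and $z\in T_2$. Every path in $T$ from a node of $T_1$ to $z$ factors through $e$, so the branching node $v = v(x,y,z)$ in $T$ equals the $T_1$-branching of the home nodes of $x$, $y$, and $v_1$. The crux is the identity $R_T(z, v) = R_{T_1}(s_1, v)$. If $v = v_1$, the first edge of the $v_1$-to-$z$ path in $T$ is $e$, which selects $s_1$ in $\chi_{v_1}$; on the other hand, in $T_1$ the element $s_1$ is no longer selected by any edge and so is a non-proxy element of $v_1$, giving $R_{T_1}(s_1, v_1) = s_1$ trivially. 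If $v \neq v_1$, the first edge out of $v$ toward $z$ in $T$ and toward $v_1$ (hence toward the home node of $s_1$) in $T_1$ is the same edge, so both representatives equal the proxy it selects in $\chi_v$. Combined with $R_T(x,v) = R_{T_1}(x,v)$ and $R_T(y,v) = R_{T_1}(y,v)$, this yields $\chi_T(x,y,z) = \chi_{T_1}(x,y,s_1)$, which is precisely the second case of \eqref{eq:def_bowtie}. The dual configuration with two elements in $T_2$ is symmetric, using the third case of \eqref{eq:def_bowtie}.

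The main technical subtlety is the bookkeeping around the proxy $s_1$: removing $e$ promotes $s_1$ to a non-proxy element of $v_1$ in $T_1$, yet from the vantage point of any node $v$ of $T_1$ the element $s_1$ still plays the role of a proxy standing for every element of $T_2$. Once this dual status of $s_1$ (symmetrically, of $s_2$) is properly tracked, each case collapses to a direct identification of representatives at a common branching node.
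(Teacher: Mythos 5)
Your proof is correct and follows essentially the same route as the paper: both proceed by unfolding the definition of $\chi_T$, splitting into cases according to how the triple distributes across the two subtrees, and identifying the branching node and representatives in $T$ with those in $T_1$ (or $T_2$). Your slightly more explicit discussion of why $R_T(z,v)=R_{T_1}(s_1,v)$ (splitting on whether $v=v_1$) spells out what the paper leaves implicit, but the argument is the same.
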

\begin{proof}
  First, we note that after the removal of $e$, $s_1$ and $s_2$ are no
  longer proxies in $T_1$ and $T_2$, and the bowtie
  $\kappa=\chi_{T_1}\join{s_1}{s_2}\chi_{T_2}$ is thus well-defined.

  \smallskip

  It remains to check that $\kappa=\chi_T$. Let $X$ denote the ground
  set of $\kappa$. This set decomposes into $X=X_1 \uplus X_2$, where
  $X_i \uplus \{s_i\}$ is the ground set of $T_i$, for $i=1,2$. It is
  easy to check that given $x \in X$ and a node $v \in T$, we have
  \[ R_T(x,v) = \left\{\begin{array}{ll} R_{T_1}(x,v) & \text{if } v \in T_1\text{ and } x
  \in X_1,\\
  R_{T_1}(s_1,v) & \text{if } v \in T_1 \text{ and } x
  \in X_2,\\
  R_{T_2}(s_2,v) & \text{if } v \in T_2 \text{ and } x
  \in X_1,\\
  R_{T_2}(x,v) & \text{if } v \in T_2 \text{ and } x
  \in X_2.\\
  \end{array}\right.\]
  From this, it is easy to check that $\kappa(x,y,z)=\chi_T(x,y,z)$
  for any $x,y,z$ in $X$. For instance, if $x,y \in X_1$ and $z \in
  X_2$, then the node $v\eqdef v(x,y,z)$ lies in $T_1$ and
 \[R_{T}(x,v)=R_{T_1}(x,v), \qquad R_T(y,v)=R_{T_1}(y,v), \qquad R_T(z,v)=R_{T_1}(s_1,v).\]
 We therefore get
 \[\chi_T(x,y,z) = \chi_v(R_{T_1}(x,v),R_{T_1}(y,v),R_{T_1}(s_1,v))=\chi_{T_1}(x,y,s_1)=\kappa(x,y,z),\]
 where the middle equality uses \cref{eq:chirotope_tree} for $T_1$, and the last equality
 follows from the definition of the bowtie operation (\cref{eq:def_bowtie}).
 Other cases (two elements in $X_2$ and one in $X_1$, or the three elements in the same set)
 are similar or easier. This proves  $\kappa=\chi_T$ as wanted.
\end{proof}

\smallskip

The proof of \cref{prop:chirotope_tree} is now a straightforward
induction using \cref{lem:chiT_is_join_2} and the fact that the bowtie
of two (realizable) chirotopes whose starred elements are extreme is a
(realizable) chirotope (see
\cref{prop:Substitution_is_chirotope,prop:Realization_Substitution,prop:extreme_stay_extreme}).
Iterating this lemma, we can actually compute $\chi_T$ starting from
the sign functions $\chi_v$ and iterating bowtie operations.  For
example, if $T$ is the chirotope tree on
\Cref{fig:example-chirotope-tree}, we have
\begin{equation}\label{eq:chiT-example-first} \chi_T = (\chi_1 \join{x_1^*}{x_2^*} (\chi_2 \join{y_2^*}{x_3^*}\chi_3)) \join{z_2^*}{x_4^*}\chi_4.
\end{equation}

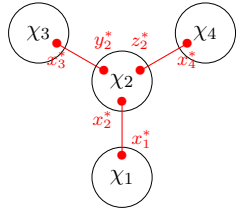
\begin{figure}
\[
\begin{tikzpicture}[baseline={(0,-.3)},scale=0.8, every node/.style={transform shape}]

\node[circle,draw,minimum size=1cm,very thin] (n2) at (0,0) {$\chi_2$};
\node[circle,draw,minimum size=1cm,very thin] (n1) at ($(n2)+(-90:1.6)$) {$\chi_1$};
\node[circle,draw,minimum size=1cm,very thin] (n3) at ($(n2)+(150:1.6)$) {$\chi_3$};
\node[circle,draw,minimum size=1cm,very thin] (n4) at ($(n2)+(30:1.6)$) {$\chi_4$};

\node[red] (y1*) at ($(n1)+(0,.35)$) {$\bullet$};
\node[red] (x3*) at ($(n3)+(-30:0.35)$) {$\bullet$};
\node[red] (x4*) at ($(n4)+(180+30:0.35)$) {$\bullet$};
\node[red] (y2*) at ($(n2)+(180-30:0.35)$) {$\bullet$};
\node[red] (z2*) at ($(n2)+(30:0.35)$) {$\bullet$};
\node[red] (x2*) at ($(n2)+(-90:0.35)$) {$\bullet$};

\node[red,above right=-0.3 of y1*,font=\footnotesize] {$x_1^*$};
\node[red,below left=-0.3 of x2*,font=\footnotesize] {$x_2^*$};
\node[red,above=-0. of y2*,font=\footnotesize] {$y_2^*$};
\node[red,above=-0. of z2*,font=\footnotesize] {$z_2^*$};
\node[red,below=-0.23 of x3*,font=\footnotesize] {$x_3^*$};
\node[red,below=-0.23 of x4*,font=\footnotesize] {$x_4^*$};

\draw[red] (x3*.center)--(y2*.center);
\draw[red] (x2*.center)--(y1*.center);
\draw[red] (x4*.center)--(z2*.center);
\end{tikzpicture}
		\]
  \caption{An example of chirotope tree.
  \label{fig:example-chirotope-tree}}
\end{figure}
We note that, when iterating \cref{lem:chiT_is_join_2}, we can choose
at each step which edge plays the role of $e$ in the lemma.
This leads to different expressions of $\chi_T$ as a sequence of bowtie operations.
Considering again the chirotope tree $T$ on \Cref{fig:example-chirotope-tree},
we could have written
\begin{equation}\label{eq:chiT-example-second} \chi_T =\chi_1 \join{x_1^*}{x_2^*} ((\chi_2 \join{z_2^*}{x_4^*}\chi_4) \join{y_2^*}{x_3^*}\chi_3)
\end{equation}
The equivalence of \cref{eq:chiT-example-first,eq:chiT-example-second}
follows from the commutativity and associativity properties of the bowtie operation
(\cref{prop:commutativity_asso_bowtie}).

\begin{remark}
From \cref{eq:chirotope_tree}, it follows immediately that the sign function $\chi_T$
associated to a chirotope tree $T$ does not depend on the labels of its proxies.
Accordingly, in the following, we consider chirotope trees up to relabeling of their proxies.
\end{remark}

\subsection{Two Operations on chirotope trees}
\label{ssec:operation-trees}

We now adapt two standard operations on trees (edge contraction
and vertex split) for chirotope trees in a way that leaves the
associated chirotopes invariant.

\subsubsection{Contraction}

\begin{definition}\label{def:merge}
  Let $T$ be a chirotope tree and $e=\{v,v'\}$ be an edge in $T$ that
  selects $s^*$ and $t^*$ in $\chi_v$ and $\chi_{v'}$, respectively. The
  \emph{contraction of $e$ in $T$} is the tree $T'$ obtained from $T$
  by merging the nodes $v$ and $v'$ into a new node $v_{new}$,
  decorated with the chirotope $\chi_{new}\eqdef \chi_v \join{s^*}{t^*}\chi_{v'}$.  We denote this transformation by $T \fusion{e} T'$.
\end{definition}

Note that, by
\cref{prop:Substitution_is_chirotope,prop:Realization_Substitution},
$\chi_{new}$ is a chirotope, and is realizable if both $\chi_v$ and
$\chi_{v'}$ are realizable.  Furthermore, by
\cref{prop:extreme_stay_extreme}, the extreme elements of $\chi_v$ and
$\chi_{v'}$ different from $s^*$ and $t^*$ are extreme elements of
$\chi_{new}$ so that the proxy elements corresponding to other edges
of $T$ are still extreme in their respective chirotopes in $T'$; it
follows that $T'$ is indeed a chirotope tree.

\begin{example}
  Here is an example of a contraction of an edge in a chirotope tree (with $\chi = \kappa \join{s^*}{t^*}
  \xi$).

  \medskip
  \begin{center}
    \begin{tikzpicture}[baseline={(0,0)},scale=0.8, every node/.style={transform shape}]

      \node[circle,draw,minimum size=1cm,very thin] (n2) at (0,0) {$\kappa$};
      \node[circle,draw,minimum size=1cm,very thin] (n2bis) at (1.8,0) {$\xi$};
      \node[circle,draw,very thin,minimum size=1cm] (n3) at ($(n2)+(150:1.8)$) {$\chi_3$};
      \node[circle,draw,minimum size=1cm,very thin] (n1) at ($(n2)+(-150:1.8)$) {$\chi_1$};
      \node[circle,draw,minimum size=1cm,very thin] (n4) at ($(n2bis)+(0:1.8)$) {$\chi_2$};

      \node[red] (x1*) at ($(n1)+(30:.35)$) {$\bullet$};
      \node[red] (x3*) at ($(n3)+(-30:0.35)$) {$\bullet$};
      \node[red] (x4*) at ($(n4)+(180:0.35)$) {$\bullet$};
      \node[red] (y2*) at ($(n2)+(180-30:0.35)$) {$\bullet$};
      \node[red] (z2*) at ($(n2bis)+(0:0.35)$) {$\bullet$};
      \node[red] (x2*) at ($(n2)+(180+30:0.35)$) {$\bullet$};
      \node[red] (s*) at ($(n2)+(0:0.35)$) {$\bullet$};
      \node[red] (t*) at ($(n2bis)+(180:0.35)$) {$\bullet$};

      \node[red,above=-0.23 of x1*,font=\footnotesize] {$x_1^*$};
      \node[red,below=-0.23 of x2*,font=\footnotesize] {$x^*$};
      \node[red,above=-0.23 of y2*,font=\footnotesize] {$y^*$};
      \node[red,above=-0.23 of z2*,font=\footnotesize] {$z^*$};
      \node[red,below=-0.23 of x3*,font=\footnotesize] {$x_3^*$};
      \node[red,above=-0.23 of x4*,font=\footnotesize] {$x_2^*$};
      \node[red,above right=-0.3 of s*,font=\footnotesize] {$s^*$};
      \node[red,above left=-0.4 of t*,font=\footnotesize] {$t^*$};

      \draw[red] (x3*.center)--(y2*.center);
      \draw[red] (x2*.center)--(x1*.center);
      \draw[red] (x4*.center)--(z2*.center);
      \draw[red] (s*.center)--(t*.center);
    \end{tikzpicture}
    \qquad $\xrightarrow[]{(s^*,t^*)}$\qquad
    \begin{tikzpicture}[baseline={(0,0)},scale=0.8, every node/.style={transform shape}]
      \node[circle,draw,minimum size=1cm,very thin] (n2) at (0,0) {$\chi$};
      \node[circle,draw,minimum size=1cm,very thin] (n1) at ($(n2)+(-90:1.8)$) {$\chi_1$};
      \node[red] (y1*) at ($(n1)+(0,.35)$) {$\bullet$};

      \node[circle,draw,minimum size=1cm,very thin] (n3) at ($(n2)+(150:1.8)$) {$\chi_3$};
      \node[red] (x3*) at ($(n3)+(-30:0.35)$) {$\bullet$};

      \node[circle,draw,minimum size=1cm,very thin] (n4) at ($(n2)+(30:1.8)$) {$\chi_2$};
      \node[red] (x4*) at ($(n4)+(180+30:0.35)$) {$\bullet$};

      \node[red] (y2*) at ($(n2)+(180-30:0.35)$) {$\bullet$};
      \node[red] (z2*) at ($(n2)+(30:0.35)$) {$\bullet$};
      \node[red] (x2*) at ($(n2)+(-90:0.35)$) {$\bullet$};

      \node[red,above right=-0.3 of y1*,font=\footnotesize] {$x_1^*$};
      \node[red,below left=-0.3 of x2*,font=\footnotesize] {$x^*$};
      \node[red,above=-0. of y2*,font=\footnotesize] {$y^*$};
      \node[red,above=-0. of z2*,font=\footnotesize] {$z^*$};
      \node[red,below=-0.23 of x3*,font=\footnotesize] {$x_3^*$};
      \node[red,below=-0.23 of x4*,font=\footnotesize] {$x_2^*$};

      \draw[red] (x3*.center)--(y2*.center);
      \draw[red] (x2*.center)--(y1*.center);
      \draw[red] (x4*.center)--(z2*.center);
    \end{tikzpicture}
  \end{center}
\end{example}

\subsubsection{Split}

\begin{definition}\label{def:split}
  Let $T$ be a chirotope tree with a node $v_0$ whose decoration
  $\chi_{v_0} = \chi$ has a nontrivial bowtie decomposition $\chi=
  \kappa\join{s^*}{t^*}\xi$. The {\em split of $T$ according to
    $\chi= \kappa\join{s^*}{t^*}\xi$} is the tree $T'$ obtained
  from $T$ by replacing $v_0$ by two nodes $v_1$ and $v_2$ that are
  decorated by $\kappa$ and $\xi$, respectively, and form an edge
  selecting $s^*$ in $v_1$ and $t^*$ in $v_2$. Denote this
  transformation by
  $T\xrightarrow[\kappa\join{s^*}{t^*}\xi]{\chi}T'$.
\end{definition}

\noindent
Note that for any edge $e=\{s^e_{v_0},s^e_w\}$ connecting $v_0$ to
another vertex $w$ in $T$, the element $s^e_{v_0}$ belongs to the
ground set of either $\kappa$ or $\xi$.  Therefore, $e$ is naturally
seen as an edge of $T'$, connecting either $v_1$ or $v_2$ to $w$,
depending on the set in which $s^e_{v_0}$ lies.\smallskip

\begin{example}
  Here is an example of a split of a tree according to the bowtie
  decomposition $\chi=\kappa\join{s^*}{t^*}\xi$.

  \medskip
  \begin{center}
    \begin{tikzpicture}[baseline={(0,0)},scale=0.8, every node/.style={transform shape}]
      \node[circle,draw,minimum size=1cm,very thin] (n2) at (0,0) {$\chi$};

      \node[circle,draw,minimum size=1cm,very thin] (n1) at ($(n2)+(-90:1.8)$) {$\chi_1$};
      \node[red] (y1*) at ($(n1)+(0,.35)$) {$\bullet$};

      \node[circle,draw,minimum size=1cm,very thin] (n3) at ($(n2)+(150:1.8)$) {$\chi_3$};
      \node[red] (x3*) at ($(n3)+(-30:0.35)$) {$\bullet$};

      \node[circle,draw,minimum size=1cm,very thin] (n4) at ($(n2)+(30:1.8)$) {$\chi_2$};
      \node[red] (x4*) at ($(n4)+(180+30:0.35)$) {$\bullet$};

      \node[red] (y2*) at ($(n2)+(180-30:0.35)$) {$\bullet$};
      \node[red] (z2*) at ($(n2)+(30:0.35)$) {$\bullet$};
      \node[red] (x2*) at ($(n2)+(-90:0.35)$) {$\bullet$};

      \node[red,above right=-0.3 of y1*,font=\footnotesize] {$x_1^*$};
      \node[red,below left=-0.3 of x2*,font=\footnotesize] {$x^*$};
      \node[red,above=-0. of y2*,font=\footnotesize] {$y^*$};
      \node[red,above=-0. of z2*,font=\footnotesize] {$z^*$};
      \node[red,below=-0.23 of x3*,font=\footnotesize] {$x_3^*$};
      \node[red,below=-0.23 of x4*,font=\footnotesize] {$x_2^*$};

      \draw[red] (x3*.center)--(y2*.center);
      \draw[red] (x2*.center)--(y1*.center);
      \draw[red] (x4*.center)--(z2*.center);
    \end{tikzpicture}\qquad $\xrightarrow[\kappa\join{s^*}{t^*}\xi]{\chi}$\qquad
    \begin{tikzpicture}[baseline={(0,0)},scale=0.8, every node/.style={transform shape}]
      \node[circle,draw,minimum size=1cm,very thin] (n2) at (0,0) {$\kappa$};
      \node[circle,draw,minimum size=1cm,very thin] (n2bis) at (1.8,0) {$\xi$};
      \node[circle,draw,very thin,minimum size=1cm] (n3) at ($(n2)+(150:1.8)$) {$\chi_3$};
      \node[circle,draw,minimum size=1cm,very thin] (n1) at ($(n2)+(-150:1.8)$) {$\chi_1$};
      \node[circle,draw,minimum size=1cm,very thin] (n4) at ($(n2bis)+(0:1.8)$) {$\chi_2$};

      \node[red] (x1*) at ($(n1)+(30:.35)$) {$\bullet$};
      \node[red] (x3*) at ($(n3)+(-30:0.35)$) {$\bullet$};
      \node[red] (x4*) at ($(n4)+(180:0.35)$) {$\bullet$};
      \node[red] (y2*) at ($(n2)+(180-30:0.35)$) {$\bullet$};
      \node[red] (z2*) at ($(n2bis)+(0:0.35)$) {$\bullet$};
      \node[red] (x2*) at ($(n2)+(180+30:0.35)$) {$\bullet$};
      \node[red] (s*) at ($(n2)+(0:0.35)$) {$\bullet$};
      \node[red] (t*) at ($(n2bis)+(180:0.35)$) {$\bullet$};

      \node[red,above=-0.23 of x1*,font=\footnotesize] {$x_1^*$};
      \node[red,below=-0.23 of x2*,font=\footnotesize] {$x^*$};
      \node[red,above=-0.23 of y2*,font=\footnotesize] {$y^*$};
      \node[red,above=-0.23 of z2*,font=\footnotesize] {$z^*$};
      \node[red,below=-0.23 of x3*,font=\footnotesize] {$x_3^*$};
      \node[red,above=-0.23 of x4*,font=\footnotesize] {$x_2^*$};
      \node[red,above right=-0.3 of s*,font=\footnotesize] {$s^*$};
      \node[red,above left=-0.4 of t*,font=\footnotesize] {$t^*$};

      \draw[red] (x3*.center)--(y2*.center);
      \draw[red] (x2*.center)--(x1*.center);
      \draw[red] (x4*.center)--(z2*.center);
      \draw[red] (s*.center)--(t*.center);
  \end{tikzpicture}\end{center}  
\end{example}

\subsubsection{Properties}

The contraction and split decomposition operations are inverse of one another in the sense that, with the notation of \cref{def:split,def:merge},
\begin{equation}\label{eq:contraction-split-inverse}
  T\xrightarrow[\kappa\join{s^*}{t^*}\xi]{\chi}T' \quad \Rightarrow \quad T'\xrightarrow{(s^*,t^*)}T, \qquad \text{and}\qquad T\xrightarrow{(s^*,t^*)}T' \quad \Rightarrow \quad T'\xrightarrow[\chi_v \join{s^*}{t^*} \chi_{v'}]{\chi_{new}}T.
\end{equation}

\noindent
We next show that performing an edge contraction or a vertex split on a chirotope tree does not change its associated chirotope.

\begin{proposition}
  \label{lem:merge_sign_function}
  If $T \fusion{e} T'$ or $T\xrightarrow[\kappa\join{s^*}{t^*}\xi]{\chi}T'$,  then $\chi_T = \chi_{T'}$.
\end{proposition}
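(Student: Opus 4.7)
The plan is to exploit the observation in \eqref{eq:contraction-split-inverse} that split and contraction are inverse operations: if $T \fusion{e} T'$ is a contraction, then conversely $T'$ splits back to $T$. So it suffices to prove the statement for only one of the two operations, and I would prove the split case: that $\chi_T = \chi_{T'}$ whenever $T \xrightarrow[\kappa\join{s^*}{t^*}\xi]{\chi} T'$.

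I would proceed by induction on $|V_T|$, with Lemma~\ref{lem:chiT_is_join_2} as the main tool. In the base case $|V_T|=1$, the tree $T$ consists of the single node $v_0$ (decorated by $\chi$, with no edges), so $\chi_T$ is simply $\chi$; meanwhile $T'$ has two nodes $v_1, v_2$ decorated respectively by $\kappa$ and $\xi$ joined by a single edge selecting $s^*$ and $t^*$, and Lemma~\ref{lem:chiT_is_join_2} immediately gives $\chi_{T'} = \kappa \join{s^*}{t^*} \xi = \chi$. For the inductive step, when $|V_T| \geq 2$, I pick any edge $f = \{v, w\}$ of $T$ and let $\alpha, \beta$ denote the elements it selects in $\chi_v, \chi_w$. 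Removing $f$ splits $T$ into two strictly smaller chirotope trees $T_A \ni v$ and $T_B \ni w$; without loss of generality assume $v_0 \in T_A$. The same edge $f$ appears in $T'$ (possibly attached to $v_1$ or $v_2$ in place of $v_0$, depending on which side of the bowtie decomposition of $\chi$ the element $\alpha$ belongs to), and removing it from $T'$ yields $T'_B = T_B$ together with a subtree $T'_A$ which I claim is precisely the split of $T_A$ at $v_0$ according to $\chi = \kappa \join{s^*}{t^*} \xi$. Granting this claim, the inductive hypothesis applied to $T_A$ gives $\chi_{T_A} = \chi_{T'_A}$, and two applications of Lemma~\ref{lem:chiT_is_join_2} yield $\chi_T = \chi_{T_A} \join{\alpha}{\beta} \chi_{T_B} = \chi_{T'_A} \join{\alpha}{\beta} \chi_{T'_B} = \chi_{T'}$.

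The main technicality is verifying the claim that edge removal and split commute, i.e.~that $T'_A$ is really the split of $T_A$. This is essentially routine bookkeeping: the splitting operation only modifies the node $v_0$ and its incident edges, and the distribution of these incident edges between the two new nodes $v_1, v_2$ is determined solely by which side of the bowtie decomposition of $\chi$ each selected element lies on. Since this local data at $v_0$ is unaffected by the removal of an edge elsewhere in the tree, performing the split after the edge removal yields the same chirotope tree as removing the edge from $T'$.
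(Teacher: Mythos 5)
Your proof is correct and takes essentially the same approach as the paper: induction on tree size, with Lemma~\ref{lem:chiT_is_join_2} as the main tool and an auxiliary edge removal to split the induction. The only difference is that you prove the split case directly and deduce the contraction case via \eqref{eq:contraction-split-inverse}, whereas the paper does it in the opposite order (contraction first, base case two nodes, requiring $f\neq e$), so the two arguments are mirror images of one another.
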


\begin{proof}
We first prove the statement under the hypothesis that $T \fusion{e}
T'$.  We proceed by induction on the number of nodes in $T$. If $T$
has two nodes, then the result follows from \cref{lem:chiT_is_join_2}.
Assume the result holds for all chirotope trees with $n \ge 2$ nodes,
and let $T$ be a chirotope trees with $n+1$ nodes, let $e$ be an edge
of $T$, and and let $T'$ be such that $T \fusion{e} T'$. Since $T$ has
at least three nodes, there is an edge $f \neq e$ in $T$, and that
edge also exist in $T'$. Removing $f$ (from $T$) splits $T$ into two
parts $T_1$ and $T_2$, and removing $f$ (from $T'$) splits $T'$ into
two parts $T'_1$ and $T'_2$; without loss of generality, we assume
that $e$ belongs to $T_1$ and $T'_1$, so that $T'_2=T_2$ and $T_1
\fusion{e} T'_1$. By the induction hypothesis, $\chi_{T_1} =
\chi_{T'_1}$. Also applying \cref{lem:chiT_is_join_2} twice yields
\[\chi_T= \chi_{T_1}  \join{s_1}{s_2}\chi_{T_2}, \qquad \chi_{T'}= \chi_{T'_1}  \join{s_1}{s_2}\chi_{T'_2},\]
where $f=\{s_1,s_2\}$. Altogether this implies $\chi_T = \chi_{T'}$, and the lemma is proved under the hypothesis that $T \fusion{e} T'$.

If $T\xrightarrow[\kappa\join{s^*}{t^*}\xi]{\chi}T'$, then \cref{eq:contraction-split-inverse} and the first statement proved in this proposition also ensures that $\chi_T = \chi_{T'}$, concluding the proof.
\end{proof}

\subsection{Canonical chirotope trees}\label{ssec:uniqueness_canonical}

We recall from the introduction that a chirotope tree is called \emph{canonical}
 if every node is decorated by a
  convex or indecomposable chirotope, and if no edge connects two
  nodes decorated by convex chirotopes.
The goal of the section is to prove \cref{thm:uniqueness_canonical},
which we copy here for convenience.\medskip

\noindent \textbf{\cref{thm:uniqueness_canonical}.}
  For every chirotope $\chi$, there exists a unique canonical
  chirotope tree $T$ such that $\chi_T = \chi$.\medskip

\noindent
Suppose that we start with the trivial chirotope tree with a single
node decorated with $\chi$, $T^{tr}_{\chi}$, and, for as long as we
can and in any order, we split the nodes decorated by decomposable
nonconvex chirotopes, and contract the edges joining nodes decorated
by convex chirotopes. The gist of the proof of
\cref{thm:uniqueness_canonical} is to prove that (i) any such
sequence of operations terminates, (ii) any canonical tree $T$ with
$\chi_T=\chi$ can be obtained by such a sequence of operations, and
(iii) all sequences of operations produce the same canonical tree. A
convenient setting to carry out this analysis is the theory of rewriting
systems.

\medskip

\subsubsection{The rewriting system}
Using the contraction and split operations, we define a rewriting system on
chirotope trees.

\begin{definition}\label{def:rwrule}
  We consider the set of chirotope trees equipped with the following rewriting rules:
  \begin{itemize}
  \item $T\xrightarrow{\lozenge} T'$ if $T'$ is obtained from $T$ by contracting an edge of $T$ between two \emph{convex} chirotopes,
  \item $T\xrightarrow{\pap} T'$ if $T'$ is obtained from $T$ by splitting a \emph{nonconvex} node of $T$,
  \end{itemize}
We write $T\Rightarrow T'$ if $T\xrightarrow{\lozenge} T'$ or $T\xrightarrow{\pap} T'$.
\end{definition}

\noindent
As usual, for any rewriting rule $\to$, we denote by $\to^*$ the
rewriting rule which consists in any number (possibly zero) of
successive applications of $\to$. By definition, a chirotope tree is
canonical if the rewriting rule $\Rightarrow$ does not apply to it.

\subsubsection{Termination}\label{ssec:existence_canonical}
The existence of a canonical chirotope tree representing any given
chirotope $\chi$ follows from the termination property of
$\Rightarrow$. The proof of that property uses the notion of multiset
ordering.  
Given two {finite} multisets $M$ and $N$ of positive
integers, we write $M >_{mul} N$ if there exist multisets $X \subseteq
M$ and $Y$ such that $N= (M \setminus X) \uplus Y$ and
$\max(X)>\max(Y)$.  It is standard that  $>_{mul}$ defines a
well-founded order (i.e.~without infinite decreasing subsequences) on
mutisets of $\mathbb N$.

\begin{lemma}\label{lem:terminates}
  Every rewriting sequence $T_0 \Rightarrow T_1 \Rightarrow \ldots  \Rightarrow T_n \Rightarrow \ldots$ is finite.
\end{lemma}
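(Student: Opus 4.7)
The plan is to exhibit a well-founded measure on chirotope trees that strictly decreases under either type of rewriting step. Concretely, I will set
\[
M(T) \eqdef \bigl(\mathcal{M}(T),\, c(T)\bigr),
\]
where $\mathcal{M}(T)$ is the multiset of sizes of the \emph{nonconvex} nodes of $T$ and $c(T)$ is the number of \emph{convex} nodes of $T$. I order these pairs lexicographically, with the standard multiset ordering $>_{\mathrm{mul}}$ on the first coordinate and the natural order on the second. Since the multiset ordering on finite multisets of positive integers is well-founded, and the lexicographic product of two well-founded orders is well-founded, it suffices to show that $T \Rightarrow T'$ implies $M(T) >_{\mathrm{lex}} M(T')$.

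For the split rule $T \xrightarrow{\pap} T'$, a nonconvex node $v$ of size $n$ is replaced by the two nodes of a nontrivial bowtie decomposition $\chi_v = \kappa \join{s^*}{t^*} \xi$. By the discussion following \cref{def:modules}, both $\kappa$ and $\xi$ have ground sets of size at least $3$ and strictly less than $n$. By \cref{lem:bowtie_of_convex}, at least one of $\kappa$ and $\xi$ is nonconvex, for otherwise $\chi_v$ itself would be convex, contradicting the applicability of $\xrightarrow{\pap}$. Hence $\mathcal{M}(T')$ is obtained from $\mathcal{M}(T)$ by removing one occurrence of $n$ and inserting one or two positive integers that are each strictly smaller than $n$; in either case this is a strict decrease for $>_{\mathrm{mul}}$.

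For the contraction rule $T \xrightarrow{\lozenge} T'$, an edge joining two convex nodes is contracted, and \cref{lem:bowtie_of_convex} guarantees that the new node is again convex. Consequently no nonconvex node is created or destroyed, $\mathcal{M}(T)$ is unchanged, and the number of convex nodes decreases by exactly one. The lexicographic measure therefore strictly decreases on the second coordinate. The only point requiring care — and the reason the two-coordinate measure is needed — is that $\xrightarrow{\pap}$ may increase $c(T)$ by one when exactly one of the two factors happens to be convex; but this is harmless since in that case $\mathcal{M}$ has already strictly decreased. Combining the two cases yields $M(T) >_{\mathrm{lex}} M(T')$ at every step, and termination follows from the well-foundedness of $>_{\mathrm{lex}}$.
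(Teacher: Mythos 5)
Your proof is correct and follows essentially the same approach as the paper: both define the lexicographic measure $\bigl(\text{multiset of sizes of nonconvex nodes},\ \text{number of convex nodes}\bigr)$, show a split strictly decreases the first coordinate while a contraction (using \cref{lem:bowtie_of_convex}) leaves the first fixed and decreases the second, and conclude by well-foundedness of the lexicographic product of the multiset order and $\mathbb{N}$. The observation that at least one factor of a nonconvex split is nonconvex is true but not needed for the multiset decrease (removing $n$ and inserting any strictly smaller values already suffices); this is a harmless extra detail.
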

\begin{proof}
   Let us associate with a chirotope tree $T$ a pair $(N(T),C(T))$,
   where  $N(T)$ is the multiset of  sizes of nonconvex chirotopes in $T$,
   and $C(T)$ is the number of convex chirotopes in $T$.
   Suppose that $T\Rightarrow T'$.
   If $T\xrightarrow{\lozenge} T'$, then $C(T')=C(T)-1$ and $N(T')=N(T)$; indeed,
   by \cref{lem:bowtie_of_convex}, we replace two nodes decorated by convex chirotopes, by a single one,
   also decorated by a convex chirotope.  If $T\xrightarrow{\pap} T'$, the split operation replaces one node with a nonconvex decoration
   by two new nodes with decorations of smaller sizes. Thus we have $N(T) >_{mul} N(T')$ (in this case, $C(T)$
   can change as one of the new node can have a convex decoration).
   In both cases, we have $(N(T'),C(T'))<_{lex} (N(T),C(T))$,
   where $<_{lex}$ denotes the lexicographic order corresponding to the pair of orders $(<_{mul}, <_\mathbb{N})$.
   This lexicographic order is well-founded (since both original orders are well-founded),
   implying that applying the rewriting rule $\Rightarrow$ always terminates.
\end{proof}

\subsubsection{Accessibility} The next step is to show that the rewriting rule can turn $T^{tr}_\chi$ into any canonical chirotope tree whose associated chirotope is $\chi$.

\begin{lemma}
  Let $T$ be a canonical chirotope tree with $\chi_T=\chi$. Then $T^{tr}_\chi \, \Rightarrow^* \, T$.
  \label{lem:T_accessible}
\end{lemma}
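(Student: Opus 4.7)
The plan is to proceed by strong induction on the number of nodes $|V_T|$ of the canonical chirotope tree $T$. The base case $|V_T|=1$ is immediate: the unique node must then be decorated with $\chi$ itself, so $T = T^{tr}_\chi$ and the empty rewriting sequence suffices.

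For the inductive step, assume $|V_T|\ge 2$ and set $\chi = \chi_T$. First I would fix an arbitrary edge $e=\{v,v'\}$ of $T$ selecting proxies $s^*$ in $\chi_v$ and $t^*$ in $\chi_{v'}$; removing $e$ from $T$ yields subtrees $T_1\ni v$ and $T_2\ni v'$, and \cref{lem:chiT_is_join_2} gives the nontrivial decomposition $\chi = \chi_{T_1}\join{s^*}{t^*}\chi_{T_2}$. In order to apply the rule $\xrightarrow{\pap}$ to $T^{tr}_\chi$ via this decomposition, I must check that $\chi$ is nonconvex. This I would do by contraposition: if $\chi$ were convex, then a straightforward induction on $|V_T|$ using \cref{lem:bowtie_of_convex} applied along edges of $T$ would force every node decoration of $T$ to be convex; but canonicity of $T$ forbids two adjacent convex decorations, so $T$ would have to be edgeless, contradicting $|V_T|\ge 2$. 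Hence the split $T^{tr}_\chi \xrightarrow{\pap} T''$ is legal and produces a two-node tree $T''$ with nodes $v_1,v_2$ decorated by $\chi_{T_1}$ and $\chi_{T_2}$ and joined by an edge selecting $s^*$ and $t^*$.

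Next I would invoke the induction hypothesis on $T_1$ and $T_2$. They inherit canonicity from $T$: every decoration is unchanged, and any edge of $T_i$ is still an edge of $T$ between the same two nodes, so it still avoids connecting two convex decorations. The hypothesis thus supplies rewriting sequences $T^{tr}_{\chi_{T_i}}\Rightarrow^* T_i$ for $i=1,2$, which I would apply in turn to the corresponding sides of $T''$ to recover $T$. Concatenating gives $T^{tr}_\chi \xrightarrow{\pap} T'' \Rightarrow^* T$, as desired.

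The step I expect to require the most care is justifying this final lifting: each rewriting step of a sequence $T^{tr}_{\chi_{T_i}}\Rightarrow^* T_i$ must remain a valid step of $\Rightarrow$ when performed inside $T''$, with the opposite side present as inert context. This is essentially a locality statement about both rewriting rules: a $\xrightarrow{\pap}$ step acts solely on a single node through a decomposition of its decoration, while a $\xrightarrow{\lozenge}$ step acts on a single edge together with the decorations of its two endpoints; the enabling conditions (nonconvexity for splits, convexity of both endpoints for contractions) depend only on these decorations and are not disturbed by what is happening on the $v_{3-i}$-side of $T''$. Spelling out this locality — and checking that the proxy $s^*$ which originally belonged to $\chi_{T_i}$ is carried faithfully through the sequence to end up in the same node as in $T_i$ — is routine but is where the bookkeeping has to be done carefully.
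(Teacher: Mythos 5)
Your proof is correct but takes a genuinely different route than the paper's. You proceed top-down by induction on $|V_T|$: split $T^{tr}_\chi$ once along an arbitrary edge of $T$ (after verifying $\chi_T$ is nonconvex via contraposition), then invoke the inductive hypothesis on the two halves $T_1,T_2$ and lift the resulting rewriting sequences back into the two-node tree $T''$. The paper instead argues bottom-up without induction: start from $T$ and contract edges one by one in any order until a single node remains; by \cref{lem:bowtie_of_convex}, the invariant ``no edge joins two convex decorations'' is preserved throughout, so the node produced at each contraction step is always nonconvex, hence each contraction is the inverse of a legal $\xrightarrow{\pap}$ step; reading the whole sequence backwards gives $T^{tr}_\chi \Rightarrow^* T$ directly. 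The paper's argument is shorter and sidesteps exactly the piece you flag as needing the most care, namely the locality lemma that a rewriting sequence valid on $T^{tr}_{\chi_{T_i}}$ remains valid when $T_i$'s root node additionally carries the proxy $s^*$ and the opposite half of $T''$ sits alongside as inert context. That locality claim is true — the enabling conditions for $\xrightarrow{\pap}$ and $\xrightarrow{\lozenge}$ depend only on the decorations of the node(s) touched, and proxies are propagated deterministically by the split and contraction operations — but it does require a short supporting argument that you would need to supply; the paper's reverse-contraction approach makes no such demand, which is what buys it its brevity.
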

\begin{proof}
  We start from $T$ and contract edges one by one, until we have a
  tree with a single node.  Since contractions do not modify the
  associated chirotope, we end up with
  $T^{tr}_\chi$. \cref{lem:bowtie_of_convex} asserts that the node
  obtained by an edge contraction is decorated by a convex chirotope
  (if and) only if the two merged nodes are decorated by convex
  chirotopes. Since the tree we start with is canonical, an immediate
  induction yields that none of the trees appearing in this sequence
  of contractions has an edge joining two nodes decorated by convex
  chirotopes. Each contraction we performed is therefore the inverse of a
  split operation allowed in $\Rightarrow$, proving the claim.
\end{proof}

\subsubsection{Confluence}

It remains to show that, starting from $T^{tr}_\chi$, independently of
the choice of rewriting operations, iterating $\Rightarrow$ as much as
possible always yields the same output.  This property is known as
{\em confluence} in the theory of rewriting rules.  It is well-known
(Newman's lemma) that, for a terminating rewriting rule (like $\Rightarrow$, see \cref{lem:terminates}), confluence is implied by the following {\em
  local confluence} property.

\begin{proposition}%
  \label{prop:local_confluence}
  If $T,T_1,T_2$ are three chirotope trees such that $T \Rightarrow T_1$ and $T\Rightarrow T_2$, then there exists a chirotope tree $T'$ such that  $T_1\Rightarrow^* T'$ and $T_2\Rightarrow^* T'$.
\end{proposition}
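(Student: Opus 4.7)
The plan is a case analysis on the two rewriting rules applied to $T$ to obtain $T_1$ and $T_2$. There are four cases: (A) both steps are contractions $\lozenge$; (B) one is $\lozenge$ and the other is $\pap$; (C) both are $\pap$ applied at distinct nodes of $T$; (D) both are $\pap$ applied at the same node of $T$.

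Cases (A), (B) and (C) are largely routine because the two operations act on essentially disjoint parts of $T$. In case (A), either the two contracted edges are vertex-disjoint, in which case the operations trivially commute and one extra contraction from each of $T_1$ and $T_2$ yields the same $T'$; or they share a common endpoint $v$, a convex node adjacent to two convex nodes $v'$ and $v''$. In this sub-case, \cref{lem:bowtie_of_convex} guarantees that after contracting either edge the merged node remains convex, so the remaining edge still connects two convex nodes and can be contracted, and the associativity statement in \cref{prop:commutativity_asso_bowtie} makes the final decoration coincide. In case (B), the contracted edge lies between convex nodes and thus cannot involve the nonconvex node being split, so the two operations commute, and each of $T_1, T_2$ reaches the common $T'$ by applying the other rule. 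Case (C) is analogous.

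The main case is (D). Let $\chi = \chi_{v_0}$ have two distinct nontrivial modular bipartitions $\{A_1,B_1\}$ and $\{A_2,B_2\}$ of its ground set $X$, and set $M_{11} = A_1 \cap A_2$, $M_{12} = A_1 \cap B_2$, $M_{21} = B_1 \cap A_2$, $M_{22} = B_1 \cap B_2$. A short set-theoretic analysis shows that either the bipartitions coincide (so $T_1 = T_2$ and any $T'$ works), or, up to relabeling, we are in the \emph{nested} case $A_1 \subsetneq A_2$ (exactly one $M_{ij}$ empty) or the \emph{crossing} case where all four $M_{ij}$ are nonempty. In the nested case, a direct check against the bowtie definition \eqref{eq:def_bowtie}, together with the fact that $\{A_2,B_2\}$ is a module of $\chi$, shows that $(A_2\cap B_1) \cup \{a_1^*\}$ is a module of the $B_1$-side chirotope $\chi_B$ on $B_1 \cup \{a_1^*\}$ produced by splitting $v_0$ along $\{A_1,B_1\}$; performing this further split yields a three-node chirotope tree which agrees (up to proxy relabeling) with the tree obtained from $T_2$ after splitting its $A_2$-side along the analogous bipartition $\{A_1, (A_2\cap B_1)\cup\{b_2^*\}\}$. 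In the crossing case, \cref{lm:intersection_modules} implies that each $M_{ij}$ is a module of $\chi$ (two non-opposite $M_{ij}$'s never cover $X$); a similar bowtie-unpacking then shows that splitting each of the two nodes of $T_1$ along the natural images of $\{A_2,B_2\}$ produces a four-node tree that coincides with the one obtained from $T_2$ by the symmetric procedure.

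The main obstacle is the bookkeeping of proxies in case (D): one must verify that a modular bipartition of $\chi$ induces a modular bipartition of each component chirotope produced by the first split, with the new proxy assigned to the correct side. This verification rests on \cref{lm:intersection_modules} combined with a direct analysis of \eqref{eq:def_bowtie}, after which the commutativity and associativity of $\Bowtie$ (\cref{prop:commutativity_asso_bowtie}) guarantee that both orders of operations produce isomorphic decorated trees, establishing local confluence.
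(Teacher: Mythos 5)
Your overall strategy — reduce to the case of two modular bipartitions of the same node, distinguish nested ($A_1\subsetneq A_2$) from crossing (all four $M_{ij}$ nonempty), and use \cref{lm:intersection_modules} plus associativity/commutativity of $\Bowtie$ to match the two sides — is essentially the paper's route. Cases (A)–(C) are fine as you describe them, and your bookkeeping observation about induced modules (with the new proxy on the correct side) is indeed the needed mechanism.

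There is, however, a genuine gap in case (D): you never check the \emph{side conditions} of the rewriting rule $\pap$, which only allows splitting a \emph{nonconvex} node, and of $\lozenge$, which only allows contracting an edge between two \emph{convex} nodes. Your phrase ``performing this further split yields a three-node tree'' (nested case) and ``splitting each of the two nodes of $T_1$ along the natural images'' (crossing case) assumes those splits are legal $\pap$-moves. They need not be. For instance, in the crossing case, if every element of $A\cap Y$ is extreme in $\chi$, then the chirotope $\sigma$ on $(A\cap Y)\cup\{s^*\}$ is convex, and the 5-node ``full common refinement'' cannot be reached from $T_1$ or $T_2$ by $\pap$-moves alone (and it is anyway not a normal form, since $\sigma$ would sit adjacent to the central convex 4-gon). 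Similarly in the nested case: if the chirotope $\kappa$ that you want to split again is itself convex, $T_1\xrightarrow{\pap}T_3$ is forbidden; the paper instead normalizes so that the split is applied on a nonconvex side (it shows one can always choose $A\subsetneq Y$ with $\kappa$ nonconvex), and when the \emph{other} side ($\beta$) is convex, it reverses direction with a $\lozenge$-contraction $T_3\Rightarrow T_2$ rather than $T_2\Rightarrow T_3$. The paper's split into ``Case 1'' (all four $M_{ij}$ contain a non-extremal element, so $\sigma,\pi,\rho,\tau$ are all nonconvex and the full 5-node refinement is reachable from both sides by $\pap^*$ then $\lozenge$) and ``Case 2'' (some $M_{ij}$, say $A\cap Y$, consists only of extremal elements, so $\sigma$ is convex and must be left merged with the central node, giving a 4-node common tree) exists precisely to handle this. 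To repair your proof, you would need, in both the nested and crossing sub-cases, to track which of the would-be component chirotopes are convex, choose the direction of each elementary step ($\pap$ vs.\ $\lozenge$) accordingly, and verify at each stage that the node being split is nonconvex and each contracted edge joins two convex nodes.
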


\begin{proof}
  The statement is immediate if the transformations $T \Rightarrow
  T_1$ and $T\Rightarrow T_2$ operate on disjoint sets of nodes:
  indeed, each operation can be performed after the other one, and the
  resulting chirotope tree does not depend on the order of
  operations. If the transformations operate on nondisjoint sets of
  nodes then they must be of the same type as $\xrightarrow{\lozenge}$
  operates on two nodes decorated by convex chirotopes whereas any of
  the nodes on which $\xrightarrow{\pap}$ operates is decorated by a
  nonconvex chirotope.  If $T\xrightarrow{\lozenge}T_1$ and
  $T\xrightarrow{\lozenge}T_2$, the local confluence follows from the
  associativity of $\pap$. So let us consider we are in the last case, with
  \[ T\xrightarrow[\kappa\join{y^*}{z^*}\xi]{\chi}T_1 \qquad \text{and} \qquad T\xrightarrow[\alpha\join{a^*}{b^*}\beta]{\chi}T_2,\]
  and $\chi$ a nonconvex chirotope decorating a node $v$ of $T$. The
  analysis turns out to be considerably more intricate.

  \medskip

  Let $G\eqdef X\uplus S$ denote the ground set of $\chi$, with $X$ denoting the nonproxy elements of the ground set, and $S$ its proxy elements.
 Throughout the analysis, we only make modifications in the tree that are local to the node labeled by $\chi$.
 Accordingly we will only represent in pictures the node decorated with $\chi$ and the nodes replacing it after various operations.
 For example $T$, $T_1$ and $T_2$ are visualized as follows:

 \begin{center}\includegraphics{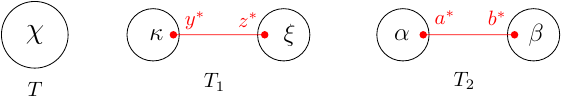} .\end{center}

 \noindent
 Let $\{Y,Z\}$ be the modular bipartition of $X\cup S$ associated to
 the bowtie decomposition $\chi=\kappa\join{y^*}{z^*}\xi$, and
 $\{A,B\}$ be the modular bipartition of $X\cup S$ associated with
 $\chi=\alpha\join{a^*}{b^*}\beta$. We assume that $\{A,B\} \neq
 \{Y,Z\}$ as otherwise $T_1=T_2$ and the statement trivially
 follows. Since $\chi$ is nonconvex, \cref{lem:bowtie_of_convex}
 ensures that $\alpha$ or $\beta$ is nonconvex, and the same goes for
 $\kappa$ or $\xi$.

 \bigskip

 Let us first asssume that $A\subsetneq Y$, and therefore $Z\subsetneq
 B$.  If $\kappa$ is convex, then $\alpha$ is convex too, hence
 $\beta$ is not convex.  Therefore, up to exchanging the role of
 $(A,B)$ and $(Y,Z)$, we can assume that $A\subsetneq Y$ and that
 $\kappa$ is not convex.

 From $A\subsetneq Y$, it comes that the pair $\{A,\{y^*\}\cup
 Y\setminus A\}$ is then a non trivial modular bipartition of
 $\kappa$, as $\chi$ and $\kappa$ coincide on $Y$, and $y^*$ is a
 proxy representing the elements of $Z$.
 Let $T_3$ be the split of $T_1$ according to the corresponding bowtie
 decomposition $\kappa=\kappa_1\join{c^*}{d^*}\kappa_2$.
 Since $\kappa$ is not convex, we have $T_1 \Rightarrow T_3$. The tree
 $T_3$ is visualized as follows:

 \[\begin{array}{c}\includegraphics{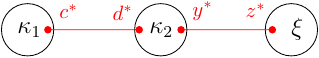}\end{array}.\]

 \noindent
 By associativity (\cref{prop:commutativity_asso_bowtie}), we have
 $\chi=(\kappa_1\join{c^*}{d^*}\kappa_2)\join{y^*}{z^*}\xi
 =\kappa_1\join{c^*}{d^*}(\kappa_2\join{y^*}{z^*}\xi)$.  This means
 that $\kappa_1$ (resp.~$\kappa_2\join{y^*}{z^*}\xi$) is the chirotope
 associated to the module $A$ (resp.~$B$) in the corresponding bowtie
 decomposition.  Hence, we have $\kappa_1=\alpha$ and
 $(\kappa_2\join{y^*}{z^*}\xi)=\beta$ (up to renaming the proxy
 elements $c^*$ in $a^*$ and $d^*$ in $b^*$).  This implies
 \[T_2 \xrightarrow[\kappa_2\join{y^*}{z^*}\xi]{\beta} T_3.\]
 If $\beta$ is not convex, then $T_2 \Rightarrow T_3$ and the
 statement follows with $T'=T_3$. If $\beta$ is convex, then both
 $\kappa_2$ and $\xi$ are, and $T_3 \Rightarrow T_2$, proving the
 statement with $T'=T_2$.

 \medskip

A schematic representation of the argument is presented on \Cref{fig:local_confluence_cas_simple}.

 \begin{figure}[h]
 \includegraphics{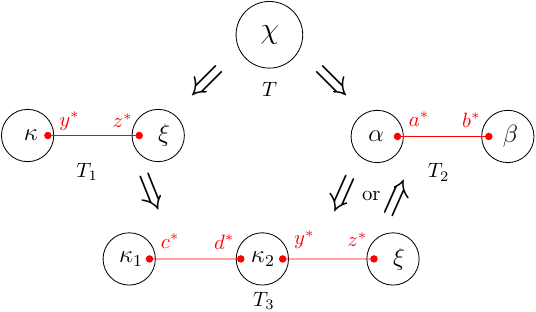}
 	\caption{Diamond diagram proving local confluence in the case $A \subsetneq Y$.}
 	\label{fig:local_confluence_cas_simple}
 \end{figure}

 \bigskip

 The cases $A\subsetneq Z$, $B \subsetneq Y$, $B \subsetneq Z$ are
 similar to the case $A\subsetneq Y$, so from now on we can assume
 that $A\cap Y$, $A\cap Z$, $B\cap Y$, $B\cap Z$ are all nonempty.
 \cref{lm:intersection_modules} then implies that each of $A\cap Y$,
 $A\cap Z$, $B\cap Y$ and $B\cap Z$ is a nontrivial module of
 $\chi$. We use this to further decompose $T_1$ and $T_2$ in ways that
 eventually reduce to the same term. The specific decomposition
 depends on which of $A\cap Y$, $A\cap Z$, $B\cap Y$, $B\cap Z$
 contain nonextremal elements. We spell out two representative cases
 and leave the remaining, routine, check to the reader.

 \bigskip

\paragraph{{\it Case 1: each of $A\cap Y$, $A\cap Z$, $B\cap Y$, $B\cap Z$ contains
   at least one nonextremal element.}}

Many different chirotopes now enter the stage, so we record their
ground sets and proxies in tables to help the reader keep track of
them.

\bigskip

 \noindent
 \begin{minipage}{7.25cm}
   The ground set of $\kappa$ is $Y \cup \{y^*\}= (A\cap Y) \cup (B
   \cap Y) \cup \{y^*\}$ and $\chi$ and $\kappa$ coincide on $Y$. It
   follows that $A\cap Y$ and $B \cap Y$ are disjoint nontrivial
   modules of $\kappa$, and $\kappa$ therefore decomposes into $\kappa = \sigma
   \join{s^*}{x_{AY}} \kappa^\triangle \join{x_{BY}}{p^*} \pi$.
 \end{minipage}
 \hfill
 \begin{minipage}{7.25cm}
   \hfill
       {\begin{tabular}{|c|c|c|}
     \hline
     Chirotope & Ground set & Proxies \\
     \hline
     $\kappa$ & $ Y \cup \{y^*\}$ & $y^*$\\
     \hline
     $\sigma$ & $(A\cap Y) \cup \{s^*\}$ & $s^*$\\
     \hline
     $\pi$ & $(B\cap Y) \cup \{p^*\}$ & $p^*$\\
     \hline
     $\kappa^{\triangle}$ & $ \{x_{AY}, x_{BY},y^*\}$ & $x_{AY}, x_{BY},y^*$\\
     \hline
   \end{tabular}}
 \end{minipage}

 \bigskip

 \noindent
 \begin{minipage}{7.25cm}
   {
     \begin{tabular}{|c|c|c|}
 \hline
Chirotope & Ground set & Proxies \\
 \hline
$\xi$ & $ Z \cup \{z^*\}$ & $z^*$\\
 \hline
$\tau$ & $(B\cap Z) \cup \{t^*\}$ & $t^*$\\
 \hline
$\rho$ & $ (A\cap Z) \cup \{r^*\}$ & $r^*$\\
 \hline
$\xi^{\triangle}$ & $ \{x_{AZ}, x_{BZ},z^*\}$ & $x_{AZ}, x_{BZ},z^*$\\
 \hline
   \end{tabular}}
 \end{minipage}
 \hfill
 \begin{minipage}{7.25cm}
   Similarly, the ground set of $\xi$ is $Z \cup \{z^*\}= (A\cap Z)
   \cup (B \cap Z) \cup \{z^*\}$ and $\chi$ and $\xi$ coincide on
   $Z$. It follows that $A\cap Z$ and $B \cap Z$ are disjoint
   nontrivial modules of $\xi$, and $\xi$ therefore decomposes into
   $\xi = \tau \join{t^*}{x_{BZ}} \xi^\triangle \join{x_{AZ}}{r^*}
   \rho$.
 \end{minipage}

 \bigskip

 \noindent
 Altogether, we can transform $T_1$ into

\[
T_1 \quad = \qquad
\begin{tikzpicture}[baseline={(0,-0.1)},scale=0.65, every node/.style={transform shape}]
\node[circle,draw,minimum size=  1cm, very thin] (kappa) at (0,0) {$\kappa$};
\node[circle,draw,minimum size=  1cm, very thin] (xi) at (2,0) {$\xi$};
\node[red] (y*) at ($(kappa)+(0:0.35)$) {$\bullet$};
\node[red] (z*) at ($(xi)+(180:0.35)$) {$\bullet$};
\node[red,font=\footnotesize] at ($(y*)+(.35,0.2)$) {$y^*$};
\node[red,font=\footnotesize] at ($(z*)-(.25,-0.22)$) {$z^*$};
\draw[red, thin] (y*.center)--(z*.center);
\end{tikzpicture}
\qquad \xrightarrow{\pap}^* \qquad
\begin{tikzpicture}[baseline={(0,-0.1)},scale=0.65, every node/.style={transform shape}]
\node[circle,draw,minimum size=  1.2cm, very thin] (kappaT) at (0,0) {$\kappa^\triangle$};
\node[circle,draw,minimum size=  1.2cm, very thin] (xiT) at (2,0) {$\xi^\triangle$};
\node[circle,draw,minimum size=  1cm, very thin] (sigma) at ($(kappaT)+(180-45:1.8)$) {$\sigma$};
\node[circle,draw,minimum size=  1cm, very thin] (pi) at ($(kappaT)+(180+45:1.8)$) {$\pi$};
\node[circle,draw,minimum size=  1cm, very thin] (rho) at ($(xiT)+(45:1.8)$) {$\rho$};
\node[circle,draw,minimum size=  1cm, very thin] (tau) at ($(xiT)+(-45:1.8)$) {$\tau$};

\node[red] (s*) at ($(sigma)+(-45:0.35)$) {$\bullet$};
\node[red] (p*) at ($(pi)+(45:0.35)$) {$\bullet$};
\node[red] (r*) at ($(rho)+(180+45:0.35)$) {$\bullet$};
\node[red] (t*) at ($(tau)+(180-45:0.35)$) {$\bullet$};

\node[red] (y*) at ($(kappaT)+(0:0.45)$) {$\bullet$};
\node[red] (z*) at ($(xiT)+(180:0.45)$) {$\bullet$};
\node[red] (xAY) at ($(kappaT)+(180-45:0.45)$) {$\bullet$};
\node[red] (xBY) at ($(kappaT)+(180+45:0.45)$) {$\bullet$};
\node[red] (xAZ) at ($(xiT)+(45:0.45)$) {$\bullet$};
\node[red] (xBZ) at ($(xiT)+(-45:0.45)$) {$\bullet$};

\node[red,font=\footnotesize] at ($(y*)+(.35,0.2)$) {$y^*$};
\node[red,font=\footnotesize] at ($(z*)-(.25,-0.22)$) {$z^*$};
\node[red,font=\footnotesize] at ($(p*)+(10:0.45)$) {$p^*$};
\node[red,font=\footnotesize] at ($(t*)+(180:0.35)$) {$t^*$};
\node[red,font=\footnotesize] at ($(r*)+(180:0.4)$) {$r^*$};
\node[red,font=\footnotesize] at ($(s*)+(0:0.4)$) {$s^*$};
\node[red,font=\footnotesize] at ($(xAY)+(170:0.55)$) {$x_{AY}$};
\node[red,font=\footnotesize] at ($(xAZ)+(10:0.55)$) {$x_{AZ}$};
\node[red,font=\footnotesize] at ($(xBZ)+(-10:0.55)$) {$x_{BZ}$};
\node[red,font=\footnotesize] at ($(xBY)+(190:0.55)$) {$x_{BY}$};

\draw[red, thin] (y*.center)--(z*.center);
\draw[red, thin] (r*.center)--(xAZ.center);
\draw[red, thin] (s*.center)--(xAY.center);
\draw[red, thin] (t*.center)--(xBZ.center);
\draw[red, thin] (p*.center)--(xBY.center);
\end{tikzpicture}
\qquad  \xrightarrow{\lozenge}  \qquad
\begin{tikzpicture}[baseline={(0,0)},scale=0.65, every node/.style={transform shape}]
\node[circle,draw,minimum size=  1.2cm, very thin] (chiS) at (0,0) {$\chi^\square$};
\node[circle,draw,minimum size=  1cm, very thin] (sigma) at ($(chiS)+(180-45:1.8)$) {$\sigma$};
\node[circle,draw,minimum size=  1cm, very thin] (pi) at ($(chiS)+(180+45:1.8)$) {$\pi$};
\node[circle,draw,minimum size=  1cm, very thin] (rho) at ($(chiS)+(45:1.8)$) {$\rho$};
\node[circle,draw,minimum size=  1cm, very thin] (tau) at ($(chiS)+(-45:1.8)$) {$\tau$};

\node[red] (s*) at ($(sigma)+(-45:0.35)$) {$\bullet$};
\node[red] (p*) at ($(pi)+(45:0.35)$) {$\bullet$};
\node[red] (r*) at ($(rho)+(180+45:0.35)$) {$\bullet$};
\node[red] (t*) at ($(tau)+(180-45:0.35)$) {$\bullet$};

\node[red] (xAY) at ($(chiS)+(180-45:0.45)$) {$\bullet$};
\node[red] (xBY) at ($(chiS)+(180+45:0.45)$) {$\bullet$};
\node[red] (xAZ) at ($(chiS)+(45:0.45)$) {$\bullet$};
\node[red] (xBZ) at ($(chiS)+(-45:0.45)$) {$\bullet$};

\node[red,font=\footnotesize] at ($(p*)+(10:0.45)$) {$p^*$};
\node[red,font=\footnotesize] at ($(t*)+(180:0.35)$) {$t^*$};
\node[red,font=\footnotesize] at ($(r*)+(180:0.4)$) {$r^*$};
\node[red,font=\footnotesize] at ($(s*)+(0:0.4)$) {$s^*$};
\node[red,font=\footnotesize] at ($(xAY)+(170:0.55)$) {$x_{AY}$};
\node[red,font=\footnotesize] at ($(xAZ)+(10:0.55)$) {$x_{AZ}$};
\node[red,font=\footnotesize] at ($(xBZ)+(-10:0.55)$) {$x_{BZ}$};
\node[red,font=\footnotesize] at ($(xBY)+(190:0.55)$) {$x_{BY}$};

\draw[red, thin] (r*.center)--(xAZ.center);
\draw[red, thin] (s*.center)--(xAY.center);
\draw[red, thin] (t*.center)--(xBZ.center);
\draw[red, thin] (p*.center)--(xBY.center);
\end{tikzpicture}
\]

\noindent
where the last transformation is a contraction of the central edge
between the two convex chirotopes $\kappa^\Delta$ and $\xi^\Delta$;
the resulting node is decorated with a convex chirotopes on four
elements, all proxies; we denote this chirotope by
$\chi^{\square}$, and its proxy elements are $x_{AY}, x_{AZ}, x_{BZ}, x_{BY}$. Similarly, we have
\[
T_2 \quad =\qquad \begin{tikzpicture}[baseline={(0,-0.1)},scale=0.65, every node/.style={transform shape}]
\node[circle,draw,minimum size=  1cm, very thin] (kappa) at (0,0) {$\alpha$};
\node[circle,draw,minimum size=  1cm, very thin] (xi) at (2,0) {$\beta$};
\node[red] (y*) at ($(kappa)+(0:0.35)$) {$\bullet$};
\node[red] (z*) at ($(xi)+(180:0.35)$) {$\bullet$};
\node[red,font=\footnotesize] at ($(y*)+(.35,0.2)$) {$a^*$};
\node[red,font=\footnotesize] at ($(z*)-(.25,-0.22)$) {$b^*$};
\draw[red, thin] (y*.center)--(z*.center);
\end{tikzpicture}
\quad \xrightarrow{\pap}^* \quad
\begin{tikzpicture}[baseline={(0,-0.1)},scale=0.65, every node/.style={transform shape}]
\node[circle,draw,minimum size=  1.2cm, very thin] (kappaT) at (0,0) {$\alpha^\triangle$};
\node[circle,draw,minimum size=  1.2cm, very thin] (xiT) at (2,0) {$\beta^\triangle$};
\node[circle,draw,minimum size=  1cm, very thin] (sigma) at ($(kappaT)+(180-45:1.8)$) {$\rho$};
\node[circle,draw,minimum size=  1cm, very thin] (pi) at ($(kappaT)+(180+45:1.8)$) {$\sigma$};
\node[circle,draw,minimum size=  1cm, very thin] (rho) at ($(xiT)+(45:1.8)$) {$\tau$};
\node[circle,draw,minimum size=  1cm, very thin] (tau) at ($(xiT)+(-45:1.8)$) {$\pi$};

\node[red] (s*) at ($(sigma)+(-45:0.35)$) {$\bullet$};
\node[red] (r*) at ($(pi)+(45:0.35)$) {$\bullet$};
\node[red] (p*) at ($(rho)+(180+45:0.35)$) {$\bullet$};
\node[red] (t*) at ($(tau)+(180-45:0.35)$) {$\bullet$};

\node[red] (a*) at ($(kappaT)+(0:0.45)$) {$\bullet$};
\node[red] (b*) at ($(xiT)+(180:0.45)$) {$\bullet$};
\node[red] (xAY) at ($(kappaT)+(180-45:0.45)$) {$\bullet$};
\node[red] (xAZ) at ($(kappaT)+(180+45:0.45)$) {$\bullet$};
\node[red] (xBY) at ($(xiT)+(45:0.45)$) {$\bullet$};
\node[red] (xBZ) at ($(xiT)+(-45:0.45)$) {$\bullet$};

\node[red,font=\footnotesize] at ($(a*)+(.35,0.2)$) {$a^*$};
\node[red,font=\footnotesize] at ($(b*)-(.25,-0.22)$) {$b^*$};
\node[red,font=\footnotesize] at ($(p*)+(190:0.35)$) {$t^*$};
\node[red,font=\footnotesize] at ($(t*)+(180:0.35)$) {$p^*$};
\node[red,font=\footnotesize] at ($(r*)+(10:0.4)$) {$s^*$};
\node[red,font=\footnotesize] at ($(s*)+(0:0.4)$) {$r^*$};
\node[red,font=\footnotesize] at ($(xAY)+(170:0.55)$) {$x_{AZ}$};
\node[red,font=\footnotesize] at ($(xAZ)+(190:0.55)$) {$x_{AY}$};
\node[red,font=\footnotesize] at ($(xBZ)+(-10:0.55)$) {$x_{BY}$};
\node[red,font=\footnotesize] at ($(xBY)+(10:0.55)$) {$x_{BZ}$};

\draw[red, thin] (a*.center)--(b*.center);
\draw[red, thin] (r*.center)--(xAZ.center);
\draw[red, thin] (s*.center)--(xAY.center);
\draw[red, thin] (t*.center)--(xBZ.center);
\draw[red, thin] (p*.center)--(xBY.center);
\end{tikzpicture}
\qquad \xrightarrow{\lozenge} \qquad
\begin{tikzpicture}[baseline={(0,0)},scale=0.65, every node/.style={transform shape}]
\node[circle,draw,minimum size=  1.2cm, very thin] (chiS) at (0,0) {$\chi^\square$};
\node[circle,draw,minimum size=  1cm, very thin] (sigma) at ($(chiS)+(180-45:1.8)$) {$\rho$};
\node[circle,draw,minimum size=  1cm, very thin] (pi) at ($(chiS)+(180+45:1.8)$) {$\sigma$};
\node[circle,draw,minimum size=  1cm, very thin] (rho) at ($(chiS)+(45:1.8)$) {$\tau$};
\node[circle,draw,minimum size=  1cm, very thin] (tau) at ($(chiS)+(-45:1.8)$) {$\pi$};

\node[red] (s*) at ($(sigma)+(-45:0.35)$) {$\bullet$};
\node[red] (p*) at ($(pi)+(45:0.35)$) {$\bullet$};
\node[red] (r*) at ($(rho)+(180+45:0.35)$) {$\bullet$};
\node[red] (t*) at ($(tau)+(180-45:0.35)$) {$\bullet$};

\node[red] (xAY) at ($(chiS)+(180-45:0.45)$) {$\bullet$};
\node[red] (xBY) at ($(chiS)+(180+45:0.45)$) {$\bullet$};
\node[red] (xAZ) at ($(chiS)+(45:0.45)$) {$\bullet$};
\node[red] (xBZ) at ($(chiS)+(-45:0.45)$) {$\bullet$};

\node[red,font=\footnotesize] at ($(p*)+(10:0.45)$) {$s^*$};
\node[red,font=\footnotesize] at ($(t*)+(180:0.35)$) {$p^*$};
\node[red,font=\footnotesize] at ($(r*)+(180:0.4)$) {$t^*$};
\node[red,font=\footnotesize] at ($(s*)+(0:0.4)$) {$r^*$};
\node[red,font=\footnotesize] at ($(xAY)+(170:0.55)$) {$x_{AZ}$};
\node[red,font=\footnotesize] at ($(xAZ)+(10:0.55)$) {$x_{BZ}$};
\node[red,font=\footnotesize] at ($(xBZ)+(-10:0.55)$) {$x_{BY}$};
\node[red,font=\footnotesize] at ($(xBY)+(190:0.55)$) {$x_{AY}$};

\draw[red, thin] (r*.center)--(xAZ.center);
\draw[red, thin] (s*.center)--(xAY.center);
\draw[red, thin] (t*.center)--(xBZ.center);
\draw[red, thin] (p*.center)--(xBY.center);
\end{tikzpicture},
\]
with the same chirotope $\chi^{\square}$ as before, and all other appearing chirotopes recalled or defined in the tables below.

\noindent
\begin{tabular}{|c|c|c|}
 \hline
Chirotope & Ground set & Proxies \\
 \hline
$\alpha$ & $ A\cup \{a^*\}$ & $a^*$\\
 \hline
$\sigma$ & $(A\cap Y) \cup \{s^*\}$ & $s^*$\\
 \hline
$\rho$ & $ (A\cap Z) \cup \{r^*\}$ & $r^*$\\
 \hline
$\alpha^{\triangle}$ & $ \{x_{AY}, x_{AZ},a^*\}$ & $x_{AY}, x_{AZ},a^*$\\
 \hline
   \end{tabular}
 \hfill
   \begin{tabular}{|c|c|c|}
 \hline
Chirotope & Ground set & Proxies \\
 \hline
$\beta$ & $ B\cup \{b^*\}$ & $b^*$\\
 \hline
$\pi$ & $(B\cap Y) \cup \{p^*\}$ & $p^*$\\
 \hline
$\tau$ & $(B\cap Z) \cup \{t^*\}$ & $t^*$\\
 \hline
$\beta^{\triangle}$ & $ \{x_{BY}, x_{BZ},b^*\}$ & $x_{BY}, x_{BZ},b^*$\\
 \hline
   \end{tabular}

   \medskip

   \noindent
   We stress that the chirotopes $\sigma,\rho,\tau,\pi$ and
   $\chi^{\square}$ are the same in both decompositions, as they are
   restriction of our initial chirotope $\chi$ to the same sets
   (replacing proxy elements by arbitrary elements in the sets that
   they represent). The two trees obtained at the far right of both
   derivations are identical.  Since each of $A\cap Y$, $A\cap Z$,
   $B\cap Y$ and $B\cap Z$ contains a nonextremal point, the
   chirotopes $\sigma,\rho,\tau$ and $\pi$ are non convex, and all
   transformations in the decompositions of $T_1$ and $T_2$ given
   above are valid rules for $\Rightarrow$, and the statement follows.

   \bigskip

 \paragraph{{\it Case 2: each of $A\cap Z$, $B\cap Y$, $B\cap Z$ contains
 a nonextremal element, but $A\cap Y$ does not.}}

We now define $\kappa^{\pentago} \eqdef \sigma \join{s^*}{x_{AY}}
\kappa^\triangle$ and $\alpha^{\pentago} \eqdef \sigma \join{s^*}{x_{AY}}
\alpha^\triangle$.  We note that $\kappa^{\pentago} \join{y^*}{z^*}
\xi^\triangle = \alpha^{\pentago} \join{a^*}{b^*} \beta^\triangle$,
since both are restrictions of $\chi$ on $(A \cap Y) \cup
\{x_{AZ},x_{BZ},x_{BY}\}$ (replacing proxy elements by arbitrary elements
in the sets that they represent).  We denote this chirotope by
$\chi^{\hexago}$.  All of $\alpha^{\pentago}$, $\kappa^{\pentago}$ and
$\chi^{\hexago}$ are convex.  Then we can write
\begin{align*}
T_1&=\ \begin{tikzpicture}[baseline={(0,-0.1)},scale=0.65, every node/.style={transform shape}]
\node[circle,draw,minimum size=  1cm, very thin] (kappa) at (0,0) {$\kappa$};
\node[circle,draw,minimum size=  1cm, very thin] (xi) at (2,0) {$\xi$};
\node[red] (y*) at ($(kappa)+(0:0.35)$) {$\bullet$};
\node[red] (z*) at ($(xi)+(180:0.35)$) {$\bullet$};
\node[red,font=\footnotesize] at ($(y*)+(.35,0.2)$) {$y^*$};
\node[red,font=\footnotesize] at ($(z*)-(.25,-0.22)$) {$z^*$};
\draw[red, thin] (y*.center)--(z*.center);
\end{tikzpicture}
\quad \xrightarrow{\pap}^* \quad
\begin{tikzpicture}[baseline={(0,-0.1)},scale=0.65, every node/.style={transform shape}]
  \node[circle,draw,minimum size=  1.2cm, very thin] (kappaT) at (0,0) {$\kappa^{\pentago}$};
\node[circle,draw,minimum size=  1.2cm, very thin] (xiT) at (2,0) {$\xi^\triangle$};
\node[circle,draw,minimum size=  1cm, very thin] (pi) at ($(kappaT)+(180+45:1.8)$) {$\pi$};
\node[circle,draw,minimum size=  1cm, very thin] (rho) at ($(xiT)+(45:1.8)$) {$\rho$};
\node[circle,draw,minimum size=  1cm, very thin] (tau) at ($(xiT)+(-45:1.8)$) {$\tau$};
\node[red] (p*) at ($(pi)+(45:0.35)$) {$\bullet$};
\node[red] (r*) at ($(rho)+(180+45:0.35)$) {$\bullet$};
\node[red] (t*) at ($(tau)+(180-45:0.35)$) {$\bullet$};
\node[red] (y*) at ($(kappaT)+(0:0.45)$) {$\bullet$};
\node[red] (z*) at ($(xiT)+(180:0.45)$) {$\bullet$};
\node[red] (xBY) at ($(kappaT)+(180+45:0.45)$) {$\bullet$};
\node[red] (xAZ) at ($(xiT)+(45:0.45)$) {$\bullet$};
\node[red] (xBZ) at ($(xiT)+(-45:0.45)$) {$\bullet$};
\node[red,font=\footnotesize] at ($(y*)+(.35,0.2)$) {$y^*$};
\node[red,font=\footnotesize] at ($(z*)-(.25,-0.22)$) {$z^*$};
\node[red,font=\footnotesize] at ($(p*)+(10:0.45)$) {$p^*$};
\node[red,font=\footnotesize] at ($(t*)+(180:0.35)$) {$t^*$};
\node[red,font=\footnotesize] at ($(r*)+(180:0.4)$) {$r^*$};
\node[red,font=\footnotesize] at ($(xAZ)+(10:0.55)$) {$x_{AZ}$};
\node[red,font=\footnotesize] at ($(xBZ)+(-10:0.55)$) {$x_{BZ}$};
\node[red,font=\footnotesize] at ($(xBY)+(190:0.55)$) {$x_{BY}$};
\draw[red, thin] (y*.center)--(z*.center);
\draw[red, thin] (r*.center)--(xAZ.center);
\draw[red, thin] (t*.center)--(xBZ.center);
\draw[red, thin] (p*.center)--(xBY.center);
\end{tikzpicture}
\quad  \xrightarrow{\lozenge} \quad
\begin{tikzpicture}[baseline={(0,0)},scale=0.65, every node/.style={transform shape}]
  \node[circle,draw,minimum size=  1.2cm, very thin] (chiS) at (0,0) {$\chi^{\hexago}$};
\node[circle,draw,minimum size=  1cm, very thin] (pi) at ($(chiS)+(180+45:1.8)$) {$\pi$};
\node[circle,draw,minimum size=  1cm, very thin] (rho) at ($(chiS)+(45:1.8)$) {$\rho$};
\node[circle,draw,minimum size=  1cm, very thin] (tau) at ($(chiS)+(-45:1.8)$) {$\tau$};
\node[red] (p*) at ($(pi)+(45:0.35)$) {$\bullet$};
\node[red] (r*) at ($(rho)+(180+45:0.35)$) {$\bullet$};
\node[red] (t*) at ($(tau)+(180-45:0.35)$) {$\bullet$};
\node[red] (xBY) at ($(chiS)+(180+45:0.45)$) {$\bullet$};
\node[red] (xAZ) at ($(chiS)+(45:0.45)$) {$\bullet$};
\node[red] (xBZ) at ($(chiS)+(-45:0.45)$) {$\bullet$};
\node[red,font=\footnotesize] at ($(p*)+(10:0.45)$) {$p^*$};
\node[red,font=\footnotesize] at ($(t*)+(180:0.35)$) {$t^*$};
\node[red,font=\footnotesize] at ($(r*)+(180:0.4)$) {$r^*$};
\node[red,font=\footnotesize] at ($(xAZ)+(10:0.55)$) {$x_{AZ}$};
\node[red,font=\footnotesize] at ($(xBZ)+(-10:0.55)$) {$x_{BZ}$};
\node[red,font=\footnotesize] at ($(xBY)+(190:0.55)$) {$x_{BY}$};
\draw[red, thin] (r*.center)--(xAZ.center);
\draw[red, thin] (t*.center)--(xBZ.center);
\draw[red, thin] (p*.center)--(xBY.center);
\end{tikzpicture};\\
T_2&=\ \begin{tikzpicture}[baseline={(0,-0.1)},scale=0.65, every node/.style={transform shape}]
\node[circle,draw,minimum size=  1cm, very thin] (kappa) at (0,0) {$\alpha$};
\node[circle,draw,minimum size=  1cm, very thin] (xi) at (2,0) {$\beta$};
\node[red] (y*) at ($(kappa)+(0:0.35)$) {$\bullet$};
\node[red] (z*) at ($(xi)+(180:0.35)$) {$\bullet$};
\node[red,font=\footnotesize] at ($(y*)+(.35,0.2)$) {$a^*$};
\node[red,font=\footnotesize] at ($(z*)-(.25,-0.22)$) {$b^*$};
\draw[red, thin] (y*.center)--(z*.center);
\end{tikzpicture}
\quad \xrightarrow{\pap}^* \quad
\begin{tikzpicture}[baseline={(0,-0.1)},scale=0.65, every node/.style={transform shape}]
  \node[circle,draw,minimum size=  1.2cm, very thin] (kappaT) at (0,0) {$\alpha^{\pentago}$};
\node[circle,draw,minimum size=  1.2cm, very thin] (xiT) at (2,0) {$\beta^\triangle$};
\node[circle,draw,minimum size=  1cm, very thin] (sigma) at ($(kappaT)+(180-45:1.8)$) {$\rho$};
\node[circle,draw,minimum size=  1cm, very thin] (rho) at ($(xiT)+(45:1.8)$) {$\tau$};
\node[circle,draw,minimum size=  1cm, very thin] (tau) at ($(xiT)+(-45:1.8)$) {$\pi$};
\node[red] (s*) at ($(sigma)+(-45:0.35)$) {$\bullet$};
\node[red] (p*) at ($(rho)+(180+45:0.35)$) {$\bullet$};
\node[red] (t*) at ($(tau)+(180-45:0.35)$) {$\bullet$};
\node[red] (a*) at ($(kappaT)+(0:0.45)$) {$\bullet$};
\node[red] (b*) at ($(xiT)+(180:0.45)$) {$\bullet$};
\node[red] (xAY) at ($(kappaT)+(180-45:0.45)$) {$\bullet$};
\node[red] (xBY) at ($(xiT)+(45:0.45)$) {$\bullet$};
\node[red] (xBZ) at ($(xiT)+(-45:0.45)$) {$\bullet$};
\node[red,font=\footnotesize] at ($(a*)+(.35,0.2)$) {$a^*$};
\node[red,font=\footnotesize] at ($(b*)-(.25,-0.22)$) {$b^*$};
\node[red,font=\footnotesize] at ($(p*)+(190:0.35)$) {$t^*$};
\node[red,font=\footnotesize] at ($(t*)+(180:0.35)$) {$p^*$};
\node[red,font=\footnotesize] at ($(s*)+(0:0.4)$) {$r^*$};
\node[red,font=\footnotesize] at ($(xAY)+(170:0.55)$) {$x_{AZ}$};
\node[red,font=\footnotesize] at ($(xBZ)+(-10:0.55)$) {$x_{BY}$};
\node[red,font=\footnotesize] at ($(xBY)+(10:0.55)$) {$x_{BZ}$};
\draw[red, thin] (y*.center)--(z*.center);
\draw[red, thin] (s*.center)--(xAY.center);
\draw[red, thin] (t*.center)--(xBZ.center);
\draw[red, thin] (p*.center)--(xBY.center);
\end{tikzpicture}
\quad \xrightarrow{\lozenge} \quad
\begin{tikzpicture}[baseline={(0,0)},scale=0.65, every node/.style={transform shape}]
  \node[circle,draw,minimum size=  1.2cm, very thin] (chiS) at (0,0) {$\chi^{\hexago}$};
\node[circle,draw,minimum size=  1cm, very thin] (sigma) at ($(chiS)+(180-45:1.8)$) {$\rho$};
\node[circle,draw,minimum size=  1cm, very thin] (rho) at ($(chiS)+(45:1.8)$) {$\tau$};
\node[circle,draw,minimum size=  1cm, very thin] (tau) at ($(chiS)+(-45:1.8)$) {$\pi$};
\node[red] (s*) at ($(sigma)+(-45:0.35)$) {$\bullet$};
\node[red] (r*) at ($(rho)+(180+45:0.35)$) {$\bullet$};
\node[red] (t*) at ($(tau)+(180-45:0.35)$) {$\bullet$};
\node[red] (xAY) at ($(chiS)+(180-45:0.45)$) {$\bullet$};
\node[red] (xAZ) at ($(chiS)+(45:0.45)$) {$\bullet$};
\node[red] (xBZ) at ($(chiS)+(-45:0.45)$) {$\bullet$};
\node[red,font=\footnotesize] at ($(t*)+(180:0.35)$) {$p^*$};
\node[red,font=\footnotesize] at ($(r*)+(180:0.4)$) {$t^*$};
\node[red,font=\footnotesize] at ($(s*)+(0:0.4)$) {$r^*$};
\node[red,font=\footnotesize] at ($(xAY)+(170:0.55)$) {$x_{AZ}$};
\node[red,font=\footnotesize] at ($(xAZ)+(10:0.55)$) {$x_{BZ}$};
\node[red,font=\footnotesize] at ($(xBZ)+(-10:0.55)$) {$x_{BY}$};
\draw[red, thin] (r*.center)--(xAZ.center);
\draw[red, thin] (s*.center)--(xAY.center);
\draw[red, thin] (t*.center)--(xBZ.center);
\end{tikzpicture},
\end{align*}
showing the local confluence in this case. Other cases are treated similarly.
\end{proof}

\subsubsection{Concluding the proof of \cref{thm:uniqueness_canonical}}

We can now put things together to prove existence and uniqueness of the canonical chirotope tree representing any given chirotope $\chi$.

Let $\chi$ be a chirotope. As noticed in \cref{ssec:existence_canonical},
the existence of a canonical chirotope tree representing $\chi$ is a consequence of \cref{lem:terminates}.
Moreover, by \cref{lem:T_accessible} any canonical chirotope tree representing $\chi$ can be obtained from $T^{tr}_\chi$
through the rewriting rule $\Rightarrow$. We also observe that we can never apply any further rewriting $\Rightarrow$ to a canonical
chirotope tree, i.e.~these are terminal states.
But we have proved the confluence of the rewriting rule $\Rightarrow$ (\cref{prop:local_confluence}, combined with Newman's lemma),
so that for a given initial state, there is a unique possible terminal state.
Consequenly, there is a unique canonical chirotope tree representing $\chi$, and \cref{thm:uniqueness_canonical} is proved. \qed

\section{Almost all realizable chirotopes are indecomposable}
\label{s:almost}

In this section, we consider only realizable chirotopes.
Recall from the introduction that $t_n$ and $d_n$ are respectively the number of
realizable chirotopes and decomposable realizable chirotopes on $\{1,\dots,n\}$.
Our goal is to prove \cref{thm:proportion_indecomposable},
which we copy here for convenience.
\medskip

\noindent \textbf{\cref{thm:proportion_indecomposable}.}
For large $n$, we have $3 \pth{n-\O(n^{-2})} t_{n-1} \le d_n \le \O(n^{-3}) t_n$.
\medskip

Stated differently, a uniform random realizable chirotope $\chi_n$ on $\{1,\dots,n\}$,
 is indecomposable with probability $1-\O(n^{-3})$.

 \subsection{A consequence for unlabeled chirotopes}
Before proving the theorem, let us state a consequence for {\em unlabeled
chirotopes}.
By definition, an  {\em unlabeled chirotope} is an equivalence class of chirotopes
for the following relation:
 $\chi:(X)_3 \to \{\pm 1\}$ and $\chi':(Y)_3 \to \{\pm 1\}$ are equivalent
  if there exists a bijection $\phi: X \to Y$ such that, for all $x_1,x_2,x_3$ in $X$
\[\chi'(\phi(x_1),\phi(x_2),\phi(x_3)) = \chi(x_1,x_2,x_3).\]
Many natural notions and properties of chirotopes (being realizable, the
number of extreme points, etc.) do not depend on its labeling
and are thus well-defined on unlabeled chirotopes.
The notion of indecomposability is such a notion.

We remark that, to an unlabeled chirotope correspond several chirotopes on $\{1,\cdots,n\}$,
the number of which can vary depending on its automorphism group.
This implies that taking a uniform random realizable chirotope on $\{1,\dots,n\}$ and forgetting its labeling
or taking directly a uniform random unlabeled realizable chirotope on $n$  elements
lead to two different probability distributions.
However, symmetry can be controlled, and \cref{thm:proportion_indecomposable}
implies the following.
\begin{corollary}
  Let $\bar\chi_n$ be a uniform random unlabeled realizable chirotope on $n$
  elements. Then, $\bar\chi_n$ is indecomposable with probability  $1-\O(n^{-2})$.
\end{corollary}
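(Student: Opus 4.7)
The plan is to reduce the corollary to \cref{thm:proportion_indecomposable} by transferring the bound $d_n/t_n = \O(n^{-3})$ from the labeled to the unlabeled setting, paying only an extra factor of $n$ through orbit--stabilizer. The key input is an upper bound $|\textup{Aut}(\chi)| \le n$ for every realizable chirotope $\chi$ of size $n$, where $\textup{Aut}(\chi)$ denotes the group of bijections $\phi\colon X \to X$ satisfying $\chi(\phi(x),\phi(y),\phi(z)) = \chi(x,y,z)$ for all triples.

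To establish the bound, I would argue as follows. Let $\chi$ be a realizable chirotope on a set $X$ with $|X| = n \ge 3$, and pick any extreme element $a \in X$ (at least one exists because any realization has convex-hull vertices). For any $\phi \in \textup{Aut}(\chi)$, extremality is preserved by $\phi$, so $\phi(a)$ is extreme, giving at most $n$ choices for $\phi(a)$. Moreover, $\phi$ is entirely determined by $\phi(a)$: by \cref{lem:order}, the relation $p <_a q \Leftrightarrow \chi(a,p,q) = 1$ is a total order on $X\setminus\{a\}$, and the condition $\chi\circ\phi = \chi$ forces $\phi|_{X\setminus\{a\}}$ to be an order-isomorphism from $(X\setminus\{a\}, <_a)$ onto $(X\setminus\{\phi(a)\}, <_{\phi(a)})$. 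Since two finite totally ordered sets of equal cardinality admit a unique order-isomorphism, $\phi$ is indeed determined by $\phi(a)$, yielding $|\textup{Aut}(\chi)| \le n$.

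Write $\bar t_n$ (resp. $\bar d_n$) for the number of unlabeled realizable (resp. decomposable realizable) chirotopes on $n$ elements. By orbit--stabilizer,
\[ t_n = \sum_{[\chi]} \frac{n!}{|\textup{Aut}(\chi)|} \qquad \text{and} \qquad d_n = \sum_{[\chi] \text{ decomposable}} \frac{n!}{|\textup{Aut}(\chi)|}, \]
where $[\chi]$ ranges over the relevant set of unlabeled chirotopes. Combined with $1 \le |\textup{Aut}(\chi)| \le n$, this yields
\[ \frac{t_n}{n!} \le \bar t_n \le \frac{n\, t_n}{n!}, \qquad \frac{d_n}{n!} \le \bar d_n \le \frac{n\, d_n}{n!}. \]
Therefore $\bar d_n/\bar t_n \le n \cdot d_n/t_n \le n \cdot \O(n^{-3}) = \O(n^{-2})$ by \cref{thm:proportion_indecomposable}, which is the statement of the corollary.

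The only real obstacle is the automorphism bound; everything else is routine bookkeeping. One subtle point worth checking is that the notion of unlabeled chirotope used here identifies chirotopes only up to relabeling (not up to global sign flip), so that anti-automorphisms are correctly excluded from $\textup{Aut}(\chi)$ in the orbit--stabilizer count. The argument is intrinsically restricted to the realizable setting, since it both uses the existence of an extreme element and invokes \cref{thm:proportion_indecomposable}; an abstract analogue would require first proving an abstract counterpart of that theorem.
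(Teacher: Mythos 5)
Your proof is correct and takes the same overall route as the paper: bound the automorphism group of a realizable chirotope by $n$, translate this via orbit--stabilizer into the sandwich $\frac{t_n}{n!}\le\bar t_n\le\frac{n\,t_n}{n!}$ (and likewise for $\bar d_n$), and combine with \cref{thm:proportion_indecomposable} to lose only one factor of $n$. The one genuine difference is that the paper simply cites \cite[Theorem~1.5]{goaoc2023convex-hulls} for the bound $|\textup{Aut}(\chi)|\le n$, whereas you give a short self-contained proof of it: fix an extreme element $a$, observe that any automorphism $\phi$ sends $a$ to an extreme element, and that the restriction of $\phi$ to $X\setminus\{a\}$ is forced to be the unique order-isomorphism between the total orders $<_a$ and $<_{\phi(a)}$ of \cref{lem:order}, so $\phi\mapsto\phi(a)$ is injective into the set of extreme elements. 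This argument is clean and correct, and it is essentially the standard proof of the cited fact; it buys you a self-contained exposition at the cost of a few extra lines. Your side remark that anti-automorphisms (sign flips) must not be counted is also apt, since the paper's equivalence for unlabeled chirotopes is indeed by relabeling only.
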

\begin{proof}
  Let $t^u_n$ denote the number of unlabeled realizable chirotopes on $n$
  elements and $d^u_n$ the number of decomposable ones.
   Since a chirotope on $n$ elements has at most $n$
  symmetries~\cite[Theorem 1.5]{goaoc2023convex-hulls}, we have
  \[ \frac{t_n}{n!} \le t_n^u \le \frac{t_n}{(n-1)!} \qquad \text{and} \qquad \frac{d_n}{n!} \le d_n^u \le \frac{d_n}{(n-1)!}.\]
  Theorem~\ref{thm:proportion_indecomposable} asserts that $d_n/t_n = \O(n^{-3})$, from which we get
  \[ \frac{d_n^u}{t_n^u} \le n \frac{d_n}{t_n} = \O(n^{-2}). \qedhere\]
\end{proof}

\subsection{Some classical estimates on the number of realizable chirotopes}
\label{ssec:counting}~

We start by recalling essentially known results
about the number $t_n$ of realizable chirotopes on $\{1,\dots,n\}$.
A general reference on this topic is \cite[Section 6]{goodman1993order-types}.
The first statement that we need
 is the following nonasymptotic version of the case $d=2$
  in~\cite[Theorem 6.6]{goodman1993order-types}.
\begin{proposition}
\label{prop:upper-bound-chirotopes}
For all $n \ge 2$, we have $t_n \le e^{2n}(n-1)^{4n}$.
\end{proposition}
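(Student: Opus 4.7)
The plan is to realize the count $t_n$ as a bound on the number of sign patterns of a small system of low-degree polynomials in few variables, and then invoke a Milnor--Thom (or Warren) type inequality. Concretely, fix a realization $\mathcal P=(\mathfrak p_1,\dots,\mathfrak p_n)$ in $\mathbb R^2$ and let $D_{ijk}(\mathcal P)\eqdef \det(\hat{\mathfrak p}_i,\hat{\mathfrak p}_j,\hat{\mathfrak p}_k)$ for each triple $1\le i<j<k\le n$. As already noted in the proof of \cref{lem:extreme_in_unbounded_cell}, the chirotope of $\mathcal P$ is entirely determined by the signs $\sgn D_{ijk}(\mathcal P)$. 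Hence every realizable chirotope is an open (i.e.\ zero-free) sign pattern realized by the family $\{D_{ijk}\}_{i<j<k}$ of polynomials in the $2n$ coordinates of the points, and $t_n$ is at most the number of such sign patterns.

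Next I would apply the classical bound on the number of sign vectors of a system of real polynomials (essentially due to Warren, building on Milnor--Thom): the number of open sign patterns of $s$ real polynomials of degree at most $d$ in $k$ real variables is at most $(4esd/k)^k$ provided $s\ge k\ge 2$. Here $s=\binom{n}{3}$, $d=3$, $k=2n$, so $s\ge k$ holds for $n\ge 4$, and the bound specializes to
\[
t_n \;\le\; \left(\frac{4e\cdot 3 \binom{n}{3}}{2n}\right)^{2n} \;=\; \bigl(e(n-1)(n-2)\bigr)^{2n} \;\le\; e^{2n}(n-1)^{4n}.
\]
This is exactly the claimed inequality for all $n$ for which Warren's bound applies.

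The main (very mild) obstacle is handling the small cases $n\in\{2,3\}$, and possibly $n=4$ depending on the precise form of Warren's inequality one invokes, where the asymptotic hypothesis $s\ge k$ may fail or the constant be loose. I would dispose of these by a direct counting argument, using that for $n\le 4$ one has $t_n\le 2^{\binom{n}{3}}$ and checking that $2^{\binom{n}{3}}\le e^{2n}(n-1)^{4n}$ by inspection. Once this base is settled, the Warren estimate handles all larger $n$ uniformly, giving the non-asymptotic statement. The only ``geometric'' ingredient is the encoding of chirotopes as sign vectors of determinants; the rest is a purely algebraic bookkeeping, which is why this is stated as essentially a reformulation of Goodman--Pollack's Theorem~6.6 in the planar case.
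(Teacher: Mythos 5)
Your proposal follows essentially the same route as the paper: encode realizable chirotopes as sign patterns of the family of orientation determinants $D_{ijk}$ in the $2n$ coordinate variables and apply Warren's bound $(4edm/N)^N$. One small slip is the degree: since the third row of the $3\times 3$ matrix is constantly $1$, each $D_{ijk}$ has degree $2$ (not $3$) in the $2n$ variables. This happens not to affect your conclusion --- with $d=3$ you obtain $\bigl(e(n-1)(n-2)\bigr)^{2n}\le e^{2n}(n-1)^{4n}$, and with the correct $d=2$ you would get the (slightly sharper) $\bigl(\tfrac{2e}{3}(n-1)(n-2)\bigr)^{2n}$ --- so the bound stands, but a referee would flag the degree. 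On the plus side, you are more careful than the paper about the hypothesis $m\ge N$ in Warren's inequality (which fails for $n\le 4$) and you supply the trivial check $t_n\le 2^{\binom{n}{3}}\le e^{2n}(n-1)^{4n}$ for those small $n$; the paper silently ignores this.
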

\begin{proof}
We reproduce here the argument essentially given in \cite{goodman1993order-types}.
Let $\mathfrak p_i =(p_{i,x},p_{i,y})$, for $1 \le i\le n$, represent
an unknown set of $n$ points in $\mathbb R^2$, indexed by $\{1,\dots,n\}$.
As already noticed in the proof of \cref{lem:extreme_in_unbounded_cell},
for $i<j<k$ in $\{1,\dots,n\}$,
we have
\[\chi(\mathfrak p_i,\mathfrak p_j,\mathfrak p_k) = \sgn \left[\det
\left( \begin{array}{ccc} p_{i,x} & p_{j,x} & p_{k,x} \\ p_{i,y} & p_{j,y} & p_{k,y} \\ 1&1 &1 \end{array} \right) \right].\]
The determinant in the right-hand side is a polynomial of degree $2$ in the variables $\{ p_{i,x},p_{i,y}, 1 \le i \le n\}$, which
we denote $D_{i,j,k}$.
A realizable chirotope on $\{1,\dots,n\}$ corresponds to a {\em sign pattern}
of the collection $(D_{i,j,k})_{i<j<k}$ of polynomials,
i.e.~a choice of a sign for each $D_{i,j,k}$ which can be realized
by some real specialization of the variables $\{ p_{i,x},p_{i,y}, 1 \le i \le n\}$.
From a result of Warren \cite{warren1968lower}, the number of such sign patterns for general collection of polynomials
is bounded by $(\tfrac{4edm}{N})^N$, where $d$ is the degree of the polynomials, $m$ their number and $N$ the number of variables.
In our case $d=2$, $m=\binom{n}3$ and $N=2n$, and we get
\[ t_n\leq \left(4e\frac{2\binom{n}{3}}{2n}\right)^{2n} \]
The lemma follows using the simple bound $\binom{n}3 \leq n(n-1)^2/6$.
\end{proof}
The following lemma is implicit in the discussion preceeding Theorem 6.1 in \cite{goodman1993order-types}.
\begin{lemma}
\label{lem:bound-extensions}
There exists a universal constant $K$ in $(0,1/8]$ such that, for all $n \ge 3$,
we have $t_{n+1} \ge K \, n^4 \, t_n$.
\end{lemma}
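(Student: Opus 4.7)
The plan is to fix a realizable chirotope $\chi$ on $[n]$ and count from below the number $N(\chi)$ of realizable chirotopes on $[n+1]$ that restrict to $\chi$ on $[n]$. Since every realizable chirotope on $[n+1]$ restricts to a unique realizable chirotope on $[n]$, one has $t_{n+1} = \sum_\chi N(\chi)$, so a uniform lower bound $N(\chi) \ge K n^4$ yields the claim.

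Pick any realization $\mathcal{P} = \{\mathfrak{p}_1, \ldots, \mathfrak{p}_n\}$ of $\chi$. Because $\chi$ is determined by finitely many strict sign conditions on polynomials in the coordinates of $\mathcal{P}$, a sufficiently small perturbation preserves $\chi$; a generic such perturbation further ensures that among the $L = \binom{n}{2}$ lines $(\mathfrak{p}_i\mathfrak{p}_j)$, no two are parallel and no three are concurrent except at the $\mathfrak{p}_k$ themselves (each additional concurrence is a condition of positive codimension on the perturbation). Each open cell of the resulting line arrangement $\mathcal{A}$ then yields an extension of $\chi$ to a realizable chirotope on $[n+1]$: placing $\mathfrak{p}_{n+1}$ anywhere in the cell determines $\chi(i,j,n+1)$ for every pair $i<j$ according to which side of $(\mathfrak{p}_i\mathfrak{p}_j)$ the cell lies on. Distinct cells give distinct extensions, since any path between two cells must cross some line $(\mathfrak{p}_i\mathfrak{p}_j)$, and such a crossing flips the sign of $\chi(i,j,n+1)$. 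Therefore $N(\chi) \ge |\mathcal{A}|$.

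It remains to count $|\mathcal{A}|$. By the standard incremental argument (each newly added line adds one more region than the number of distinct points at which it meets the previously placed lines), one has $|\mathcal{A}| = 1 + L + \sum_v (\mu(v) - 1)$, where $v$ ranges over intersection vertices of $\mathcal{A}$ and $\mu(v)$ is the number of lines through $v$. Under our genericity assumption, the vertices are exactly the $n$ points $\mathfrak{p}_k$ (with $\mu = n-1$) together with $\binom{L}{2} - n\binom{n-1}{2}$ simple intersection points (with $\mu = 2$), yielding
\begin{equation*}
|\mathcal{A}| \;=\; 1 + \binom{n}{2} + n(n-2) + \binom{\binom{n}{2}}{2} - n\binom{n-1}{2} \;=\; \tfrac{n^4}{8} + O(n^3).
\end{equation*}
Since the leading coefficient is $1/8$ and the inequality $|\mathcal{A}| \ge K n^4$ only needs to hold for $n \ge 3$, a direct verification on the first few values of $n$, combined with the asymptotic above, lets us pick a universal constant $K \in (0, 1/8]$ for which $|\mathcal{A}| \ge K n^4$ for every $n \ge 3$, and then $t_{n+1} \ge K n^4 t_n$ follows by summation.

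I do not expect a real conceptual obstacle. The main care lies in the cell count: because of the forced concurrences at the $\mathfrak{p}_k$, the arrangement is not in fully general position, so the naive formula $1 + L + \binom{L}{2}$ is not directly available and one has to use the corrected Euler-type expression above. The other mildly delicate point is pinning down the constant $K$ for small $n$ rather than only asymptotically; this is handled by a finite check, exploiting the slack between $1/8$ and the minimum of $|\mathcal{A}|/n^4$.
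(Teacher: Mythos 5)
Your proposal follows the same route as the paper: write $t_{n+1}$ as the sum over realizable $\chi$ on $[n]$ of the number of one-element extensions, lower-bound the latter by the number of cells of the line arrangement spanned by pairs of points of a realization, and compute that cell count to be $\tfrac{n^4}{8}+O(n^3)$. The paper cites a reference for the cell count, $\binom{\binom{n}{2}}{2}+\binom{n}{2}+1-n\binom{n-2}{2}$, and your formula $1+\binom{n}{2}+n(n-2)+\binom{\binom{n}{2}}{2}-n\binom{n-1}{2}$ is algebraically identical to it (using $n(n-2)-n\binom{n-1}{2}=-n\binom{n-2}{2}$). You are slightly more careful on one point the paper leaves implicit: the cell count formula assumes the arrangement has no ``accidental'' concurrences beyond the forced ones at the $\mathfrak p_k$, and you explicitly justify the choice of a generic realization by a perturbation argument; conversely, the paper is more explicit about the constant, taking $K=1/16$ via the elementary bound $r(n)\le n^4/2$ for $n\ge 3$, whereas you invoke a finite check. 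Both are minor stylistic differences; the argument is the same.
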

\begin{proof}[Sketch of proof]
  The number $t_{n+1}$ rewrites as the sum, over all realizable
  chirotopes $\chi$ on $[n]$, of the number of extensions of $\chi$,
  meaning the number of realizable chirotopes on $[n+1]$ whose
  restriction to (triples of) $[n]$ is $\chi$. Fix a geometric
  realization $P$ of $\chi$; the number of extensions of $\chi$ is
  bounded from below by the number of chirotopes realized by $1$-point
  extensions of $P$. This is exactly the number of cells in the
  arrangement of the $\binom n2$ lines through pairs of points of $P$,
  which is~\cite[p. 65]{za1997}
  \[\binom{\binom{n}2}2 + \binom{n}2 + 1 - n\binom{n-2}{2} = \frac{n^4}8 - \frac{r(n)}{8},\]
with $r(n)=6n^3-23n^2+26n-8\leq \frac{n^4}{2}$ for $n\geq 3$, by a simple function analysis. The statement follows from this bound by taking  $K=\frac1{16}$.
\end{proof}

\subsection{Some basic relations between numbers of chirotopes}

Let $t^*_n$ be the number of realizable chirotopes on $\{1,\dots,n\}$,
in which $n$ is an extreme element.
We now relate $d_n$, $t_n$ and $t_n^*$, starting with the last two.

\begin{lemma}\label{l:tn et tn*}
	For each $n \ge 3$, we have $\frac3n t_n \le t_n^*\leq \frac{4}{n}t_n$.
\end{lemma}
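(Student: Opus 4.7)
Plan.

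Both bounds rest on the double-counting identity
\[
n\,t_n^* \;=\; \sum_{\chi} |\mathrm{ext}(\chi)|,
\]
where $\chi$ ranges over realizable chirotopes on $[n]$. This follows because, by the relabeling symmetry of the set of realizable chirotopes, each $i \in [n]$ is extreme in exactly $t_n^*$ chirotopes; summing over $i$ gives $n\,t_n^*$, while summing over $\chi$ gives $\sum_\chi |\mathrm{ext}(\chi)|$. The lower bound is then immediate: every realizable chirotope on $n \ge 3$ elements has convex hull of size at least three, hence $|\mathrm{ext}(\chi)| \ge 3$ and $n\,t_n^* \ge 3\,t_n$, giving $t_n^* \ge 3\,t_n/n$.

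For the upper bound, I would pass to the extensions viewpoint. Letting $\chi'$ range over realizable chirotopes on $[n-1]$, and writing $e(\chi')$ (respectively $e_{\mathrm{ext}}(\chi')$) for the number of realizable extensions of $\chi'$ to $[n]$ (respectively those in which the new element is extreme), the restriction $\chi \mapsto \chi|_{[n-1]}$ gives $t_n = \sum_{\chi'} e(\chi')$ and $t_n^* = \sum_{\chi'} e_{\mathrm{ext}}(\chi')$. Fixing any general-position realization $P$ of $\chi'$, extensions of $\chi'$ are in bijection with cells of the arrangement of the $\binom{n-1}{2}$ lines through pairs of points of $P$; in particular $e(\chi')$ depends only on $n-1$ and is bounded below by $(n-1)^4/16$ by the cell-counting done in the proof of \cref{lem:bound-extensions}. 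By \cref{lem:extreme_in_unbounded_cell}, the extensions with the new element extreme correspond to cells lying outside the convex hull of $P$; one then obtains the desired ratio by combining the lower bound on $e(\chi')$ with a suitable upper bound on $e_{\mathrm{ext}}(\chi')$. The remaining small cases $n = 3$ (where every chirotope has all three elements extreme, so $t_3 = t_3^* = 2$) and $n = 4$ (where the $14$ realizable chirotopes split into 6 convex and 8 of type $3+1$, yielding $t_4^* = 12 \le 14 = \frac{4}{4}t_4$) are checked by hand.

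The main obstacle will be bounding $e_{\mathrm{ext}}(\chi')$ sharply enough to reach the constant $4$: the naive estimate counting only unbounded cells gives $e_{\mathrm{ext}}(\chi') \le 2\binom{n-1}{2}$, which is insufficient in general because \emph{bounded} cells lying outside the convex hull also contribute (already $e_{\mathrm{ext}} = 14 > 12 = 2\binom{4}{2}$ for four points in convex position). Two routes look promising: either a pointwise Euler-characteristic count applied to the part of the arrangement lying outside the convex hull, distinguishing the $2\binom{n-1}{2}$ unbounded cells from the bounded-but-outside cells, or an averaging argument exploiting that most realizable chirotopes have small convex hull so that the sum $\sum_{\chi'} e_{\mathrm{ext}}(\chi')$ is dominated by contributions with $e_{\mathrm{ext}}(\chi')$ of order only $n^2$.
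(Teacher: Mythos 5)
Your lower bound argument is sound and matches the spirit of the paper's proof: both rest on the double-counting identity $n\,t_n^* = \sum_{\chi} \ext(\chi)$, and your elementary observation that $\ext(\chi)\ge 3$ for every realizable chirotope on $n\ge 3$ elements gives the $\frac{3}{n}t_n$ bound cleanly (the paper instead reads this off the exact expectation it cites, but the two are equivalent).

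The upper bound, however, is not a proof but a sketch of two possible attacks, and you flag this yourself. Neither route, as stated, closes the gap. The pointwise strategy would require $n\cdot e_{\mathrm{ext}}(\chi') \le 4\, e(\chi')$ for every $\chi'$, and this is false: take $\chi'$ to be $m=n-1$ points in convex position. Then $e(\chi') \sim m^4/8$, while the cells outside the $m$-gon also number $\Theta(m^4)$ (the regions cut out of the $m$-gon by its diagonals account for only $\sim m^4/24$ of the total), so $n\cdot e_{\mathrm{ext}}(\chi')/e(\chi') = \Theta(n)$, not $O(1)$. Your second route, an averaging argument exploiting that most chirotopes have a small hull, is essentially the right idea but is precisely where all the difficulty sits: making it quantitative amounts to bounding $\Ex[\ext(\chi)]$ for a uniform random realizable chirotope $\chi$ on $[n]$, which is exactly the rewriting $t_n^*/t_n = \frac1n\Ex[\ext(\chi)]$ you already have. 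The paper does not reprove this; it cites \cite[Theorem~1.2]{goaoc2023convex-hulls}, which gives the exact value $\Ex[\ext(\chi)] = 4 - \frac{8}{n^2-n+2}$, and since this lies in $[3,4]$ for $n\ge 3$ both bounds follow at once. Also note a minor mis-citation: the equivalence ``the new point is extreme in $P\cup\{p\}$ iff $p\notin\mathrm{conv}(P)$'' is elementary convex geometry; \cref{lem:extreme_in_unbounded_cell} is a different (existential) statement about choosing a realization so that an extreme point lands in an unbounded cell.
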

\begin{proof}
For a chirotope $\chi$, we denote by $\ext(\chi)$ its number of extreme elements.
  By symmetry, we have:
  \[ \sum_{\chi \in {\mathcal X}_n} \ext(\chi) = \big|\{(\chi,i) \in {\mathcal X}_n \times [n] \colon i \text{ is extreme in } \chi\} \big| = nt_n^*.\]
  Hence,
  \[ \frac{t_n^*}{t_n} = \frac1{nt_n} \sum_{\chi \in {\mathcal X}_n} \ext(\chi) = \frac1n \Ex[\ext(\chi)],\]
  where $\chi$ is chosen uniformly at random in
  $\mathcal{X}_n$. By~\cite[Theorem~1.2]{goaoc2023convex-hulls},
  $$\Ex[\ext(\chi)] = 4-\frac{8}{n^2-n+2},$$ which is
  between $3$ and $4$.
\end{proof}

\noindent
Now, let us turn our attention to $d_n$.

\begin{lemma}\label{lem:bound_dn}
  For each $n \ge 5$, we have
  \begin{equation}\label{eq:bounding-dn}
    3n(t_{n-1}-d_{n-1}) \le d_n \le \sum_{n_1+n_2=n \atop n_1,n_2 \ge 2} \binom{n}{n_1}\, t^*_{n_1+1}\, t^*_{n_2+1}.
  \end{equation}
\end{lemma}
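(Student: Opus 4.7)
The approach is a direct counting argument via the bowtie decomposition. Any decomposable chirotope $\kappa$ on $[n]$ admits at least one nontrivial modular bipartition $\{X,Y\}$; by Observation~\ref{prop:module_and_join}, each such bipartition yields, uniquely up to proxy labeling, chirotopes $\chi$ on $X\cup\{x^*\}$ and $\xi$ on $Y\cup\{y^*\}$ such that $\kappa=\chi\join{x^*}{y^*}\xi$, and Proposition~\ref{p:whenischirotope} ensures $x^*,y^*$ are extreme. So I would count triples (ordered partition $(X,Y)$, $\chi$, $\xi$) with $|X|,|Y|\ge 2$: after fixing the proxy labels to a canonical choice, this gives $\binom{n}{n_1}\,t^*_{n_1+1}\,t^*_{n_2+1}$ possibilities for $|X|=n_1,|Y|=n_2$. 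Every decomposable $\kappa$ is produced at least once (and possibly several times when it has several modular bipartitions), yielding the upper bound immediately.

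\textbf{Lower bound, construction.} For the lower bound I would produce decomposable chirotopes on $[n]$ out of indecomposable ones on $n-1$ elements, via a minimal bowtie extension. Concretely, for each $m\in[n]$, each indecomposable realizable chirotope $\tau$ on $[n]\setminus\{m\}$, and each extreme element $j$ of $\tau$, form the bowtie $\chi\eqdef\chi_1\join{j^*}{k^*}\chi_2$, where $\chi_1$ is $\tau$ with $j$ relabeled $j^*$ and $\chi_2$ is a chirotope on the three-element set $\{j,m,k^*\}$. By Proposition~\ref{p:whenischirotope}, $\chi$ is a realizable chirotope on $[n]$ in which $\{j,m\}$ is a nontrivial $2$-module, and hence $\chi$ is decomposable. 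There are two chirotopes on three elements, and the two resulting $\chi$s are distinct since they differ in the common sign of $\chi(p,j,m)$ for $p\notin\{j,m\}$. The total count of valid tuples $(m,\tau,j,\chi_2)$ is therefore
\[ \sum_{m\in[n]} \sum_{\tau \text{ indec.~on } [n]\setminus\{m\}} 2\,|\ext(\tau)| \;\ge\; 6\,n\,(t_{n-1}-d_{n-1}), \]
using that $|\ext(\tau)|\ge 3$ for every chirotope on at least three elements.

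\textbf{Lower bound, controlling multiplicities.} The hard part will be to show that each resulting $\chi$ is produced at most twice, so that dividing the tuple count by $2$ gives the bound. The core claim is that, for $\tau$ indecomposable, $\{j,m\}$ is the \emph{unique} nontrivial module of $\chi$. Indeed, any module $A\subseteq[n-1]$ of $\chi$ is automatically a module of $\tau=\chi|_{[n-1]}$ (the module conditions for $\tau$ are a subset of those for $\chi$) and is therefore trivial by indecomposability of $\tau$; hence every nontrivial module of $\chi$ contains $m$, and since its complement is a module of $\chi$ contained in $[n-1]$, that complement is trivial in $\tau$, forcing the module itself to have size exactly $2$. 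Moreover, if $\{j_1,m\}$ and $\{j_2,m\}$ were two distinct such $2$-modules, then applying Lemma~\ref{lm:intersection_modules} to their complements (whose union is $[n]\setminus\{m\}\neq[n]$) would yield that $[n]\setminus\{j_1,j_2,m\}$ is a module of $\chi$, hence so is $\{j_1,j_2,m\}$, whose restriction to $[n-1]$ gives the $2$-module $\{j_1,j_2\}$ of $\tau$, contradicting indecomposability. Thus each produced $\chi$ has a unique nontrivial module $\{a,b\}$, and the producing tuple is determined by the choice $m\in\{a,b\}$ (with $j$, $\tau$, $\chi_2$ then forced, and either value of $m$ possibly invalid if the corresponding restriction is decomposable). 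The multiplicity is at most $2$, giving $d_n\ge 3n(t_{n-1}-d_{n-1})$ as desired.
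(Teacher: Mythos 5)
Your upper bound is the paper's own argument: count triples consisting of a nontrivial module $M$ and the two induced chirotopes $\chi_1,\chi_2$, and observe that every decomposable chirotope arises from at least one such triple. For the lower bound you use the same construction as the paper — attach one of the two $3$-point chirotopes as a $2$-module to an indecomposable chirotope on $n-1$ labels via a bowtie whose proxies are extreme — but the injectivity analysis differs. The paper observes that the canonical chirotope tree of the resulting $\chi$ is the $2$-node path (a convex $3$-element node attached to the indecomposable node) and invokes \cref{thm:uniqueness_canonical} to conclude that the map from triples $(\{i,j\},\chi_1,\chi_2)$ to chirotopes is injective. You instead prove directly, via \cref{lm:intersection_modules}, that $\{j,m\}$ is the unique $2$-module of $\chi$ containing $m$, and bound the multiplicity of the map from ordered $4$-tuples $(m,\tau,j,\chi_2)$ by $2$; the factor-$2$ overcounting from ordering $(m,j)$ and the multiplicity bound cancel, so both routes give $d_n\ge 3n(t_{n-1}-d_{n-1})$. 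Your route is more self-contained — it does not need the canonicity theorem, only the module-intersection lemma and the fact that every chirotope has at least $3$ extreme elements — at the cost of a slightly longer argument.

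Two small slips are worth tightening, neither of which affects the conclusion. First, ``every nontrivial module of $\chi$ contains $m$'' is false as stated: the complement $[n]\setminus\{j,m\}$ is a nontrivial module not containing $m$. What your argument actually establishes and needs is that every nontrivial module \emph{containing $m$} has size exactly $2$ (its complement sits inside $[n]\setminus\{m\}$, is therefore a module of $\tau$, and hence has size at least $n-2$), so that every nontrivial module is either a $2$-module containing $m$ or the complement of one. Second, ``each produced $\chi$ has a unique nontrivial module $\{a,b\}$'' should say unique nontrivial \emph{modular bipartition}, or unique $2$-module: the complement $[n]\setminus\{a,b\}$ is of course also a nontrivial module.
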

\begin{proof}
Let $\mathcal{X}^*_{n}$ denote the set of realizable chirotopes on $[n] \eqdef \{1, 2, \dots , n\}$ such that $n$ is extreme,
 and $\mathcal D_n$ the set of decomposable realizable chirotopes of size $n$.
We define an application $\phi$ that associates to a pair $(\chi,M)$
-- where $\chi\in\mathcal D_n$ and $M \subseteq [n]$ is a nontrivial module of $\chi$ --
 a tuple $\phi(\chi,M)=(\chi_1,\chi_2,M)$ constructed as follows.
 Letting $n_1 \eqdef |M|$ and $n_2 \eqdef n-n_1$,
  \begin{itemize}
  \item $\chi_1$ is the chirotope  obtained by replacing the elements of $[n] \setminus M$ by a single element labeled $n_1+1$, and relabeling $M$ from $1$ to $n_1$ consistently with the order of $M$;
  \item  $\chi_2$ is the chirotope obtained, similarly, by replacing the elements of $M$ by a single element labeled $n_2+1$, and relabeling $[n]\setminus M$ from $1$ to $n_2$ consistently.
  \end{itemize}
  From \cref{prop:Realization_Substitution,prop:module_and_join}, it
  holds that $n_1+1$ and $n_2+1$ are extreme elements in $\chi_1$ and
  $\chi_2$ respectively, i.e.~ $\chi_1 \in \mathcal X^*_{n_1+1}$ and
  $\chi_2 \in \mathcal X^*_{n_2+1}$.  We claim that $\phi$ is
  bijective.  Indeed, take a triple $(\chi_1,\chi_2,M)$, with $\chi_1
  \in \mathcal X^*_{n_1+1}$, $\chi_2 \in \mathcal X^*_{n_2+1}$ and $M
  \subseteq [n]$.  Let us define $\chi_1^M$ as the chirotope over $M
  \cup \{x^*\}$ obtained by renaming, in $\chi_1$, the (extreme)
  element $n_1+1$ as $x^*$ and $[n_1]$ as $M$, increasingly. The
  chirotope $\chi_2^{[n]\setminus M}$ is defined similarly with label
  set $([n]\setminus M) \cup\{y^*\}$. Finally, we set $\chi = \chi_1^M
  \join{x^*}{y^*} \chi_2^{[n]\setminus M}$ and define
  $\Psi(\chi_1,\chi_2,M)=(\chi,M)$.  One readily checks that $\phi$
  and $\Psi$ are inverse of each other, showing that $\phi$ is
  bijective.

 As a consequence, we have
\[
	d_n \leq |\{(\chi,M)\,:\, \chi\in \mathcal D_n,\, M\text{ nontrivial module of }\chi\}| = \sum_{n_1+n_2=n \atop n_1,n_2 \geq 2} \binom{n}{n_1}\, t^*_{n_1+1}\, t^*_{n_2+1}, \]
proving the upper bound in~\cref{eq:bounding-dn}.

On the other hand, the chirotopes $\chi_1 \join{x^*}{y^*}
  \chi_2$ are different for any $\chi_1 \in {\cal X}_3$ and $\chi_2
  \in {\cal X}_{n-1}\setminus{\cal D}_{n-1}$ (this can easily be seen
      {\em e.g.} via Theorem~\ref{thm:uniqueness_canonical}).
Since there are $2$ chirotopes on $\{1,2,3\}$, it follows that $d_n
\ge 2 \binom{n}{2} i^*_{n-1}$
where $i^*_{n-1}$ is the number of indecomposable realizable chirotopes
on $[n-1]$ in which $n-1$ is an extreme point.
Using the same argument as in \cref{l:tn et tn*}, we have
\[i^*_{n-1} \ge \frac{3}{n-1} (t_{n-1}-d_{n-1}),\]
concluding the proof of the lower bound in~\cref{eq:bounding-dn}.
	\end{proof}

Combining \cref{l:tn et tn*,lem:bound_dn}, we get
\begin{equation}\label{eq:bounding-dn-sum}
d_n \le 16 \sum_{n_1+n_2=n \atop n_1,n_2 \geq 2} \binom{n}{n_1}\, \frac{t_{n_1+1}\, t_{n_2+1}}{(n_1+1)\,(n_2+1)}.
\end{equation}

\subsection{Bounding the sum}
The goal is to prove that the upper bound in \cref{eq:bounding-dn-sum} behaves as $\O(n^{-3} t_n)$.
We fix some $\varepsilon$ in $(0,1/2)$ independent of $n$
and split the sum into two parts.
\begin{align} d^{\rm small}_n&=\sum_{\substack{n_1+n_2=n \\ 2 \leq \min(n_1,n_2) < \varepsilon n}} \binom{n}{n_1}\, \frac{t_{n_1+1}\, t_{n_2+1}}{(n_1+1)\,(n_2+1)},\text{ and } \label{eq:def_dsmall}\\
 d^{\rm large}_n&=\sum_{\substack{n_1+n_2=n \\ \varepsilon n \le n_1,n_2 \le (1-\varepsilon) n}} \binom{n}{n_1}\, \frac{t_{n_1+1}\, t_{n_2+1}}{(n_1+1)\,(n_2+1)}. \label{eq:def_dlarge}
 \end{align}

The part $d^{\rm small}_n$ can be bounded using the classical
estimates on the number of realizable chirotopes given in \cref{ssec:counting}.
\begin{lemma}
\label{lem:bound-dn-small}
	For $\varepsilon$ small enough, we have $\frac{d^{\rm small}_n}{t_n}=\O(n^{-3})$
	as $n$ tends to $+\infty$.
\end{lemma}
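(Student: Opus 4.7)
The plan is to show that each term of $d^{\rm small}_n/t_n$, viewed as a sum over the smaller of $n_1$ and $n_2$, is bounded by a sequence that decreases geometrically, so that the sum is dominated by its smallest-index term, which will turn out to be of order $\O(n^{-3})$. By symmetry of the summand in $(n_1,n_2)$, I would rewrite
$$\frac{d^{\rm small}_n}{t_n} \;=\; 2\sum_{n_1=2}^{\lfloor\varepsilon n\rfloor} a_{n_1}, \qquad a_{n_1}\eqdef \frac{\binom{n}{n_1}\,t_{n_1+1}\,t_{n_2+1}}{(n_1+1)(n_2+1)\,t_n},$$
with $n_2=n-n_1\ge (1-\varepsilon)n\ge n/2$ (using $\varepsilon<1/2$).

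The first step would be to bound $a_{n_1}$ by combining four estimates: (a) the crude $\binom{n}{n_1}\le n^{n_1}/n_1!$; (b) Warren's upper bound $t_{n_1+1}\le e^{2(n_1+1)}\,n_1^{4(n_1+1)}$ from \cref{prop:upper-bound-chirotopes}; (c) the estimate $t_{n_2+1}/t_n\le K^{-(n_1-1)}(n_2+1)^{-4(n_1-1)}$ obtained by iterating \cref{lem:bound-extensions} from $j=n_2+1$ up to $j=n-1$; and (d) Stirling's inequality $n_1!\ge (n_1/e)^{n_1}$. After plugging in $n_2+1\ge n/2$, a direct simplification gives
$$a_{n_1}\;\le\; C\,\mu^{n_1}\,\frac{n_1^{3n_1+3}}{n^{3n_1-3}},$$
where $C$ is an absolute constant and $\mu=16e^3/K$.

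Next I would control the ratio of two consecutive bounds. Using the elementary inequality $(1+1/n_1)^{3n_1+3}\le 8e^3$, one obtains
$$\frac{a_{n_1+1}^{\mathrm{bound}}}{a_{n_1}^{\mathrm{bound}}}\;=\;\frac{\mu\,(n_1+1)^3\,(1+1/n_1)^{3n_1+3}}{n^3}\;\le\; \frac{8e^3\mu\,(n_1+1)^3}{n^3}\;\le\; 64e^3\mu\,\varepsilon^3,$$
the last inequality holding for $n\ge 1/\varepsilon$ and $n_1\le\varepsilon n$. Choosing $\varepsilon$ small enough so that $64 e^3\mu\,\varepsilon^3\le 1/2$, the bounds $a_{n_1}^{\mathrm{bound}}$ form a geometric series, so $\sum_{n_1\ge 2} a_{n_1}\le 2\,a_2^{\mathrm{bound}}$. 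Direct substitution of $n_1=2$ in the formula above gives $a_2^{\mathrm{bound}}=512\,C\,\mu^2/n^3=\O(n^{-3})$, and the desired conclusion $d^{\rm small}_n/t_n=\O(n^{-3})$ follows.

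The main obstacle will be to pick the right form for the upper bound on $a_{n_1}$. A uniform estimate that replaces $n_1/n$ by $\varepsilon$ (for instance $(n_1/n)^{3n_1}\le\varepsilon^{3n_1}$) would throw away the polynomial-in-$n$ decay present for small $n_1$ and would only produce an overall bound of order $n^3$. The trick is to keep the factor $n_1^{3n_1+3}/n^{3n_1-3}$ intact and to analyze the sum via ratios of consecutive terms, thereby reducing the whole estimate to the single sharp bound at $n_1=2$.
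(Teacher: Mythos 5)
Your argument is correct and uses the same ingredients and overall strategy as the paper's proof: symmetrize over $n_1\leftrightarrow n_2$, insert the upper bound of \cref{prop:upper-bound-chirotopes}, the iterated lower bound of \cref{lem:bound-extensions}, and Stirling, then show that the summand decays geometrically once $\varepsilon$ is small. The difference is purely organizational: you keep the $n$-dependent bound $C\mu^{n_1}n_1^{3n_1+3}/n^{3n_1-3}$ and run a ratio test on consecutive terms, whereas the paper instead uses $n_2\ge\frac{1-\varepsilon}{\varepsilon}n_1$ to cancel the $n_1^{3n_1}$ growth and reduce $n^3\,d^{\rm small}_n/t_n$ to an explicit $n$-independent convergent series.
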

\begin{proof}
	Assuming $\varepsilon<\tfrac12$, we write
	\begin{align*}
		n^3\frac{d^{\rm small}_n}{t_n}&=2n^3\sum_{n_1+n_2=n\atop 2\leq n_1< \varepsilon n} \binom{n}{n_1}\, \frac{t_{n_1+1}}{n_1+1}\, \frac{t_{n_2+1}}{(n_2+1)t_n}.
	\end{align*}
	Iterating \cref{lem:bound-extensions}, we get
	\[\frac{t_{n_2+1}}{t_n}\leq \frac{1}{K^{n_1-1} (n_2+1)^{4}\ldots (n-1)^{4}}.\]

	On the other hand, \cref{prop:upper-bound-chirotopes} yields:
	\[\frac{t_{n_1+1}}{n_1+1}\leq e^{2n_1+2}n_1^{4(n_1+1)-1}.\]
	Bringing everything together gives us:
	\begin{align*}
		n^3\frac{d^{\rm small}_n}{t_n}&\leq 2\sum_{n_1+n_2=n\atop 2\leq n_1\leq \varepsilon n} \frac{n^4 (n-1)\ldots (n_2+1)}{n_1!} \frac{e^{2n_1+2}n_1^{4(n_1+1)-1}}{K^{n_1-1}(n_2+1)^5(n_2+2)^4\ldots (n-1)^4}
	\end{align*}
	Using that for $n$ large enough, $n^4\leq 2(n-2)(n-1)^3$, and $n_1!\geq \left(\frac{n_1}{e}\right)^{n_1}$, we get
	\begin{align*}
		n^3\frac{d^{\rm small}_n}{t_n}&\leq 4Ke^2\sum_{n_1+n_2=n\atop 2\leq n_1\leq \varepsilon n} \left(\frac{e}{n_1}\right)^{n_1} \frac{e^{2n_1}n_1^{4(n_1+1)-1}}{K^{n_1}(n_2+1)^4(n_2+2)^3\ldots (n-3)^3(n-2)^2}\,.
	\end{align*}
	When $n_2=n-2$, the product in the denominator $(n_2+1)^4(n_2+2)^3\ldots (n-3)^3(n-2)^2$ should be interpreted as $1$ (in this case, we use $n^4\leq 2(n-1)^4$).

	Each of the $3(n-n_2-2)=3n_1-6$ polynomial factors in the denominator are bounded from below by
	\[n_2\geq (1-\varepsilon)n\geq \frac{1-\varepsilon}{\varepsilon}n_1\,.\]
	Hence, we get, for $n$ large enough,
	\begin{align*}
		n^3\frac{d^{\rm small}_n}{t_n}&\leq 4Ke^2\left(\frac{1-\varepsilon}{\varepsilon}\right)^6\sum_{2\leq n_1\leq \varepsilon n} \left(\frac{e^3}{K}\left(\frac{\varepsilon}{1-\varepsilon}\right)^3\right)^{n_1} \frac{n_1^{4(n_1+1)-1}}{n_1^{n_1}n_1^{3n_1-6}}\\
		&\leq 4Ke^2\left(\frac{1-\varepsilon}{\varepsilon}\right)^6\sum_{n_1\geq 2} \left(\frac{e^3}{K}\left(\frac{\varepsilon}{1-\varepsilon}\right)^3\right)^{n_1} n_1^{9}.
	\end{align*}
	For $\varepsilon$ small enough, we have a convergent series, showing that the right-hand size is bounded by a finite constant.
\end{proof}

We note that the above strategy
(via \cref{lem:bound-extensions,prop:upper-bound-chirotopes})
would even fail to show that $t_{n/2+1}^2 \le t_n$, and
can therefore not be used to control $d^{\rm large}_n$.
In the next section, we will prove the following lemma with a completely different method.
\begin{lemma}\label{lem:bounding-d-large}
For $n \ge \frac3{\varepsilon}$, we have
$\displaystyle t_n\geq \frac{3K\varepsilon^6}{4}\, n^3 \, d^{\rm large}_n$, where $K$ is the constant introduced in Lemma~\ref{lem:bound-extensions}.
\end{lemma}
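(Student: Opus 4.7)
My strategy will be to associate, with each balanced bowtie $\chi = \chi_1 \join{x^*}{y^*} \chi_2$ counted in $d^{\rm large}_n$ (so $\varepsilon n \le n_1, n_2 \le (1-\varepsilon) n$), a family of $\Omega(\varepsilon^3 n^3)$ distinct realizable chirotopes on $[n]$, and then leverage \cref{lem:bound-extensions} once more to recover the missing $\varepsilon^4$ factor in the final constant. Using \cref{lem:extreme_in_unbounded_cell} together with the construction in the proof of \cref{prop:Realization_Substitution}, I realize $\chi$ as a planar point set $\mathcal R = \mathcal P_1 \cup \mathcal P_2$ with $\mathcal P_1$ far to the left, $\mathcal P_2$ far to the right, and the former proxies $x^*$, $y^*$ each placed in an unbounded cell of its own line arrangement before being dropped.

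I then pick any $p \in M$ (which gives at least $n_1 \ge \varepsilon n$ choices) and transplant the point of $\mathcal R$ labelled $p$ into a cell $C$ of the arrangement of $\mathcal R \setminus \{p\}$ that lies strictly inside the convex hull of $\mathcal P_2 \setminus \{y^*\}$. By the standard formula for planar line arrangements, and using that $y^*$ sits in an unbounded cell so that $\mathrm{conv}(\mathcal P_2 \setminus \{y^*\})$ captures essentially all bounded cells of its own sub-arrangement, the number of such cells is at least $c (n_2 - 1)^2 \ge c \varepsilon^2 n^2$. The resulting point set realizes a chirotope $\chi^{p,C}$ on $[n]$; and since a triangle of $\mathcal P_2$ now contains the new position of $p$, \cref{lem:caratheodory} shows that $(M, [n] \setminus M)$ is no longer a module of $\chi^{p,C}$, so the constructed chirotopes genuinely escape the bowtie structure we started from.

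The key step, and the expected main obstacle, is showing that the map $(\chi, p, C) \mapsto \chi^{p,C}$ has preimage size $O(1)$. Given an image chirotope $\chi'$, one would recover $p$ as the unique element whose deletion leaves a chirotope admitting a balanced mutually avoiding bipartition; \cref{thm:uniqueness_canonical} applied to $\chi' \setminus \{p\}$ then identifies both $M \setminus \{p\}$ and the cell $C$. The missing $\varepsilon^4$ is supplied by combining this near-injection with \cref{lem:bound-extensions}: from $t_n \ge K(n-1)^4 t_{n-1} \ge K \varepsilon^4 n^4 \, t_{n-1}$ (valid for $n \ge 1/\varepsilon$), and the fact that each $\chi^{p,C}$ restricts after deletion of $p$ to a realizable chirotope of size $n-1$, a double count of the pairs $(\chi^{p,C}, p)$ against extensions of these smaller chirotopes delivers the announced inequality $t_n \ge (3K\varepsilon^7/4)\, n^3\, d^{\rm large}_n$.
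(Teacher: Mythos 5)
The geometric intuition behind your plan — starting from the bowtie data, building a much larger family of realizable chirotopes, and double-counting against $t_n$ — is in the right spirit, and superficially resembles what the paper does. But the specific construction you propose does not support the crucial injectivity claim, and the paper actually introduces the whole apparatus of \emph{quasi-modules} precisely to circumvent the obstruction you run into.

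Here is the gap. After you transplant $\mathfrak p_p$ into a cell inside the convex hull of the right block $\mathcal P_2$, the resulting chirotope $\chi^{p,C}$ is \emph{still decomposable}: the points labelled by $M\setminus\{p\}$ remain far to the left, the points labelled by $([n]\setminus M)\cup\{p\}$ are now all clustered far to the right, and these two sets are still mutually avoiding. So $M\setminus\{p\}$ is a module of $\chi^{p,C}$, of size $n_1-1$, which is still balanced. Consequently the criterion you use to recover $p$ — ``the unique element whose deletion leaves a chirotope admitting a balanced mutually avoiding bipartition'' — does not single out $p$ at all: $\chi^{p,C}$ itself already admits such a bipartition, and deleting \emph{any} element leaves a chirotope that still does. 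Moreover, even if one could identify the module $M' = M\setminus\{p\}$ inside $\chi^{p,C}$, there is no way to recognize which of the $n_2+1$ elements of $[n]\setminus M'$ is the transplanted one; \cref{thm:uniqueness_canonical} gives uniqueness of the canonical tree, not a device for detecting an added point. The preimage size of your map is therefore not $\O(1)$, and the double count collapses.

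This is precisely the difficulty that the paper's construction is engineered to avoid. There, the block to be inserted is squeezed onto a segment $x_1^*x_2^*$ that is \emph{not} an extreme edge of the outer chirotope, so that lines through pairs of the inserted block induce a \emph{nontrivial} bipartition of the complement. The inserted block is then a \emph{quasi-module}, not a module, and a quasi-module carries an intrinsic pair of antipodal elements (\cref{lem:extreme de weak}) that pins down the decomposition: \cref{lem:at most two} gives preimage size at most~$2$. The cost side of the double count is then \cref{lem:counting_weak_modules}, bounding the number of large quasi-modules per chirotope by $n^2/\varepsilon^2$. Your construction produces a module, which has no such canonical anchor, and that is exactly why the recovery fails. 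Finally, your appeal to \cref{lem:bound-extensions} ``to recover the missing $\varepsilon^4$'' is a symptom of the same confusion: if the map you describe really were $\O(1)$-to-$1$, you would already obtain $t_n \ge c\,\varepsilon^3 n^3\, d^{\rm large}_n$ directly (via $t^*_k \ge \tfrac{3}{k}t_k$), with no need of \cref{lem:bound-extensions} — whose actual role in the paper's argument is to convert the \emph{extra} marked point of the outer chirotope (on $n_1+2$ labels rather than $n_1+1$) into a factor $(n_1+1)^4$.
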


\begin{proof}[Proof of \cref{thm:proportion_indecomposable}]
Combining \cref{eq:bounding-dn-sum,eq:def_dsmall,eq:def_dlarge,lem:bound-dn-small,lem:bounding-d-large},
we find that $\frac{d_n}{t_n} =\O(n^{-3})$, showing the upper bound in \cref{thm:proportion_indecomposable}.
The lower bound follows from \cref{lem:bound_dn}, using that
$\frac{d_{n-1}}{t_{n-1}} =\O(n^{-3})$.
\end{proof}

\section{Quasi-modules and the proof of Lemma~\ref{lem:bounding-d-large}}~
\label{s:quasi}

 \cref{lem:bounding-d-large} gives a lower bound on $t_n$ in terms of
 products $t_{n_1+1}t_{n_2+1}$ where $n_1+n_2 =n$. We prove it by
 constructing, given two small realizable chirotopes, a larger one via a
 generalized substitution operation ``along a segment'' (defined in
 Section~\ref{sec:subs along a segment}). This operation is not
 one-to-one, but the redundancy can be controlled using a notion of
 ``quasi-modules'' (defined in Section~\ref{sec:quasi-modules}).

\subsection{Quasi-modules}\label{sec:quasi-modules}

Let $\chi$ be a chirotope\footnote{Even though we are interested in realizable chirotope,
the results in \cref{sec:quasi-modules} are valid for abstract chirotopes in general, and thus
stated in this setting.} on a set $X$. Let $a, b \in X$ and let $Y$
be a subset of $X$ disjoint from $\{a,b\}$. The line $ab$ splits $Y$
into $Y=Y_{-1} \uplus Y_1$, where, for $i \in \{-1,1\}$,
\[Y_i=\{y \in Y, \, \chi(a,b,y)=i\}.\]
We call the set $\{Y_{-1},Y_1\}$ the {\em bipartition} of $Y$ induced
by $a$ and $b$. Note that although switching $a$ and $b$ exchanges
$Y_{-1}$ and $Y_1$, the induced bipartition is the same. A bipartition
is called nontrivial if \emph{both} $Y_{-1}$ and $Y_1$ are nonempty. This gives rise to a variant of the notion of modules.

\begin{definition}
  A subset $W$ of $X$ is called a {\em quasi-module} of a chirotope
  $\chi$ on $X$ if every pair $\{a,b\}$ of elements of $W$ defines the
  same bipartition of $X \setminus W$, and if this bipartition is
  nontrivial (meaning that neither part is empty).
\end{definition}

We compare the above definition to our earlier definition of module
(see \cref{def:modules}).  If $M \subset X$ is a module of a chirotope
$\chi$, then every pair of elements of $M$ defines the same
bipartition on $X \setminus M$.  However, this bipartition is trivial,
so a module is not a quasi-module. Another difference is that the
definition of module is symmetric in $M$ and $X \setminus M$, while
that of quasi-module is not.

An example of quasi-module is given on Figure~\ref{f:weakmod}. On the other hand,
consider a chirotope $\chi$ on $X$ with exactly three extreme elements
$\{a,b,c\}$. Each of $\{a,b\}$, $\{a,c\}$ and $\{b,c\}$ defines the
same bipartition $\{X\setminus \{a,b,c\}, \emptyset\}$. Yet, the set
$\{a,b,c\}$ is not a quasi-module as it fails the nontriviality
condition.

\begin{figure}[h]
\centering
\tikzset{
    add/.style args={#1 and #2}{
        to path={
 ($(\tikztostart)!-#1!(\tikztotarget)$)--($(\tikztotarget)!-#2!(\tikztostart)$)%
  \tikztonodes},add/.default={.2 and .2}}
}
	\begin{tikzpicture}
	\node (p1) at (0,0) {$\bullet$};
	\node[below] at (0,0) {$a$};
	\node (p2) at (0.2,0.2) {$\bullet$};
	\node[right] at (0.2,0.2) {$b$};
	\node (p3) at (0.7,-0.2) {$\bullet$};
	\node[below] at (0.7,-0.2) {$c$};
	\node (p4) at (0.4,-0.5) {$\bullet$};
	\node[below] at (0.4,-0.5) {$d$};
	\draw [add= 3 and 3, black, thin, dashed] (p1) to (p2);
	\draw [add= 3 and 3, black, thin, dashed] (p1) to (p3);
	\draw [add= 3 and 3, black, thin, dashed] (p3) to (p2);
	\draw [add= 3 and 3, black, thin, dashed] (p1) to (p4);
	\draw [add= 2 and 2, black, thin, dashed] (p2) to (p4);
	\draw [add= 3 and 3, black, thin, dashed] (p3) to (p4);
	\node at (-.7,-0.2) {$\bullet$};
	\node[left] at (-.7,-0.2) {$q$};
	\node at (-1.5,-0) {$\bullet$};
	\node[left] at (-1.5,-0) {$p$};
	\node (p3) at (1.5,-0.1) {$\bullet$};
	\node[right] at (1.5,0.1) {$r$};
\end{tikzpicture}
\caption{Every line of two points in $W=\{a,b,c,d\}$ defines the same bipartition of points in $\{p,q,r\}$, so $W$ is a quasi-module. On the other hand, $\{a,b,c\}$ is not a quasi-module of that chirotope: indeed, the line $ab$ induces the bipartition $\{\{p,q\},\{d,r\}\}$,
while the line $ac$ induces $\{\{p,q,d\},\{r\}\}$.\label{f:weakmod}}
\end{figure}

It will be useful to control the number of quasi-modules of macroscopic size of any given chirotope.

\begin{lemma}
  For every $\eps>0$ and for every $n \ge 1$ such that $\eps n \ge 2$,
  every chirotope of size $n$ has at most $n^2/\eps$ quasi-modules of
  size at least $\eps\, n$.
  \label{lem:counting_weak_modules}
\end{lemma}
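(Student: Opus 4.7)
My plan is to prove the bound by a double counting argument. Let $\mathcal{W}$ denote the family of quasi-modules of $\chi$ of size at least $\eps n$, and for each $W \in \mathcal{W}$ write $\{P_W, P'_W\}$ for the induced bipartition of $X \setminus W$. I will count configurations of the form $(\{a,b\}, \{y,y'\})$ with $\{a,b\} \subseteq W$ and $\{y,y'\} \subseteq X \setminus W$ having one endpoint in $P_W$ and the other in $P'_W$. By the quasi-module property, every such configuration automatically satisfies $\chi(a,b,y) \neq \chi(a,b,y')$.

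From the $W$-side of the count, each $W \in \mathcal{W}$ contributes exactly $\binom{|W|}{2} \, |P_W| \, |P'_W|$ configurations. The hypothesis $|W| \geq \eps n$ together with $\eps n \geq 2$ yields $\binom{|W|}{2} \geq \eps n(\eps n - 1)/2 \geq \eps^2 n^2/4$, and the nontriviality of the bipartition guarantees $|P_W|, |P'_W| \geq 1$, so each $W$ contributes at least $\eps^2 n^2/4$ configurations.

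From the configuration side, for each unordered pair $\{a,b\} \subseteq X$ the number of pairs $\{y,y'\} \subseteq X \setminus \{a,b\}$ with $\chi(a,b,y) \neq \chi(a,b,y')$ equals the product of the sizes of the two sides of the bipartition of $X \setminus \{a,b\}$ induced by $\chi(a,b,\cdot)$, which is at most $(n-2)^2/4$ by AM-GM. Summing over all $\binom{n}{2}$ pairs, the total number of separating configurations in $\chi$ is at most $\binom{n}{2}(n-2)^2/4 \leq n^4/8$.

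The main obstacle, and the technical heart of the argument, is to control the multiplicity with which a single separating configuration $(\{a,b\},\{y,y'\})$ can be realized by different quasi-modules in $\mathcal{W}$. The plan is to establish a structural claim: if two distinct $W_1, W_2 \in \mathcal{W}$ share both a common pair $\{a,b\} \subseteq W_1 \cap W_2$ and a common separating pair $\{y,y'\} \subseteq (X \setminus W_1) \cap (X \setminus W_2)$, then applying the quasi-module condition simultaneously on $W_1$ and on $W_2$ (using any element of $W_1 \triangle W_2$ together with $a$ and $b$), in combination with the interiority and transitivity axioms of a chirotope, restricts the ways $W_1$ and $W_2$ may differ enough to show that each such configuration belongs to at most two quasi-modules in $\mathcal{W}$. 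Granting this multiplicity bound, the double count gives $|\mathcal{W}| \cdot \eps^2 n^2/4 \leq 2 \cdot n^4/8$, hence $|\mathcal{W}| \leq n^2/\eps^2$ as required.
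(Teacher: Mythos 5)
There is a genuine gap in your argument, and it is exactly where you flag it: the multiplicity bound. You describe it as a ``plan'' and a hand-wave involving the interiority and transitivity axioms, but you do not prove it. Worse, the claimed bound of $2$ appears to be false. The paper's proof establishes (via a radial ordering of $X\setminus\{a,b\}$ around $a$) that every quasi-module $W$ containing a fixed pair $\{a,b\}$ has the form $W=\{a,b,x_1,\dots,x_j,x_k,\dots,x_{n-2}\}$ for some $0\le j<k\le n-1$, i.e.\ is a ``prefix plus suffix'' of the radial ordering. Such sets can be nested: if $c_1,\dots,c_\ell$ lie close to the segment $ab$, one can plausibly have a chain $W_1\subsetneq W_2\subsetneq\cdots\subsetneq W_\ell$ of quasi-modules, all containing $\{a,b\}$, all avoiding a fixed pair $\{y,y'\}$ with $\chi(a,b,y)\ne\chi(a,b,y')$, and all of size at least $\eps n$. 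For such a chirotope the multiplicity of your configuration $(\{a,b\},\{y,y'\})$ would be $\Omega(n)$, not $O(1)$, and your double count would give only $|\mathcal{W}|=O(n^3/\eps^2)$ — not the claimed $n^2/\eps^2$.

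The paper avoids this problem by double-counting a different object: triples $(W,a,b)$ with $a,b\in W$ distinct, rather than pairs-of-pairs. The key structural lemma (each pair $\{a,b\}$ lies in at most $n^2/2$ quasi-modules, via the radial ordering) gives a direct bound of $n^4/2$ on the number of such triples, without any multiplicity issue to control. Your setup of counting separating configurations is natural and the first two steps (lower bound per $W$, global upper bound on configurations) are correct, but the crux of the proof — how to prevent a single configuration from being shared by many quasi-modules — is where the real geometric content lies, and it is precisely the piece you have not supplied. If you want to salvage the approach, you would need to replace the separating pair $\{y,y'\}$ by something that does pin down $W$ (essentially reinventing the radial-ordering argument), at which point you would be reproducing the paper's proof.
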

\begin{proof}
  Fix a chirotope $\chi$ on a set $X$ of size $n$. We will use a
  double-counting argument based on an upper bound on the number of
  pairs $(W,a)$, where $W$ is a quasi-module and $a$ a
  point of $W$.

  To construct such triples, we start by choosing a point $a$
  in $X$. We label the elements of $X\setminus \{a\}$ as $x_0$,
  $x_1$, \ldots, $x_{n-2}$ as follows. 
  We choose $x_0 \ne a$ in an arbitrary way.
  Then we start with the line $ax_0$ and
  rotate it in clockwise direction around $a$. 
  The $i$-th point encountered gets the label $x_i$. Note that we rotate the whole line
  (infinite in both directions) and not only a half-line: 
  in particular, $x_1$ can be on either side of the line $ax_0$.
  An example is given on \Cref{fig:labelling_XseminusA}.
  The resulting labelling of $X\setminus \{a\}$ depends on the chosen point $x_0$
  (namely, changing $x_0$ results in a cyclic shift of the labels).

  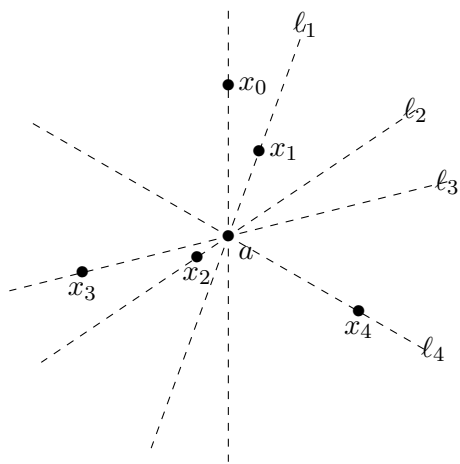
\begin{figure}[h]
  \centering
  	\begin{tikzpicture}
	\node at (0,0) {$\bullet$};
	\node[below right] at (0,0) {$a$};
	\node at (0,2) {$\bullet$};
	\node[right] at (0,2) {$x_0$};
	\draw[dashed] (0,0) ++(90:3cm) -- (-180+90:3cm);

	\draw[dashed] (0,0) ++(70:3cm) node {$\ell_1$} -- (-180+70:3cm);
	\draw (0,0) ++(70:1.2cm) node {$\bullet$};
	\draw[right] (0,0) ++(70:1.2cm) node {$x_1$};

	\draw[dashed] (0,0) ++(34:3cm) node {$\ell_2$} -- (-180+34:3cm);
	\draw (0,0) ++(34:-.5cm) node {$\bullet$};
	\draw[below] (0,0) ++(34:-.5cm) node {$x_2$};

	\draw[dashed] (0,0) ++(14:3cm) node {$\ell_3$} -- (-180+14:3cm);
	\draw (0,0) ++(14:-2cm) node {$\bullet$};
	\draw[below] (0,0) ++(14:-2cm) node {$x_3$};

	\draw[dashed] (0,0) ++(150:3cm)  -- (-180+150:3cm) node {~$~\ell_4$};
	\draw (0,0) ++(150:-2cm) node {$\bullet$};
	\draw[below] (0,0) ++(150:-2cm) node {$x_4$};

        \end{tikzpicture}\caption{Turning the line $ax_0$ clockwise around $a$ reaches  $\ell_1, \ell_2, \ell_3$ and $\ell_4$ in this order, yielding the labels $x_1, x_2, x_3$ and $x_4$.\label{fig:labelling_XseminusA}}
\end{figure}

Consider a quasi-module $W$ containing $a$.
Assume that $W$ contains $x_i$ and $x_j$ for some $i<j$.
For any $h,k$ such that $i<h<j<k$ or $k<i<h<j$, the lines $ax_i$ and
  $ax_j$ induce different bipartitions of $\{x_h,x_k\}$,
   see Fig~\ref{f:quasi-modules-are-ordered}.
   This implies that either $x_h$ or $x_k$ is in $W$ (possibly both).
   Since this holds for any $h$ and $k$ such that $i<h<j<k$ or $k<i<h<j$,
 any quasi-module $W$ containing $a$ is of the form
  \begin{equation}\label{eq:form_weak_modules}
    W=\{a\} \cup \big\{x_i, i \in I \big\},
  \end{equation}
 where $I$ is a {\em circular interval} of $\{0,1,\dots,n-2\}$
 (The converse is not true in general, as our
  construction does not control the bipartitions of $X \setminus W$
  induced by lines $x_sx_t$ for $x_s$, $x_t$ in $W$.
  In the example of \Cref{fig:labelling_XseminusA}, $\{a,x_1,x_2,x_3\}$ is a quasi-module,
  but $\{a,x_3,x_4\}$ is not.)
  It follows that at most $n^2$ quasi-modules $W$
  contain a given point $a$. There are thus at most
  $n^3$ pairs $(W,a)$, where $W$ is a quasi-module and $a$ a
  point of $W$.
  \begin{figure}
  	\centering
  	\scalebox{0.8}{\begin{tikzpicture}
	\node at (0,0) {$\bullet$};
	\node[below right] at (0,0) {$a$};
	\node at (0,2) {$\bullet$};
	\node[right] at (0,2) {$x_i$};
	\draw[dashed] (0,0) ++(90:2.7cm) -- (-180+90:2.7cm);

	\draw[dashed] (0,0) ++(34:3cm) node {$\ell_h$} -- (-180+34:3cm);

	\draw[dashed] (0,0) ++(14:3cm) -- (-180+14:3cm);
	\draw (0,0) ++(14:-2cm) node {$\bullet$};
	\draw[below] (0,0) ++(14:-2cm) node {$x_j$};

	\draw[dashed] (0,0) ++(150:3cm)  -- (-180+150:3cm) node {$\ell_k$};

\end{tikzpicture}}\quad\quad\quad
\caption{The lines $ax_i$ and $ax_j$ divide the plane into four quadrants. These lines induce the same bipartition of $\{x_h,x_k\}$ if and only if these two points lie in equal or opposite quadrants. The condition $i<h<j<k$ (or $k<i<h<j$) and the labelling scheme of $X\setminus\{a\}$ ensures that this is not possible if $x_h$ and $x_k$ lie respectively on $\ell_h$ and $\ell_k$. \label{f:quasi-modules-are-ordered}}
  \end{figure}
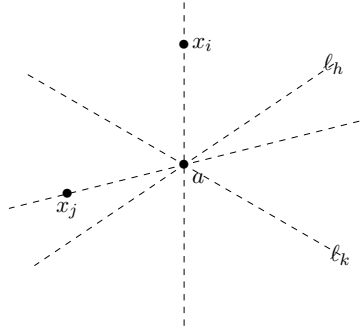

  If we denote $w_k$ the number of quasi-modules of size $k$ of the
  chirotope $\chi$, then the number of pairs $(W,a)$ as above is
  $\sum_{k=0}^n k\, w_k$. Summing up, we have
  \[ \sum_{k \ge \varepsilon n}w_k \le \frac1{\varepsilon n}\sum_{k=0}^n k w_k \le \frac{n^3}{\varepsilon n} = \frac{n^2}{\varepsilon},\]
  as announced.
\end{proof}

We conclude this section with the following lemma, which will be useful later.
\begin{lemma}\label{lem:extreme de weak}
  Let $\chi$ be a chirotope on $X$ and $W$ be a quasi-module 
  of $\chi$. There exists a unique pair $\{w_1,w_2\} \subseteq W$ such
  that for all $a \in X\setminus W$ we have
  \begin{equation}\label{eq:extreme_weak}
     \forall w \in W\setminus \{w_1,w_2\}, \quad \chi(a,w_1,w)= \chi(a,w,w_2) \text{ and is independent of } w.
   \end{equation}
\end{lemma}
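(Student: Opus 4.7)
The strategy is to construct the pair $\{w_1,w_2\}$ from a single external reference point and then transfer the construction to every other external point via the quasi-module property. Fix an arbitrary $a^* \in X\setminus W$ and consider the restricted chirotope $\chi|_{\{a^*\}\cup W}$. I will first show that $a^*$ is extreme in this restriction; then by \cref{obs:a-} there are unique elements $a^{*}_{+}, a^{*}_{-} \in W$ with $\chi(a^*, a^{*}_{+}, z) = +1$ and $\chi(a^*, a^{*}_{-}, z) = -1$ for every $z \in W$ distinct from them. Setting $w_1 \eqdef a^{*}_{+}$ and $w_2 \eqdef a^{*}_{-}$ then gives $\chi(a^*, w_1, w) = \chi(a^*, w, w_2) = +1$ for every $w \in W\setminus\{w_1, w_2\}$, proving the formula at $a = a^*$.

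The main obstacle is the extremality of $a^*$ in $\{a^*\}\cup W$. Suppose, for contradiction, that it fails: by \cref{lem:caratheodory} there exist $p, q, r \in W$ with $\chi(a^*, p, q) = \chi(a^*, q, r) = \chi(a^*, r, p) = 1$. Since the quasi-module bipartition of $X\setminus W$ is nontrivial, one can pick $a' \in X\setminus W$ in the part opposite to $a^*$; the quasi-module property then forces $\chi(u, v, a') = -\chi(u, v, a^*)$ for every pair $\{u,v\}\subseteq W$. Applying the interiority axiom to $t = a^*$ with $(x, y, z) = (p, q, r)$ yields $\chi(p, q, r) = +1$, while applying it to $t = a'$ with $(x, y, z) = (r, q, p)$ yields $\chi(r, q, p) = +1$, i.e.~$\chi(p, q, r) = -1$. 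This contradiction establishes the extremality.

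For the transfer, take any $a \in X\setminus W$. The quasi-module property furnishes a global sign $\epsilon_a \in \{\pm 1\}$, depending only on which part of the bipartition contains $a$, such that $\chi(u, v, a) = \epsilon_a\,\chi(u, v, a^*)$ for every pair $\{u,v\}\subseteq W$. Specialising at $(u,v) = (w_1, w)$ and at $(u,v) = (w, w_2)$ gives $\chi(a, w_1, w) = \chi(a, w, w_2) = \epsilon_a$, constant in $w \in W\setminus\{w_1, w_2\}$, as required. Uniqueness follows by a short case analysis: if a pair $\{w_1', w_2'\}$ satisfies the statement at $a^*$, the formula forces $\chi(a^*, w_1', \cdot)$ to be constant on $W\setminus\{w_1', w_2'\}$, and comparing this with the extremal signs $\chi(a^*, a^{*}_{+}, \cdot) \equiv +1$ and $\chi(a^*, a^{*}_{-}, \cdot) \equiv -1$ forces $w_1' \in \{a^{*}_{+}, a^{*}_{-}\}$, and symmetrically $w_2' \in \{a^{*}_{+}, a^{*}_{-}\}$, hence $\{w_1', w_2'\} = \{w_1, w_2\}$.
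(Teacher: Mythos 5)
Your proof takes essentially the same route as the paper's: establish that the reference element $a^* \in X\setminus W$ is extreme in the restriction of $\chi$ to $W\cup\{a^*\}$ (via Carathéodory plus interiority applied to an external point on each side of the quasi-module), extract $\{w_1,w_2\}$ as the two extremes of the resulting total order, and transport the identity to all other $a \in X\setminus W$ via the quasi-module property. The paper phrases the extraction of $\{w_1,w_2\}$ through the total order $<_{a^*}$ and its min/max rather than through \cref{obs:a-}, but these are the same objects; your uniqueness step (``comparing with the extremal signs forces $w_1'\in\{a^*_+,a^*_-\}$'') implicitly needs the transitivity of $<_{a^*}$ to rule out the pair $\{w_1',w_2'\}$ being ordered the ``wrong way'', which is worth making explicit, but the paper's own wording is equally terse and the argument closes correctly.
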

\begin{proof}
  The statement is trivial if $W$ has size two. Otherwise, fix an arbitrary point
  $b$ in $ X\setminus W$.  We claim that $b$ is extremal in the
  restriction of $\chi$ to $W \cup \{b\}$.  Otherwise, by
  \cref{lem:caratheodory}, $b$ would be contained in a triangle
  $w_1w_2w_3$ of points in $W$ i.e.~one could find $w_1,w_2,w_3$ in
  $W$ such that
  \[\chi(b,w_1,w_2)=\chi(b,w_2,w_3)=\chi(b,w_3,w_1)=1.\]
  By the interiority axiom, this implies $\chi(w_1,w_2,w_3)=1$.
  But from the definition of quasi-module,
  there must exist a point $c  \in X\setminus W$
  such that, for all $w,w'$ in $W$, $\chi(b,w,w')=\chi(c,w',w)$.
  In particular,
  \[\chi(c,w_2,w_1)=\chi(c,w_3,w_2)=\chi(c,w_1,w_3)=1.\]
  Using again the interiority axiom, we get $\chi(w_2,w_1,w_3)=1$,
  leading to a contradiction.
  Hence, $b$ is indeed extremal in the restriction of $\chi$ to $W \cup \{b\}$.

  Therefore, by \cref{lem:order}, the relation $w <_b w' \Leftrightarrow \chi(b,w,w')=1$
  defines a strict total order on $W$. Call $w_1$ and $w_2$ the minimal and maximal
  elements of $W$ for this order relation.
  For all $w \in W\setminus \{w_1,w_2\}$, we have $w_1 <_b w <_b w_2$, i.e.~
  \[\chi(b,w_1,w)= \chi(b,w,w_2)=1.\]
  In particular, \cref{eq:extreme_weak} holds for $a=b$.
  It also for other elements $a \in X\setminus W$, using the fact that $W$ is a quasi-module.
  This proves the existence of the pair $(w_1,w_2)$.

  The uniqueness also follows from the fact that $<_b$ is a total order.
  Only the pair constituted by the two extrema of this total order
  can satisfy \cref{eq:extreme_weak} for $a=b$.
\end{proof}

\noindent
We refer to the two elements $\{w_1,w_2\}$ given by
Lemma~\ref{lem:extreme de weak} as the \emph{antipodal elements} of the
quasi-module $W$.
For example, the antipodal elements of the weak module $\{a,b,c,d\}$
in \Cref{f:weakmod} are $b$ and $d$.

\subsection{Substituting a realizable chirotope along a segment}\label{sec:subs along a segment}

Let $\chi$ be a realizable chirotope on a set $X$ and let $a \in X$ be extreme
for $\chi$.
As we saw in \cref{lem:order}, and immediately after,
we can define $a_+$ (resp. $a_-$) as the unique element of $X$ such that $\chi(a,a_+,x)=1$ for every $x \in X \setminus \{a,a_+\}$
(resp. such that $\chi(a_-,a,x)=1$ for every $x \in X \setminus
\{a,a_-\}$).
(Geometrically, $a_-$, $a$ and $a_+$ are consecutive points on
the convex hull of $\chi$.)
We refer to $a_+$ (resp.~$a_-$) as the {\em successor}
(resp.~the {\em predecessor}) of $a$ on the convex hull of $\chi$.

\begin{definition}\label{def:quasi-module}
  Let $\chi$ be a realizable chirotope on a set $X$ and let $a$ be an element of
  $X$ extreme for~$\chi$. Let $a_+$ and ${a_-}$ be, respectively, the
  successor and predecessor of $a$ on the convex hull of $\chi$.  A
  $\delta$-squeezing of $(\chi,a)$ is a realization $\cal P$ of $\chi$
  such that:
  \begin{itemize}
    \item $\mathfrak p_a = (0,1)$, $\mathfrak p_{a_+}=(-1,0)$ and $\mathfrak p_{a_-}=(1,0)$,
    \item the $y$-coordinates of every other point of $\cal P$ is in $[-\delta,\delta]$, and
    \item every line going through two points of $\cal P \setminus \{\mathfrak p_a\}$ has slope in $[-\delta,\delta]$.
  \end{itemize}
\end{definition}

\noindent
This definition is illustrated in \Cref{f:delta_squeezing}.
\begin{figure}[h]
\[\includegraphics{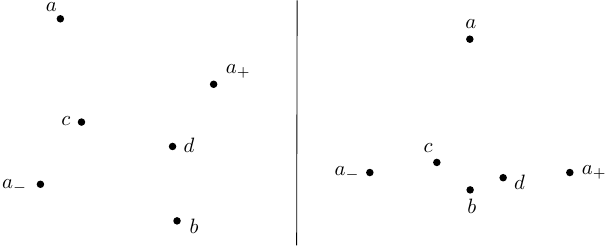}\]
\caption{Two realization of the same chirotope $\chi$, the second one being a 
 $\delta$-squeezing of $(\chi,a)$ for $\delta \approx .5$ (the slope of the line $bc$ is close to $.5$).\label{f:delta_squeezing}}
\end{figure}

\begin{lemma}
  \label{lem:existence_squeezing}
Let $\chi$ be a realizable chirotope on a set $X$ and let $a$ be an element of
$X$ extreme for~$\chi$. For every $\delta>0$, the pair $(\chi,a)$
admits a $\delta$-squeezing.
\end{lemma}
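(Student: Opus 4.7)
The plan is to normalize an arbitrary realization of $\chi$ first by an affine map and then squeeze the non-$a$ points toward the $x$-axis by a one-parameter family of projective transformations fixing the three reference points $(0,1)$, $(-1,0)$ and $(1,0)$. I would start from any realization $\mathcal{Q}$ of $\chi$. By \cref{lem:order}, the points $a_-,a,a_+$ are consecutive vertices of the convex hull in counterclockwise order, so $\chi(a,a_+,a_-)=+1$; since this matches the orientation of $\bigl((0,1),(-1,0),(1,0)\bigr)$, the unique affine map sending $\mathfrak q_a,\mathfrak q_{a_+},\mathfrak q_{a_-}$ to those three points will be orientation-preserving and transport $\mathcal Q$ into a new realization $\mathcal{P}'$ of $\chi$. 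Since $\mathfrak p'_a=(0,1)$ is then the unique point of the convex hull with maximal $y$-coordinate, every other point of $\mathcal P'$ has $y$-coordinate strictly less than $1$.

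The squeezing will be performed by the projective family
\[ \Phi_\lambda(x,y) \eqdef \left( \frac{x}{1-(1-\lambda)y}, \, \frac{\lambda y}{1-(1-\lambda)y} \right), \qquad \lambda \in (0,1). \]
I would check directly that $\Phi_\lambda$ fixes the three reference points; that the denominator $1-(1-\lambda)y$ remains strictly positive at every point of $\mathcal P'$, because its only vanishing line $y=1/(1-\lambda)>1$ lies strictly above the whole realization; and that the $3\times 3$ homogeneous matrix defining $\Phi_\lambda$ has positive determinant $\lambda$. Following the same argument as in the proof of \cref{lem:extreme_in_unbounded_cell}, these conditions imply that $\Phi_\lambda$ preserves the chirotope of $\mathcal P'$.

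It remains to verify that, for $\lambda$ small enough, $\Phi_\lambda(\mathcal P')$ satisfies the two squeezing conditions. For every point $(x_i,y_i)$ of $\mathcal P'\setminus\{\mathfrak p'_a\}$ the denominator $D_i\eqdef 1-(1-\lambda)y_i$ is bounded below by a positive constant independent of $\lambda$ (since the finitely many $y_i$ are bounded away from $1$), so the new $y$-coordinate $\lambda y_i/D_i$ tends uniformly to $0$ as $\lambda\to 0$. For the slope between the images of any two such points $(x_i,y_i)$ and $(x_j,y_j)$, a short calculation using $y_iD_j-y_jD_i=y_i-y_j$ will yield
\[ \frac{\lambda(y_i-y_j)}{(x_i-x_j)+(\lambda-1)(x_iy_j-x_jy_i)}. \]
As $\lambda\to 0$ the numerator vanishes while the denominator tends to $x_i(1-y_j)-x_j(1-y_i)$; this limit is nonzero because its vanishing would express the collinearity of $(0,1)=\mathfrak p'_a$, $(x_i,y_i)$ and $(x_j,y_j)$, which is forbidden by general position. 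Since only finitely many pairs are involved, the slopes tend uniformly to $0$, and any sufficiently small $\lambda$ will satisfy both conditions of a $\delta$-squeezing simultaneously.

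The main obstacle is that affine maps fixing three non-collinear points reduce to the identity, so once the reference points are pinned down the squeezing must be achieved by a nontrivial \emph{projective} family; the only real subtlety is then to check simultaneously that this family preserves the chirotope (by controlling the sign of the denominators) and that both new $y$-coordinates and pairwise slopes vanish with the same parameter $\lambda$, the latter relying crucially on the fact that, by general position, no line through $(0,1)$ contains two other points of $\mathcal P'$.
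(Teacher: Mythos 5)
Your proof is correct and takes a genuinely different route from the paper's. The paper's proof invokes \cref{lem:extreme_in_unbounded_cell} to obtain a realization in which $\mathfrak q_a$ lies in an unbounded cell of the secant-line arrangement, affinely normalizes $\mathfrak q_{a_+}$ and $\mathfrak q_{a_-}$, then moves $\mathfrak q_a$ along a ray $\mathfrak q_a + t\vec u$ inside that unbounded cell and applies a one-parameter \emph{affine} shear-and-scale family $f_{\alpha_t,\beta_t}$ to renormalize $\mathfrak q_a$ back to $(0,1)$; as $t\to\infty$ the parameters $(\alpha_t,\beta_t)$ converge and the other points are flattened. You instead start from an \emph{arbitrary} realization, affinely pin $a,a_+,a_-$ to the three reference points, and flatten the rest with a one-parameter \emph{projective} family $\Phi_\lambda$ fixing those three points; the chirotope is preserved because the denominators $1-(1-\lambda)y_i$ keep a constant positive sign on $\mathcal P'$ (a consequence of the two hull-edge inequalities $y<x+1$ and $y<1-x$ forced by $\chi(a,a_+,\cdot)=\chi(a_-,a,\cdot)=1$, which give $y<1$ for every non-$a$ point) and the defining $3\times 3$ matrix has positive determinant $\lambda$. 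Both proofs then reduce to checking that $y$-coordinates and pairwise slopes vanish in the limit, with the nondegeneracy of the slope denominator coming from general position (no other two points collinear with $(0,1)$). The gain of your approach is self-containedness: you do not need \cref{lem:extreme_in_unbounded_cell} as a separate result, only the short ``projective transform with constant-sign denominators preserves the chirotope'' computation embedded in its proof. The paper's route pays the cost of proving the unbounded-cell lemma upfront but in exchange keeps the normalizing family affine, avoiding any rational expressions.
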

\begin{proof}
  Let $\cal Q$ be a realization of $\chi$ in which $\mathfrak q_a$ is in an
  unbounded cell $\Sigma$ of the line arrangement $\big\{(\mathfrak q_y,\mathfrak q_z),
  z,y \in X \setminus \{a\} \big\}$ (such a realization exists by
  Lemma~\ref{lem:extreme_in_unbounded_cell}). Applying a translation,
  a rotation and a scaling, we can further assume that $\mathfrak q_{a_+}=(-1,0)$
  and $\mathfrak q_{a_-}=(1,0)$ while keeping $\mathfrak q_a$ in an unbounded cell $\Sigma$;
  note that $\mathfrak q_a$ (and thus $\Sigma$) lies above the $x$-axis. Let
  $\vec u = (u_x,u_y)$ be a direction such that $\mathfrak q_a + \R_{\ge 0} \vec
  u \subset \Sigma$. Note that $u_y$ can be chosen to be positive.

  \bigskip

  For any $\alpha \in \R$ and $\beta >0$, the transform
  $f_{\alpha,\beta}: (x,y) \mapsto (x+\alpha y, \beta y)$ preserves
  orientations and leaves the $x$-axis fixed. Moreover, for every
  point $(x_0,y_0)$ with $y_0>0$, there exists $\alpha \in \R$ such
  that $f_{\alpha,\frac1{y_0}}(x_0,y_0) = (0,1)$ (namely, $\alpha= - \tfrac{x_0}{y_0}$).

  \bigskip

  Now, let $t \in \R_{\ge 0}$ and consider the point $\mathfrak p(t) \eqdef \mathfrak q_a
  + t \vec u$. Let $\alpha_t$ and $\beta_t$ be such that
  $f_{\alpha_t,\beta_t}(\mathfrak p(t)) = (0,1)$ and consider the point set
  \[ \cal
  P(t) \eqdef \{f_{\alpha_t,\beta_t}(\mathfrak q) \colon \mathfrak q \in  \{\mathfrak p(t)\} \cup \cal Q \setminus \{\mathfrak q_a\}\}\]
  still labeled by $X$ (with $\mathfrak p(t)$ labeled by $a$ and all other labels unchanged). Since $\mathfrak p(t)$ and $\mathfrak q_a$ are both in $\Sigma$,
  $\{\mathfrak p(t)\} \cup \cal Q \setminus \{\mathfrak q_a\}$ has the same chirotope as
  $\cal Q$, that is $\chi$. Since
  $f_{\alpha_t,\beta_t}$ preserves orientations, $\cal P(t)$ is also a
  geometric realization of $\chi$.

  \bigskip

  It is easy to check that $(\alpha_t,\beta_t) \to (-u_x/u_y,0)$ as $t
  \to \infty$. Consequently, the $y$-coordinate of every point of
  $\cal P(t) \setminus \{(0,1)\}$ goes to $0$ as $t\to \infty$. Also, the
  slope of every line going through two points of $\cal P(t) \setminus
  \{(0,1)\}$ goes to $0$ as $t\to \infty$. Hence, for every
  $\delta>0$, for $t$ large enough, $\cal P(t)$ is a
  $\delta$-squeezing of $(\chi,a)$.
\end{proof}
\bigskip

We now use $\delta$-squeezings to construct larger realizable chirotopes out of
smaller ones.
\begin{enumerate}
\item We start with two realizable chirotopes $\chi$ on $X\cup\{x_1^*,x_2^*\}$ and $\xi$ on $Y \cup \{y^*\}$, where $X$ and $Y$ are disjoint and $x_1^*x_2^*$ is not an extreme edge\footnote{An extreme edge is an edge between two consecutive points
on the convex hull.} of $\chi$, while $y^*$ is extreme in $\xi$.
\item We pick some realization $\cal P$ of $\chi$ with $\mathfrak p_{x_1^*} = (-1,0)$ and $\mathfrak p_{x_2^*} = (1,0)$, and some $\delta$-squeezing $\cal Q$ of $(\xi,y^*)$, for some $\delta>0$ small enough.
\item We let $\kappa$ denote the chirotope on $X \cup Y$ realized by the point set $({\cal P} \setminus \{\mathfrak p_{x_1^*},\mathfrak p_{x_1^*}\})\cup ({\cal Q} \setminus \{{q}_{y^*}\})$.
\end{enumerate}
For $\delta>0$ small enough, $Y$ is a quasi-module for $\kappa$ and
(the labels of) $(-1,0)$ and $(1,0)$ are its antipodal elements. We
assume this is the case from now on. Let us stress that the chirotope
$\kappa$ depends not only on the chirotopes $\chi$ and $\xi$, but on
the actual realizations $\cal P$ and $\cal Q$ (see
Figure~\ref{fig:squeeze et realization}). We refer to the above
construction as {\em substituting $\xi$ into $\chi$ along the segment
  $x_1^*x_2^*$ squeezing $\xi$ orthogonally to $y^*$}.

\begin{figure}
  \begin{center}
    \includegraphics{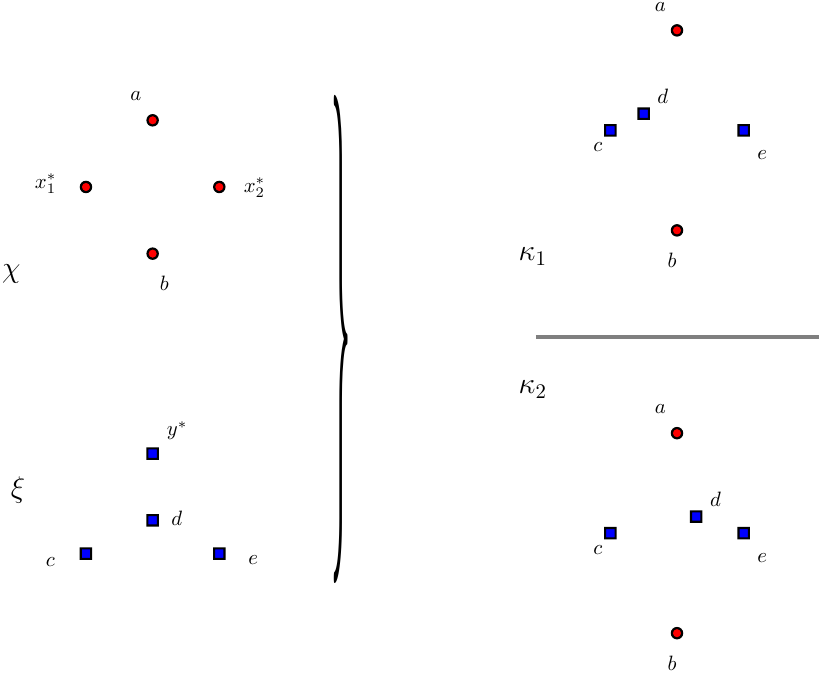}
\end{center}
\caption{Two chirotopes $\kappa_1$ and $\kappa_2$ obtained from the same substitution (substituting $\xi$ into $\chi$ along the segment $x_1^*x_2^*$ squeezing $\xi$ orthogonally to $y^*$), but using different realizations. Observe that $d$ is not on the same side of the line $ab$ in $\kappa_1$ and $\kappa_2$. \label{fig:squeeze et realization}}
\end{figure}

\begin{lemma}\label{lem:at most two}
  For any realizable chirotope $\kappa$ on $Z$ and any quasi-module $Y$ of
  $\kappa$, there exist a single chirotope $\chi$ on $(Z\setminus
  Y)\cup\{x_1^*,x_2^*\}$ and at most two chirotopes $\xi$ on $Y \cup
  \{y^*\}$ such that $\kappa$ can be obtained by the above
  construction (up to renaming the symbols $x_1^*$, $x_2^*$ and
  $y^*$).
\end{lemma}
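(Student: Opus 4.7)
The plan is to reverse-engineer the construction: given $\kappa$ on $Z$ with fixed quasi-module $Y$, I will extract both $\chi$ and $\xi$ from $\kappa$ by identifying geometric landmarks. The central step is the claim that, in any substitution producing $\kappa$ via a $\delta$-squeezing with $\delta$ small enough, the antipodal pair $\{w_1, w_2\}$ of $Y$ given by \cref{lem:extreme de weak} coincides with $\{y^*_+, y^*_-\}$, the two neighbors of $y^*$ on the convex hull of $\xi$.

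To justify this claim, I will use that a $\delta$-squeezing places $y^*_+$ at $(-1,0)$ and $y^*_-$ at $(1,0)$, and that all other elements of $Y$ lie in the strip $\{|y|\le\delta\}$ with $x$-coordinate in the open interval $(-1,1)$; the latter holds because $y^*_+$ and $y^*_-$ are convex-hull neighbors of $y^*$ in $\xi$, so in the squeezed realization every other point of $Y$ sits strictly between them along the $x$-axis. A direct determinant computation then shows that, for every $a\in Z\setminus Y$ and every $w\in Y\setminus\{y^*_+,y^*_-\}$, the signs $\kappa(a,y^*_+,w)$ and $\kappa(a,w,y^*_-)$ both reduce to $\sgn(y(\mathfrak p_a))$ once $\delta$ is smaller than the minimum of $|y(\mathfrak p_a)|/|x(\mathfrak p_a)\pm 1|$ over $a\in Z\setminus Y$. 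This is exactly property~\eqref{eq:extreme_weak}, so by the uniqueness part of \cref{lem:extreme de weak} we get $\{w_1,w_2\}=\{y^*_+,y^*_-\}$.

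Once $\{w_1,w_2\}$ is pinned down, $\chi$ follows immediately: in the combined realization, $\mathfrak q_{w_1}=(-1,0)$ and $\mathfrak q_{w_2}=(1,0)$ occupy the exact positions of $\mathfrak p_{x_1^*}$ and $\mathfrak p_{x_2^*}$, so the restriction of $\kappa$ to $(Z\setminus Y)\cup\{w_1,w_2\}$ equals $\chi$ after the relabeling $w_1\mapsto x_1^*$, $w_2\mapsto x_2^*$; the remaining freedom to swap $x_1^*\leftrightarrow x_2^*$ is exactly the permitted renaming, so $\chi$ is unique. Likewise $\xi|_Y=\kappa|_Y$, since $Y$ is realized identically in both chirotopes.

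The only remaining freedom concerns the signs $\xi(y^*,y_1,y_2)$ for $y_1,y_2\in Y$. The quasi-module property splits $Z\setminus Y$ into two nonempty parts $U_+,U_-$, and for $\delta$ small enough one has $\kappa(a,y_1,y_2)=\xi(y^*,y_1,y_2)$ whenever $\mathfrak p_a$ and $\mathfrak q_{y^*}=(0,1)$ lie on the same side of the $x$-axis, and the opposite sign otherwise. Since $\kappa$ determines the unordered bipartition $\{U_+,U_-\}$ but does not record which of the two parts played the role of the ``upper half-plane'' in the realization of $\chi$ used to build $\kappa$, the map $(y_1,y_2)\mapsto\xi(y^*,y_1,y_2)$ is determined up to a single global sign flip, giving at most two candidate chirotopes $\xi$, as required. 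The main technical obstacle will be the geometric input to the orientation computation in the second paragraph, namely that every $w\in Y\setminus\{y^*_+,y^*_-\}$ really has $x$-coordinate strictly in $(-1,1)$ in any $\delta$-squeezing; once that is in hand, everything else follows from reading off signs from the leading term in $\delta$.
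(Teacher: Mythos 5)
Your proof follows the same approach as the paper's: pin down the antipodal pair $\{w_1,w_2\}$ of $Y$ via \cref{lem:extreme de weak}, read off $\chi$ as the restriction of $\kappa$ to $(Z\setminus Y)\cup\{w_1,w_2\}$, and obtain the two candidates for $\xi$ from the two sides of the bipartition of $Z\setminus Y$ that $Y$ induces. The only difference is that you additionally verify the claim (stated but not proved in the paper's setup of the construction) that $\{w_1,w_2\}=\{y^*_+,y^*_-\}$, which is a reasonable elaboration. One imprecision there: the assertion that every other $w\in Y$ has $x$-coordinate strictly in $(-1,1)$ does not follow from hull adjacency of $y^*_\pm$ to $y^*$ alone (this only gives $x_w\in(-1-\delta,1+\delta)$); you also need the slope condition of the $\delta$-squeezing, which forces $|y_w|\le\delta|x_w\pm1|$ and hence (together with the sign of $\xi(y^*,y^*_\pm,w)$) rules out $x_w\notin(-1,1)$, and is likewise what makes your determinant estimate close with the stated $\delta$-threshold independent of~$w$.
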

\begin{proof}
  Suppose $\kappa$ and $Y$ are obtained by the above construction. Let
  $\{x_1^*, x_2^*\}$ denote the antipodal elements of $Y$, defined through \cref{lem:extreme de weak}.
  Then, $\chi$
  must coincide with the restriction of $\kappa$ to $(Z \setminus Y)
  \cup \{x_1^*,x_2^*\}$, so it is unique (up to renaming the symbols
  $x_1^*$ and $x_2^*$).

  Since $x_1^*,x_2^* \in Y$ and $Y$ is a quasi-module of $\kappa$, this
  pair defines a nontrivial bipartition, so there exist $y^*_+$ and
  $y^*_-$ in $(Z \setminus Y)$ such that $\chi(x_1^*,x_2^*,y^*_-) =
  -1$ and $\chi(x_1^*,x_2^*,y^*_+) = +1$. By construction, $\xi$ must
  coincide with the restriction of $\kappa$ to $Y \cup \{y^*_-\}$ or
  to $Y \cup \{y^*_+\}$. There is therefore at most two choices for
  the chirotope $\xi$ on $Y \cup \{y^*\}$ (up to renaming the symbol~$y^*$).
\end{proof}

\bigskip

We now use this construction for counting purposes.  Let $w_{n,k}$
denote the number of pairs $(\chi, W)$ where $\chi$ is a realizable chirotope on
$[n]$ and $W$ is a quasi-module of $\chi$ of size $k$. Recall that
$t_n^*$ denotes the number of realizable chirotopes on $[n]$ in which $n$ is
extreme. Let $t_n^{\div}$ denote the number of realizable chirotopes on $[n]$ in
which $n$ and $n-1$ do not form an extreme edge.

\begin{lemma}\label{lem:wtdiv}
  For any integers $n_1 \ge 2$, $n_2 \ge 2$, we have $w_{n_1+n_2,n_2} \ge \frac12 \binom{n_1+n_2}{n_1} t_{n_1+2}^{\div} t_{n_2+1}^*$.
\end{lemma}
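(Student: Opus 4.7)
The plan is to count triples $(Y, \chi, \xi)$ and apply the substitution construction from \Cref{sec:subs along a segment} to produce pairs $(\kappa, W)$; I will then show that this map is at most $2$-to-$1$ via an involution argument. Concretely, let $\mathcal T$ denote the set of triples $(Y, \chi, \xi)$ where $Y \subseteq [n_1+n_2]$ has size $n_2$, $\chi$ is a realizable chirotope on $([n_1+n_2] \setminus Y) \cup \{x_1^*, x_2^*\}$ (for two fixed external labels $x_1^*, x_2^*$) such that $\{x_1^*, x_2^*\}$ is not an extreme edge, and $\xi$ is a realizable chirotope on $Y \cup \{y^*\}$ (for a fixed external label $y^*$) such that $y^*$ is extreme. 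By straightforward relabeling, $|\mathcal T| = \binom{n_1+n_2}{n_1}\, t_{n_1+2}^{\div}\, t_{n_2+1}^*$. Applying the substitution along the segment $x_1^* x_2^*$, for a sufficiently small squeezing parameter $\delta$, each triple yields a chirotope $\kappa$ on $[n_1+n_2]$ with $W \eqdef Y$ as a quasi-module of size $n_2$.

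The core step is to bound the fiber size of this map. By \Cref{lem:at most two}, any pair $(\kappa, W)$ has at most two choices of $\chi$ (one for each labeling of the antipodal pair of $W$ as $(x_1^*, x_2^*)$) and at most two choices of $\xi$, giving at most four candidate preimages. I exhibit the involution $\iota \colon (Y, \chi, \xi) \mapsto (Y, \chi \circ \sigma, \xi \circ \tau)$, where $\sigma$ transposes $x_1^*$ and $x_2^*$, and $\tau$ transposes the successor $a_+$ and predecessor $a_-$ of $y^*$ on the convex hull of $\xi$. Geometrically, $\iota$ corresponds to reflecting the substituted realization through its vertical axis of symmetry: the swap $\sigma$ exchanges the roles of the endpoints $(-1,0)$ and $(1,0)$ in the realization of $\chi$, while $\tau$ re-labels the squeezed $Y$-points so that $a_+$ still sits at $(-1,0)$ and $a_-$ at $(1,0)$; the combined effect leaves $\kappa$ unchanged. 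The involution is fixed-point-free because the symmetry axiom forces $\chi(x_1^*, x_2^*, z) = -\chi(x_2^*, x_1^*, z)$ for any $z$. Moreover, a direct orientation check shows that the remaining two candidate preimages (in which only one of $\sigma, \tau$ is applied) yield a different chirotope: swapping $x_1^*, x_2^*$ in $\chi$ without applying $\tau$ places the squeezed $Y$-points at interchanged positions relative to the unchanged points of $X$, which flips many orientations of the form $\kappa(y, z, z')$ for $y \in Y$ and $z, z' \in [n_1+n_2] \setminus (Y \cup \{x_1^*,x_2^*\})$. Hence each fiber has exactly two elements, and $w_{n_1+n_2, n_2} \ge |\mathcal T|/2$, which is the stated bound.

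The main obstacle is verifying rigorously that $\iota$ preserves the output $\kappa$ and that the two mixed preimages produce distinct chirotopes. Both claims reduce to tracking how the squeezed positions of the antipodal pair interact with $\chi$'s realization under the horizontal reflection, which should follow from a careful case analysis in the spirit of the proof of \Cref{prop:Realization_Substitution}.
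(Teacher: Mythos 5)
Your overall strategy — counting triples, substituting along a segment, bounding the fiber — is exactly the paper's, and your computation of $|\mathcal T|$ is fine. The gap is in the fiber argument. The proposed involution $\iota:(Y,\chi,\xi)\mapsto(Y,\chi\circ\sigma,\xi\circ\tau)$ does \emph{not} preserve the output chirotope. Concretely, $\xi\circ\tau$ restricted to $Y$ (triples avoiding $y^*$) is $\xi|_Y$ with the two labels $a_+,a_-$ transposed; since $\kappa|_W = \xi|_Y$ by construction, applying $\iota$ replaces $\kappa$ by $\kappa\circ(a_+\,a_-)$, which is a genuinely different labeled chirotope whenever $|Y|\ge 3$ (take any $w\in Y\setminus\{a_+,a_-\}$: $\kappa(a_+,a_-,w)$ flips sign). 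The geometric picture you invoke is also off: reflecting a realization in the $y$-axis reverses all orientations, so the reflected point set realizes $-\kappa$, not $\kappa$; there is no vertical axis of symmetry to exploit in a generic substituted realization. Because $\iota$ fails to fix $\kappa$, the pairing of the ``four candidate preimages'' into two $\iota$-orbits, which is the load-bearing step of your argument, collapses.

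What actually makes the fiber small is different and simpler. You correctly identify (via \cref{lem:at most two}) that there are at most two candidates for $\xi$: the restrictions of $\kappa$ to $W\cup\{y^*_+\}$ and to $W\cup\{y^*_-\}$. The missing observation is that, once $\xi$ is chosen, the renaming of the antipodal pair as $(x_1^*,x_2^*)$ is forced by the construction, not free: in a $\delta$-squeezing of $(\xi,y^*)$, the successor $a_+$ of $y^*$ is placed at $(-1,0)$, which is precisely where $\mathfrak p_{x_1^*}$ sat in the realization of $\chi$, so the preimage must rename $a_+\mapsto x_1^*$ and $a_-\mapsto x_2^*$. (Swapping $y^*_+\leftrightarrow y^*_-$ reverses the order $<_{\xi,y^*}$, hence swaps which antipodal element is the successor, hence swaps the forced renaming — this is why the two surviving triples differ \emph{both} in $\xi$ and in the labeling of $\chi$.) Thus each $(\kappa,W)$ has at most two preimages, one per admissible $\xi$, with no need for a $4\to 2$ involution. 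Note also that ``at most two'' already yields the lemma's factor $\tfrac12$; you do not need (and have not established) that the fiber has size exactly two.

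Finally, the stated two candidates for $\xi$ differ by negating all triples containing $y^*$ while agreeing on $Y$; this is not a relabeling of $Y\cup\{y^*\}$, so $\xi\circ\tau$ (for any transposition $\tau$ of $Y$) is not the other candidate. That is another reason the involution route cannot be patched.
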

\begin{proof}
  Let $n = n_1 + n_2$. Start with a triple $(I,\chi,\xi)$ where $I
  \subset [n]$ has size $n_1$, $\chi$ is a realizable chirotope on $[n_1+2]$ such
  that $n_1+1$ and $n_1+2$ do not form an extreme edge, and $\xi$ is a realizable
  chirotope on $[n_2+1]$ in which $n_2+1$ is extreme. With our
  notations, the number of such triples is $\binom{n_1+n_2}{n_1}
  t_{n_1+2}^{\div} t_{n_2+1}^*$.

  \bigskip

  Let $x_1^*, x_2^*$ and $y^*$ be formal symbols not in $[n]$. For
  each such triple $(I,\chi,\xi)$, we do the following operation:
  \begin{itemize}
  \item We let $\tilde \chi$ denote the chirotope on $I \cup \{x_1^*,x_2^*\}$ obtained from $\chi$ by relabeling $[n_1]$ increasingly into $I$ and mapping $n_1+1 \mapsto x_1^*$ and $n_1+2 \mapsto x_2^*$.
  \item We let $\tilde \xi$ denote the chirotope on $([n] \setminus I) \cup \{y^*\}$ obtained from $\xi$ by relabeling $[n_2]$ increasingly into $[n]\setminus I$ and mapping $n_2+1 \mapsto y^*$.
  \item We substitute $\tilde \xi$ into $\tilde \chi$ along the segment $x_1^*x_2^*$ squeezing $\tilde \xi$ orthogonally to $y^*$.
  \end{itemize}
  The result of this operation is a pair $(\kappa,W)$ where $\kappa$
  is a realizable chirotope on $[n]$ and $W$ is a quasi-module. By
  Lemma~\ref{lem:at most two}, each pair $(\kappa,W)$ is the image of
  at most two triples $(I,\chi,\xi)$, which concludes the proof.
\end{proof}

We can now complete the proof of Lemma~\ref{lem:bounding-d-large}, that is, that for $n \ge \frac3{\varepsilon}$,
\[ t_n\geq \frac{3K\varepsilon^7}{4}\, n^3 \, d^{\rm large}_n \qquad \text{where} \qquad  d^{\rm large}_n=\sum_{\substack{n_1+n_2=n \\ \varepsilon n \le n_1,n_2 \le (1-\varepsilon) n}} \binom{n}{n_1}\, \frac{t_{n_1+1}\, t_{n_2+1}}{(n_1+1)\,(n_2+1)}. \]

\begin{proof}[Proof of Lemma~\ref{lem:bounding-d-large}]
  Lemma~\ref{l:tn et tn*} gives that $t_n^* \ge \frac3n
  t_n$. Furthermore, adapting the proof of Lemma~\ref{l:tn et tn*}
  yields
  \[ \frac{t_n^{\div}}{t_n} = \frac2{n(n-1)} \Ex[\nee(\chi)]\]
  where $\nee(\chi)$ denotes the number of pairs of points of $\chi$
  which are not extreme edges.
 For $n \ge 5$, we have
  \[ \Ex[\nee(\chi)] \ge \binom{n}2 - n \ge \frac12\binom{n}2\]
  and hence $t_n^{\div} \ge \frac12 t_n$. Now, by Lemmas~\ref{lem:counting_weak_modules} and~\ref{lem:wtdiv} we have
  \[\begin{aligned}
  \frac{n^2}{\varepsilon} t_n \ge \sum_{\substack{n_1+n_2=n \\ \varepsilon n \le n_1,n_2 \le (1-\varepsilon) n}} w_{n,n_2} & \ge  \sum_{\substack{n_1+n_2=n \\ \varepsilon n \le n_1,n_2 \le (1-\varepsilon) n}}\frac12 \binom{n}{n_1} t_{n_1+2}^{\div} t_{n_2+1}^*\\
  &\ge \sum_{\substack{n_1+n_2=n \\ \varepsilon n \le n_1,n_2 \le (1-\varepsilon) n}}\frac3{4(n_2+1)} \binom{n}{n_1} t_{n_1+2} t_{n_2+1}.
  \end{aligned}
  \]
  (The inequality $t_{n_1+2}^\div \ge \frac12t_{n_1+2}$ requires $n_1
  + 2 \ge 5$, which is ensured by the assumption $n \ge
  \frac{3}{\varepsilon}$.) By Lemma~\ref{lem:bound-extensions}, there
  exists a universal constant $K>0$ such that, for all $n_1 \ge 2$, we
  have $t_{n_1+2} \ge K \, (n_1+1)^4 \, t_{n_1+1}$. Substituting into
  the previous inequality and using that $n_1 +1 \ge \varepsilon n$, we obtain as desired
  \begin{align*}
  t_n & \ge \frac{\varepsilon}{n^2} \frac{3K}4 \sum_{\substack{n_1+n_2=n \\ \varepsilon n \le n_1,n_2 \le (1-\varepsilon) n}} \binom{n}{n_1} (n_1+1)^5\frac{t_{n_1+1} t_{n_2+1}}{(n_1+1)(n_2+1)}\\
  & \ge \varepsilon^6n^3 \frac{3K}4 \sum_{\substack{n_1+n_2=n \\ \varepsilon n \le n_1,n_2 \le (1-\varepsilon) n}} \binom{n}{n_1} \frac{t_{n_1+1} t_{n_2+1}}{(n_1+1)(n_2+1)}.
  \qedhere  \end{align*}
\end{proof}

\section{A bijection to decompose triangulations through one bowtie product}
\label{sec:triangulations}

In this section, we consider a decomposable chirotope
$\kappa=\chi\join{x^*}{y^*}\xi$, and relate the triangulations of
$\kappa$ to those of $\chi$ and $\xi$. We show that each triangulation
of $\kappa$ projects into triangulations of $\chi$ and $\xi$, and that
conversely every pair of triangulations of $\chi$ and $\xi$ can be
obtained as projections of (possibly many) triangulations of $\kappa$.
We establish a bijection between these pre-images and
maximal\footnote{In what follows, inclusion-maximality is equivalent
to being of maximal cardinality.}  sets of noncrossing edges between
appropriate sets.

{\em Note:} We recall from the introduction that the notion of crossing edges
can be read on the chirotope, giving meaning to the notion that $xy$ and $zt$ cross in a 
(possibly abstract) chirotope $\chi$, if $x$,$y$,$z$ and $t$ are elements of the ground set of $\chi$.
If there is no ambiguity on which chirotope $\chi$ we are considering, we shall simply say that $xy$ and $zt$ cross.

\subsection{Statement of the bijection}

In the whole section, we let $\kappa=\chi\join{x^*}{y^*}\xi$ where
$\chi$ and $\xi$ are chirotopes on $X\cup\{x^*\}$ and $Y\cup\{y^*\}$,
with $X$ and $Y$ disjoint, $x^* \notin X$ being extreme in $\chi$ and
$y^* \notin Y$ being extreme in $\xi$. As a preamble, note that for
$x_1,x_2,x_3,x_4$ in $X$, the pairs $x_1x_2$ and $x_3x_4$ cross in
$\chi$ if and only if they cross in $\kappa$, so we shall simply say
that $x_1x_2$ and $x_3x_4$ cross, without making the underlying
chirotope explicit.

\medskip

Let $\pi_{Y \to x^*}$ denote the map that sends every element of $Y$ to
$x^*$. We extend this map to sets of edges by putting, for any
triangulation $T$ of $\kappa$,
\[ \pi_{Y \to x^*}(T) \eqdef \{\{\pi_{Y  \to x^*}(a),\pi_{Y \to x^*}(b)\} \colon \{a,b\} \in T\} \setminus \{\{x^*,x^*\}\}.\]
We define $\pi_{X \to y^*}$ analogously. Given a set $T$ of edges of
$\chi\join{x^*}{y^*}\xi$, not necessarily a triangulation, and two
sets $A$ and $B$, we let $T_{AB}$ denote the set of edges of $T$ with
one element in $A$ and the other in $B$; {we say that such an edge is
  {\em between $A$ and $B$}}. As illustrated in
Figure~\ref{fig:decomposition_triangulation}, any triangulation $T$ of
$\kappa$ decomposes into triangulations of $\chi$ and $\xi$ and a set
$T_{XY}$ of noncrossing edges. The aim of this section is to prove
that this decomposition is a bijection.
\begin{figure}[h]
\includegraphics[scale=.75]{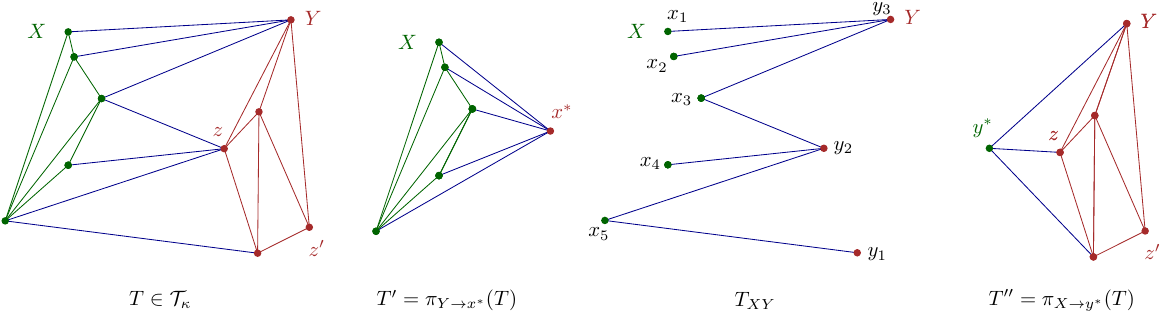}
\caption{A triangulation $T$ of $\kappa=\chi\join{x^*}{y^*}\xi$, the
  projection of $T$ on $X\cup\{x^*\}$, the set $T_{XY}$, and the
  projection of $T$ on
  $Y\cup\{y^*\}$.\label{fig:decomposition_triangulation}}
\end{figure}

Recall that ${\cal T}_\alpha$ denotes the set of triangulations of a
chirotope $\alpha$. Given $T \in {\cal T}_\alpha$ and an element $x$
of the ground set of $\alpha$, we write $N_T(x)$ for the neighborhood
of $x$ in $T$ and $\deg_{T}(x)$ for the degree of $x$ in $T$.  We
extend the notation $N_T(x)$ and $\deg_{T}(x)$ in the obvious manner
when $T$ is a (not necessarily maximal) set of noncrossing edges.

\begin{proposition}\label{p:bij}
  For any triangulation $T$ of $\kappa$, the sets $T' \eqdef \pi_{Y
    \to x^*}(T)$ and $T'' \eqdef \pi_{X \to y^*}(T)$ are
  triangulations of $\chi$ and $\xi$, respectively, and $T_{XY}$ is a
  {maximal} set of noncrossing edges between $N_{T'}(x^*)$ and
  $N_{T''}(y^*)$.

  \smallskip

  Conversely, for any triangulations $T'$ of $\chi$ and $T''$ of $\xi$
  and any maximal set $H$ of noncrossing edges between $N_{T'}(x^*)$
  and $N_{T''}(y^*)$, there exists a unique triangulation $T \in
  {\T}_{\kappa}$ such that $T' = \pi_{Y \to x^*}(T)$, $T'' = \pi_{X
    \to y^*}(T)$ and $H = T_{XY}$.
\end{proposition}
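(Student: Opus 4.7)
The plan is to exhibit an explicit inverse $\Psi$ to the decomposition map $\Phi \colon T \mapsto (\pi_{Y\to x^*}(T), \pi_{X\to y^*}(T), T_{XY})$ and verify that both land in the intended spaces. Concretely, I define
\[
  \Psi(T', T'', H) \eqdef (T' \setminus E_{x^*}) \cup (T'' \setminus E_{y^*}) \cup H,
\]
where $E_{x^*}$ and $E_{y^*}$ are the edges incident to $x^*$ in $T'$ and to $y^*$ in $T''$. The identities $\Psi \circ \Phi = \mathrm{id}$ and $\Phi \circ \Psi = \mathrm{id}$ follow by unpacking definitions (for instance $\pi_{Y\to x^*}(T) \setminus E_{x^*} = T_{XX}$ and symmetrically), so the proof reduces to showing that $\Phi(T)$ is a valid triple and that $\Psi(T',T'',H) \in \mathcal{T}_\kappa$.

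I would handle non-crossingness by case analysis on pairs of edges, using the identities from \Cref{eq:def_bowtie}. For $\Phi(T)$: two edges of $T'$ are either both $XX$-edges (whose crossing test in $\chi$ coincides with the one in $\kappa$), or both pass through $x^*$ (and share a vertex), or one of each kind---in the mixed case, the identity $\kappa(x_1, x_2, y) = \chi(x_1, x_2, x^*)$ together with $\kappa(x, y, x_i) = \chi(x, x^*, x_i)$ turns a hypothetical $\chi$-crossing of $\{x_1, x_2\}$ and $\{x, x^*\}$ into a forbidden $\kappa$-crossing of $\{x_1, x_2\} \in T_{XX}$ and $\{x, y\} \in T_{XY}$. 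The corresponding verification for $\Psi(T',T'',H)$ proceeds symmetrically; the only new case is an $XX$-edge vs.\ a $YY$-edge, which can never cross in $\kappa$ because $Y$ is a module of $\kappa$, giving $\kappa(x_1, x_2, y_1) = \chi(x_1, x_2, x^*) = \kappa(x_1, x_2, y_2)$.

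Maximality is handled by an Euler-type edge count. By \Cref{p:whenischirotope}(iii) the numbers of extreme points satisfy $h = h' + h'' - 2$, so $|T| = 3(|X|+|Y|) - 3 - h = 3(|X|+|Y|) - 1 - h' - h''$ for any $T \in \mathcal{T}_\kappa$. Setting $p = \deg_{T'}(x^*)$, $q = \deg_{T''}(y^*)$ and decomposing $|T| = |T_{XX}| + |T_{XY}| + |T_{YY}|$, I combine three bounds: $|T_{XX}| \le 3|X| - h' - p$ (from $T'$ being non-crossing in $\chi$), $|T_{YY}| \le 3|Y| - h'' - q$ symmetrically, and the classical bound $|T_{XY}| \le p + q - 1$ for a non-crossing bipartite edge set between two convex arcs of sizes $p$ and $q$. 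Their sum equals $|T|$ exactly, forcing all three inequalities to be tight; hence $T'$ and $T''$ are triangulations and $T_{XY}$ is a maximal non-crossing set. The same count, read in the other direction, shows $|\Psi(T',T'',H)| = 3(|X|+|Y|) - 3 - h$, so the non-crossing set $\Psi(T',T'',H)$ has maximum possible size and is therefore a triangulation of $\kappa$.

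The main obstacle is the bipartite bound $|T_{XY}| \le p + q - 1$, which rests on the fact that in $\kappa$ the sets $N^X(T, Y) \subseteq X$ and $N^Y(T, X) \subseteq Y$ form two contiguous arcs of a common convex sub-polygon. This will follow from the module property of $Y$ in $\kappa$ (all $Y$-points lie on a single side of every line spanned by two $X$-points) combined with the cyclic ordering of extreme elements provided by \Cref{lem:order}. Once this structural fact is in place, a non-crossing set of between-arc edges corresponds to a triangulation of this sub-polygon using only cross-arc diagonals, and the standard edge count for such polygon triangulations yields exactly $p + q - 1$ edges, closing the argument.
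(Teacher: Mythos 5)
Your overall architecture (explicit inverse map $\Psi$, case analysis on crossings driven by the bowtie identities, Euler-count argument forcing all three bounds to be tight) reproduces the paper's proof almost line by line, with the role of your ``case analysis'' played by \cref{l:cross} and the role of your bipartite bound played by \cref{prop:bijection_triang_concave}.

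There is, however, a wrong claim at the step you yourself flag as ``the main obstacle.'' The assertion that $N_{T'}(x^*)$ and $N_{T''}(y^*)$ ``form two contiguous arcs of a common convex sub-polygon'' of $\kappa$ is false: these neighbor sets may contain points that are interior in $\kappa$. For example, take $Y\cup\{y^*\}$ with some $y_2\in Y$ interior to the hull of $\xi$; in many triangulations $T''$ of $\xi$ the vertex $y_2$ is a neighbor of $y^*$, so $y_2\in N_{T''}(y^*)$, yet $y_2$ is interior in $\kappa$ (since by \cref{prop:extreme_stay_extreme} the extreme points of $\kappa$ in $Y$ are exactly those of $\xi$ other than $y^*$). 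What actually drives $|T_{XY}|\le p+q-1$ is not a geometric convexity fact but a purely order-theoretic crossing criterion: for $x_1,x_2\in X$ and $y_1,y_2\in Y$ with $y_1<_{\xi,y^*}y_2$, the segments $x_1y_1$ and $x_2y_2$ cross in $\kappa$ if and only if $x_1<_{\chi,x^*}x_2$ (this is \cref{l:cross}(ii), and it does follow mechanically from \cref{eq:def_bowtie}). That criterion makes the bipartite non-crossing edge sets between $N_{T'}(x^*)$ and $N_{T''}(y^*)$ \emph{combinatorially} isomorphic to non-crossing sets between two opposite arcs of a convex polygon, and the composition bijection of \cref{prop:bijection_triang_concave} then pins their size at $p+q-1$. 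Likewise the ``cyclic ordering of extreme elements'' of \cref{lem:order} is the wrong tool: the relevant structure is the strict total order $<_{\chi,x^*}$ on \emph{all} of $X$ induced by the extreme proxy $x^*$, not the cyclic order restricted to extremes. So your ``convex sub-polygon'' picture is sound as heuristic, but as a formal step it must be replaced by this order-theoretic argument; once it is, your Euler count closes exactly as claimed.
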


The proof of the bijection studies in particular the geometric
structure of the sets of noncrossing edges $T_{XY}$ for any
triangulation of $\kappa$; from this structure we derive the following
combinatorial lemma that will be useful for counting
triangulations. (We use the convention $\binom{n}{-1}=\delta_{n,-1}$
for any integer $n$, where $\delta$ is the usual Kronecker symbol.)

\begin{lemma}\label{lem:concave_k}
  Let $X'\subseteq X$, $Y'\subseteq Y$ be two nonempty subsets of
  elements of $\kappa$, let $0 \leq k \leq |Y'|$, and let
  $z_1,z_2,\ldots, z_k$ be distinct elements of $Y'$. Then for
  $i_1,\ldots, i_k\geq 1$ there are
  exactly \[\binom{|X'|+|Y'|-(i_1+\ldots+i_k)-2}{|Y'|-(k+1)}\] maximal
  sets $H$ of noncrossing edges {between $X'$ and $Y'$} and satisfying
  $\deg_{H}(z_j)=i_j$ for every $j\in[k]$.
\end{lemma}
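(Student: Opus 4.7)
The plan is to reduce the lemma to a stars-and-bars count by encoding maximal non-crossing edge sets between $X'$ and $Y'$ as positive integer compositions.

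First, I would unfold the non-crossing condition combinatorially. Since $\kappa$ is a chirotope, \cref{p:whenischirotope}(i) ensures that $x^*$ and $y^*$ are extreme in $\chi$ and $\xi$, so by \cref{lem:order} they induce total orders $<_{x^*}$ on $X$ and $<_{y^*}$ on $Y$. A short computation from the bowtie definition~\eqref{eq:def_bowtie} (combined with symmetry~\eqref{eq:symmetry_condition}) shows that for distinct $a_1,a_2\in X$ and distinct $b_1,b_2\in Y$, the edges $\{a_1,b_1\}$ and $\{a_2,b_2\}$ cross in $\kappa$ if and only if $\chi(a_1,a_2,x^*)=\xi(y^*,b_1,b_2)$, i.e., iff $a_1<_{x^*}a_2 \Leftrightarrow b_1<_{y^*}b_2$. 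So they do not cross precisely when the two orderings disagree.

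Second, I would exhibit a bijection between maximal non-crossing sets $H$ and positive compositions of $|X'|+|Y'|-1$. Enumerate $X'=\{a_1<_{x^*}\cdots<_{x^*}a_m\}$ and $Y'=\{b_1>_{y^*}\cdots>_{y^*}b_n\}$ (reversing the order on $Y'$ for convenience), so that the non-crossing condition reads: if $(a_i,b_j),(a_k,b_\ell)\in H$ with $i<k$, then $j\le\ell$. For a maximal $H$, every $a_i$ and every $b_j$ is incident to some edge of $H$ (otherwise a missing edge could be added, using non-crossing to bound the admissible range of the new endpoint); setting $f(i):=\min\{j:(a_i,b_j)\in H\}$ and $g(i):=\max\{j:(a_i,b_j)\in H\}$, both functions are non-decreasing, with $f(1)=1$, $g(m)=n$, and $g(i)=f(i+1)$ for $i<m$ (with $\le$ by non-crossing, and $\ge$ by maximality, as otherwise $\{a_i,b_{g(i)+1}\}$ could be added). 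This forces the edges at each $b_j$ to occupy a consecutive block of $a$-indices, so the sequence $w_j:=\deg_H(b_j)$ is a sequence of positive integers summing to $m+n-1$ that determines $H$ uniquely; conversely any such positive composition assembles a maximal non-crossing ``staircase''.

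Third, I would conclude by counting. Counting maximal $H$ with $\deg_H(z_j)=i_j$ for $j\in[k]$ amounts to counting positive compositions $(w_1,\dots,w_n)$ of $m+n-1$ with $k$ of the parts fixed to $i_1,\dots,i_k$. The remaining $n-k$ parts are positive integers summing to $m+n-1-s$ with $s=i_1+\cdots+i_k$, so by stars-and-bars their number is $\binom{(m+n-1-s)-1}{(n-k)-1}=\binom{|X'|+|Y'|-s-2}{|Y'|-(k+1)}$, which is the claimed formula.

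The main obstacle will be the staircase structure theorem underlying the second step: while the picture of $X'$ and $Y'$ as two concave arcs makes the claim transparent, one must verify from scratch, using only the combinatorial non-crossing condition, that maximality forces the edge set to reach both endpoints $\{a_1,b_1\}$ and $\{a_m,b_n\}$ and to have no gaps along the way. All other steps reduce to routine computations once this structural description is established.
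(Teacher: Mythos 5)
Your proposal is correct and follows essentially the same route as the paper: the paper first derives the combinatorial crossing condition in \cref{l:cross}(ii), then proves the bijection with positive compositions of $|X'|+|Y'|-1$ in \cref{prop:bijection_triang_concave} (parameterizing the interval structure from the $Y'$ side via the sets $\mathcal{I}^H_j=[m_j,M_j]$ with consecutive intervals sharing an endpoint, whereas you parameterize from the $X'$ side via $f,g$), and finally deduces the claimed count by fixing $k$ of the parts and counting the remaining compositions by stars-and-bars. The cosmetic difference (which side of the bipartition carries the interval structure) does not change the substance of the argument.
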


In particular, for $k=0$, we obtain that for any nonempty subsets  $X'\subseteq X$ and $Y'\subseteq Y$ there are exactly  $\binom{|X'|+|Y'|-2}{|Y'|-1}$ {maximal} sets $H$ of noncrossing edges in $\kappa$.

\medskip

The rest of this section is devoted to proving \cref{lem:concave_k} and \cref{p:bij}.

\subsection{Set of noncrossing edges between $X$ and $Y$}
\label{s:radial}

For a chirotope $\alpha$ on a set $A$, we write $\ext(\alpha)$
for the set of elements of $A$ that are extreme in $\alpha$. 
As we saw in \cref{lem:order}, the relation defined by 
\[ \text{for } a,b,c \in A, \ b<_{\alpha,a} c \Leftrightarrow \alpha(a,b,c)=1\]
defines a strict total order on $X\setminus\{a\}$ if $a$ is extreme for $\alpha$.
This order allows to characterize crossings of edges between 
the modules $X$ and $Y$ of $\kappa$.

\begin{lemma}\label{l:cross}
 For any $x_1,x_2 \in X$ and $y_1,y_2 \in Y$ with $y_1
    <_{\xi,y^*} y_2$, the edges $x_1y_1$ and $x_2y_2$ cross in
    $\chi\join{x^*}{y^*}\xi$ if and only if $x_1 <_{\chi,x^*}
    x_2$.
\end{lemma}
\begin{proof}
 By definition, we have
  $\chi(x^*,x_1,x_2)=\kappa(x_2,y_2,x_1)=-\kappa(x_1,y_1,x_2)$ and
  similarly, we have
  $\xi(y^*,y_1,y_2)=-\kappa(x_2,y_2,y_1)=\kappa(x_1,y_1,y_2)$. 
  We also have $\xi(y^*,y_1,y_2)=1$ by hypothesis. 
  Hence if $x_1 <_{\chi,x^*} x_2$,
  then $\kappa(x_2,y_2,x_1)=1=-\kappa(x_2,y_2,y_1)$ and
  $\kappa(x_1,y_1,x_2)=-1=-\kappa(x_1,y_1,y_2)$, so that $x_1y_1$ and
  $x_2y_2$ cross in $\kappa$. Conversely, if $x_1y_1$ and $x_2y_2$
  cross, in particular $\kappa(x_1,y_1,y_2)=-\kappa(x_1,y_1,x_2)$,
  hence $1=\xi(y^*,y_1,y_2)= -\chi(x_1,x^*,x_2)=\chi(x^*,x_1,x_2)$ and
  $x_1 <_{\chi,x^*} x_2$. This proves the lemma.
  \end{proof}

\begin{lemma}\label{prop:bijection_triang_concave}
  Let $X'\subseteq X$, $Y'\subseteq Y$ be two nonempty subsets of
  elements of $\kappa$. Let us write $a=|X'|$, $b=|Y'|$, and
  $y_1<_{\xi,y^*}\ldots <_{\xi,y^*} y_b$ the elements of $Y'$. The map
  sending a set $H$ of edges to $(\deg_{H}(y_1),\ldots, \deg_{H}(y_b))$ is a bijection
  between maximal sets of noncrossing edges between $X'$ and $Y'$
 and sequences $(d_1,\ldots,d_b) \in
  \pth{{\N}_{\ge 1}}^b$ satisfying $d_1+\ldots +d_b=a+b-1$.%
\end{lemma}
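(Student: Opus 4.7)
My plan is to translate the noncrossing condition into the language of interval sequences via Lemma~\ref{l:cross}(ii), and then exhibit the bijection explicitly. Writing $X' = \{x_1 <_{\chi,x^*} \ldots <_{\chi,x^*} x_a\}$ in the induced total order, the crossing criterion reads: for $j < k$, the edges $(x_i, y_j)$ and $(x_m, y_k)$ cross in $\kappa$ if and only if $i < m$. Encoding a bipartite edge set $H$ by the sets $S_j := \{i \in [a] : (x_i, y_j) \in H\}$, noncrossing of $H$ then becomes the condition: for $j<k$, $\min S_j \ge \max S_k$ (whenever both are non-empty).

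Next I would establish four structural properties of any \emph{maximal} noncrossing $H$, each proved by exhibiting a missing edge one could add if the property failed: (i) each $S_j$ is an interval $\{l_j, l_j+1, \ldots, r_j\} \subseteq [a]$; (ii) each $S_j$ is non-empty; (iii) $l_j = r_{j+1}$ for $1 \le j < b$; and (iv) $r_1 = a$ and $l_b = 1$. For instance, for (iii): if $\max S_{j+1} \le p < \min S_j$, then adding $(x_p, y_j)$ would not create any crossing, since for $(x_s, y_t) \in H$ with $t > j$ one has $s \le \max S_t \le \max S_{j+1} \le p$ (the sequence $\max S_t$ is nonincreasing in $t$ by noncrossing), while for $t < j$, $s \ge \min S_t \ge \min S_j > p$. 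The remaining three properties are proved by entirely analogous edge-addition arguments.

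Given (i)--(iv), setting $d_j := |S_j| = r_j - l_j + 1 \ge 1$, a short telescoping yields $\sum_{j=1}^b d_j = \sum_j r_j - \sum_j l_j + b = (r_1 - l_b) + b = a + b - 1$, using $l_j = r_{j+1}$. Conversely, every sequence $(d_j)_{j=1}^b$ with $d_j \ge 1$ and $\sum d_j = a+b-1$ determines a unique pair of sequences $(l_j, r_j)_{j=1}^b$ via $l_b := 1$, $r_j := l_j + d_j - 1$, $l_{j-1} := r_j$; the sum condition is precisely equivalent to $r_1 = a$. The corresponding $H := \{(x_i, y_j) : l_j \le i \le r_j\}$ is noncrossing by construction, and is maximal: any $(x_q, y_j) \notin H$ either has $q > r_j$ (and then crosses $(x_{l_{j-1}}, y_{j-1})$, using $l_{j-1} = r_j < q$) or $q < l_j$ (and crosses $(x_{r_{j+1}}, y_{j+1})$), with the boundary cases $j=1$ and $j=b$ ruled out by (iv). This is the inverse of the degree map, and the bijection is proved.

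The main obstacle I expect is carefully checking the four structural properties of maximal $H$; in particular, establishing the gluing condition $l_j = r_{j+1}$, which is what collapses the a priori two-dimensional data of $H$ into a single $b$-term degree sequence.
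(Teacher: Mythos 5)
Your proof is correct and takes essentially the same approach as the paper's: you order $X'$ by $<_{\chi,x^*}$, translate noncrossing via Lemma~\ref{l:cross}(ii) into the interval condition $\min S_j \ge \max S_k$ for $j<k$, use maximality to show the fibers $S_j$ (the paper's $\mathcal{I}^H_j$) are consecutive intervals whose endpoints glue, and read off the degree sequence, with an explicit inverse. The only cosmetic difference is that you derive the interval property (i) by a direct ``add-a-missing-edge'' argument, whereas the paper gets it by first noting (via maximality) that the fibers cover $[a]$ and then using the crossing criterion to pin down which fiber each index lies in.
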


For example the set of noncrossing edges of the third picture in \Cref{fig:decomposition_triangulation} is associated with the triple $(1,3,3)$.

\begin{proof}
  Let us write $x_1<_{\chi,x^*}\ldots <_{\chi,x^*} x_a$ the elements
  of $X'$, ordered by $<_{\chi,x^*}$. Let $\Lambda_{a,b}$ denote the
  set of sequences $(d_1,\ldots,d_b) \in \pth{{\N}_{\ge 1}}^b$
  satisfying $d_1+\ldots +d_b=a+b-1$.
  \medskip

  Let $H$ be a maximal set of noncrossing edges between $X'$ and $Y'$. For $j\in [b]$, we let ${\cal I}^H_j$
  denote the set of indices $i$ such that $x_iy_j$ is an edge of
  $H$.  
  By \cref{l:cross}, for $j<j'$, elements of ${\cal I}^H_{j'}$
  are smaller than or equal to elements of ${\cal I}^H_{j}$.
   But, using again that $H$ is maximal,
   we know that every element of $X'$ or $Y'$ is the endpoint of an edge,
   so that each ${\cal I}^H_{j}$ is a non-empty interval, say $[m_j,M_j]$.
   Moreover, consecutive ${\cal I}^H_{j}$ must have a
  common endpoint (otherwise $H$ would not be maximal).
 In other words, we have
  \[ 1 = m_b \le M_b = m_{b-1} \le M_{b-1} = \ldots = m_2 \le M_2 =
  m_1 \le M_1 = a.\]
  Since $\deg_{H}(y_j) = M_j-m_j+1 = m_{j-1}-m_j+1$ (with the convention $m_0=a$), it follows that
  the map sending $H$ to $(\deg_{H}(y_1),\ldots, \deg_{H}(y_b))$ 
  takes value in $\Lambda_{a,b}$.

Conversely, it is clear that the values $\deg_{H}(y_j) = M_j-m_j+1$
  uniquely determine the tuple $(m_1,\cdots,m_b)$, which in turn uniquely determines
  $({\cal I}^H_1,\cdots,{\cal I}^H_b)$ (recall that $M_1=a$ and $M_{j} = m_{j-1}$ for $2 \le j \le b$),
  which determines $H$.
  Thus the map $H \mapsto (\deg_{H}(y_1),\ldots, \deg_{H}(y_b))$ is a bijection.
\end{proof}

We can now prove \cref{lem:concave_k}.

  \begin{proof}[Proof of \cref{lem:concave_k}]
As before, let $a=|X'|$, $b=|Y'|$, and let $y_1<_{\xi,y^*}\ldots
<_{\xi,y^*} y_b$ be the elements of $Y'$.  We write $z_j=y_{h_j}$ for
some index $h_j$.  By \cref{prop:bijection_triang_concave}, we need to
count the number of sequences $(d_1,\ldots,d_b)$ satisfying
$d_1+\ldots +d_b=a+b-1$ with $d_i\in\mathbb Z_{\geq 1}$, and
$d_{h_j}=i_j$ for $j \le k$ {(we allow $k=0$)}.  These sequences are
in bijection with the sequences $(n_1,\ldots,n_{b-k})$ of $b-k$
positive integers summing to $a+b-1-(i_1+\ldots+i_k)$.  They are
sometimes called compositions of $a+b-1-(i_1+\ldots+i_k)$ into $b-k$
parts, and are known to be counted by
$\binom{a+b-(i_1+\ldots+i_k)-2}{b-k-1}$. (With the convention
$\binom{n}{-1}=\delta_{n,-1}$, the formula is also correct when
$k=b$.)
\end{proof}

\subsection{Proof of the bijection}\label{sec:proof71}

The goal of this section is to prove \cref{p:bij}.  We use the
notation in the first part of its statement, namely: $T$ is a
triangulation of $\kappa=\chi\join{x^*}{y^*}\xi$, where $\chi$ and
$\xi$ are chirotopes on $X\cup\{x^*\}$ and $Y\cup\{y^*\}$, with $X$
and $Y$ disjoint, $x^* \notin X$ being extreme in $\chi$ and $y^*
\notin Y$ being extreme in $\xi$.  We also recall that $T' = \pi_{Y
  \to x^*}(T)$ and $T'' = \pi_{X \to y^*}(T)$.  Additionally we set
$X'=N_{T'}(x^*)$ and $Y'= N_{T''}(y^*)$.  The set $T_{XY}$ is a subset
of $T$ and hence noncrossing by definition.  We start by showing that
the same is true for $\pi_{Y \to x^*}(T)$ (and symmetrically $\pi_{X
  \to y^*}(T)$).
\begin{lemma}\label{lem:bij-easy}
 $\pi_{Y \to x^*}(T)$ (resp.~$\pi_{X \to y^*}(T)$) is a set of noncrossing edges.
\end{lemma}
\begin{proof}
 Recall that whether two edges cross
  or not in a chirotope $\alpha$ depends only on the restriction of the chirotope
  $\alpha$ to the four endpoints of these two edges. As $\chi$ and $\kappa$ coincide on $X$, no two edges of
  $T_{XX}$ can cross in $T'$. Furthermore, by definition of $\pi_{Y\to
    x^*}$ and $\kappa$, the existence of two crossing edges $x^*x_1$
  and $x_2x_3$ of $T'$ would lead to the existence of an
  edge $y_1x_1\in T$ crossing the edge $x_2x_3\in T$,
  for some $y_1\in Y$. This is impossible since $T$ is a set of
  noncrossing edges in $\kappa$. This proves the lemma, the case of $\pi_{X \to y^*}(T)$
  being symmetrical.
\end{proof}

We still need to prove that $T_{XY}$, $\pi_{Y \to x^*}(T)$ and $\pi_{Y \to x^*}(T)$ 
are maximal. For this, we use cardinality arguments.
As a consequence of \cref{prop:bijection_triang_concave}, any maximal set 
of noncrossing edges between $X'$ and $Y'$ has size $|X'|+|Y'|-1$.
In particular
\begin{equation}\label{eq:card_TXY}
|T_{XY}| \le |X'|+|Y'|-1\,.
\end{equation}

Besides, a classical consequence of Euler's formula, via double counting, is
that any triangulation $\Delta$ of a realizable chirotope of size $n$ with
$h$ extreme elements has exactly
\begin{equation}\label{eq:const_edges_triangulation}
	|\Delta|=3n-h-3
\end{equation}
edges. In particular, any family of noncrossing edges of a realizable
chirotope of size $n$ with $h$ extreme elements has at most $3n-h-3$
edges. \cref{eq:const_edges_triangulation}
also holds for abstract chirotopes since every
abstract chirotope can be realized by a {\em generalized
  configuration}, that is a set of points in the plane together with
an arrangement of pseudolines, each one containing exactly two
points~\cite[Theorems~5.2.6]{felsner2017pseudoline}.

Let us come back to the setting of \cref{p:bij},
and let $h_X$ (resp. $h_Y$) denote the number of elements of $X$
(resp. $Y$) that are extreme in $\kappa$. As $T$ is a triangulation of
$\kappa$, we have
\[ |T| = 3(|X|+|Y|)-(h_X+h_Y)-3.\]
By Lemma~\ref{lem:bij-easy}, $T'=\pi_{Y \to x^*}(T)$ and $T''=\pi_{X \to y^*}(T)$ are families of noncrossing
edges of $\chi$ and $\xi$, hence they have at most as many edges as triangulations of $\chi$ and $\xi$, i.e., using
\cref{eq:const_edges_triangulation}, we have
\[ |T'| \le 3(|X|+1)-(h_X+1) - 3 = 3|X|-h_X-1 \qquad \text{and} \qquad |T''| \le 3|Y|-h_Y-1.\]

By definition of $T'$, $T''$, $X'$ and $Y'$, we further have
\[|T'| = |T_{XX}| + \deg_{T'}(x^*) = |T_{XX}| + |X'|,\quad  |T''| = |T_{YY}| + |Y'|\,.\]
Hence, recalling \cref{eq:card_TXY}, we have the following inequality
\[|T| = |T_{XX}|+|T_{YY}|+|T_{XY}| \le |T_{XX}|+|T_{YY}|+|X'|+|Y'| -1  = |T'| + |T''| -1.\]
so that
\[ |T|\le |T'|+|T''|-1 \le  3(|X|+|Y|)-(h_X+ h_Y)-3 = |T|.\]
This implies that every inequality we have proved so far is in fact an
equality. In particular, $T'$ (resp. $T''$) contains the maximal
possible number of noncrossing edges of $\chi$ (resp. $\xi$), ensuring
that it is a triangulation of $\chi$ (resp. $\xi$); and $T_{XY}$
is a {\em maximal} set of noncrossing edges between $X'$ and $Y'$.
This proves the first part of  \cref{p:bij}.
\medskip

It remains to prove that for any triangulations $T'$ and $T''$ of,
respectively, $\chi$ and $\xi$, and any maximal set $H$ of noncrossing
edges between $X'$ and $Y'$, there exists a unique triangulation $T
\in {\T}_{\kappa}$ such that $T' = \pi_{Y \to x^*}(T)$, $T'' = \pi_{X
  \to y^*}(T)$ and $H = T_{XY}$.

Let us first argue that such a triangulation $T$ is unique. Indeed, if
$\pi_{Y \to x^*}(T) = T$ then $T_{XX}=T'_{XX}$ and the edges of $T$
between $X$ and $X$ are determined. Similarly, the edges of $T$
between $Y$ and $Y$ must be determined by $T_{YY}=T''_{YY}$, and the edges
between $X$ and $Y$ must be determined by $T_{XY}=H$.

To prove the existence of $T$, we consider the set of edges $E\eqdef
T'_{XX}\cup T''_{YY}\, \cup\, H$. By construction, $E$ satisfies
$\pi_{Y \to x^*}(E)= T'$, $\pi_{X \to y^*}(E)=T''$ and
$E_{XY}=H$, so it suffices to prove that $E$ is a
triangulation. We first argue that the size of $E$ satisfies the
\cref{eq:const_edges_triangulation}. Indeed, we have
\begin{align*}
  |E| & =  |T'_{XX}|+|T''_{YY}|+ |H|\\
  & = (|T'|-\deg_{T'}(x^*)) + (|T''|-\deg_{T''}(y^*)) + (|X'|+|Y'|-1) & \text{by~\cref{eq:card_TXY}}\\
  & =  (|T'|-|X'|) + (|T''|-|Y'|) + (|X'|+|Y'|-1)\\
  & = |T'| + |T''|-1. 
\end{align*}
Let $h_X$ (resp. $h_Y$) denote the number of elements of $X$
(resp. $Y$) that are extreme in $\kappa$. Since $T'$ and $T''$ are
triangulations of respectively $\chi$ and $\xi$, by
\cref{eq:const_edges_triangulation},
\[ |T'| = 3(|X|+1)-(h_X+1) - 3 = 3|X|-h_X-1 \qquad \text{and} \qquad |T''| = 3|Y|-h_Y-1\,.\]
We also have $|T'|=|T'_{XX}|+\deg_{T'}(x^*)=|T'_{XX}|+|X'|$ and
$|T''|=|T''_{YY}|+\deg_{T''}(y^*)=|T''_{YY}|+|Y'|$.  Therefore, 
\[|E|=3|X|-h_X-1+3|Y|-h_Y-1-1=3(|X|+|Y|)-(h_X+h_Y)-3\,.\]
We recognize \cref{eq:const_edges_triangulation}. It now remains to
argue that the edges in $E$ are pairwise noncrossing. By hypothesis,
the edges in $T'_{XX}\subseteq T'$ are pairwise noncrossing, and
similarly for the edges in $T''_{YY}$, or in $H$. An edge in $T'_{XX}$
cannot cross an edge in $T''_{YY}$ as $X$ and $Y$ are mutually
avoiding sets in $\kappa$. Finally, an edge of $H$ cannot cross in
$\kappa$ an edge of $T'_{XX}$ (resp. $T''_{YY}$ by symmetry): suppose
by contradiction that $x_1y_1\in H$ is crossing $x_2x_3\in T'_{XX}$ in
$\kappa$. Then it would imply that the edges $x_1x^*$ and $x_2x_3$ are
two crossing edges in $T'$, which is impossible as $T'$ is a
triangulation of $\chi$. Altogether, $E$ is a triangulation of
$\kappa$. This ends the proof of \cref{p:bij}.

\section{Counting triangulations}\label{sec:counting_triangulations}

We now explain how to count the triangulations of a chirotope
  presented by a chirotope tree from information on the triangulations
  of the chirotopes decorating the nodes.

\subsection{Counting the triangulations of a bowtie}

Let us first consider the case of a single bowtie $\kappa = \chi\join{x^*}{y^*}\xi$. 
\cref{p:bij} describes the {\em set} of triangulations of $\kappa$ in terms of the {\em sets} of triangulations of $\chi$ and $\xi$. With \cref{lem:concave_k}, however, we can {\em count} the former knowing only the generating polynomials of the latters marking the degrees of the proxies $x^*$ and $y^*$.

\medskip

Let $P_{\chi,x^*}(s)$ and $P_{\xi,y^*}(t)$ denote the generating
polynomials of the triangulations of $\chi$ and $\xi$ recording,
respectively, the degrees of the elements $x^*$ and $y^*$:
\[P_{\chi,x^*}(s)=\sum_{T'\in \mathcal{T}_{\chi}}s^{\deg_{T'}(x^*)} \qquad \text{and} \qquad P_{\xi,y^*}(t)=\sum_{T''\in \mathcal{T}_{\xi}}t^{\deg_{T''}(y^*)}.\]
Here are examples with $P_{\chi,x^*}(s)=s^3(s+1)$
and $P_{\xi,y^*}(t)=t^2(1+t+t^2)$.

\begin{center}
\tikzset{
  fourTri/.pic={
    \node (n1) at (0,.75) {$\bullet$};
    \node (n1bis) at (.15,0) {$\bullet$};
    \node (n2) at (-.15,0) {$\bullet$};
    \node (n3) at (-.6,-.6) {$\bullet$};
    \node[red] (n4) at (.6,-.6) {$\bullet$};
    \node[red] at ($(n4)+(0:0.3)$) {$x^*$};
}}
\tikzset{
  fourTrie/.pic={
    \pic {fourTri};
    \begin{scope}[on background layer]
      \draw[thin] (n2.center)--(n1.center);
      \draw[thin] (n2.center)--(n1bis.center);
      \draw[thin] (n2.center)--(n3.center);
      \draw[thin] (n1bis.center)--(n1.center);
      \draw[thin] (n1bis.center)--(n4.center);
      \draw[thin] (n4.center)--(n3.center);
      \draw[thin] (n1.center)--(n3.center);
      \draw[thin] (n4.center)--(n1.center);
    \end{scope}
  }
}
\tikzset{
  littlefourT/.pic={
    \node (n1) at (0,0.3) {$\bullet$};
    \node (n1bis) at (.5,.5) {$\bullet$};
    \node (n2) at (-.5,.5) {$\bullet$};
    \node (n3) at (-.5,-.5) {$\bullet$};
    \node[red] (n4) at (.5,-.5) {$\bullet$};
    \node[red] at ($(n4)+(0:0.3)$) {$y^*$};
}}
\tikzset{
  littlefourTe/.pic={
    \pic {littlefourT};
    \begin{scope}[on background layer]
      \draw[thin] (n2.center)--(n1.center);
      \draw[thin] (n2.center)--(n1bis.center);
      \draw[thin] (n2.center)--(n3.center);
      \draw[thin] (n1bis.center)--(n1.center);
      \draw[thin] (n1bis.center)--(n4.center);
      \draw[thin] (n4.center)--(n3.center);
    \end{scope}
  }
}
\begin{tikzpicture}[baseline={(0,0)}, scale=0.75, every node/.style={transform shape}]
  \pic (chi) {fourTri};
  \pic[xshift=2.5cm] (chi) {fourTrie};
  \pic[xshift=5cm] (chi2) {fourTrie};
  \node at (2.5,-1) {$s^3$};
  \node at (5,-1) {$s^4$};
  \node at (0,-1) {$\chi$};
  \begin{scope}[on background layer]
    \draw[thin] (chin3.center)--(chin1bis.center);
    \draw[thin] (chi2n4.center)--(chi2n2.center);
  \end{scope}
  \begin{scope}[xshift=8.5cm]
    \pic (chi0) {littlefourT};
    \pic[xshift=2cm] (xi) {littlefourTe};
    \pic[xshift=4cm] (xi2) {littlefourTe};
    \pic[xshift=6cm] (xi4) {littlefourTe};
    \node at (0,-1) {$\xi$};
    \node at (2,-1) {$t^2$};
    \node at (4,-1) {$t^4$};
    \node at (6,-1) {$t^3$};
      \begin{scope}[on background layer]
        \draw[thin] (xin3.center)--(xin1bis.center);
        \draw[thin] (xin3.center)--(xin1.center);
        \draw[thin] (xi2n4.center)--(xi2n1.center);
        \draw[thin] (xi2n4.center)--(xi2n2.center);
        \draw[thin] (xi4n4.center)--(xi4n1.center);
        \draw[thin] (xi4n3.center)--(xi4n1.center);
      \end{scope}
  \end{scope}
\end{tikzpicture}
\end{center}
%

\noindent
For $\kappa = \chi\join{x^*}{y^*}\xi$, 
using Proposition~\ref{p:bij} and  \cref{lem:concave_k}, $|{\cal T}_{\kappa}|$ 
can be expressed in terms of $P_{\chi,x^*}(s)$ and $P_{\xi,y^*}(t)$,
as explained in the following proposition.
\begin{proposition}\label{prop:triangulation_bowtie}
  Let $\kappa=\chi\join{x^*}{y^*}\xi$ where $\chi$ and $\xi$ are two chirotopes
  on disjoint ground sets $X\cup\{x^*\}$ and $Y\cup\{y^*\}$,
  respectively. The number of triangulations of $\kappa$ is
\begin{equation}\tag{\ref{eq:counting bowtie}}
    |\mathcal{T}_{\kappa}|=\sum_{\substack{T'\in \mathcal{T}_{\chi}\\T''\in \mathcal{T}_{\xi}}}\binom{\deg_{T'}(x^*)+\deg_{T''}(y^*)-2}{\deg_{T'}(x^*)-1}    = \sum_{a,b \ge 2} \binom{a+b-2}{a-1} [s^a]P_{\chi,x^*}(s) \  [t^b]P_{\xi,y^*}(t).
    \end{equation}
\end{proposition}
\begin{proof}
Let $\T_{\join{}{}}$ denote the set of triples $(T',T'', H)$ such that
$T'\in \T_{\chi}$, $T''\in \T_{\xi}$ and $H$ is a maximal set of
noncrossing edges between $X'\eqdef N_{T'}(x^*)$ and $Y'\eqdef N_{T''}(y^*)$. 
 By \cref{lem:concave_k} (case $k=0$), for
fixed $T'\in \T_{\chi}$ and $T''\in \T_{\xi}$, there are
$\binom{|X'|+|Y'|-2}{|X'|-1}$
sets $H$ such that $(T',T'',H)\in \T_{\join{}{}}$. 
Hence, recalling that $|X'|=\deg_{T'}(x^*)$ and $|Y'|=\deg_{T''}(y^*)$, we have
\[ |\T_{\join{}{}}|=\sum_{\substack{T'\in \mathcal{T}_{\chi}\\T''\in \mathcal{T}_{\xi}}}\binom{\deg_{T'}(x^*)+\deg_{T''}(y^*)-2}{\deg_{T'}(x^*)-1}.\]
But, from
Proposition~\ref{p:bij}, we know that $|\T_{\kappa}|=|\T_{\join{}{}}|$.
This proves the first equality in the proposition.
For the second equality, it suffices to group triangulations according to the
 degrees of $x^*$ and $y^*$ (which are at least $2$ in any triangulation).
\end{proof}

\subsection{Generating polynomial of the triangulations of a bowtie}
\label{ssec:refined_count_triangulations_bowtie}
We just used the bijection of Proposition~\ref{p:bij} to
express $|{\cal T}_{\kappa}|$ for $\kappa \eqdef
\chi\join{x^*}{y^*}\xi$ from $P_{\chi,x^*}(s)$ and $P_{\xi,y^*}(t)$
(see ~\cref{eq:counting bowtie}). To iterate this computation
through several bowtie products, we need to determine not only $|{\cal
  T}_{\kappa}|$, but $P_{\kappa, z^*}(t)$ for any extreme element $z^*$
of $\kappa$, to be used as a proxy element in a subsequent bowtie
product.

\medskip

Note that at this point, $\chi$ and $\xi$ do not play a
symmetric role anymore, as $z^*$ is an element of the ground sets of
only one of them. Without loss of generality, we suppose that $z^* \in
Y$, and we replace $P_{\xi,y^*}(t)$ by two bivariate polynomials:
\[Q^\in_{\xi,y^*,z^*}(t,u)=\sum_{\substack{T\in \mathcal{T}_{\xi}\\\{y^*,z^*\}\in T }}t^{\deg_{T}(y^*)}u^{\deg_{T}(z^*)} \quad\text{and}\quad Q^{\notin}_{\xi,y^*,z^*}(t,u)=\sum_{\substack{T\in \mathcal{T}_{\xi}\\\{y^*,z^*\}\notin T }}t^{\deg_{T}(y^*)}u^{\deg_{T}(z^*)}.\]
Observe that these polynomials refine the polynomials $P_{\xi,y^*}(t)$ defined earlier, in the sense that 
\begin{equation}\label{eq:specilizationQintoP}
 P_{\xi,y^*}(t) = Q^\in_{\xi,y^*,z^*}(t,1) + Q^{\notin}_{\xi,y^*,z^*}(t,1). 
\end{equation}
Coming back to our previous example for $(\xi,y^*)$, it has the following triangulations
\medskip
\begin{center}
\tikzset{
    fourT/.pic={
        \node (n1) at (0,0.3) {$\bullet$};
      \node (n1bis) at (.5,.5) {$\bullet$};
      \node[red] (n2) at (-.5,.5) {$\bullet$};
      \node (n3) at (-.5,-.5) {$\bullet$};
      \node[red] (n4) at (.5,-.5) {$\bullet$};
      \node[red] at ($(n2)-(0.3,0)$) {$z^*$};
      \node[red] at ($(n4)+(0:0.3)$) {$y^*$};
    }}
\tikzset{
    fourTe/.pic={
   	\pic {fourT};
    	\begin{scope}[on background layer]
      \draw[thin] (n2.center)--(n1.center);
      \draw[thin] (n2.center)--(n1bis.center);
      \draw[thin] (n2.center)--(n3.center);
      \draw[thin] (n1bis.center)--(n1.center);
      \draw[thin] (n1bis.center)--(n4.center);
      \draw[thin] (n4.center)--(n3.center);
      \end{scope}
    }
}
    \begin{tikzpicture}[baseline={(0,0)}, scale=1, every node/.style={transform shape}]
     \pic (chi0) {fourT};
      \pic[xshift=2.3cm] (chi) {fourTe};
      \pic[xshift=4.6cm] (chi2) {fourTe};
      \pic[xshift=6.9cm] (chi4) {fourTe};
\node at (0,-1) {$\xi$};
      \node at (2.3,-1) {$t^2u^3$};
      \node at (2.3,-1.5) {\footnotesize$\{y^*,z^*\}\notin T$};
      \node at (4.6,-1) {$t^4u^4$};
      \node at (4.6,-1.5) {\footnotesize$\{y^*,z^*\}\in T$};
       \node at (6.9,-1) {$t^3u^3$};
      \node at (6.9,-1.5) {\footnotesize$\{y^*,z^*\}\notin T$};
      \begin{scope}[on background layer]
      \draw[thin] (chin3.center)--(chin1bis.center);
      \draw[thin] (chin3.center)--(chin1.center);
      \draw[thin] (chi2n4.center)--(chi2n1.center);
      \draw[thin] (chi2n4.center)--(chi2n2.center);
      \draw[thin] (chi4n4.center)--(chi4n1.center);
      \draw[thin] (chi4n3.center)--(chi4n1.center);
      \end{scope}
    \end{tikzpicture}.
\end{center}


\noindent
Thus $Q^{\in}_{\xi,y^*,z^*}(t,u)=t^4u^4$ and
$Q^{\notin}_{\xi,y^*,z^*}(t,u)=t^2u^3(1+t)$. In the following
proposition, we express $P_{\kappa,z^*}(u)$ in terms of
$Q^{\in}_{\xi,y^*,z^*}(t,u)$, $Q^{\notin}_{\xi,y^*,z^*}(t,u)$ and
$P_{\chi,x^*}(s)$.\footnote{Recalling also
\cref{eq:specilizationQintoP}, the \cref{prop:triangulation_chaine}
suggests that, by iteration, knowing the polynomials $Q$ at each node
in a chirotope tree will be enough to compute the polynomial $P$ for
the chirotope described by the whole tree. This is actually true only
in the special case where the tree we consider is a path (as discussed
in \cref{sec:kernel-method}). Further refinements are necessary for
considering general trees, and these will be presented later, in
\cref{p:fullpol_recurrence}. }

\begin{proposition}\label{prop:triangulation_chaine}
  Let $\kappa=\chi\join{x^*}{y^*}\xi$ where $\chi$ and $\xi$ are
  chirotopes on $X\cup\{x^*\}$ and $Y\cup\{y^*\}$, with $X$ and $Y$
  disjoint, $x^* \notin X$ and $y^* \notin Y$. For any $z^* \in Y$
  extreme in $\xi$ we have
  \begin{align*}
  P_{\kappa,z^*}(u) = &\sum_{a,b \ge 2} \tbinom{a+b-2}{a-1} [s^a]P_{\chi,x^*}(s) [t^b]Q^{\notin}_{\xi,y^*,z^*}(t,u)\\
  & + \sum_{a,b \ge 2}  R_{a,b}(u)[s^a]P_{\chi,x^*}(s)\ [t^b]Q^{\in}_{\xi,y^*,z^*}(t,u), \text{ with }  R_{a,b}(u)=\sum_{i=0}^{a-1}\tbinom{a+b-i-3}{b-2}u^i.
  \end{align*}
\end{proposition}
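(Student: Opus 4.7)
The plan is to refine the counting argument of \cref{cor:triangulation_bowtie} by tracking the degree of $z^*$ through the bijection of \cref{p:bij}. A triangulation $T \in \mathcal{T}_\kappa$ decomposes uniquely as $T = T'_{XX} \cup T''_{YY} \cup H$, where $T' \in \mathcal{T}_\chi$, $T'' \in \mathcal{T}_\xi$, and $H$ is a maximal noncrossing set of edges between $X' \eqdef N_{T'}(x^*)$ and $Y' \eqdef N_{T''}(y^*)$. Since $z^* \in Y$, the edges of $T$ incident to $z^*$ are exactly those of $T''_{YY}$ together with those of $H$ incident to $z^*$, so $\deg_T(z^*) = \deg_{T''_{YY}}(z^*) + \deg_H(z^*)$.

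I would then relate this quantity to $\deg_{T''}(z^*)$. The projection $\pi_{X \to y^*}$ collapses all edges of $H$ with endpoint $z^*$ into the single edge $\{y^*,z^*\}$ of $T''$, so $\deg_{T''}(z^*) = \deg_{T''_{YY}}(z^*)$ when $\{y^*,z^*\}\notin T''$, and $\deg_{T''}(z^*) = \deg_{T''_{YY}}(z^*) + 1$ when $\{y^*,z^*\}\in T''$. Equivalently, in the first case $\deg_H(z^*) = 0$ and $\deg_T(z^*) = \deg_{T''}(z^*)$, while in the second $\deg_H(z^*) \ge 1$ and $\deg_T(z^*) = \deg_{T''}(z^*) - 1 + \deg_H(z^*)$. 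This dichotomy naturally splits $P_{\kappa,z^*}(u)$ into a contribution coming from $Q^{\notin}_{\xi,y^*,z^*}$ and one from $Q^{\in}_{\xi,y^*,z^*}$.

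For each pair $(T',T'')$, setting $a \eqdef \deg_{T'}(x^*)$ and $b \eqdef \deg_{T''}(y^*)$, I enumerate the compatible sets $H$ using \cref{lem:concave_k}. In the first case, \cref{lem:concave_k} with $k=0$ gives $\binom{a+b-2}{a-1}$ choices of $H$, none affecting $\deg_T(z^*)$; grouping $T'$ by $a$ and $T''$ by $b$ yields the first sum of the statement. In the second case, \cref{lem:concave_k} with $k=1$, $z_1 = z^*$, $i_1 = i$ gives $\binom{a+b-i-2}{b-2}$ sets $H$ with $\deg_H(z^*) = i$ for $1 \le i \le a$; summing the weight $u^{\deg_{T''}(z^*) - 1 + i}$ over $i$ and reindexing $j = i-1$ produces the factor $u^{\deg_{T''}(z^*)} R_{a,b}(u)$, which once grouped by $a$ and $b$ yields the second sum.

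The main obstacle is careful bookkeeping: one must handle the off-by-one between $\deg_T(z^*)$ and $\deg_{T''}(z^*)$ that arises from collapsing edges of $H$ through $\pi_{X \to y^*}$, and correctly track the binomial index shift when reindexing $i$ to $j$. Beyond that, no new geometric or combinatorial idea beyond those already established for \cref{cor:triangulation_bowtie} is required.
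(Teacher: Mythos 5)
Your proposal is correct and follows essentially the same route as the paper's proof: split triangulations of $\kappa$ according to whether $\{y^*,z^*\}\in T''$, use the degree relation $\deg_T(z^*)=\deg_{T''}(z^*)-1+\deg_{T_{XY}}(z^*)$ (or $=\deg_{T''}(z^*)$ in the other case), and then invoke \cref{lem:concave_k} with $k=0$ and $k=1$ respectively. The bookkeeping you identify as the main obstacle is indeed the content of the paper's argument, and your reindexing recovers $R_{a,b}(u)$ exactly as the paper does.
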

\begin{proof} 
  The proof relies again on the bijection of Proposition~\ref{p:bij},
  and the reader is invited to refer to
  \Cref{fig:decomposition_triangulation}. As before, for a
  triangulation $T\in \T_\kappa$, we let $T'=\pi_{Y \to x^*}(T)$, $T''
  = \pi_{X \to y^*}(T)$ and $T_{XY}$ be the set of edges of $T$
  between $X$ and $Y$.  We let $\T_\kappa^\in$
  (resp.$\T_\kappa^{\notin}$) denote the set of triangulations of
  $\T_\kappa$ that contain (resp. do not contain) at least one edge
  between $z^*$ and an element of $X$.  Note that $T$ is in
  $\T_\kappa^\in$ if and only if the edge $\{y^*,z^*\}$ belongs to
  $T''$.
  
  \smallskip

  For any $T\in \T^{\notin}$, we have $\deg_{T}(z^*) =
  \deg_{T''}(z^*)$ as $\{y^*,z^*\} \notin T''$ so that the
  contribution of all triangulations of $\T^{\notin}$ to
  $P_{\kappa,z^*}(u)$ can then be computed by a straightforward
  adaptation of Proposition~\ref{prop:triangulation_bowtie}:
 \[\sum_{T\in \T^{\notin}_{\kappa}}u^{\deg_{T}(z^*)}=\sum_{a,b \ge 2} \binom{a+b-2}{a-1}  [s^a]P_{\chi,x^*}(s)\ [t^b]Q^{\notin}_{\xi,y^*,z^*}(t,u) \,.\]

 Now, assume that $T\in \T^{\in}$, so that $\{y^*,z^*\}\in T''$. The edges
 incident to $z^*$ in $T$ are exactly the union of those incident to
 $z^*$ in $T_{XY}$ and those incident to $z^*$ in $T''$, minus the
 edge $\{y^*,z^*\}$. Hence
 $\deg_T(z^*)=\deg_{T_{XY}}(z^*)+\deg_{T''}(z^*)-1$. Using \cref{lem:concave_k} (case $k=1$)  after regrouping the terms according to the
 value of $i=\deg_{T_{XY}}(z^*)$, we obtain:
 \[\sum_{T\in \T^{\in}_{\kappa}}u^{\deg_{T}(z^*)}=\sum_{\substack{T'\in \mathcal{T}_{\chi}\\T''\in \mathcal{T}_{\xi},\, \{y^*,z^*\}\in T''}}u^{\deg_{T''}(z^*)}\underbrace{\sum_{i=1}^{\deg_{T'}(x^*)}\binom{\deg_{T'}(x^*)+\deg_{T''}(y^*)-i-2}{\deg_{T''}(y^*)-2}  u^{i-1}}_{= R_{\deg_{T'}(x^*),\deg_{T''}(y^*)}(u)} \,.\]
 Grouping the terms by the values of $a=\deg_{T'}(x^*)\geq 2$ and
 $b=\deg_{T''}(y^*)\geq 2$, we finally obtain
 \[\sum_{T\in \T^{\in}_{\kappa}}u^{\deg_{T}(z^*)}=\sum_{a,b \ge 2}  R_{a,b}(u)[s^a]P_{\chi,x^*}(s)\ [t^b]Q^{\in}_{\xi,y^*,z^*}(t,u) \,,\]
 concluding the proof.
\end{proof}

\subsection{Example of application: triangulations of a family of chirotope paths}
\label{sec:kernel-method}

As claimed earlier, iterating \cref{prop:triangulation_chaine}, 
it is possible to compute the polynomial $P$ associated to a chirotope described by a chirotope tree that is a path, knowing the polynomials $Q$ at each node in this tree.
In this section, we show an application of this fact. 
Specifically, we study a family of chirotope trees,
where the underlying trees are simply paths (we call them {\em chirotope paths} for short),
and where the decorations are chosen among two chirotopes, symmetric of each other.
These leads, for each $k$, to $2^k$ different chirotopes (some but not all being equivalent by symmetry) with the same number of triangulations.
This number of triangulations can be evaluated explicitly,
combining \cref{prop:triangulation_chaine} with the so-called {\em kernel method}
 from analytic combinatorics (see, \emph{e.g.},~\cite{kernelmethod}).
\medskip

Consider the two following chirotopes on $\{x^*,y^*,z,c\}$:

\begin{center}
$\chi^{(0)}=$
\begin{tikzpicture}[baseline={(0,0)}, scale=0.8, every node/.style={transform shape}]
      \node (n1) at (0,0) {$\bullet$};
      \node (n2) at (0,.7) {$\bullet$};
      \node[red] (n3) at ($(n1)+(210:.7)$) {$\bullet$};
      \node[red] (n4) at ($(n1)+(-30:.7)$) {$\bullet$};
      \node at ($(n1)+(0.23,0)$) {$c$};
      \node at ($(n2)+(0.23,0)$) {$z$};
      \node[red] at ($(n3)+(180:0.3)$) {$x^*$};
      \node[red] at ($(n4)+(0:0.3)$) {$y^*$};
    \end{tikzpicture}
   \qquad \text{and} \qquad  $\chi^{(1)}=$
\begin{tikzpicture}[baseline={(0,0)}, scale=0.8, every node/.style={transform shape}]
	\node (n1) at (0,0) {$\bullet$};
      \node (n2) at (0,-.7) {$\bullet$};
      \node[red] (n3) at ($(n1)+(-210:.7)$) {$\bullet$};
      \node[red] (n4) at ($(n1)+(30:.7)$) {$\bullet$};
      \node at ($(n1)+(0.23,0)$) {$c$};
      \node at ($(n2)+(0.23,0)$) {$z$};
      \node[red] at ($(n3)+(180:0.3)$) {$x^*$};
      \node[red] at ($(n4)+(0:0.3)$) {$y^*$};
\end{tikzpicture}
\end{center}

\noindent
Each of $\chi^{(0)}$ and $\chi^{(1)}$ has a unique triangulation,
which contains the edge $\{x^*,y^*\}$ and where each vertex has
degree~$3$. Thus,
\[\begin{array}{ccccl}
P_{\chi^{(0)},y^*}(s) &=& P_{\chi^{(1)},y^*}(s) &=& s^3,\\
Q_{\chi^{(0)},x^*,y^*}^\in(t,u)&=&Q_{\chi^{(1)},x^*,y^*}^\in(t,u)&=&t^3u^3,\\
Q_{\chi^{(0)},x^*,y^*}^{\notin}(t,u)&=&Q_{\chi^{(1)},x^*,y^*}^{\notin}(t,u)&=&0.
\end{array}\]

\medskip

For $i \in \N$, let $\chi_i^{(0)}$ denote a copy of $\chi^{(0)}$
relabeled by $x^* \mapsto x_i^*$, $y^* \mapsto y_i^*$, $z \mapsto
z_i$, $c \mapsto c_i$. We similarly define $\chi_i^{(1)}$ to be a copy of
$\chi^{(1)}$ with the same relabeling.  For any word $\sigma =
\sigma_1\sigma_2\ldots \sigma_k \in \{0,1\}^k$ we put $\chi^{(\sigma)}
\eqdef \chi_1^{(\sigma_1)} \join{y_1^*}{x_2^*} \chi_2^{(\sigma_2)}
\join{y_2^*}{x_3^*} \ldots \join{y_{k-1}^*}{x_k^*}
\chi_k^{(\sigma_k)}$. This expression translates into a chirotope
tree, actually a path, for example for $\chi^{(01101)}$:

\begin{center}
\tikzset{
brick_chi1/.pic={
	\draw (0,0) circle(1);
	\begin{scope}[yshift=-0.35cm]
	\node (n1) at (0,0) {$\bullet$};
      \node (n2) at (0,-.45) {$\bullet$};
      \node[red] (nx) at ($(n1)+(-210:.7)$) {$\bullet$};
      \node[red] (ny) at ($(n1)+(30:.7)$) {$\bullet$};
      \node at ($(n1)+(0.23,0)$) {$c_{#1}$};
      \node at ($(n2)+(0.23,0)$) {$z_{#1}$};
      \node[red] at ($(ny)+(90:0.35)$) {$y_{#1}^*$};
      \node[red] at ($(nx)+(90:0.35)$) {$x_{#1}^*$};
      \end{scope}
}
}
\tikzset{
brick_chi0/.pic={
	\draw (0,0) circle(1);
\begin{scope}[yshift=0.35cm]
	\node (n1) at (0,0) {$\bullet$};
      \node (n2) at (0,.45) {$\bullet$};
      \node[red] (nx) at ($(n1)+(210:.7)$) {$\bullet$};
      \node[red] (ny) at ($(n1)+(-30:.7)$) {$\bullet$};
      \node at ($(n1)+(0.23,0)$) {$c_{#1}$};
      \node at ($(n2)+(0.23,0)$) {$z_{#1}$};
      \node[red] at ($(nx)+(-90:0.35)$) {$x_{#1}^*$};
      \node[red] at ($(ny)+(-90:0.35)$) {$y_{#1}^*$};
      \end{scope}
}}
\begin{tikzpicture}[scale=0.7,every node/.style={transform shape}]
	\pic (0) {brick_chi0=1};
	\pic [xshift=2.6cm] (1) {brick_chi1=2};
	\pic[xshift=5.2cm] (2) {brick_chi1=3};
	\pic[xshift=7.8cm] (3) {brick_chi0=4};
	\pic[xshift=10.4cm] (4) {brick_chi1=5};
	\draw[red] (0ny.center) -- (1nx.center);
	\draw[red] (1ny.center) -- (2nx.center);
	\draw[red] (2ny.center) -- (3nx.center);
	\draw[red] (3ny.center) -- (4nx.center);
\end{tikzpicture}
\end{center}

\noindent
The chirotopes $\chi^{(0)}$ and $\chi^{(1)}$ are indecomposable and nonconvex, so
these trees are canonical and Theorem~\ref{thm:uniqueness_canonical}
ensures that $\chi^{(\sigma)} = \chi^{(\sigma')}$ if and only if
$\sigma = \sigma'$.  Moreover, $\chi^{(0)}$ and $\chi^{(1)}$ are
realizable so $\chi^{(\sigma)}$ is realizable by
Proposition~\ref{p:whenischirotope}~(ii) and is the chirotope of a set
of $2k+2$ elements, $k$ of which are interior. It turns out that the
number of triangulations of $\chi^{(\sigma)}$ depends only on the
length of~$\sigma$, and there is a closed formula for this number.

\begin{proposition}\label{p:counting-chains}
  For every $\sigma \in \{0,1\}^k$ we have
  \[ \left|{\cal T}_{\chi^{(\sigma)}}\right| = \frac{3(2k+1){\binom{4k+2}{2k+1}
      }}{(2k+2)(4k+1)}-\frac {{4}^{k}{\binom{2k+2}{k+1}}}{2k+1}
  \sim_{k \to
    \infty}\frac{3-2\sqrt{2}}{\sqrt{2\pi}} \frac{ 16^{k}}{
    k^{3/2}}.\]
\end{proposition}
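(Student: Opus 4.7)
The strategy is to iterate \cref{prop:triangulation_chaine} along the chain, turn the resulting recurrence for the marked-degree polynomials $P_k(u) \eqdef P_{\chi^{(\sigma)}, y_k^*}(u)$ into a functional equation for the bivariate series $F(z,u) = \sum_{k \geq 1} P_k(u)\, z^k$, and solve it by the kernel method, recognizing the solution as a combination of Catalan generating functions.

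First, since $Q^{\in}_{\chi^{(\sigma_k)}, x_k^*, y_k^*}(t,u) = t^3 u^3$ and $Q^{\notin}_{\chi^{(\sigma_k)}, x_k^*, y_k^*}(t,u) = 0$ for either value of $\sigma_k \in \{0,1\}$, applying \cref{prop:triangulation_chaine} to the bowtie $\chi^{(\sigma)} = \chi^{(\sigma_1 \ldots \sigma_{k-1})} \join{y_{k-1}^*}{x_k^*} \chi_k^{(\sigma_k)}$ and simplifying via $\sum_{i=0}^{a-1}(a-i)u^i = \frac{a(1-u) - u + u^{a+1}}{(1-u)^2}$ yields
\[ (1-u)^2\, P_k(u) = u^3 \bigl[(1-u)\, P_{k-1}'(1) - u\, P_{k-1}(1) + u\, P_{k-1}(u)\bigr], \qquad P_1(u) = u^3. \]
In particular $P_k$ depends only on $k$, not on $\sigma$. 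Summing $z^k$ times this identity over $k \geq 2$, and writing $G(z) \eqdef F(z,1)$ and $H(z) \eqdef \partial_u F(z,1)$, produces the functional equation
\[ \bigl[(1-u)^2 - z u^4\bigr]\, F(z,u) = z u^3 (1-u)^2 + z u^3 (1-u)\, H(z) - z u^4\, G(z). \]

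The kernel $(1-u)^2 - z u^4 = -(\sqrt{z}\,u^2 + u - 1)(\sqrt{z}\,u^2 - u + 1)$ has exactly two roots that are formal power series in $\sqrt{z}$ starting at $u = 1$, namely $U_+(z) = \frac{\sqrt{1+4\sqrt{z}}-1}{2\sqrt{z}}$ and $U_-(z) = \frac{1-\sqrt{1-4\sqrt{z}}}{2\sqrt{z}}$, which one recognizes as $U_+ = C(-\sqrt{z})$ and $U_- = C(\sqrt{z})$, where $C(t) = \sum_{n\ge 0} C_n t^n = \frac{1-\sqrt{1-4t}}{2t}$ is the Catalan generating function. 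Substituting $u = U_\pm$ makes the left-hand side of the functional equation vanish, yielding the linear system
\[ (1-U_\pm)^2 + (1-U_\pm)\, H(z) - U_\pm\, G(z) = 0, \]
whose resolution by Cramer's rule gives $G(z) = -(1-U_+)(1-U_-)$. Using the kernel equation to write $1-U_+ = \sqrt{z}\,U_+^2$ and $1-U_- = -\sqrt{z}\,U_-^2$ (the signs being forced by $U_+ < 1 < U_-$ for small $z > 0$), and then the Catalan identity $t\, C(t)^2 = C(t) - 1$ applied at $t = \pm\sqrt{z}$, simplifies this to
\[ G(z) = z\bigl(U_+ U_-\bigr)^2 = C(\sqrt{z}) + C(-\sqrt{z}) - C(\sqrt{z})\, C(-\sqrt{z}) - 1. \]

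Extracting the coefficient of $z^k$ using $\sqrt{1-4u} = 1 - 2u\,C(u)$ to evaluate $[w^m]\sqrt{1 \pm 4w}$ and $[w^{2m}]\sqrt{1-16w^2}$ gives $[w^{2k}]\,C(w)\,C(-w) = 2 \cdot 4^k C_k - C_{2k+1}$, and hence
\[ |\mathcal{T}_{\chi^{(\sigma)}}| = 2\, C_{2k} + C_{2k+1} - 2 \cdot 4^k\, C_k. \]
The standard identities $C_{2k+1}/C_{2k} = (4k+1)/(k+1)$ and $\binom{2k+2}{k+1} = \frac{2(2k+1)}{k+1}\binom{2k}{k}$ then rewrite this as the announced closed form, and the asymptotic $|\mathcal{T}_{\chi^{(\sigma)}}| \sim \frac{3-2\sqrt{2}}{\sqrt{2\pi}} \cdot \frac{16^k}{k^{3/2}}$ follows from $C_n \sim 4^n / (n^{3/2}\sqrt{\pi})$ applied termwise, using $\tfrac{3}{\sqrt{2}} - 2 = \tfrac{3-2\sqrt{2}}{\sqrt{2}}$. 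The main obstacle is the kernel-method step: one must locate two independent formal power series roots of a degree-$4$ kernel (starting from the same value $u=1$) and recognize them as Catalan series, as this identification is what makes the final coefficient extraction tractable.
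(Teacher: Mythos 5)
Your proof is correct and follows the paper's own argument closely: you derive the same recurrence from \cref{prop:triangulation_chaine}, form the same bivariate generating function (with the roles of the two variables swapped relative to the paper's $F(s,u)$), apply the kernel method to the same two roots, and solve the same $2\times 2$ linear system to reach $F(z,1)=U_+ + U_- - U_+U_- - 1$. The one refinement is in the coefficient extraction, where you identify the kernel roots as the Catalan series $C(\pm\sqrt{z})$ and use $t\,C(t)^2 = C(t)-1$ to land directly on $2C_{2k}+C_{2k+1}-2\cdot 4^k C_k$, which is slicker than the paper's direct Taylor expansion of the square roots in its auxiliary lemma but gives the identical closed form and asymptotic.
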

\noindent
Recall that for $\sigma \in \{0,1\}^k$, the chirotope $\chi^{(\sigma)}$ has $2k+2$ elements. In particular, up to a multiplicative constant, $\left|{\cal T}_{\chi^{(\sigma)}}\right|$ has the same asymptotics as the $2k$-th Catalan number, which is the number of triangulations of $2k+2$ points in convex position. 
\begin{proof}
Let $\tau \in \{0,1\}^\N$ be an infinite binary
word, let $\tau_k$ be its $k$th letter, and let $\tau^{[k]}$ denote
its prefix of length $k$. We introduce the shorthands
\[ P_k(u) \eqdef P_{\chi^{(\tau^{[k]})},y_k^*}(u) \quad \text{and} %
\quad Q(t,u) \eqdef Q^{\in}_{\chi_k^{(\tau_k)},x_k^*,y_k^*}(t,u) = t^3u^3 \]
and recall that $Q^{\notin}_{\chi_k^{(\tau_k)},x_k^*,y_k^*}(t,u) =0$.
Since $\chi^{(\tau^{[k+1]})} = \chi^{(\tau^{[k]})}
\join{y_{k}^*}{x_{k+1}^*} \chi_{k+1}^{(\tau_{k+1})}$,
we can apply
Proposition~\ref{prop:triangulation_chaine} with $z^*=y_{k+1}^*$
to get
\[  P_{k+1}(u) = \sum_{a,b \ge 2}  R_{a,b}(u) \, \ [s^a]P_{k}(s)\, \  [t^b]Q(t,u) =  u^3\sum_{a=0}^{2k+1}  R_{a,3}(u) \ [s^a]P_{k}(s).\]
(Note that $a$ can only range from $2$ to $2k+1$ as it is the degree
of a point in a triangulation of a set of $2k+2$ points.) Using the identity $\displaystyle
R_{a,3}(u)=\sum_{i=0}^{a-1}\binom{a-i}{1}u^i=\frac{u}{(1-u)^2}(u^a-1)+\frac{a}{1-u}$, we get
\[
P_{k+1}(u)=\frac{u^4}{(1-u)^2}\underbrace{\sum_{a=0}^{2k+1} [s^a]P_{k}(s)\, u^a}_{=\, P_k(u)}-\frac{u^4}{(1-u)^2}\underbrace{\sum_{a=0}^{2k+1}  [s^a]P_{k}(s)}_{=\, P_k(1)} +\frac{u^3}{(1-u)}\underbrace{\sum_{a=0}^{2k+1} a \, [s^a]P_{k}(s)}_{=\, P_k'(1)},\]
which rewrites as
\begin{equation}\label{eq:rec_chain_triangles}
P_{k+1}(s)-\frac{s^4}{(1-s)^2}P_{k}(s)=-\frac{s^4}{(1-s)^2}P_{k}(1)+\frac{s^3}{1-s}P'_{k}(1)\,.
\end{equation}

Let us introduce the formal series $F(s,u) \eqdef \sum_{k\geq
  1}P_{k}(s)u^k$. Note that by definition of $P_{k}(s)$,  its specialization 
  $F(1,u)= \sum_{k\geq 1} P_{k}(1)u^{k}$ is
the generating series of the sequence recording, for each $k$, the number of triangulations of $\chi^{(\tau^{[k]})}$.
Since $\chi^{(\tau^{[k]})}$ is realizable, we have $P_{k}(1) \le 30^{2k+2}$~\cite{countingT}.
Moreover, each $P_k$ has degree at most $2k+1$ and nonnegative coefficients,
so that, for $|s|\le 2$, we have $|P_k(s)| \le P_k(2) \le P_k(1)\cdot  2^{2k+1} \le (60)^{2k+2}$.
This implies that $F$ is analytic in the two variables $s$ and $u$
on the domain $$D=\{(s,u): |s| < 2, |u| < \tfrac1{60^2}\},$$ which contains in particular the point $(1,0)$.
 Now, multiplying \cref{eq:rec_chain_triangles} by
$u^{k+1}$ and summing up for $k \ge 1$, we obtain the following functional
equation on $D$:
\[\big(F(s,u)-us^3 \big) -\frac{s^4}{(1-s)^2} u F(s,u) = -\frac{s^4}{(1-s)^2}uF(1,u)+\frac{s^3}{1-s}u \partial_s
  F(1,u),\]
which rewrites as
\begin{equation}\label{eq:kernel_equation_chain}
  \left(1-\frac{us^4}{(1-s)^2}\right)F(s,u)=us^3\left(1-\frac{s}{(1-s)^2}F(1,u)+\frac{1}{1-s}\partial_s
  F(1,u)\right),
\end{equation}%
where, as usual,  $\partial_s F(s,u)$ denotes the derivative of $F$ according to
the first variable $s$.

\medskip

The right-hand side of \cref{eq:kernel_equation_chain} is
linear in $F(1,u)$ and $\partial_s F(1,u)$. Let us apply the standard {\em
  kernel method} (see, e.g., \cite{kernelmethod}): since $F(s,u)$ is analytic near $(1,0)$, we can
evaluate \cref{eq:kernel_equation_chain} for some $s(u)$
that cancels the coefficient $1-\frac{us^4}{(1-s)^2}$, called the {\em
  kernel}, and such that $s(u)$ is close to $1$ when $u$ is close to
$0$. Two functions satisfy these conditions:
\[ s_{1}(u)=\frac{-1+ \sqrt{1+4\sqrt{u}} }{2\sqrt{u}} \quad \text{and} \quad s_{2}(u)=\frac{1- \sqrt{1-4\sqrt{u}} }{2\sqrt{u}}.\]
Substituting into \cref{eq:kernel_equation_chain} yields the following linear
system:
\begin{equation}
	\left\{\begin{aligned}
		\frac{s_1(u)}{(1-s_1(u))^2}F(1,u)- \frac{1}{1-s_1(u)}\partial_s F(1,u) &=1\\
		\frac{s_2(u)}{(1-s_2(u))^2}F(1,u)- \frac{1}{1-s_2(u)}\partial_s F(1,u) &=1
	\end{aligned}\right..
\end{equation}
Solving this linear system gives us the closed form
\[F(1,u)=s_1(u)+s_2(u)-s_1(u)s_2(u)-1.\]
The rest of the proof is computational.
Since $\sqrt{1-4x}=-\sum_{j\geq 0}\frac{1}{2j-1}\binom{2j}{j}x^j$, we have the following expansions:
\begin{align*}
	\sqrt{1+4x}-\sqrt{1-4x}&=\sum_{j\geq 0}\frac{(-1)^{j+1}+1}{2j-1}\binom{2j}{j}x^j=2x\sum_{k\geq 0}\frac{1}{4k+1}\binom{4k+2}{2k+1}x^{2k}\,,\\
	\sqrt{1+4x}+\sqrt{1-4x}&=\sum_{j\geq 0}\frac{(-1)^{j+1}-1}{2j-1}\binom{2j}{j}x^j=-2\sum_{k\geq 0}\frac{1}{4k-1}\binom{4k}{2k}x^{2k}\,,\\
	\sqrt{1-16u}&=-\sum_{k\geq 0}\frac{1}{2k-1}\binom{2k}{k}4^ku^k\,.
\end{align*}
Hence
\begin{align*}
s_1(u)+s_2(u)&=\frac{\sqrt{1+4\sqrt{u}}-\sqrt{1-4\sqrt{u}}}{2\sqrt u}=\sum_{k\geq 0}\frac{1}{4k+1}\binom{4k+2}{2k+1}u^{k}\,,\\
4u s_1(u)s_2(u)&=-1+\sqrt{1+4\sqrt{u}}+\sqrt{1-4\sqrt{u}}-\sqrt{1-16{u}}\,,\\
&=-1-{2}\sum_{k\geq 0}\frac{1}{4k-1}\binom{4k}{2k}u^{k}+\sum_{k\geq 0}\frac{1}{2k-1}\binom{2k}{k}4^{k}u^{k}.
  \end{align*}
  The constant terms in the last equation cancel out, so we can divide by $4u$ and we obtain
\[F(1,u)=\sum_{k\geq 0}\frac{1}{4k+1}\binom{4k+2}{2k+1}u^{k}+\frac1{2}\sum_{k\geq 1}\frac{1}{4k-1}\binom{4k}{2k}u^{k-1}-\sum_{k\geq 1}\frac{1}{2k-1}\binom{2k}{k}4^{k-1}u^{k-1}-1.\]
Grouping the terms with same degree yields $P_k(1)=\frac{1}{4k+1}\binom{4k+2}{2k+1}+\frac1{2}\frac{1}{4k+3}\binom{4k+4}{2k+2}-\binom{2k+2}{k+1}\frac{4^k}{2k+1}$ for $k\geq 1$ (recalling that $P_k(1)$ is the coefficient of $u^k$ in $F(1,u)$, which is also the number of triangulations of $\chi^{(\tau^{[k]})}$). As $\binom{4k+4}{2k+2}/\binom{4k+2}{2k+1}=\frac{2(4k+3)}{2k+2}$, this simplifies to \[P_k(1)=\Big(\frac1{4k+1}+\frac1{2k+2}\Big)\binom{4k+2}{2k+1}-\binom{2k+2}{k+1}\frac{4^k}{2k+1},\] which gives the announced closed formula.
Finally, using Catalan numbers $C_k=\frac{1}{k+1}\binom{2k}{k}$, and their asymptotics $C_k=\frac{4^k}{k^{3/2}\sqrt{\pi}}(1+o(1))$, we can easily compute:
\[P_k(1)=\frac{3(2k+1)}{4k+1}C_{2k+1}-C_{k+1}\frac{4^k(k+2)}{2k+1}\sim_{k\to \infty}\frac{3-2\sqrt 2}{\sqrt{2\pi}}\frac{16^k}{k^{3/2}}\,.\qedhere\]
\end{proof}

\section{General algorithm for counting triangulations, and its complexity}\label{sec:general_triangulations}

Sections~\ref{sec:triangulations}
and~\ref{sec:counting_triangulations} presented a method to count the
triangulations of {chirotope paths}. We now generalize this method to
arbitrary chirotope trees.

\subsection{Chirotope with several marked elements and their triangulation polynomials}

Let $\xi$ be a chirotope with ground set $Y\cup \{y_1^*,\ldots,y_k^*\}$,
where $y_1^*$, \dots, $y_k^*$ are marked {\em extreme} elements of $\xi$ which do not belong to $Y$.
We have seen in \cref{ssec:refined_count_triangulations_bowtie}
that, with two marked elements ($k=2$), it is useful to introduce two 
bivariate polynomials $Q^\in$ and $Q^{\notin}$, separating the triangulations joining or not joining the marked elements.
For a general $k$, we shall need one $k$-variate polynomial for each possible restriction
of a triangulation to $\{y_1^*,\dots,y_k^*\}$.
Such a restriction is in general a collection of noncrossing edges with extremities in $\{y_1^*,\ldots,y_k^*\}$,
not necessarily maximal.
We will denote the set of such configurations by $\U_{\{y_1^*,\dots,y_k^*\}}$, or simply $\U_k$ when there is no ambiguity.
\medskip

Since $y_1^*$, \dots, $y_k^*$  are {\em extreme} elements, the set $\{y_1^*,\ldots,y_k^*\}$
is in convex position, and the asymptotic behaviour of $|\U_k|$ is known.
\begin{lemma}
  For any $k$ and $y_1^*$, \dots, $y_k^*$ in convex position, we have $|\U_k| = \O(\A^k)$, where $\A=6+4 \sqrt 2$.
  \label{lem:control_U}
\end{lemma}
\begin{proof}
  Let $U$ in $\U_k$ be a set of noncrossing edges with extremities in $\{y_1^*,\ldots,y_k^*\}$.
  We first note that each edge of the convex hull can be included in $U$ or not, independently of all other edges.
  We therefore get that $|\U_k|=2^k |\D_k|$, where $\D_k$ is the set of 
  collections of noncrossing edges with extremities in $\{y_1^*,\ldots,y_k^*\}$,
  and containing all edges of the convex hull.
  The elements of $\D_k$ are called {\em dissections} of the $k$-gon 
  with vertices $\{y_1^*,\ldots,y_k^*\}$.
  In particular, it is proved in \cite[Theorem 4]{flajolet1999noncrossing}
  that $|\D_k| = \O( (3+2\sqrt 2)^k )$. The statement follows.
\end{proof}

The collection of triangulation polynomials associated with a
chirotope $\xi$ having $k$ marked elements $Y^*=\{y_1^*,\ldots,
y_k^*\}$ is defined as follows. For a set $U \in \mathcal U_k$ of
noncrossing edges, we set
\[Q^U_{\xi}(y_1,\ldots,y_k)=
\sum_{T\in \mathcal{T}_{\xi} \atop T_{Y^* Y^*}=U} \prod_{i=1}^k
y_i^{\deg_{T}(y_i^*)}, \] where we recall from
\cref{sec:counting_triangulations} that $T_{Y^* Y^*}$ is the subset of
edges of $T$ with both extremities in $Y^*$. (To simplify notation, we
do not include the marked elements $\{y_1^*,\ldots, y_k^*\}$, in the
index of $Q$.) Note that the degree of each $Q^U_\xi$ in a variable
$y_i$ is less than the number of elements of $\xi$, that is $|Y|+k$.
If $U$ is not in $\mathcal U_k$, we use the convention $Q^U_\xi=0$.

\begin{remark}\label{rem:PandQ}
We already encountered specializations of these polynomials $Q^U_\xi$ in cases with few marked elements:
\begin{itemize}
 \item when there is no marked element, then $Q^U_\xi$ is just a number, and is equal to $|\mathcal{T}_\xi|$ (here, $U$ needs to be empty); 
 \item when there is one marked element $y*$, then $Q^U_\xi(y) = P_{\xi,y^*}(y)$ (here again, $U$ needs to be empty);
 \item and when there are two marked elements $y_1^*$ and $y_2^*$, 
 then $Q^U_\xi(y_1,y_2) = Q^\in_{\xi,y_1^*,y_2^*}(y_1,y_2)$ 
 (resp. $Q^{\notin}_{\xi,y_1^*,y_2^*}(y_1,y_2)$) if $U$ is the singleton containing the edge $\{y_1^*,y_2^*\}$ (resp. if $U$ is empty). 
\end{itemize}
\end{remark}

\medskip

The next step is to explain how to compute recursively $Q^U_\xi$
through a bowtie operation, just like \cref{prop:triangulation_chaine}
does for chirotopes with only two marked elements. To this end, for any set
$\{x_1,\ldots, x_r\}$ of variables {and integers $a,b\geq 2$}
satisfying $b \ge r$, we define
\[
	R_{a,b}(\{x_1,\ldots, x_r\})=
    \sum_{i_1,\ldots, i_r\geq 1}\binom{a+b-(i_1+\ldots +i_r)-2}{b-(r+1)}x_1^{i_1-1}\cdots x_r^{i_r-1}.
\]
Note that these polynomials are symmetric, justifying that we write our list of variables as a set.
We recall the convention $\binom{n}{-1}=\delta_{-1}$, leading to the following simpler formula if $b=r$:
\[
R_{a,r}(\{x_1,\ldots, x_r\})= \sum_{i_1,\ldots, i_r\geq 1 \atop
  i_1+\cdots +i_r= a+r-1}x_1^{i_1-1}\cdots x_r^{i_r-1}.\] Other
particular cases are given by $R_{a,b}(\emptyset)=\binom{a+b-2}{b-1}$
and $R_{a,b}(\{u\})$ being the polynomial $R_{a,b}(u)$ from
\cref{prop:triangulation_chaine}.  Just like $R_{a,b}(\emptyset)$ and
$R_{a,b}(\{u\})$ occurred in \cref{prop:triangulation_chaine} to count
triangulations with zero and one edge between two marked elements,
respectively, the family $R_{a,b}(\cdot)$ of multivariate polynomials
allows us to handle the general case with several marked elements.
\medskip

Now, consider a second chirotope $\chi$ on $X \cup \{x^*\}$, assume that $x^*$ is extreme,
and build the bowtie product 
$\kappa=\chi \join{x^*}{y_k^*}\xi$
on ground set $X\cup Y \cup \{y_1^*,\ldots, y_{k-1}^*\}$. We want to compute 
the associated collection of triangulation polynomials
\[Q^U_{\kappa}(y_1,\ldots,y_{k-1})=\sum_{T\in
  \mathcal{T}_{\kappa} \atop T_{\{y_1^*,\ldots, y_{k-1}^*\} \{y_1^*,\ldots, y_{k-1}^*\}} =U }\prod_{i=1}^{k-1}y_i^{\deg_{T}(y_i^*)}, \]
for $U \in \U_{k-1}$. The bijection of
  \cref{p:bij} between the triangulations of $\kappa$ and those of
$\chi$ and $\xi$ leads to the following formula, which generalizes
Proposition \ref{prop:triangulation_chaine}.

\begin{proposition}\label{p:fullpol_recurrence}
  Let $\chi$, $\xi$ and $\kappa=\chi \join{x^*}{y_k^*}\xi$ be as
  above.  For any $U$ in $\U_{k-1}$, we have
 \begin{equation}\label{eq:generalcase_arityk}
   \hspace{-1mm}
 Q^U_{\kappa}(y_1,\ldots,y_{k-1})=
     \sum_{S\subseteq [k-1] 
     \atop a,b \geq 2} \Big( [s^a]P_{\chi,x^*}(s) \Big) \cdot 
     R_{a,b}(\{y_i, i \in S\} \cdot \Big( [y_{k}^b]Q^{U \cup Y^*(S,k)}_{\xi} (y_1,\cdots,y_k)
     \Big), \end{equation}
     where $Y^*(S,k)=\{ \{y^*_i,y^*_k\}, i \in S\}$.
\end{proposition}
\begin{proof}
  We write $Y^*_{k-1}= \{y_1^*,\ldots, y_{k-1}^*\}$, $Y^*_{k}= \{y_1^*,\ldots, y_{k}^*\}$,
  and $\tilde Y=Y \cup Y^*_{k-1}$, for short.
  For $T\in \T_\kappa$ a triangulation of $\kappa$, we let
  $T'=\pi_{\tilde Y \to x^*}(T)$ and $T'' = \pi_{X \to y_k^*}(T)$, as usual.
  We also recall that $T_{X\tilde Y}$ denotes the set of edges in $T$ joining
  $X$ to $\tilde Y$. We recall from Proposition~\ref{p:bij}
  that the map $T \mapsto (T',T'',T_{X\tilde Y})$ is a bijection.

  Fix $U$ in $\U_{k-1}$. We note that $T$ and $T''$ coincide on $Y_{k-1}^*$. 
  Thus, the restriction of $T$ to $Y_{k-1}^*$ is $U$ if and only if that of $T''$ is $U$ as well.
  This implies that the restriction of $T''$ to $Y_{k}^*$
  is of the form $U \cup Y^*(S,k)$ for some $S\subseteq [k-1]$.
  To shorten notation, for a given $S$, we let $\T^{U,S}_\xi$ be the set of triangulations $T''$ of $\xi$
  whose restriction to $Y_{k}^*$ is equal to $U \cup Y^*(S,k)$.

  With the same argument as in \cref{prop:triangulation_chaine},
  we see that the degree of $y_i^*$ in $T$ are computed as follows:
  \begin{itemize}
    \item if $i$ is not in $S$, i.e.~if $y_i^*$ is not connected to $y_k^*$ in $T''$,
      then 
      \[ \deg_{T}(y_i^*)= \deg_{T''}(y_i^*).\]
    \item if $i$ is in $S$, i.e.~if $y_i^*$ is connected to $y_k^*$ in $T''$, then
      \[ \deg_{T}(y_i^*)= \deg_{T''}(y_i^*)+ \deg_{T_{X\tilde Y}}(y_i^*)-1.\]
  \end{itemize}
  Hence we have:
  \begin{equation}\label{eq:expansion_QU}
Q^U_{\kappa}(y_1,\ldots,y_{k-1})%
=\sum_{S \subseteq [k-1]} \, \sum_{T'\in \T_\chi} \, \sum_{T''\in \T^{U,S}_\xi} \, 
\left(\prod_{i=1}^{k-1}y_i^{\deg_{T''}(y_i^*)} \right)\cdot 
\left(\sum_{T_{X\tilde Y}} \prod_{i\in S}y_i^{\deg_{T_{X\tilde Y}}(y_i^*)-1}\right).
\end{equation}
In the last sum, $T_{X\tilde Y}$ runs over maximal sets of noncrossing edges between
$N_{T'}(x^*)$ and $N_{T''}(y_k^*)$ (see the statement of \cref{p:bij}). In particular,
even if it is not explicit in the notation, this last sum depends on $T'$ and $T''$.
Using \cref{lem:concave_k}, we see that
\begin{multline*}
  \sum_{T_{X\tilde Y}} \prod_{i\in S}y_i^{\deg_{T_{X\tilde Y}}(y_i^*)-1}
  = \sum_{ (d_i)_{i \in S} \atop d_i \ge 1}  \left(\prod_{i\in S}y_i^{d_i-1}\right)
\binom{ |N_{T'}(x^*)|+|N_{T''}(y_k^*)|-\big(\sum_{i \in S} d_i\big)-2}{|N_{T''}(y_k^*)|-(|S|+1)}
\\ = R_{|N_{T'}(x^*)|,|N_{T''}(y_k^*)|}(\{y_i,i \in S\}).
\end{multline*}
We split the sums in \cref{eq:expansion_QU}, depending on the values
of the parameters $a\eqdef|N_{T'}(x^*)|=\deg_{T'}(x^*)$ and $b\eqdef|N_{T''}(y_k^*)|=\deg_{T''}(y_k^*)$.
For given $a,b \ge 2$, there are $[s^a]P_{\chi,x^*}(s)$ triangulations $T'$ with $\deg_{T'}(x^*)=a$
and we have
\[\sum_{T''\in \T^{U,S}_\xi \atop \deg_{T''}(y_k^*)=b} \left(\prod_{i=1}^{k-1}y_i^{\deg_{T''}(y_i^*)} \right)
= [y_k^b] Q^{U \cup Y^*(S,k)}_{\xi}(y_1,\dots,y_k). \]
This implies
\[Q^U_{\kappa}(y_1,\ldots,y_{k-1})
= \sum_{S \subseteq [k-1]} \sum_{a,b \ge 2} \big([s^a]P_{\chi,x^*}(s)\big) \big([y_k^b] Q^{U \cup Y^*(S,k)}_{\xi}(y_1,\dots,y_k) \big)
\, R_{a,b}(\{y_i,i \in S\}),\]
as announced.\end{proof}

\subsection{A general algorithm}\label{sec:general_algo}

We now explain how to use \cref{p:fullpol_recurrence} to count the
triangulations of a chirotope presented by a chirotope tree $T$.  We
proceed as follows.
\begin{enumerate}
\item For each node $v$, we denote by $\chi_v$ the chirotope
  decorating the node $v$ and by $Y_v$ its set of proxy elements, and
  we compute the associated collection of triangulation polynomials
  $\{Q^U_{\chi_v} \colon U \in \mathcal U_{Y_v}\}$. These polynomials
  count triangulations of $\chi_v$, keeping track of the degree of
  each elements of $Y_v$ in the triangulation, and of the restriction
  of the triangulation to $Y_v$.  These polynomials can be computed in
  parallel for all nodes~$v$, using for each one a modified version of
  the Alvarez-Seidel algorithm \cite{alvarez2013simple}.

\item As long as the chirotope tree is not reduced to a single node,
  we pick an arbitrary leaf and perform an edge contraction (as
  defined in \cref{ssec:operation-trees}) to merge it it with its
  parent. We compute the collection of triangulation polynomials to be
  associated to the merged node using \cref{p:fullpol_recurrence}: we
  take $\chi$ to be the merged leaf, and use a single marked element
  $x^*$. In that case, as noted by \cref{rem:PandQ}, $P_{\xi,x^*}(u)
  =Q^U_\xi(u)$, with $U$ necessarily empty. For further details, see
  the proof of \cref{p:complexity_triangulations_restate} below.
\end{enumerate}
Each application of Step (2) yields a strictly smaller chirotope tree
that, by \cref{lem:merge_sign_function}, represents the same
chirotope, and for which we know the collection of triangulation
polynomials of every node.  Hence Step (2) can indeed be repeated
until the tree is reduced to a single node.  When this is the case,
there are no more marked elements, so the associated collection of
triangulation polynomials is actually a single integer, equal to the
number of triangulations of $\chi_T$.

\medskip

The complexity of this algorithm is given in the following
proposition.  We recall from the introduction (see also
\cref{lem:control_U}) that $\A$ stands for the constant $6+4 \sqrt 2
\approx 11.66$.

\begin{proposition}\label{p:complexity_triangulations_restate}
  Let $T$ be a chirotope tree.
  \begin{itemize}
  \item[(i)] If a node $v$ of $T$ is decorated with a chirotope
    $\chi_v$ with $n_v$ elements, including a set $Y_v$ of $k$
    proxies, then $\{Q_{\chi_v}^U\colon U \in \mathcal U_{Y_v}\}$ can
    be computed from $\chi_v$ in time $O\pth{A^k 2^{n_v} n_v^{k+2}}$.\smallskip
  \item[(ii)] If $T$ has $m$ edges, each of its node has degree at most $k$, and the ground set of
  $\chi_T$ has $n$ elements, then the above algorithm computes the number of triangulations of
  $\chi_T$ from the polynomials $\{Q_{\chi_v}^U\colon
  v \text{ a node of } T, U \in \mathcal U_{Y_v}\}$ in time
  $O\pth{A^{k} n^{2k} m}$.
  \end{itemize}
\end{proposition}
\begin{proof} 
  Statement~(i) follows from a simple modification of the
  Alvarez-Seidel algorithm~\cite{alvarez2013simple} so as to compute
  the degrees of marked vertices and keep track of the presence of
  edges between them through the aggregation.

  \medskip
  
  To prove Statement~(ii), let us now analyze one iteration of Step
  (2). Consider a leaf $\ell$ of the chirotope tree $T$, and denote by
  $v$ its parent node. We let $\xi$ denote the chirotope decorating
  $v$ and denote by $Y\cup \{y_1^*,\ldots, y_k^*\}$ its ground set,
  with $k\geq 2$ (where, as usual, the starred elements indicate the
  proxies).  Similarly, we let $\chi$ denote the chirotope decorating
  the leaf $\ell$ and $X\cup \{x^*\}$ its ground set (a leaf has only
  one proxy element, denoted $x^*$ here).

  \medskip

  Let $\U_k = \U_{\{y_1^*,\dots,y_k^*\}}$ and $\U_{k-1} =
  \U_{\{y_1^*,\dots,y_{k-1}^*\}}$.  For each $U \in \U_{k-1}$, we
  initialize ${\tilde Q}^{U}_{\xi}=0$.  Now, for every $\bar U \in
  \U_k$, we denote $p(\bar U)$ the restriction of $\bar U$ to
  $\{y_1^*,\ldots,y_{k-1}^*\}$ and $S_{\bar U}=\{i \in [k-1], y_i^*
  y_k^* \in \bar U\}$.  Then we compute
  \begin{equation}\label{eq:one_term_in_Q}
  \sum_{a,b \ge 2} \Big([s^a]P_{\chi,x^*}(s) \Big) \cdot  R_{a,b}(\{y_i, i \in 
  S_{\bar U}\}) \cdot 
  \Big( [y_{k}^b]Q^{\bar U}_{\xi} (y_1,\cdots,y_k) \Big)
  \end{equation}
  and add it to ${\tilde Q}^{p(\bar U)}_{\xi}$. 
  By \cref{p:fullpol_recurrence}, when all $\bar U \in \U_k$ have been treated,
  we have ${\tilde Q}^{U}_{\xi}=Q^{U}_{\xi}$ for all $U \in \U_{k-1}$. This proves the correctness of our algorithm.
  
  Let us analyze the complexity of this second step.  We first
  consider, for given $\bar U$ and $a,b$, the multiplication in
  \cref{eq:one_term_in_Q}.  The first factor $[s^a]P_{\chi,x^*}(s)$ is
  a scalar, while the two other polynomials are $(k-1)$-variate and
  have degree less than $n$ in each variable (in general the degree of
  a polynomial $Q^{V}_{\alpha}$ in a variable $y_i$ is less than the
  number of elements, including marked ones, in $\alpha$).  Therefore
  the last two polynomials have each at most $n^{k-1}$ monomials, and
  the multiplication has complexity $\O(n^{2(k-1)})$ (in the Real-RAM
  model\footnote{Recall that, in the Real-RAM model, adding or
  multiplying any two scalars takes constant time.  For a more
  realistic complexity analysis, we note that the bit complexity of
  every number arising in the computation is $\O(n)$, as each such
  number is bounded from above by the number of triangulations, which
  is $\O(30^n)$.}, using a brute-force multiplication algorithm).
  Since we only need to consider the case $a,b \le n$ (further terms
  are equal to 0), and since there are $\O(A^k)$ sets $\bar U$ in
  $\U_k$ (see \cref{lem:control_U}), this step has complexity
  $\O(n^{2k} A^k)$.

  \medskip

  Altogether, one iteration of Step~(2), that is the merge of a leaf
  of a chirotope tree with its parent, and the update of the
  associated collection of polynomials, can be done in time
  $\O(n^{2k}A^k)$. Notice that the strict upper bound of $n$ on the
  partial degree of these polynomials remains true, as it follows from
  the fact that in any chirotope tree representing a chirotope of
    size $n$, every node has at most $n$ elements. As $T$ has $m$
  edges, Step~(2) needs to be iterated $m$ times until the tree is
  reduced to a single vertex. This completes the proof of
  Proposition~\ref{p:complexity_triangulations_restate}.
\end{proof}

\begin{figure}
\centering
\includegraphics[scale=0.6,trim=0 1.3cm 0 1cm,clip]{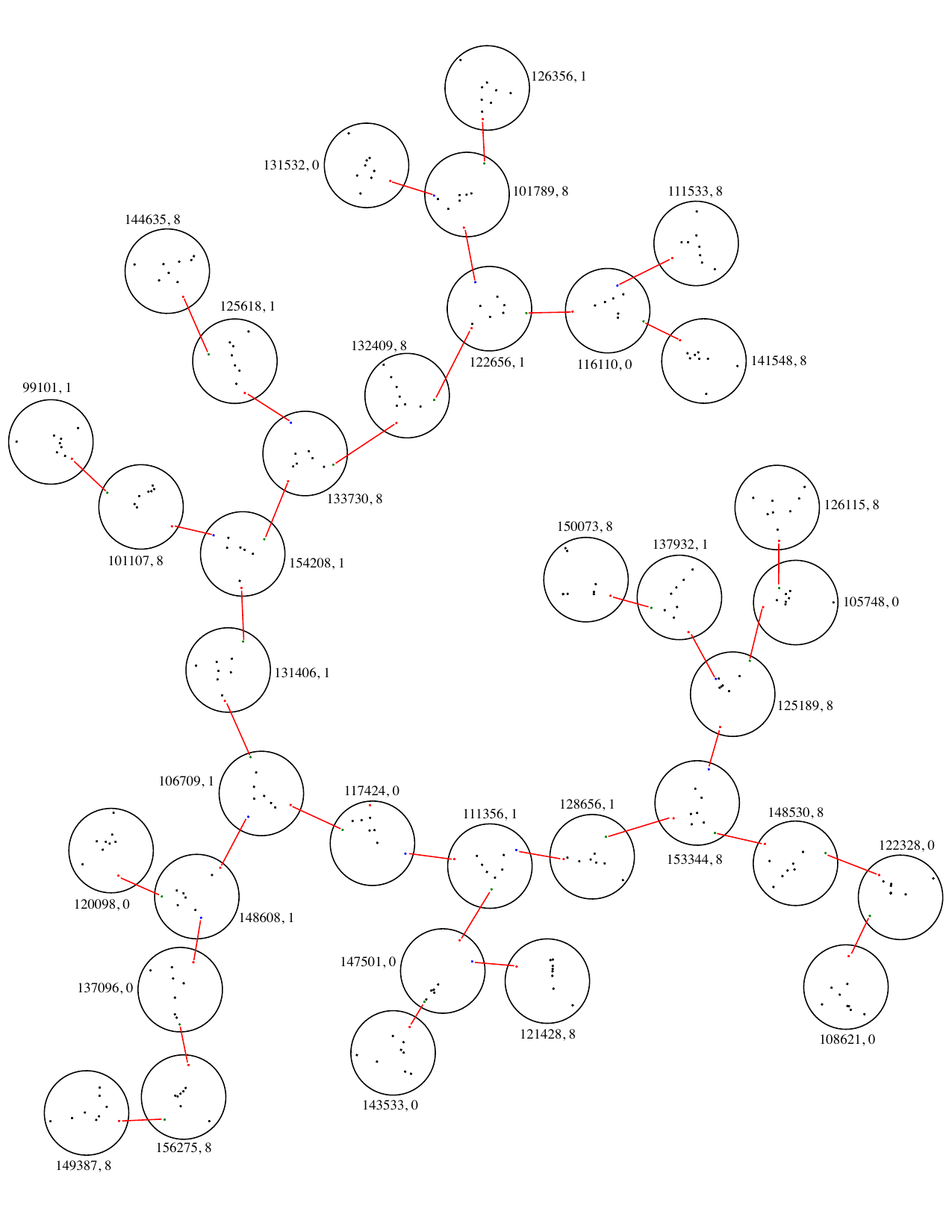}
\caption{A chirotope tree with 36 nodes, each decorated with a chirotope
  of size $9$, adding up to $254$ elements. The numbers next to each
  node indicate the identifier of the decorating chirotope in the
  order type database~\cite{AAK02}, as well as the label, in this database entry, of the proxy shown in red in the figure.
\label{f:exemple}}
\end{figure}

\subsection{A proof-of-concept implementation}
\label{a:arity3}

We implemented the method described in the previous section (both the
modification of the Alvarez-Seidel algorithm and the recursive
counting behind Proposition~\ref{p:complexity_triangulations}, restated above as \cref{p:complexity_triangulations_restate}) for the
special case of trees $T$ where
\begin{itemize}
\item  every node has degree at most~$3$,
\item every node of degree $3$ is decorated by a chirotope with only
  $3$ extreme elements, which are its proxies, and
\item every node of degree $2$ has its two proxies consecutive on the
  convex hull.
\end{itemize}
These conditions ensure that in any triangulation of a chirotope
decorating a node of $T$, there is always an edge between any two
proxies, i.e.~we only have one non-zero polynomial attached to each
node.  We propose this as a proof of concept: we made little effort in
optimizing or benchmarking it as either is beyond the scope of this
paper.

\subsubsection{Repository}

Our implementation runs within Sagemath 
to take advantage of basic
functions for the manipulation of multivariate polynomials, but is
mostly basic python. The code in Sagemath can be found on the
repository
\href{https://github.com/koechlin/chirotope_tree}{\texttt{chirotope\_tree}}. The
example of Figure \ref{f:exemple} can be obtained by running the
command \texttt{sage -python example.py}, which prints two numbers,
first the number of triangulations of the example, and the total time (in seconds) spent on computing it (ignoring the initial time to launch
sage).

\subsubsection{An example}

Our code computes in a few seconds on a standard laptop that the
number of triangulations of the chirotope of size $254$ presented by
the tree of Figure~\ref{f:exemple} is
\begin{align*}
  59&2\,966\,751\,293\,974\,711\,252\,579\,414\,478\,724\,131\,868\,483\,318\,559\,312\,640\,993\,804\,562\,350\,446\\
& 478\,462\,502\,194\,102\,338\,465\,347\,793\,563\,468\,964\,711\,887\,069\,260\,192\,677\,783\,5079\,385\,675\\
& 578\,362\,313\,461\,572\,573\,372\,584\,158\,103\,703\,847\,713\,232\,664 \approx 5.92966751.10^{180}.
\end{align*}

\bibliographystyle{bibli_perso}
\bibliography{biblio-OT}
\end{document}